\def\l@subsubsection#1#2{}
\newtcolorbox[blend into=figures]{boxfigure}[3][]
{float*=htb, width=\textwidth,lower separated=false, center upper, 
center title,title={#2},every float=\centering,label= fig:#3,#1}  
\newtcolorbox[]{boxfigmap}[2][]
{width=\textwidth,lower separated=false, center upper,
colframe=orange!75!yellow,
coltitle=orange!75!yellow, 
colback=white,
,#1} 
\declaretheoremstyle[shaded={rulecolor=MidnightBlue,rulewidth=1pt, bgcolor={rgb}{1,1,1}}]{boxed}
\declaretheoremstyle[shaded={rulecolor=Thistle,rulewidth=1pt, bgcolor={rgb}{1,1,1}}]{secboxed}
\declaretheoremstyle[shaded={rulecolor=YellowOrange,rulewidth=1pt, bgcolor={rgb}{1,1,1}}]{terboxed}
\declaretheoremstyle[shaded={rulecolor=Green,rulewidth=1pt, bgcolor={rgb}{1,1,1}}]{tetraboxed}
\declaretheorem[style=boxed]{theorem}
\declaretheorem[within=section]{lemma}
\declaretheorem[sibling=lemma]{remark}
\declaretheorem[sibling=lemma]{corollary}
\declaretheorem[sibling=lemma]{example}
\declaretheorem[sibling=lemma, style=tetraboxed]{proposition}
\declaretheorem[style=secboxed, sibling=lemma]{definition}
	\newcommand{\green}[1]{\textcolor{OliveGreen}{#1}}
	\newcommand{\flag}[1]{\green{ [#1]}}
    \newcommand{\ket}[1]{\vert  #1 \rangle}
    \newcommand{\bra}[1]{\langle #1 |}
	\newcommand{\proj}[2]{\ket{#1}\bra{#2}}
	\newcommand{\pure}[1]{\proj{#1}{#1}}
	\newcommand{\hilbert}{\mathcal{H}}
	\newcommand{\Sys}{\operatorname{Sys}}
\newcommand*{\Lump}{\operatorname{Lump}}	
    \newcommand{\com}[1]{ \overline{#1}\, }
	\newcommand{\bic}[1]{ \com{\overline{#1}}  }
	\newcommand{\id}{\mathbbm{1}}
	\newcommand{\ball}{\mathcal{B}}
	\newcommand{\appspace}[1]{\{\omega^\epsilon\}_{\omega\in #1,\epsilon\in\E}}
	\newcommand{\tr}{\operatorname{Tr}  }
\newcommand*{\I}{\mathcal{I}}
\newcommand*{\E}{\mathcal{E}}
\newcommand*{\cM}{\mathcal{M}}
\newcommand{\cT}{\mathcal{T}}
\newcommand{\cF}{\mathcal{F}}
\newcommand*{\emb}{{\bf e}}
\newcommand*{\e}{{\bf e}}
\newcommand*{\h}{{\bf h}}
\renewcommand{\P}{{\mathbb P}} 
\newcommand{\local}[1]{\widehat{#1}}
   \newcommand{\mix}[3]{ f_{#1}\left(#2,#3\right)}
   \newcommand{\mixt}[3]{ \widetilde f_{#1}\left(#2,#3\right)} 
   \newcommand{\mixg}[3]{ f'_{#1}\left(#2,#3\right)} 
   \newcommand{\mixtg}[3]{ \widetilde f'_{#1}\left(#2,#3\right)} 
   \newcommand{\mixf}[3]{ c_{#1}\left(#2,#3\right)}
   \newcommand{\mixft}[3]{ \widetilde c_{#1}\left(#2,#3\right)}
	\renewcommand{\vec}[1]{\mathbf{#1}}
	\newcommand*{\eps}{\varepsilon}
	\newcommand*{\half}{\frac{1}{2}}
\def\l@subsection#1#2{}
\def\l@subsubsection#1#2{}
\begin{document}

\title{Resource theories of knowledge}

\author{L\'idia \surname{del Rio}}
\thanks{These authors contributed equally to this work.}
\affiliation{School of Physics, University of Bristol, United Kingdom}
\affiliation{Institute for Theoretical Physics, ETH Zurich, Switzerland}

\author{Lea \surname{Kr\"amer}}
\thanks{These authors contributed equally to this work.}
\affiliation{Institute for Theoretical Physics, ETH Zurich, Switzerland}

\author{Renato \surname{Renner}}
\affiliation{Institute for Theoretical Physics, ETH Zurich, Switzerland}

\date{26th November 2015}

\begin{abstract}
How far can we take the resource theoretic approach to explore physics?
Resource theories like LOCC, reference frames and quantum thermodynamics
have proven a powerful tool to study how agents who are subject to certain constraints can act on physical systems.  This approach has advanced our understanding of fundamental physical principles, such as the second law of thermodynamics, and provided operational measures to quantify resources such as entanglement or information content. 
In this work, we significantly extend the approach and range of applicability of resource theories.
Firstly we generalize the notion of resource theories to include any description or knowledge that  agents may have of a physical state, beyond the density operator formalism. 
We show how to relate theories that differ in the language used to describe resources, like micro and macroscopic thermodynamics.  Finally, we  take a top-down approach to locality, in which a subsystem structure is derived from a global theory rather than assumed. 
The extended framework introduced here enables us to formalize  new tasks in the language of resource theories, ranging from tomography, cryptography, thermodynamics and foundational questions, both within and beyond quantum theory.

\end{abstract}

\maketitle

\begin{boxfigmap}{Structure of this paper}
\includegraphics[width= 1\textwidth]{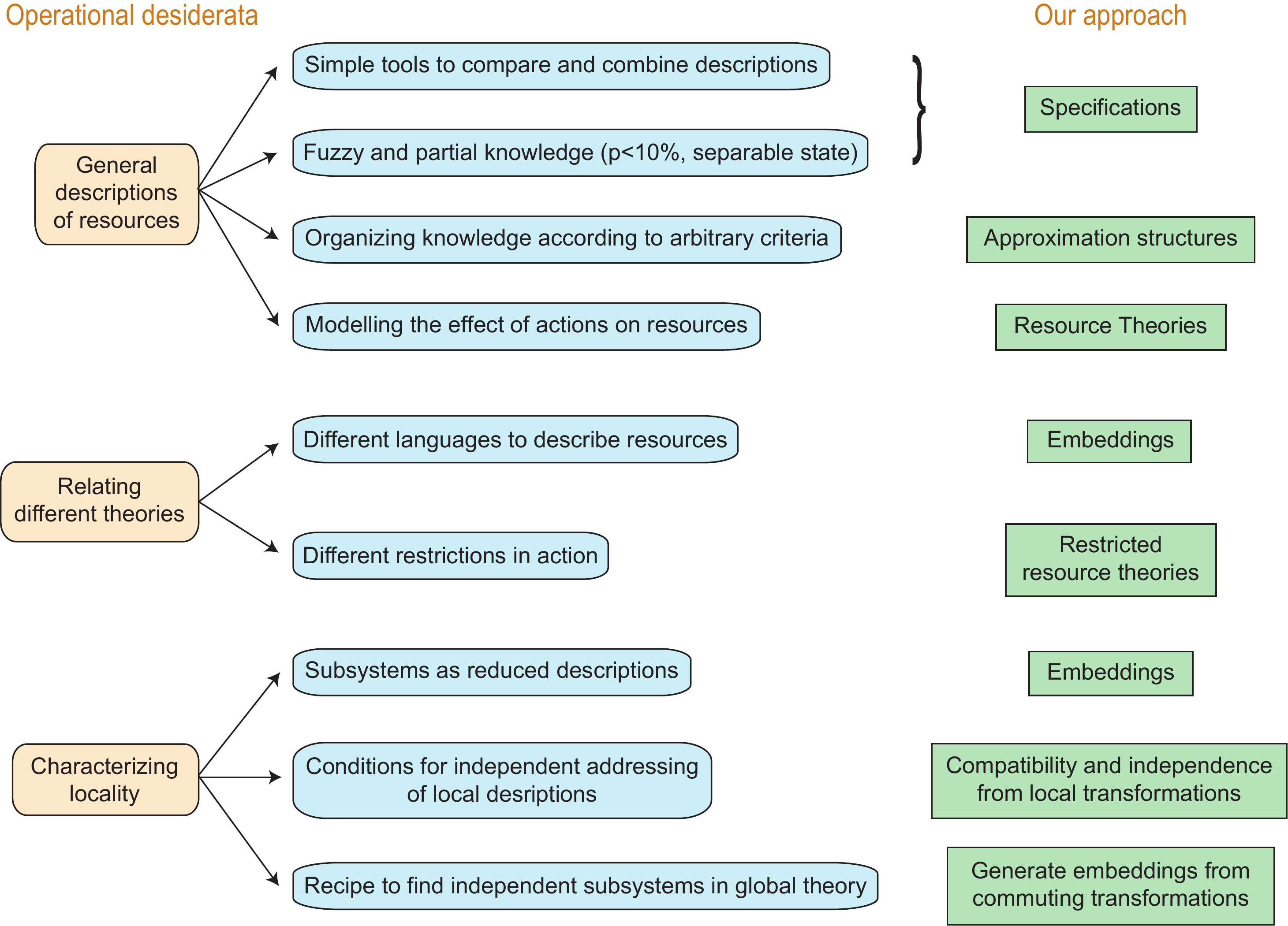}

\tcblower
Summary of this paper: This work is motivated by the three desiderata on the left, which are detailed in the centre (see Section~\ref{sec:desiderata}). On the right, the tools developed here to address them. 

\end{boxfigmap}

\clearpage

{\twocolumngrid
 \hypersetup{linkcolor=black}
\tableofcontents
}

\newpage

\section{Introduction}
\label{sec:intro}
Modern physics and information theory highlighted a fundamental connection between knowledge and action. Consider for instance the thought experiment of Maxwell's demon \cite{Maxwell1871}, which concerns a box filled with a gas and partitioned by a wall with a small door.
The same system could be described both as a `gas of volume $V$, temperature $T$ and pressure $P$' and through a precise microscopic specification of all the gas particles|there is no contradiction in the descriptions, which simply represent the different depths of knowledge of two possible observers. Crucially, this difference affects how different agents may exploit the box, even if they have access to the same set of operations (like opening and closing the door at will).
The demon is an agent who knows the position and velocity of every particle, and can selectively let  certain molecules pass between the two sides, in order to send fast particles to the right and slow ones to the left. This creates a difference in temperature between the two sides, and now the demon can extract work by letting the hot side expand. For an agent who knows only the volume, temperature and pressure of the gas (and does not know when to open the door), such a process seems impossible in line with the second law of thermodynamics.

This example sounds like a perfect candidate to be framed in terms of resource theories, which study precisely the possibilities accessible to constrained agents. 
Instead, it  highlights a fundamental limitation of current frameworks: 
since the two agents  have different descriptions of the same physical system, there is no way to consistently model both within the same theory without  extra assumptions. For example, in statistical physics it is   assumed that the macroscopic observer assigns a particular probability distribution over all microstates compatible with her description,  in order to represent her knowledge as a mixed state (a uniform distribution gives rise to the usual statistical ensembles).
This assumption is both morally unjustified and limiting, because the statements made on this basis apply only to an agent with this particular prior distribution and may not be robust. Despite this, traditional statistical physics claims that the statements based on a uniform prior apply to all agents with a given coarse-grained description (specified e.g.\ by volume and pressure). This may serendipitously  hold  in the thermodynamic regime  of large, uncorrelated systems, for typical observations, but is not necessarily true in general.

In this work, we overcome this limitation by 
generalizing resource theories
to model an agent's knowledge explicitly. We draw from the idea of coarse-grained descriptions such as an $\eps$-ball of quantum states or the specification of a few properties (e.g.\ macroscopic observables,  or that two systems are uncorrelated), which may be compatible with several microstates. We do not impose any distribution over those microstates, which results in concise descriptions, without all the extra information of a probability measure. These \emph{specifications}  allow us to find simple answers to particular questions about the system without checking all the states involved explicitly (e.g.\ common properties of gases at a given pressure and volume, or of separable states).
Such statements naturally  hold true for all Bayesian agents with different prior distributions. 
Furthermore, this approach enables us to explicitly relate  resource theories that differ in the language used to describe resources (like  micro and macroscopic thermodynamics), as well as  local agents acting in different parts of a global space.

Finally, we address another shortcoming of current resource theories, which concerns the subsystem structure of resources.
Traditional formulations of resource theories assume that local resources are well-defined and can be composed as building blocks (e.g.\ by taking the tensor product). Figures of merit like conversion rates and monotones are often expressed in terms of number of copies of local resources (e.g.\  the entanglement of formation is defined as the number of Bell pairs necessary to create a bipartite state with LOCC). 
However, the assumption of a rich subsystem structure is very strong and in natural settings unrealistic. In real life, 
subsystems correspond merely to operational classifications within a global space. Here we take a top-down approach  in which the local structure is derived from a global theory rather than assumed.
We identify two key aspects of traditional subsystems: local descriptions correspond to  coarse-grainings that should be independent of both  other local descriptions and local transformations elsewhere. We express these properties in a global theory and find minimal conditions to make meaningful statements about local resources. Finally, we show how to find an operational subsystem structure within a global theory, based only on commutation relations between transformations. 

Our approach can be applied to a large class of problems, beyond quantum theory. The  concepts introduced here allow us to formalize a range of new tasks in the language of resource theories, ranging from tomography, cryptography, thermodynamics and foundational questions.

In the remainder of this introduction, we first review the notion of resource theories, with examples from quantum information theory, elaborate on the  limitations of current approaches, and then outline the desiderata of a generalized framework.

\subsection{Resource theories}
Resource theories are a framework to study the possible actions of agents, given some constraints, like the laws of physics, technical limitations or the rules of a game.
A resource theory is usually defined by a set of allowed transformations on a state space, which correspond to operations that are easy or cheap to implement.

We illustrate this with a paradigmatic example: the resource theory of \emph{local operations and classical communication}, {\sc LOCC}~\cite{Bennett1996} (see \cite{Horodecki2009} for a review). 
The idea behind {\sc LOCC} is to study entanglement under realistic conditions: in the absence of perfect quantum state distribution, what can be achieved with resources such as a lossy quantum channel or a slightly entangled state?
To study this resource conversion rigorously, one formulates a theory in which agents  can freely perform local operations and classical communication, and everything else is a resource that must be accounted for~\cite{Bennett1996}.
Such an operational motivation is key to most resource theories in quantum information theory, from reference frames to thermodynamics (see Example \ref{ex:quantum_resource_theories}).

The set of free (or allowed) operations imposes a \emph{pre-order}\footnote{A pre-order is a binary relation that is reflexive, $R \to R$ and transitive,  $R \to S\  \& \ S \to T \implies R \to T$.} `$\to$' on the space of resources: we write $R \to S$ if we can transform a resource $R$ into another, $S$, using only free operations. For example, we can use a Bell pair $R$ and some {\sc LOCC} to perform teleportation of one qubit, and therefore simulate a single-qubit use of a perfect quantum channel, $S$. The pre-order induced by {\sc LOCC} on bipartite pure states is given by the majorization relation of the reduced density matrices: $\ket{\psi}_{AB} \to \ket{\phi}_{AB} \iff \tr_A (\pure{\phi}_{AB}) \prec  \tr_A (\pure{\phi}_{AB}) $.

We may then characterize the ordered space. 
For instance, we may identify a set of \emph{free resources}: these are the resources that can always be prepared with the allowed transformations, even when we don't know the initial state. In {\sc LOCC}, this corresponds to separable states and classical channels.
We may also look for necessary and sufficient conditions for state transformations, try to assign value to different resources, study conversion rates between many copies of two resources, find order monotones like squashed entanglement, discuss resource inequalities, and phenomena like  catalysis. These concepts are generalized and explored in the context of more abstract resource theories in \cite{Brandao2015, Coecke2014, Fritz2015}.  
Most results to date are  largely based on the notions of subsystems and composition of local resources (like asking how many copies of a Bell pair are needed to create a given state). In the present work we analyse and generalize those notions, thus paving the way to a more interesting discussion on quantification of resources, which will be treated in the upcoming Part II of this work.

\begin{bigexample}{Quantum resource theories}{quantum_resource_theories}

In quantum resource theories, the state-space $\Omega$  is usually the set of density matrices over a fixed Hilbert space $\hilbert$. The set of allowed operations $\cT$ is a monoid of  trace-preserving completely positive maps (TPCPMs).  In general, ignoring a subsystem by taking the partial trace is always allowed, so operations have the form $\tr_S (\cF(\rho))$.  Let us see some examples.

\paragraph{Quantum theory.}
All TPCPMs are allowed;  all quantum resource theories correspond to restrictions on this theory. \emph{Unitary quantum theory} only allows for unitaries, $\cF(\rho) = U \ \omega \ U^\dagger$.

\paragraph{Locality constraints.}
One of the first examples of quantum resource theories model agents subject to locality constraints, in order to characterize non-local correlations.

We fix some partitions of the global Hilbert space, $\hilbert = A \otimes B \otimes \dots$, and only allow for local TPCPMs in those partitions, $\cF(\omega) = [\cF_A \otimes \cF_B \otimes \dots ](\omega)$. We may also allow for classical channels across the partitions, in which case we recover the resource theory of local operations and classical communication (LOCC). Many variations exist, for instance in which we allow for a limited amount of quantum communication; see \cite{Horodecki2009} for a review.

For local operations, any  state of product form across the partitions is a free resource, and in LOCC all separable states are free. On the other hand, Bell pairs are extremely valuable for bipartite LOCC, because we can use them to create any other state.
Common ways to quantify the value of resources are for instance entanglement of formation (number of Bell pairs necessary to create a state via LOCC) and distillation entanglement (number of Bell pairs that can be obtained from a state), and quantities like the mutual information are monotones for LO.

\paragraph{Thermal operations.}
This resource theory  intends to formalize quantum thermodynamics by studying both energy and entropy flows \cite{Janzing2000, Brandao2013b, Horodecki2013a}.
We fix a global Hamiltonian $H$, and allow for operations that affect neither entropy nor energy (that is, unitaries that commute with $H$). 
If we want to model thermal operations in the presence of a free heat bath at a fixed temperature $T$, we may additionally allow for maps that simulate thermalization of subsystems, for instance by replacing local states with thermal states at temperature $T$. The general form of an allowed operation is therefore
$\cF(\omega) = U \left(\pi_A \otimes  \tr_A (\omega) \right) U^\dagger: \ [U, H]=0,$
where $\pi_A \propto  e^{- \frac {H_A} {k T} }$ is a local Gibbs state. This assumes a weakly interacting Hamiltonian, $H = H_A \otimes \id_{\com A} + \id_A \otimes H_{\com A} + H_{\text{int}}$, with $\| H_{\text{int}} \| \ll \| H \|$.
In this theory, local Gibbs states  are free resources, and quantities like the free energy emerge as monotones (see \cite{Goold2015} for a review).
If the Hamiltonian is fully degenerate, all unitaries are allowed, and we recover the resource theory of \emph{noisy operations}. Extra constraints like momentum conservation may also be added.

Note: It is more common to find the formulation `we may bring in any extra ancilla system in a thermal state, apply a joint unitary, and trace out a subsystem'.  In our framework, the partial trace always comes for free (it is a form of forgetting). Furthermore, we build our specification space such that all the subsystems that could be appended in a fully mixed state are already included in the global Hilbert space. The two views can be related via embeddings.

\paragraph{Quantum computation complexity.} A relevant question for experimentalists is how easy it is to implement a quantum gate. At the moment, single-qubit gates are significantly easier to implement than even two-qubit gates, so one may express the question of complexity of implementation as a resource theory where single-qubit gates are given for free (note that here the state space of resources is made of transformations, not quantum states).
Given that universal sets of gates only require one two-qubit entangling gate (like a CNOT), one can quantify the cost resources (gates) in terms of number of CNOTs needed to implement a gate. For an overview, see for instance~\cite{Iten2015}.

\paragraph{Other examples.}
Aspects of quantum theory such as coherence~\cite{Baumgratz2013,Lostaglio2015,Winter2015}, asymmetry, and reference frames~\cite{Gour2008,Marvian2013,Marvian2014,Marvian2015} have also be studied through the lens of resource theories. For a brief overview see \cite{Coecke2014}.

\end{bigexample}

\subsection{Desiderata for a general framework}
\label{sec:desiderata}

In this section we explore limitations of current formulations of resource theories in more detail, and lay out the operational desiderata that we want a general framework to satisfy. We will refer to these guidelines throughout this work as we set up a simple and intuitive framework that respects them.

\subsubsection{Representation of resources: formalizing subjective knowledge}
\label{sec:Bayes}

As Maxwell's demon illustrated, the resources available to an agent are not simply the ultimate physical states of a system, but rather what she knows about the system. This knowledge can be formalized with  descriptions such as quantum states (in the resource theories of LOCC and noisy operations), state variables like pressure or temperature (in phenomenological thermodynamics), or  molecules and compounds (in chemistry).   
The descriptions one can make (or equivalently, the knowledge one may hold) about a system respect a certain structure: for example, `acetic acid' and `hydrochloric acid' are clearly different compounds, but the description `acid' fits both;  it is a compatible, although \emph{less specific}  (and potentially less useful) description than the former two. 
For quantum systems, we can think of similar examples: for instance, to say that a system is at least $\eps$-close to a state $\sigma$, i.e.\ $\rho\in\mathcal{B}^\eps(\sigma)$, is less specific than describing its exact state $\rho$ (there are several density operators compatible with that description). Similarly, we could use entanglement witnesses to characterize correlations between two systems, without specifying their joint state. 
In quantum resource theories, resources are usually represented by  density operators. However, the examples of coarse-grained descriptions above cannot be expressed in the density matrix formalism alone. Perhaps the reason why this has not been noticed in the past is that in traditional resource theories, one rarely thinks about resources as descriptions of an agent's knowledge, but rather as physical states, for which density matrices would be the best formalism.

Now, a quantum Bayesian\footnote{In QBism, density operators represent \emph{states of knowledge} and correspond to an agent's best bid about the properties of a system at hand. In this view, quantum states are always subjective descriptions, not objective facts about a physical system \cite{Fuchs2015}.} might argue that
the quantum formalism \emph{does} allow an agent to express  
coarse-grained knowledge, since he can always take a convex mixture of quantum states to create a new one that describes more uncertainty. However, the focus of the Bayesian framework lies on updating probability assignments after \emph{gaining additional knowledge}, and as such its basic principles  (such as Dutch book consistency) provide no guidelines for \emph{forgetting or coarse-graining knowledge}. It then seems that in order to coarse-grain his description, a Bayesian has to choose a particular measure,
such as 
one that corresponds to his prior distribution on the  states involved. Whatever statements he later makes will depend on that prior, and may not hold for another agent with the same coarse-grained knowledge.
This is problematic for example in quantum tomography, where we try to characterize a system by  measuring  a small sample subsystem \cite{Christandl2012}. 
In a Bayesian framework, the state an agent ultimately assigns to the system  depends on the prior distribution that he uses (that is, 
the state that he used to describe the system before taking into account the measurement data). However, more generally one might want to make statements independently of the prior|statements different agents could agree on, and  that accurately represent the knowledge learnt through the tomography, without the extra clutter of the prior distribution. This can be done with \emph{confidence regions}, with which  the knowledge about the system is not represented by means of a density matrix alone, but rather by \emph{neighbourhoods} of states, which take precisely the form $\ball^\eps (\rho)$.

There are also cases in which we may simply not be interested in the details of the actual state of the system, even if we could describe it. For example, our description of success of a protocol could be that  a battery is charged above some energy threshold, and  we would accept any final state that was compatible with that requirement; more prosaically, a child might be happy to receive any dog for her birthday, even though she can distinguish different breeds.

We would like to represent resources with a formalism that captures all aspects of the above examples. It should be both simple and powerful at the same time: it should be possible to compare resources, go to more coarse-grained descriptions by forgetting detailed knowledge about the system, and to combine different descriptions to make up a more detailed one.
With these aspects in mind, we can now formulate the first desideratum for a generalized framework for resource theories.

\begin{restatable}[Modelling resources]{desideratum}{desiknowledge}
 \label{desi_knowledge}
	A general framework  should allow agents' descriptions of  resources to go beyond the assignment of subjective probabilities.
	It should specify simple rules to combine, coarse-grain  and organize knowledge according to arbitrary criteria. 
	Finally, it should specify how to update  knowledge after physical transformations.
\end{restatable}

\subsubsection{Relating different theories}

One of the main aims when generalizing resource theories should be to devise a framework that allows us to relate resource theories that differ in the transformations they allow as well as in their language to describe resources. 
This, on
the one hand, 
means that
it should describe how to combine allowed operations, building more relaxed or more restricted resource theories, such as going from LO (local operations)  to LOCC (local operations and classical communication), or constructing a resource theory of local thermodynamics \cite{Goold2015}. 
On the other hand, the framework should enable us to compare theories that are described by different state spaces. 
For example, we  should be able to compare the viewpoint of Maxwell's demon (whose state space accounts for the positions and velocities of microscopic particles) to that of a macroscopic observer (who describes only macroscopic state variables), and use this to study how standard thermodynamics emerges from microscopic models like quantum thermal operations.
Similarly, the framework should allow us to relate different agents acting within 
a resource theory that have only partial knowledge of each other's actions, for instance in cryptographic scenarios.

\begin{restatable}[Relating different resource theories]{desideratum}{desirelating}
  \label{desi_relating}
 	The framework should allow us to combine and relate different resource theories, which may differ in  both the language used to describe resources and  actions available to agents.
 \end{restatable}

\subsubsection{Operational subsystems}

Resource theories usually follow a `bottom-up' approach to subsystems: they specify building blocks, such as local states and resources, that can be composed and discarded at will. While this is often useful in practice to make relevant statements, it is generally a simplified model of the real processes which occur in one global state space. That such an approach is justified is tacitly assumed in each resource theory and cannot be derived from within.
In real life, however, a scientist confronted with a global theory does not assume that it contains a clean subsystem structure, but rather tries to find and characterize operational subsystems.
For example, in genetics, it was not  {\it a priori} known where genes started and ended within a strand of DNA. It is only through making experiments and analysing the effect of transformations (like the replacement of nucleotides, which affects the encoded proteins) that biologists learn the structure of genes. A `traditional' approach to a resource theory of genetics would have as local resources all the mapped genes, and let agents combine them to engineer new creatures. 

For a generalized framework of resource theories, we now ask to take a top-down approach, and start with a global state space and global transformations. 
Subsystems should now not be assumed, but the theory should give a way to construct descriptions of local resources
and local transformations, 
and specify when such descriptions can be used consistently. 

In this way, each agent may divide the global state space into subsystems and recognize degrees of freedoms based on her understanding of the structure of possible transformations, resulting in potentially different, operational classifications.

\begin{restatable}[Characterizing locality]{desideratum}{desicomposition}
 \label{desi_composition}
	A general framework for resource theories should follow a top-down approach, starting from a global theory and deriving an operational subsystem structure. The framework should allow to characterize
	local resources, and specify conditions for consistent use of local  
	descriptions. 
	It should further show in which circumstances the usual bottom-up, building-block type models are justified and known resource theories are recovered.
\end{restatable}

\subsection{Structure of this paper}

In Section~\ref{sec:knowledge} we introduce the basic formalism of our framework, meeting Desideratum~\ref{desi_knowledge}. In Section~\ref{sec:relating_agents} we see how to  relate the views and possibilities of different agents (Desideratum~\ref{desi_relating}). In Section~\ref{sec:composition} we show how, using these ideas, a subsystem structure can be derived from a global theory, and discuss different notions of locality (Desideratum~\ref{desi_composition}). 
In the remaining of the paper we  explore Desideratum~\ref{desi_knowledge} further: in Section~\ref{sec:approximations} we formalize approximations and more generally the structuring of knowledge according to arbitrary criteria, and in Section~\ref{sec:convexity} we fit the notions of convexity and probability in our framework. 
Finally, we wrap up with conclusions, further directions and relation to existing approaches.

The examples provided give an intuition behind our ideas, as well as direct applications of the framework. All the proofs of propositions and theorems in the main text can be found in the appendix, where the results are further explored and expanded, and where you can also find a summary of basic notions of algebra and order theory. Knowledge of quantum theory helps to follow some of the examples and motivation, but is not necessary to understand the framework.

\section{Modelling resources}
\label{sec:knowledge}
In this section we  introduce a formal definition of resource theories that  satisfies Desideratum~\ref{desi_knowledge}. We will define resource theories through two aspects of the agent: a \emph{specification space} to describe her resources, and the allowed operations that she can perform.

The discerning reader may soon recognize familiar aspects of our framework: it might remind her of possible worlds, algebras of events, linear logic, Kripke structures, indicator functions, rough sets,  Dempster-Shafer theory or a limited version of Bayesianism.  Connections to those approaches are discussed in Section~\ref{sec:literature}. Indeed our specification spaces could also be expressed through some of those theories, which is partially what makes it so intuitive. If the next section feels like old news, keep in mind that it is just the minimal structure needed to discuss our problems of interest: resource theories, relations between  agents and notions of locality.

\subsection{Specification spaces}

\begin{bigexample}{Animal specification spaces}{animal_specifications}

Imagine that we want to formalize our knowledge about animals. The elements of $\Omega$ correspond to the best possible description of different animals|for instance, all the known animal species. An example of a state would be  $\omega=\text{`jaguar'} \in \Omega$. Specifications are subsets of $\Omega$ that characterize uncertainty about the exact species, for instance $V=\{\text{jaguar, leopard}\} \in S^\Omega$.
`Mammal' is another specification: a set that includes all animals that are mammals. 

\begin{center}
 \includegraphics[height=5.5cm]{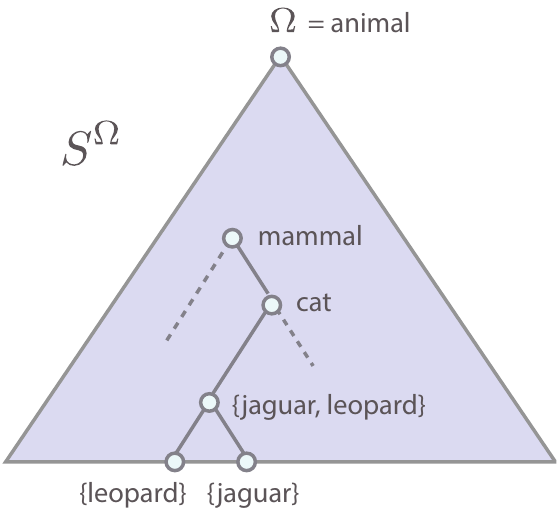}
\end{center}

For an intuitive example of knowledge combination, imagine that Alice and Bob both saw a spotted big cat. Alice knows that the cat lives in Africa, so she thinks that it might be either a cheetah or a leopard, whereas  Bob noticed that the animal is a Panthera, so he knows it could be a jaguar or a leopard.  By combining their knowledge, they reach the conclusion
$
\{\text{cheetah, leopard} \} \cap \{\text{jaguar, leopard}\} = \{\text{leopard}\}.
$
However, if a third observer insisted that the animal was either a puma or a lynx, we would know that at least one of them was wrong, because the three specifications are not compatible. 

For an example of a specification homomorphism,  let $\Sigma$ the set of all colours, and $S^\Sigma$ the corresponding specification space. Now take the function that maps an animal to its colours, $\tilde f: \Omega \to S^\Sigma$; for instance $\tilde f(\mbox{Tiger}) = \{\text{Orange, Black, White} \}$. We can use $\tilde f$ to build a homomorphism $f: S^\Omega \to S^\Sigma$ which maps a specification to all possible colours of its elements. For instance, 
\begin{align*}
 f(\{ \text{Tiger, Leopard}\} ) 
 &=  \tilde f( \text{Tiger} )  \cup  \tilde f( \text{Leopard} ) \\
 &= \{\text{Orange, Black, White} \} 
 \cup \{\text{Yellow, Black, White} \} \\
 &= \{\text{Orange, Black, White, Yellow} \}.
\end{align*}
As an example of a function that is \emph{not} a homomorphism, consider $g: S^\Sigma \to S^\Omega$ to be the function that maps a set of colours to the set of animals that exhibit \emph{precisely all} those colours. Clearly $g(V \cup W) \neq g(V) \cup g(W)$; for instance,
$\text{Tiger} \in g(\{  \text{Orange, Black, White}\})$, but $\text{Tiger} \notin g(\text{Orange}) \cup g(\text{Black}) \cup g(\text{White})$.
Here, $g$ is not a homomorphism because it does not treat specifications as states of knowledge. That is, under $g$, a set $Z \in S^\Sigma$ does not reflect the idea `I have one of these colours, and I do not know which', but rather `I have exactly  all of these colours with certainty'.

\end{bigexample}

\begin{bigexample}{Quantum specification spaces}{quantum_specifications}

For a quantum-mechanical resource theory of a bipartite system, the state space could be the set $\Omega_{AB}$ of all normalized density operators over the global Hilbert space $\hilbert_{AB}$. Specifications, or states of knowledge, are subsets of $\Omega_{AB}$: for instance the specification $\{\rho, \sigma\}$ means that the observer knows that $\hilbert_{AB}$  is either  in state $\rho$ or $\sigma$.  We can also represent fuzzy probabilistic knowledge, of the sort `I know that the probability of state $\sigma$ is at most $5\%$', with the specification $\bigcup_{p \leq 0.05} \{p \ \sigma + (1-p) \ \rho \}$. 
Another intuitive specification would be the neighbourhood of a state, $\ball^\eps(\rho_{AB}) = \{\omega_{AB} \in \Omega_{AB}: \half \ \|\rho_{AB} - \omega_{AB} \|_1 \leq \eps \}$.

\begin{center}
 \includegraphics[height=5.5cm]{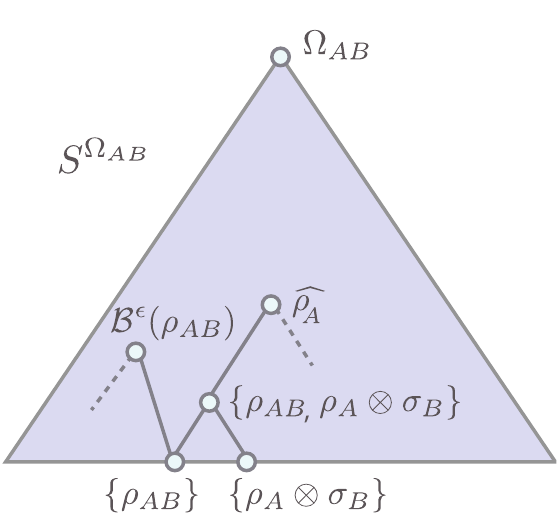}
\end{center}

A specification of particular interest is the description of the reduced state of a subsystem. 
If an agent says that the subsystem $\hilbert_A$ is in state $\rho_A$, this implies that her knowledge is limited to system $\hilbert_A$; the rest of the universe could be in any compatible state. In our framework, this is represented by the  specification $\widehat{\rho_A} \in 2^{\Omega_{AB}}$, which is the set  of all states of $\hilbert_{AB}$ that are compatible with the marginal $\rho_A$,
\begin{align*}
  \widehat{\rho_A} := \{ \omega_{AB} \in \Omega_{AB}: \ \omega_A:= \tr_{B}(\omega_{AB})  = \rho_A\}.
\end{align*}
A relevant example of knowledge combination is when an agent knows two marginals, like $\rho_A$ and $\sigma_B$, but not how they are correlated. His knowledge can be represented as
$$   \widehat{\rho_A}\cap  \widehat{\sigma_B}  = 
\{\omega_{AB} \in \Omega: \ \omega_A = \rho_A,\  \omega_B = \sigma_B \}.
$$
An intuitive example of \emph{forgetting} is induced by taking the partial trace (forgetting the state of a subsystem), as $\{\rho_{AB}\} \subset \widehat{\rho_A}$. Losing precision would be another example, as $\ball^\eps(\rho) \subseteq \ball^{\eps + \delta}(\rho)$.

Trace-preserving completely positive maps (TPCPMs) map density operators to density operators, and therefore may be used to build quantum specification endomorphisms.
For example, a unitary operation $U$ is applied to a quantum specification $W$  as $f_U(W) = \bigcup_{\rho \in W}\ \{U\,\rho\, U^\dag\}$.

\end{bigexample}

The state space of a physical theory can be seen as  the \emph{language} used by a theory to describe resources. 
For instance, the state space of zoology could be the set of all animal species (Example~\ref{ex:animal_specifications}); in quantum theory, all density operators over some Hilbert space (Example~\ref{ex:quantum_specifications}), and in traditional thermodynamics the possible values of macroscopic observables.
In particular, the `states' of a theory are not necessarily the ultimate descriptions of an underlying reality. 
We will see how to relate the state spaces|or languages|of different theories in Section~\ref{sec:relating_agents}.

A particular agent acting within a theory may not have access to the information about the exact state she holds, and may describe her knowledge of the resource at hand through a coarse-grained  \emph{specification}. 
For example, imagine that you are given a box, and told that it holds either a sheep or a fox. How would you describe your knowledge of the animal inside the box? Given that you only know it is one of the two, you could without loss of information express it as a set $\{$sheep, fox$\}$, whose elements are the two possibilities that you admit. 
Many other specifications of knowledge  can be expressed through sets, including  neighbourhoods of a state, uncertainty about a probability distribution over possible states,\footnote{In order to express probabilistic knowledge, we may combine the set formalism with a convex state space, like the space of density matrices. These notions are explored in Section~\ref{sec:convexity}.} and non-local information. They  allow us to frame knowledge of local resources (like a reduced density operator) in the context of a global space (see Example~\ref{ex:quantum_specifications}). 
Furthermore, sets let us formalize learning and forgetting very easily: forgetting corresponds to going to a larger set, and learning, or combining knowledge, is achieved by intersecting sets. 
We formalize the notion of sets|or specifications|as subjective states of knowledge as follows.

\begin{definition}[Specification space] \label{def:specification_space}
The \emph{specification space} $S^\Omega$ of a set $\Omega$ is composed of all non-empty subsets of $\Omega$,
 $$S^\Omega := \{V \subseteq \Omega, \ V \neq \emptyset\} .$$
 Elements $V \in S^\Omega$ are called \emph{specifications}. The set $\Omega$ is called the \emph{state space}, and elements $\omega$ of $\Omega$ are called \emph{states}.

A function on a specification space that achieves forgetting, i.e.\ which satisfies $f(W) \supseteq W$ for all specifications $W$, is called \emph{inflating}.

Two specifications $V, W \in S^\Omega$ are said to be \emph{compatible} if $V \cap W \neq \emptyset$, and in this case we call $V\cap W$ the \emph{combined knowledge} of $V$ and $W$.
\end{definition}

The empty set corresponds to the idea of being wrong|we reach it by combining contradicting knowledge. 

\begin{remark}
A specification space is a partially-ordered set, ordered by inclusion, $(S^\Omega, \subseteq)$.  In particular, it is a join-semilattice $(S^\Omega, \cup)$, where the join operation is the union of sets. We may define the meet of two specifications via set intersection $\cap$, but the structure is not closed under $\cap$ (because we may reach the empty set).
\end{remark}

Although knowledge is represented by general sets of states, some ways of organizing knowledge are more natural than others. For example, some sets correspond to concise descriptions, such as ``mammal'' in the example of the animal specification space, or local density matrices in the case of a quantum specification space (see Example \ref{ex:quantum_specifications}). This idea will be explored  in Section  \ref{sec:approximations}. For now, note that although sets can be large and in general complex, the beauty of specifications is that they can always be seen as a subset of a set that allows a simple description.

As a first example of why this description of knowledge in terms of specifications can be extremely useful, consider again the tomography setting described in Section \ref{sec:Bayes}. In \cite{Christandl2012}, it was noted that the problem of determining the state on a subsystem by means of measurements on a random sample is equivalent to other common formulations of tomography, in which the goal is to determine the ``true'' state $\sigma$ based on measurements on many copies $\sigma^{\otimes n}$. Using specifications, we can now immediately formulate the main result of \cite{Christandl2012}: 
namely, it says that 
for any $n$ and $0<\eps\leq 1$, there exists a non-trivial tomography procedure which, after measurements on $n$ copies of any state $\sigma$, returns a specification $V$ such that $\sigma\in V$ with confidence level $1-\eps$. 
Of course, the size of the specification $V$ will depend on $\epsilon$ and $n$. For precise definitions of the confidence level and the construction of the specification, see \cite{Christandl2012}.

\subsection{Resource theories}

As we saw in the introduction, a resource theory is defined by a set of allowed transformations. These are functions that map elements on specification space to other elements of specification space which are allowed by the theory.

We may use functions on specification space in general to characterize certain aspects of specifications (like the colour of an animal, as in Example~\ref{ex:animal_specifications}) or to represent physical transformations (like the quantum operations of Example~\ref{ex:quantum_specifications}). We will later also use them to relate different specification spaces.
In all these cases, the idea that specifications are states of knowledge is reflected by considering only functions that act individually on each element of a specification.\footnote{This condition, though weaker, is analogous to asking that quantum operations be convexity-preserving.} For example, if you know that two qubits are either in state $\rho$ or $\sigma$, and you apply a $\operatorname{CNOT}$ gate, your knowledge is updated as 
$\operatorname{CNOT}(\{\rho, \sigma\})  = \{\operatorname{CNOT}(\rho), \operatorname{CNOT}(\sigma) \} $. Technically, functions that follow this rule correspond to semilattice homomorphisms, which preserve the structure of a specification space. 

\begin{remark} \label{lemma:homomorphisms}
 Let $S^\Omega$ and $S^\Sigma$ be two specification spaces.
 Then, for any function $f: S^\Omega \to S^\Sigma$, these two statements are equivalent: 
 \begin{enumerate}
 \item $f$ is a join-semilattice homomorphism, that is, for any set $\mathcal W \subseteq S^\Omega$ of specifications,
  $f\left(\bigcup_{W \in \mathcal W}\, W\right) = \bigcup_{W \in \mathcal W}\ f(W);$
 
  \item $f$ is an element-wise function, that is, there exists a function $\tilde f: \Omega \to S^\Sigma$ such that $f(W) =\bigcup_{\omega \in W} \tilde f(\omega) $.
 \end{enumerate}
\end{remark}

The set of transformations in a resource theory, in our framework, forms a monoid of endomorphisms on the specification space.

\begin{definition}[Resource theory] \label{def:resource_theory}
A resource theory is a structure $(S^\Omega, \cT )$, where $S^\Omega$ is a specification space and $\cT$ is a monoid of endomorphisms on $S^\Omega$, equipped with the operation of composition of functions.
The elements of $S^\Omega$ are called \emph{resources} and the elements of $\cT$ are the \emph{transformations}. 
\end{definition}

Now that we have a set of transformations, we can analyze the structure that they impose on the specification space. In other words, we define an operational pre-order $\to$ for resource theories on specification spaces.

\begin{definition}[Reaching specifications] 
  \label{def_construction}
  Let $(S^\Omega, \cT)$ be a resource theory. 
  Given two resources $V, W \in S^\Omega$, we say that a  transformation $f \in \cT$ \emph{reaches} $W$ from $V$, denoted $V \stackrel{f}{\longrightarrow} W$, if $f(V) \subseteq W$. 
  
  More generally, we say that $W$ can be reached from $V$, and denote it by $V \to W$, if there exists a transformation $f \in \cT$ such that $V \stackrel{f}{\longrightarrow} W$.
\end{definition}

In this definition, we use $\subseteq$ instead of equality because an agent is always allowed to forget some information. For instance,  a transformation that turns fish into parrots also turns fish into birds (the less specific description). Similarly, if we start with a quantum state $\rho$ and there is an allowed transformation $f$ such that $f(\rho) = \sigma' \in \ball^\eps (\sigma)$ according to e.g. the trace distance, we may say that this neighbourhood can be reached, $\rho \to \ball^\eps (\sigma)$, although we might not necessarily be able to reach the state $\sigma$ from $\rho$.\footnote{This is the main reason why we exclude the empty set from $S^\Omega$, and allow only for transformations that do not reach it. If we are wrong, we may believe that we can achieve anything: we can go to any specification with the operation of forgetting, since the empty set is contained in all sets. We would therefore obtain a trivial resource theory where everything is possible.}

Note that, in order to have $V \to W$, there must exist a transformation $f$ that transforms every state $\omega \in V$ as $f(\{\omega\}) \subseteq W$. In particular, this implies that knowing more cannot hurt.

\begin{remark}
Let $(S^\Omega,\cT )$ be a resource theory. The relation $\to$ is a pre-order in $S^\Omega$. Now let $V, W \in S^\Omega$ be two compatible specifications, and let $Z\in S^\Omega$. If $V \to Z$ then $V \cap W \to Z$. 
\end{remark}

This formalism gives us a very natural way to define free resources: they are  specifications that can be achieved from any state, in particular even if we do not know the initial state. In other words, a specification that can be reached starting from $\Omega$.
  
\begin{definition}[Free resources]
Let $(S^\Omega, \cT)$ be a resource theory.  $V \in S^\Omega$ is a \emph{free resource} if $\Omega\to V$. 
\end{definition}

In the next sections, we explore and develop this basic structure.

\section{Relating different theories}
\label{sec:relating_agents}
In this section we address Desideratum~\ref{desi_relating}, on relating different resource theories. First we study how to translate between the  languages, or specification spaces, of two theories. We then investigate  how to combine and relate theories that are restricted in the transformations available to agents.

\subsection{Specification embeddings}
\label{sec:embeddings}

\begin{bigexample}{Extensive embeddings}{extensive_embeddings}

In extensive embeddings, the larger state space $\Sigma$ simply contains more states than $\Omega$, but they do not correspond to a more precise description of reality. 

\begin{center}
 \includegraphics[height=1cm]{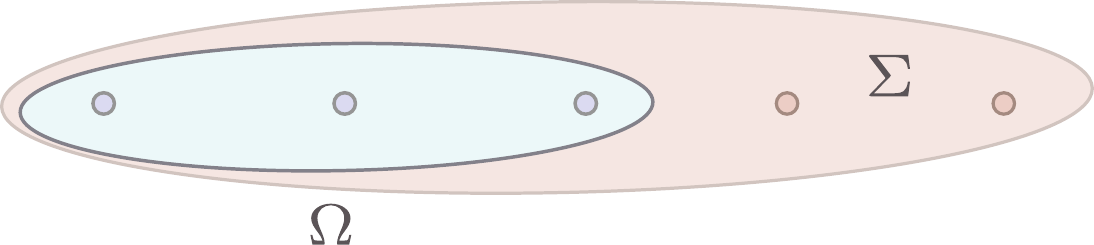}
\end{center}

For example, suppose that you discover a new animal species, {\it Saltuarius eximius}. Then your new state-space of all animal species would be $\Sigma = \Omega\, \cup\, \{${\it Saltuarius eximius}$\}$, and the elements of $\Omega$ would still correspond to basic elements in $\Sigma$.

\begin{center}
  \includegraphics[height=3cm]{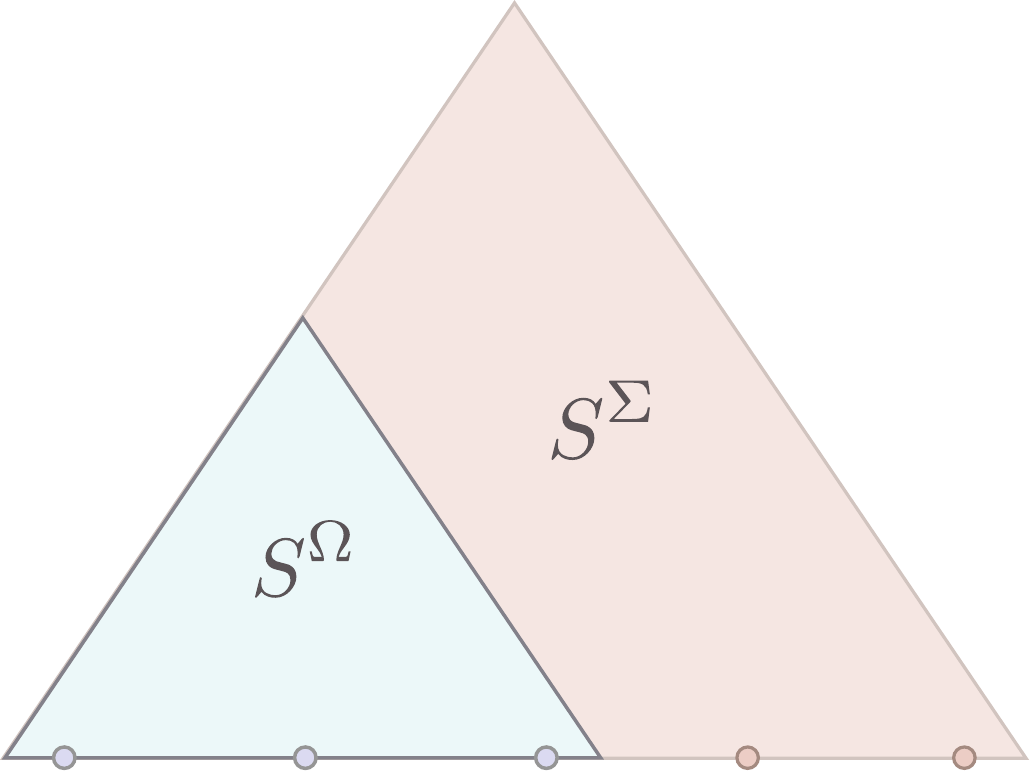}
\end{center}

As an example of extensive embeddings in quantum theory, the set of density operators $\Omega$ has  more elements than the set of density matrices with real entries,  $\Omega(\mathbb R)$. Therefore,  $S^\Omega$ is an extensive embedding of $S^{\Omega(\mathbb R)}$. Fermionic mechanics considers only density operators that correspond to fermions. 
In thermodynamics, an energy shell forms a subset of all possible states of a system, and therefore there is an extensive embedding from the specifications admitted by an observer who knows the total energy of a system to $S^\Omega$.

\end{bigexample}

\begin{bigexample}{Intensive embeddings}{intensive_embeddings}

In intensive embeddings, states of $\Sigma$ correspond to more precise descriptions of states of $\Omega$. In particular, this implies that every specification in $S^\Omega$ has a correspondent description in $S^\Sigma$, and vice-versa. That correspondence is formalized by a Galois insertion $(\e, \h)$ of the reduced space $S^\Omega$ in $S^\Sigma$. 

For animal specifications, an example of an intensive embedding would be, for instance, if a more astute observer could identify subspecies. Her state space $\Sigma$ would have all the subspecies, and states of $\Omega$ (species) would correspond to coarse-grainings of subspecies|in other words, specifications.

\begin{center}
 \includegraphics[height= 1.5cm]{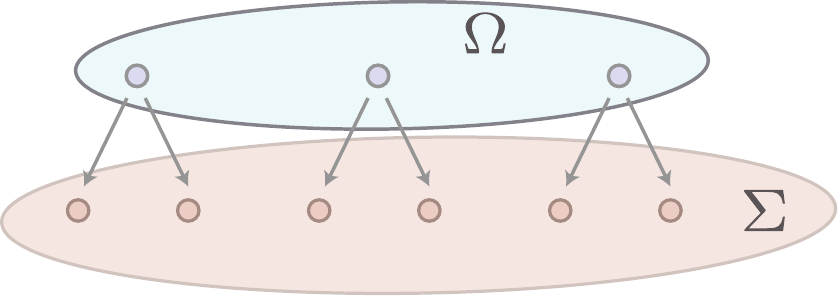} \qquad
  \includegraphics[height=2cm]{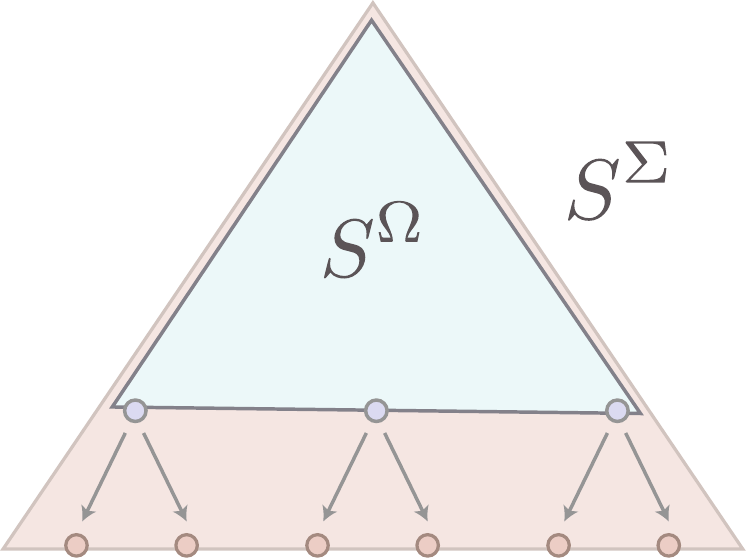}
\end{center}

In quantum theory any coarse-graining of information can also be formulated as an intensive embedding. This is the case of going from the quantum state to all states compatible with some macroscopic observables, or from hidden variables to density matrices.
Another interesting example  concerns local {\it vs} global knowledge. The specification space of a subsystem,  $S^{\Omega_A}$ is embedded in a larger one, $S^{\Omega_{AB}}$: for example, $\e_A (\{\rho_A\} ) =   \widehat{\rho_A} =  \{\omega_{AB}: \ \omega_{A} = \rho_A\}$, and $\h_A(\{\rho_{AB} \}) = \h_A(\{\rho_{A} \otimes \sigma_B\}) = \h_A (\widehat{\rho_A}) = \{\rho_A\}$. Indeed, the lumping map  $\Lump_A = \e_A \circ \h_A$  acts as $\Lump_A(\{\rho_{AB} \}) = \Lump_A (\{\rho_{A} \otimes \sigma_B\}) = \widehat{\rho_A}$.

\begin{center}
 \includegraphics[height=3cm]{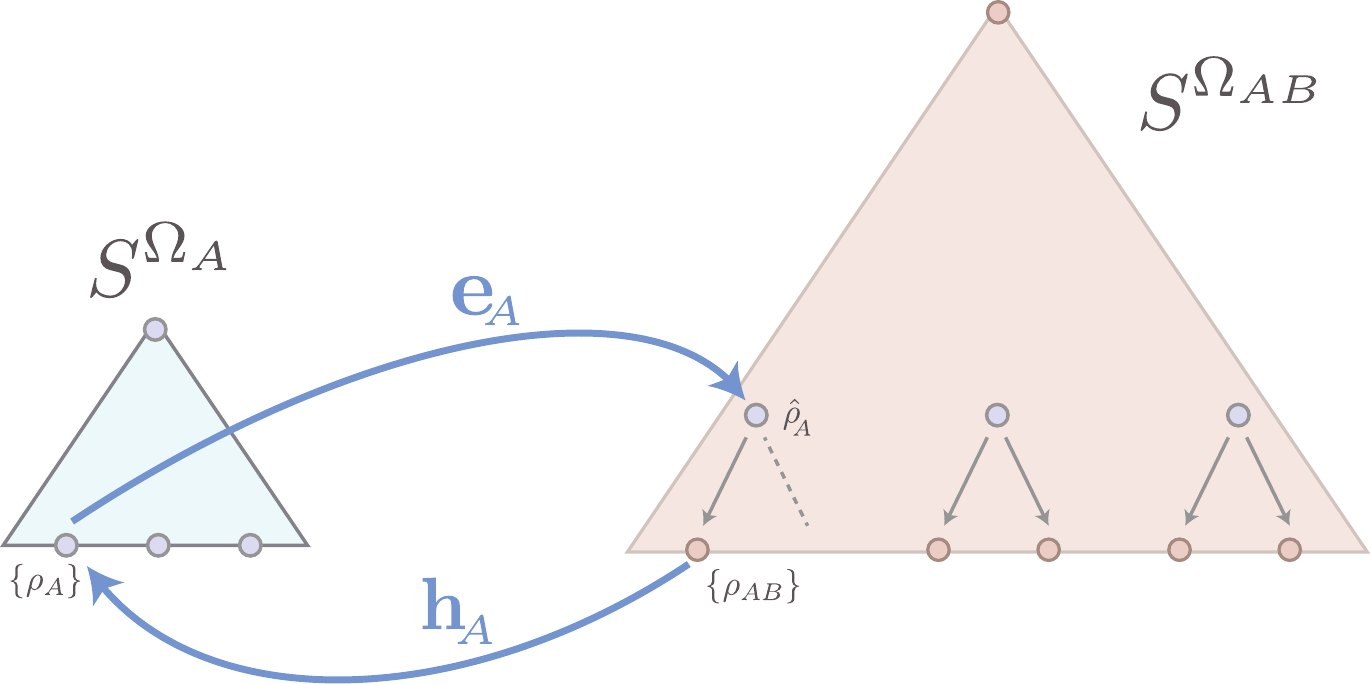}
\end{center}

\end{bigexample}

\begin{bigexample}{Nested embeddings}{nested_embeddings}

\begin{center}
 \includegraphics[height=5cm]{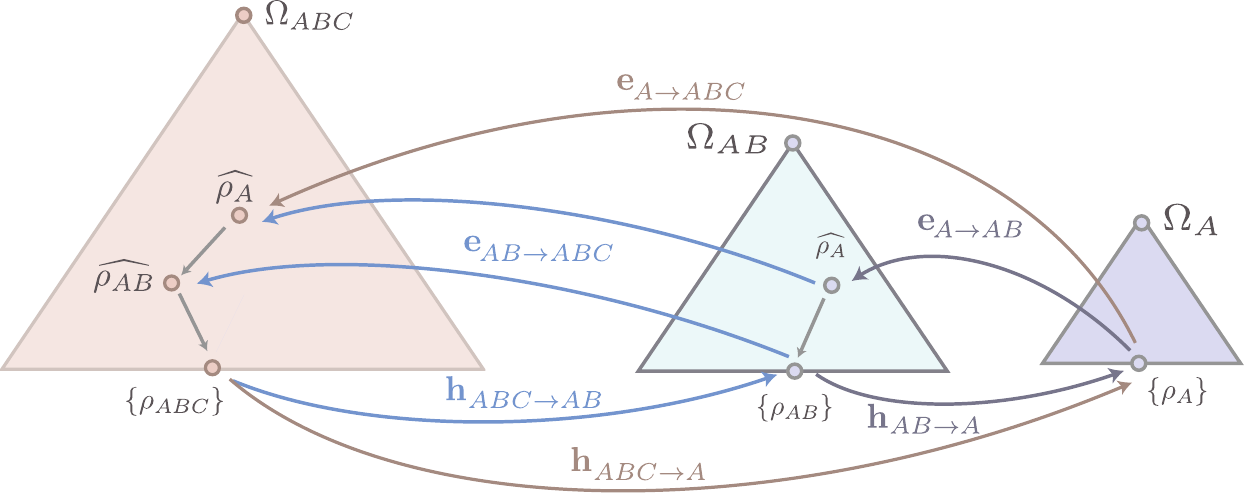}
\end{center}

The authors acknowledge the daunting look of this picture---it depicts a simple concept, if one looks closely. In this example, we want to go from a description of quantum states in a global space $\hilbert_A \otimes \hilbert_B \otimes \hilbert_C$ to a description of reduced states in  $\hilbert_A \otimes \hilbert_B$ and finally in the smaller space, $\hilbert_A$. Proposition \ref{prop:nested} ensures that the embeddings behave naturally: for example, they may be composed as  $\e_{A \to ABC} = \e_{AB \to ABC} \circ \e_{A \to AB}$.

\end{bigexample}

One may find it undesirable that in our framework we always have to specify the most precise state space available within a theory. How can we represent an agent that keeps an open mind to the world and does not assume that her `states' are the most specific descriptions possible?   The answer is two-fold. Firstly, in our framework the choice of state space $\Omega$ by an agent is not connected to a  belief that those states be the ultimate representation of reality. They correspond merely to a choice of language|a level of specificity that the agent is comfortable with. For instance, suppose that an agent wants to engineer molecular interactions. Even if she knows of the existence of quarks, it might be enough for her purposes to model  nucleons as protons and neutrons, ignoring their finer structure.
Secondly, as we now show, an agent can always relate to  more fine-grained descriptions, should they exist, via specification embeddings.

For example, suppose that the state space $\Omega$ corresponds to the most precise description of reality known to an agent, Horatio. This is not necessarily the most complete description possible: another agent may know that there are more things in heaven and earth than are dreamt of in $\Omega$. Additional information about reality can come in two flavours: either Hamlet simply knows of more elements of reality, in which case his state-space is an extension of $\Omega$ (Example~\ref{ex:extensive_embeddings}), or he has more specific descriptions of the elements of $\Omega$ (Example~\ref{ex:intensive_embeddings}). These two cases are formalized as extensive and intensive embeddings of specification spaces.

\begin{definition}[Specification embeddings]
\label{def:specification_embeddings}
Let $S^\Omega$ and $S^\Sigma$ be two specification spaces. 
A function $\e: S^\Omega \to S^\Sigma$ is a \emph{specification embedding} if it is both an order embedding and an order homomorphism. In this case we say that $S^\Omega$ is \emph{embedded} in $S^\Sigma$, and we may call $S^\Omega$ the \emph{reduced} specification space.
We denote the embedding of $S^\Omega$ in $S^\Sigma$ by $\e(S^\Omega) :=\{\e(V)\}_{V \in S^\Omega}$.
 
 A specification embedding is \emph{extensive} if $ |\e(\{\omega \})|=1$ for all $\omega \in \Omega$, and \emph{intensive} if there exists an adjoint homomorphism $\h: S^\Sigma \to S^\Omega$ such that $(\e, \h)$ is a Galois insertion.
\end{definition}

Embeddings and Galois insertions are notions directly inherited from order theory and applied to the join-semilattices  $(S^\Omega, \subseteq)$ and $(S^\Sigma, \subseteq)$. See Appendices~\ref{appendix:algebra_order} and~\ref{appendix:embeddings} for details; for now it suffices to know that an \emph{order embedding}  $\e: S^\Omega \to S^\Sigma$ is a map that satisfies
$V \subseteq W \Leftrightarrow \e(V) \subseteq \e(W)$.
In particular, embeddings are injective and order-preserving. A Galois insertion of $S^\Omega$ in $S^\Sigma$ is a set of two maps $(\e, \h)$, where $\e: S^\Omega \to S^\Sigma$ and $\h: S^\Sigma \to S^\Omega$, such that  $\h \circ \e$ is the identity in $S^\Omega$ and $\e(V) \subseteq Z \ \Leftrightarrow \ V \subseteq \h(Z)$.

The unique characteristic that distinguishes intensive embeddings is that it is always possible to find a reduced specification in $S^\Omega$ from a specification $V \in S^\Sigma$, by taking $\h(V)$. This does not necessarily hold for extensive embeddings, for instance, where we add new states. In particular, intensive embeddings play an important role in defining local specifications from a global one (Example \ref{ex:intensive_embeddings}). 

In fact, it turns out that all specification embeddings can be constructed as a combination of intensive and extensive embeddings. We can use this to compare the languages of  theories that apply to different regimes of the same physical reality, or to  relate the perspectives of  two agents acting within the same underlying theory.
\begin{restatable}[Constructing embeddings]{theorem}{propGeneralEmbeddings}
\label{prop:general_embeddings}
Any specification embedding $\e:\ S^\Omega\to S^\Sigma$ can be written as a combination of an extensive and an intensive embedding, that is $\e=\e_\text{ext}\circ\e_\text{int},$
or likewise
$\e=\e_\text{int}\circ\e_\text{ext}.$
\end{restatable}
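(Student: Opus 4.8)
The plan is to decompose an arbitrary specification embedding $\e: S^\Omega \to S^\Sigma$ by factoring it through an intermediate specification space, built so that one factor captures the ``new states'' added by $\e$ (the extensive part) and the other captures the ``refinement'' of descriptions (the intensive part). The natural candidate for the intermediate object is the specification space generated by the image of the state space, so I would first set $\Sigma' := \bigcup_{\omega \in \Omega} \e(\{\omega\})$, the set of all states in $\Sigma$ that appear in the image of some singleton. The idea is that $\Sigma'$ carries exactly the refinement information that $\e$ encodes about the original states, while $\Sigma \setminus \Sigma'$ consists of genuinely new states that are not descriptions of anything in $\Omega$; these should be handled by the extensive embedding.

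For the factorization $\e = \e_\text{ext} \circ \e_\text{int}$, I would first define the intensive embedding $\e_\text{int}: S^\Omega \to S^{\Sigma'}$ by restricting the codomain, setting $\e_\text{int}(V) := \e(V)$, and checking this lands in $S^{\Sigma'}$. The key task is to produce the adjoint $\h: S^{\Sigma'} \to S^\Omega$ making $(\e_\text{int}, \h)$ a Galois insertion: the natural definition is $\h(Z) := \{\omega \in \Omega : \e(\{\omega\}) \subseteq Z\}$, and I would verify $\h \circ \e_\text{int} = \id$ and the adjunction $\e_\text{int}(V) \subseteq Z \Leftrightarrow V \subseteq \h(Z)$. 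Then I would define the extensive embedding $\e_\text{ext}: S^{\Sigma'} \to S^\Sigma$ as the inclusion $\e_\text{ext}(Z) := Z$ (viewing a subset of $\Sigma'$ as a subset of $\Sigma$); since singletons in $S^{\Sigma'}$ map to singletons in $S^\Sigma$, the extensivity condition $|\e_\text{ext}(\{\sigma\})| = 1$ holds immediately, and one checks it is an order embedding and homomorphism. Composition then recovers $\e$. For the reverse order $\e = \e_\text{int} \circ \e_\text{ext}$, I would run a mirror-image construction, first extending $\Omega$ by the new states and then refining, and verify the analogous properties.

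The main obstacle I expect is establishing the Galois insertion, and in particular confirming that $\e_\text{int}$ really is \emph{onto} the reduced structure in the sense required: I need $\h \circ \e_\text{int} = \id_{S^\Omega}$, which relies crucially on $\e$ being an order embedding so that distinct specifications have distinct images and $\h$ can invert $\e_\text{int}$ exactly. The delicate point is that $\e$ is only assumed to be an order embedding and order homomorphism, not \emph{a priori} element-wise, so I cannot immediately write $\e(V) = \bigcup_{\omega \in V} \e(\{\omega\})$; I would first need to establish that an order homomorphism on a specification space is automatically a join-semilattice homomorphism, and hence element-wise by Remark~\ref{lemma:homomorphisms}, which makes the formula for $\h$ behave correctly on unions. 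Once element-wise behavior is secured, verifying the adjunction and the two factorizations reduces to routine set-theoretic bookkeeping, and the genuine content lives entirely in the correct choice of intermediate space $\Sigma'$ and the proof that $\h$ as defined is a well-defined homomorphism satisfying the Galois conditions.
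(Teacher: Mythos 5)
Your construction is essentially the paper's own proof: both factor $\e$ through the intermediate state space $\Gamma=\bigcup_{\omega\in\Omega}\e(\{\omega\})$, take the extensive part to be the inclusion $S^\Gamma\hookrightarrow S^\Sigma$, and take the intensive part to be $\e$ with restricted codomain, equipped with an explicitly constructed Galois adjoint (the paper specifies $\h_\text{int}$ on singletons by $\h_\text{int}(\{\gamma\})=\{\omega\}$ iff $\gamma\in\e(\{\omega\})$, which agrees with your $\h(Z)=\{\omega:\e(\{\omega\})\subseteq Z\}$ once the element-wise property you rightly insist on is in place), and the mirror construction for the reverse order matches as well.
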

For a  constructive proof and further properties of embeddings, see Appendix~\ref{appendix:embeddings}.

In the above, we saw how to relate  different languages used to describe resources. However, we assumed that these languages were already known, and merely described the  embeddings that connect the corresponding specification spaces.
A more general questions would be whether, given a theory and some restricting criteria expressed in its language, we could construct the embeddings and reduced spaces directly. Indeed this is the case.
For extensive embeddings, the  reduced specification spaces can be understood as simply restricting the state space of the bigger specification space (from all animals to only mammals, from all particles to only fermions). On the other hand, for intensive embeddings the reduced spaces correspond to a coarse-graining of the language (from quarks to protons, from species to families). 
These can be formalized via so-called \emph{lumping functions} on the original specification space. Formally, a lumping is an idempotent inflating endomorphism (see Appendix~\ref{appendix:algebra_order}).
For example, this lumping could be a function that maps species to families ($\Lump_{\text{fam}}($jaguar$) = $ Felidae $= \{$leopard, lynx, puma, \dots $\}$) or bipartite states to the knowledge of a marginal ($\Lump_A(\{ \rho_{AB}\}) = \widehat{\rho_A}$; see Example~\ref{ex:quantum_specifications}). 
Lumpings define equivalence classes of specifications ($\Lump_{\text{fam}}($jaguar$) = \Lump_{\text{fam}} ($puma$)$, so we may write jaguar $\sim_{\text{fam}}$ puma),  which in turn induce the reduced specification spaces and intensive embeddings, as shown in the next theorem.

\begin{restatable}[Embeddings and lumpings]{proposition}{propLumpingEmbedding}
\label{thm:lumping}
Let $S^\Sigma$ be a specification space equipped with a lumping map $\Lump$. Then $\Lump$ induces an intensive embedding of a reduced specification space $S^\Omega$ into $S^\Sigma$ with Galois insertion $(\e, \h)$ such that $\Lump = \e \circ \h$. Conversely, through this relation every intensive embedding gives rise to a lumping $\Lump$ on $S^\Omega$.
\end{restatable}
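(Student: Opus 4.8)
The first observation is that a lumping $\Lump$ is exactly a \emph{closure operator} on the join-semilattice $(S^\Sigma,\subseteq)$: being a join-semilattice homomorphism it is monotone, and together with idempotence and the inflating property $\Lump(W)\supseteq W$ this gives the three defining axioms of a closure operator. The plan is to read off the reduced space $S^\Omega$, the embedding $\e$ and its adjoint $\h$ from the fixed points (the \emph{lumped}, or closed, specifications) of $\Lump$, and then to check the four conditions that make $(\e,\h)$ an intensive embedding with $\Lump=\e\circ\h$.

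For the construction, I would use that $\Lump$ is element-wise (Remark~\ref{lemma:homomorphisms}), so it is determined by the singleton values $\Lump(\{\sigma\})$. Declaring $\sigma\sim\sigma'$ whenever $\Lump(\{\sigma\})=\Lump(\{\sigma'\})$ gives an equivalence relation; I set $\Omega:=\Sigma/{\sim}$, and define $\e(V):=\bigcup_{[\sigma]\in V}\Lump(\{\sigma\})$ and $\h(Z):=\{[\sigma]:\sigma\in Z\}$. Both maps are join-semilattice homomorphisms, since unions pass through them. The identity $\e\circ\h=\Lump$ is then immediate: for any $Z$, $\e(\h(Z))=\bigcup_{\sigma\in Z}\Lump(\{\sigma\})=\Lump(Z)$ by element-wiseness, so this half of the statement requires no work beyond the definitions.

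The remaining Galois-insertion conditions, $\h\circ\e=\mathrm{id}_{S^\Omega}$ and the requirement that $\e$ be an \emph{order embedding} (injective and order-reflecting), are where the real content lies. Both reduce to showing that the closed singletons $\{\Lump(\{\sigma\})\}_{\sigma\in\Sigma}$ behave like the blocks of a partition, i.e.\ that $\tau\in\Lump(\{\sigma\})$ forces $\Lump(\{\tau\})=\Lump(\{\sigma\})$. One inclusion is free: $\{\tau\}\subseteq\Lump(\{\sigma\})$ gives $\Lump(\{\tau\})\subseteq\Lump(\Lump(\{\sigma\}))=\Lump(\{\sigma\})$ by monotonicity and idempotence. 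The reverse inclusion is the crux, and I expect it to be the main obstacle: without it two distinct classes can have nested closures, and then $\e$ collapses $\{[\sigma]\}$ and $\{[\sigma],[\tau]\}$ to the same specification and fails to be injective. I would therefore isolate this \emph{symmetry} property as the key step and verify it directly from the precise definition of a lumping (Appendix~\ref{appendix:algebra_order}); once the blocks are genuinely disjoint, injectivity and order-reflection of $\e$, as well as $\h\circ\e=\mathrm{id}$, all follow, and the adjunction relation defining the Galois insertion is then a formal consequence.

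For the converse, suppose $(\e,\h)$ is a given intensive embedding and put $\Lump:=\e\circ\h$. It is a homomorphism as a composition of homomorphisms; it is idempotent because $\Lump\circ\Lump=\e\circ(\h\circ\e)\circ\h=\e\circ\mathrm{id}\circ\h=\Lump$; and it is inflating because the defining relation of the Galois insertion yields $Z\subseteq\e(\h(Z))$ for every $Z$. Hence $\Lump$ is a lumping, and by the first part it recovers the embedding $(\e,\h)$ through the correspondence $\Lump=\e\circ\h$, closing the loop.
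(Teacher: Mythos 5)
Your overall architecture is the same as the paper's: quotient by the equivalence induced by $\Lump$, read off $\e$ and $\h$ from the closed elements, and get the converse from idempotence plus the inflating property of $\e\circ\h$. Your treatment of the converse and of the identity $\e\circ\h=\Lump$ is fine. The genuine gap is precisely the step you flag as the crux and then defer: the ``symmetry'' property that $\tau\in\Lump(\{\sigma\})$ forces $\Lump(\{\tau\})=\Lump(\{\sigma\})$. This is \emph{not} derivable from the definition of a lumping as an idempotent inflating endomorphism, and your construction genuinely fails without it. Take $\Sigma=\{a,b\}$ and the element-wise map with $\Lump(\{a\})=\{a\}$, $\Lump(\{b\})=\{a,b\}$, hence $\Lump(\{a,b\})=\{a,b\}$; this is inflating, idempotent and a join-homomorphism, so it is a lumping, yet $a\in\Lump(\{b\})$ while $\Lump(\{a\})\neq\Lump(\{b\})$. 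In your construction $[a]\neq[b]$, so $\e(\{[b]\})=\{a,b\}=\e(\{[a],[b]\})$ is not injective and $\h\circ\e(\{[b]\})=\{[a],[b]\}\neq\{[b]\}$ --- exactly the failure mode you anticipated. So the missing lemma is not merely unproven; it is false, and the proof cannot be completed as written.

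For comparison, the paper sidesteps your equivalence on states by quotienting the whole specification space, $V\sim W\iff\Lump(V)=\Lump(W)$, and building the Galois insertion with $g([V])=\Lump(V)$ (Proposition~\ref{thm:induced_Galois}); injectivity of $g$ on equivalence classes is then automatic. Only afterwards (Lemma~\ref{lemma:induced_reduced_space}) is the quotient identified with a genuine specification space over the state set $\{[\{\sigma\}]\}_{\sigma\in\Sigma}$, and it is worth noting that this last identification quietly relies on the same partition property you needed: in the example above $[\{b\}]=[\{a,b\}]=[\{a\}]\vee[\{b\}]$, so the map $i$ of that lemma is not well defined, and in fact no intensive embedding with $\e\circ\h=\Lump$ exists for this lumping at all (its image has two elements, while $|S^\Omega|=2^{|\Omega|}-1$ can never equal two). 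The obstruction you ran into is therefore intrinsic to the statement, which implicitly requires lumpings whose singleton closures partition $\Sigma$ (as the partial trace and the taxonomy examples do). A correct write-up must either add that hypothesis and derive the partition property from it, or stay at the level of the quotient of $S^\Sigma$; asserting the symmetry property ``from the definition of a lumping'' will not work.
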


We can use lumpings to find specifications that correspond to reduced knowledge, like the local specification $\widehat{\rho_A}$.
\begin{definition}[Local specification]
\label{def:local_specification}
Let $S^\Omega$ be a specification space intensively embedded in another, $S^\Sigma$, with the corresponding lumping $\Lump$. We say that a specification $V \in S^\Sigma$ is \emph{local} in $\Omega$ (or coarsed to $\Omega$) if $\Lump(V) = V$.
\end{definition}

Proposition~\ref{thm:lumping} also ensures nested embeddings behave well. This is the case of going from a description $\Sigma$ of all subspecies to one of all species, $\Omega$, and then to one of all families of animals, $\Gamma$.  The following result tells us that if we have two of the possible embeddings between the three specification spaces, we can always find the third embedding in the natural way (see Example~\ref{ex:nested_embeddings}). 

\begin{restatable}[Nested embeddings]{proposition}{propNested}  

\label{prop:nested}
Let $S^{A}$, $S^{AB}$ and $S^{ABC}$ be specification spaces. Given two intensive embeddings 
$\e_{A \to AB}$ and $ \e_{AB \to ABC}$, then
$$\e_{A \to ABC}:=\e_{AB \to ABC } \circ \e_{A \to ABC}$$
is an intensive embedding of $S^{A}$ in $S^{ABC}$.

Conversely, given two intensive embeddings with Galois insertions 
$(\e_{AB \to ABC}, \ \h_{ABC \to  AB})$ and $(\e_{A \to ABC}, \ \h_{ABC \to A})$
such that the respective lumpings in $S^{ABC}$ satisfy $\Lump_{ABC \to AB} \subseteq \Lump_{ABC \to A} $, then 
 $$\e_{A \to AB} := \h_{ABC \to AB} \circ \e_{A \to ABC}$$
is  an intensive embedding of $S^A$ in $S^{AB}$.
The two methods give rise to the same embeddings.
\end{restatable}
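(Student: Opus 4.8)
The plan is to handle every intensive embedding through its Galois insertion $(\e,\h)$ and its lumping $\Lump=\e\circ\h$, and to reduce the entire statement to algebra on these maps using only three facts established earlier: the adjunction $\e(V)\subseteq Z \Leftrightarrow V\subseteq \h(Z)$, the insertion identity $\h\circ\e=\mathrm{id}$, and that each lumping is inflating and idempotent. Throughout I abbreviate, for the converse direction, $\e_1=\e_{A\to ABC}$, $\h_1=\h_{ABC\to A}$, $\e_2=\e_{AB\to ABC}$, $\h_2=\h_{ABC\to AB}$, $\Lump_1=\e_1\circ\h_1$ and $\Lump_2=\e_2\circ\h_2$.

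For the forward (composition) direction I would first recall that a composite of order embeddings is an order embedding and that lower adjoints compose, then simply verify that $(\e_{AB\to ABC}\circ\e_{A\to AB},\ \h_{AB\to A}\circ\h_{ABC\to AB})$ is a Galois insertion. Both checks are mechanical: the insertion identity reduces to $\h_{AB\to A}\circ(\h_{ABC\to AB}\circ\e_{AB\to ABC})\circ\e_{A\to AB}=\h_{AB\to A}\circ\e_{A\to AB}=\mathrm{id}$, and the adjunction follows by chaining the two given adjunctions. This direction I expect to be routine bookkeeping.

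The substance is the converse. I would propose $e:=\h_2\circ\e_1$ as the embedding $S^A\to S^{AB}$ and $h:=\h_1\circ\e_2$ as its candidate adjoint, and show $(e,h)$ is a Galois insertion. The adjunction $e(V)\subseteq Z \Leftrightarrow V\subseteq h(Z)$ I would read off the chain $\h_2(\e_1(V))\subseteq Z \Leftrightarrow \e_1(V)\subseteq \e_2(Z) \Leftrightarrow V\subseteq \h_1(\e_2(Z))$, where the first equivalence is the ``flipped'' adjunction $\h_2(Y)\subseteq Z \Leftrightarrow Y\subseteq\e_2(Z)$ (valid because $\h_2\circ\e_2=\mathrm{id}$ and $\Lump_2$ is inflating) and the second is the adjunction of $(\e_1,\h_1)$. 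The hard part, and the only place the hypothesis $\Lump_{ABC\to AB}\subseteq\Lump_{ABC\to A}$ is used, is the insertion identity $h\circ e=\mathrm{id}$: a Galois connection only forces $h\circ e\supseteq\mathrm{id}$. The key observation I would isolate is that for every $V\in S^A$ the specification $\e_1(V)$ is a fixed point of $\Lump_2$. Indeed $\e_1(V)$ is a fixed point of $\Lump_1$ since $\Lump_1\circ\e_1=\e_1\circ\h_1\circ\e_1=\e_1$, so $\e_1(V)\subseteq\Lump_2(\e_1(V))\subseteq\Lump_1(\e_1(V))=\e_1(V)$, the first inclusion by inflation and the second by the lumping hypothesis; hence $\Lump_2(\e_1(V))=\e_1(V)$. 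Then $h(e(V))=\h_1(\e_2(\h_2(\e_1(V))))=\h_1(\Lump_2(\e_1(V)))=\h_1(\e_1(V))=V$. With adjunction and insertion identity in hand, $e$ is automatically an order embedding and a join-homomorphism, since lower adjoints preserve joins, so it is an intensive embedding.

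Finally, for the claim that the two constructions agree, I would check both round trips. Composing $\e_{A\to AB}$ and $\e_{AB\to ABC}$ and then decomposing returns $\h_{ABC\to AB}\circ(\e_{AB\to ABC}\circ\e_{A\to AB})=(\h_2\circ\e_2)\circ\e_{A\to AB}=\e_{A\to AB}$; conversely, decomposing global data and recomposing gives $\e_2\circ(\h_2\circ\e_1)=\Lump_2\circ\e_1=\e_1$ by the fixed-point property just proved. I would also confirm that the lumping hypothesis holds automatically for a composite, so the two methods are defined on the same data: with adjoint $\h_{ABC\to A}=\h_{AB\to A}\circ\h_2$ one gets $\Lump_{ABC\to AB}=\e_2\circ\h_2\subseteq\e_2\circ\Lump_{AB\to A}\circ\h_2=\Lump_{ABC\to A}$, using that $\Lump_{AB\to A}$ is inflating and $\e_2$ order-preserving. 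To reiterate, the one genuine obstacle is the insertion identity in the converse direction, which is exactly where the fixed-point consequence of the lumping inequality must be extracted; everything else is manipulation of adjunctions.
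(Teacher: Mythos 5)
Your proof is correct, but it is organized differently from the paper's. The paper reduces both directions to Proposition~\ref{thm:lumping}: it shows that the relevant composite map ($\Lump_{ABC\to A}=\e_{A\to ABC}\circ\h_{ABC\to A}$ in the forward direction, $\Lump_{AB\to A}=\e_{A\to AB}\circ\h_{AB\to A}$ in the converse) is an inflating, idempotent endomorphism, and then invokes the lumping--embedding correspondence; the hypothesis $\Lump_{ABC\to AB}\subseteq\Lump_{ABC\to A}$ enters in the idempotence computation, via exactly the identity you isolate (the finer lumping acts as the identity on the image of the coarser one, equivalently on the image of $\e_{A\to ABC}$). You instead verify the Galois-insertion axioms for the explicit pair $(\h_{ABC\to AB}\circ\e_{A\to ABC},\ \h_{ABC\to A}\circ\e_{AB\to ABC})$ directly, using the ``flipped'' adjunction $\h_2(Y)\subseteq Z\Leftrightarrow Y\subseteq\e_2(Z)$ (which does hold here, by the argument you give) and your fixed-point lemma $\Lump_2\circ\e_1=\e_1$ for the insertion identity. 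Your route buys something concrete: it exhibits both adjoints explicitly and establishes that the embedding is literally of $S^A$ in $S^{AB}$ via the stated map, whereas the paper's appeal to Proposition~\ref{thm:lumping} a priori only produces an embedding of a quotient space \emph{isomorphic} to it, a gap the paper glosses over with ``would induce an embedding isomorphic to it.'' You also check both round trips for the agreement of the two constructions and verify that the lumping hypothesis holds automatically for composites, which the paper only does in one direction. The only cosmetic omission is an explicit remark that the candidate adjoint $\h_{ABC\to A}\circ\e_{AB\to ABC}$ is itself a homomorphism (immediate, as a composite of homomorphisms), which the definition of intensive embedding requires.
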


\subsection{Relating resource theories}

\begin{bigexample}{Restricted agents within resource theories}{reduced}

\begin{center}
 \includegraphics[height=3cm]{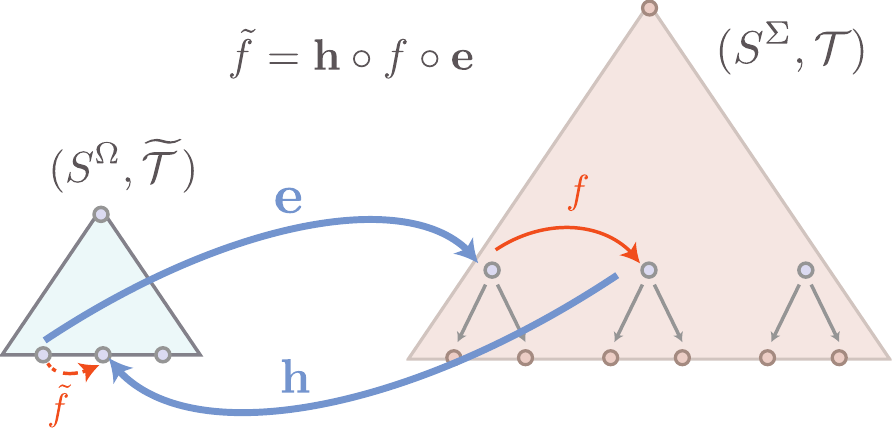}
\end{center}

The simplest example of a restricted agent within a global resource theory $(S^\Sigma, \cT)$ is someone who has limited knowledge, like a macroscopic observer. Their perspective can be represented by a  smaller specification space $S^\Omega$, which is related to $S^\Sigma$ via an intensive embedding $(\e, \h)$. The coarse-grained versions of the transformations in $S^\Omega$ are then built as $\widetilde \cT = \{ \h \circ f \circ \e, \ f \in \cT \}$.
One can use this method to verify whether macroscopic thermodynamics arises from different variations of theories of quantum thermal operations.

\begin{center}
 \includegraphics[height=4cm]{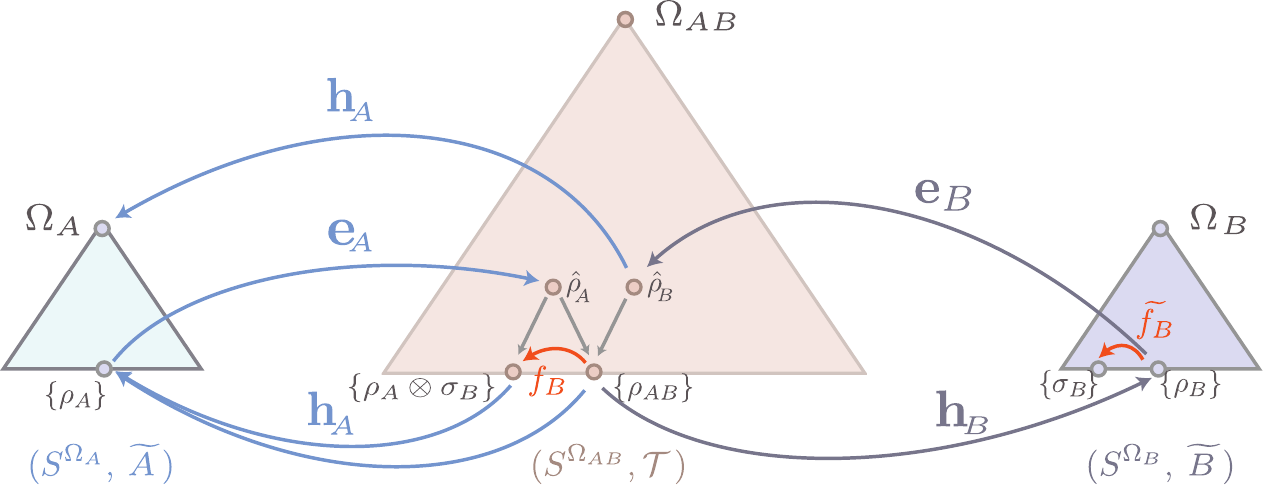}
\end{center}

A more complex example is the case of two local agents (Alice and Bob) acting within a global theory $(S^{\Omega_{AB}}, \cT)$. On the one hand they have access to smaller specification spaces, $S^{\Omega_{A}}$ and $S^{\Omega_{B}}$, conneted via embeddings $(\e_A, \h_A)$ and $(\e_B, \h_B)$ to the global space. On the other, they only have access to subsets of transformations $A, B \in \cT$, corresponding to local operations. Again we need to take the reduced versions of those functions: for example, for functions in $S^{\Omega_{B}}$ we take $\widetilde B = \{ \h_B \circ f_B \circ \e_B, \ f_B \in B \}$. That is, if the original form of a local function in the global space is something like  $f_B =  \id_A \otimes \E_B$, then its reduced version in $S^{\Omega_{B}}$ is $\widetilde f_B =  \E_B$.

The concept of independence between  these two local agents has different aspects to it: for example, it could mean that an action accessible to Bob, like $f_B$, should not impact the reduced specification on Alice's side. It may also mean that any two local specifications are compatible. Finally, it may imply that Alice's  transformations commute with Bob's. We discuss these notions in  Section~\ref{sec:composition}.

\end{bigexample}

Now that we have seen how to relate the languages used by different theories, we can go one step further and consider a particular such agent acting within a global resource theory $(S^\Omega, \cT)$.
This agent, Alice, may be restricted in knowledge, in which case her specification space $S^{\Omega_A}$ would be embedded in $S^\Omega$. She may also be restricted in her actions|for instance, if she is limited to local operations. 
We formalize this notion for intensive embeddings. See Example~\ref{ex:reduced} for illustrations of both restrictions.

\begin{definition}[Restricted agent within a global resource theory] \label{def:restricted}
We say that a resource theory   $(S^{\Omega_A}, \widetilde{A})$ represents a \emph{restricted agent} within an \emph{underlying theory} $(S^\Omega, \cT )$  if:
\begin{enumerate}
\item there is an intensive embedding of $S^{\Omega_A}$ in $S^\Omega$, represented by a Galois insertion $(\e_A, \h_A)$, and
\item  there is  a submonoid  of operations $A \subseteq \cT$ such that
$$ \widetilde{A} =  \{ \h_A \circ f \circ \e_A: \ f \in  A\}. $$
\end{enumerate}
\end{definition}

For example, imagine that the global theory models thermodynamics in the presence of a heat bath (allowing agents to thermalize subsystems and apply reversible, energy-conserving transformations), and Alice is a local agent who sees and controls an isolated system $\hilbert_A$. First we would need to find an embedding of Alice's specification space $S^{\Omega_A}$ in the global one. The second step is to consider only operations of the form  $U_A \otimes \id_{\text{rest}} $, with $[U_A, H_A]=0$, excluding  both  global and thermalizing operations. The set $A$ would be composed of these operations, and $\widetilde A$ would have the \emph{reduced versions} of those operations, from the point of view of Alice: simply $U_A$.
In the next section, we analyze representation of local agents in more detail.

Another way to play with different resource theories is to combine them. As we saw, resource theories are operationally defined|they start by considering a set of operations that are easy to implement. This set might be quite complex, like the set of operations that satisfy a number of constraints. It often helps to study the effect of each of those constraints separately, and build up the final resource theory from individual pieces. This also allows us to vary and adapt the resource theory to the circumstances, for instance if technology evolves to the point where one of the constraints becomes obsolete.

\begin{definition}[Combined resource theories] 
\label{def:combined_rt}
Let $(S^\Omega, \cT)$ and $(S^\Omega, \cF)$ be two resource theories on the same specification space $S^\Omega$. The \emph{combined resource theory}  of $\cT$ and $\cF$ in $S^\Omega$ is defined as $(S^\Omega, \cT \cap \cF)$.
\end{definition}

For example, combining the resource theory of local operations and the theory of thermal operations gives us a theory of thermal operations under locality restrictions.~\cite{Oppenheim02}
A construction $V \to W$ under a combined resource theory $(S^\Omega, \cT \cap \cF)$ is only possible if it is allowed in both  $(S^\Omega, \cT)$ and $(S^\Omega, \cF)$.

\section{Locality and independence}
\label{sec:composition}
In this section we find and characterize the local structure found in a global resource theory $(S^{\Omega}, \cT)$, as per Desideratum~\ref{desi_composition}. From an operational perspective, subsystems correspond to simplified descriptions for local resources and local processes, allowing us to split up global knowledge into independent smaller parts. This means that we are looking for local resource theories of the form $(S^{\Omega_A}, \widetilde{A})$, which we can model by restricted resource theories as in Definition~\ref{def:restricted}.

For two such local resource theories $(S^{\Omega_A}, \widetilde{A})$ and $(S^{\Omega_B}, \widetilde{B})$ (representing local agents Alice and Bob, say), we identify two main operational aspects of independence guaranteed by a traditional subsystem structure: (i)  local descriptions of resources in $S^{\Omega_A}$ and $S^{\Omega_B}$ are independent,
and
(ii)
local actions on $A$  do not affect local descriptions in $S^{\Omega_B}$ (and vice-versa). 
We find relevant consequences of both aspects:  whenever (i) local descriptions do not yield information about other subsystems, local resources can be freely composed, an aspect that is frequently exploited in traditional resource theories. Similarly, in order to split up knowledge, process it locally, and recombine it, it is enough that Alice's operations do not affect Bob's local states (ii). 
Studying the two aspects individually will also be useful to define meaningful concepts of composition, conversion rates or catalysis in general resource theories in which some of the above aspects may fail to be satisfied. They will then yield minimal conditions for such concepts to be well-defined. This will be the subject of Part II of this work|for a brief outline, see Section~\ref{sec:conclusions}.

A related question is how to find such a subsystem structure in the first place, starting only from a global theory. We find that  commutation relations at the level of transformations allow us to construct independent local resource theories. Although this is not necessarily the only way to find an operational subsystem structure, it gives us a complete structure. For example, in quantum theory it recovers the partial trace.

\subsection{Free composition  of local resources}
The first aspect of independence is that 
specifications in the local specification space of Alice yield no information about Bob's local knowledge. For example, in quantum theory $\local{\rho_A}$ carries no information about the local state in $S^{\Omega_B}$, as $\h_B (\local{\rho_A}) = \Omega_B$. 
This leads to the notion of independence between two specification embeddings.

\begin{definition}[Independence of embeddings]  \label{def:independent_embeddings}
Let $S^\Omega$ be a specification space. 
Let there be an intensive embedding $(\e_A, \h_A)$ of a specification space $S^{\Omega_A}$  in $S^{\Omega}$. We say that  a specification $V\in S^\Omega$ is \emph{compatible} with the embedding if $\h_A(V)=\Omega_A$.

Now let there be a second intensive embedding $(\e_B, \h_B)$ of a specification space $S^{\Omega_B}$  in $S^{\Omega}$.
We say that the two embeddings are \emph{compatible} if:
\begin{enumerate}
\item for any specification $V_A\in S^{\Omega_A}$, $\e_A(V_A)$ is compatible with $S^{\Omega_B}$, and 
\item for any specification $W_B\in S^{\Omega_B}$, $\e_B(W_B)$ is compatible with $S^{\Omega_A}$.
\end{enumerate}
\end{definition}

It turns out that embeddings are compatible precisely when two local specifications in the respective specification spaces are freely composable. For example, in quantum theory this would mean that we can combine any $\rho_A$ and $\sigma_B$, that is $\local{\rho_A} \cap \local{ \sigma_B}$ always exists.

\begin{restatable}[Free composition of local resources]{proposition}{propFreeComposition}

\label{prop:free_composition}
Let $S^\Omega$ be a specification space. 
Let there be two intensive embeddings of  specification spaces $S^{\Omega_A}$ and $S^{\Omega_B}$ in $S^{\Omega}$. Then the following are equivalent:
\begin{enumerate}
\item the two embeddings are compatible,
\item for any specifications $V_A\in S^{\Omega_A}$ and $W_B\in S^{\Omega_B}$, there exists a specification $Z\in S^\Omega$ such that $\h_A(Z)=V_A$ and $\h_B(Z)=W_B$,
\item any two specifications $V_A\in S^{\Omega_A}$, $W_B\in S^{\Omega_B}$ can be composed, $\e_A(V_A)\cap \e_B(W_B)\neq \emptyset$, and this combination leaves local information unchanged,  $\h_A(\e_A{V_A}\cap \e_B{W_B})=V_A$.
\end{enumerate}
\end{restatable}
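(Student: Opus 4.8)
The plan is to prove the cycle $1\Rightarrow 3\Rightarrow 2\Rightarrow 1$, but the decisive move I would make first is to strip the abstract Galois data down to a concrete normal form. Both $\e_A$ and $\h_A$ are semilattice homomorphisms (a specification embedding is an order homomorphism, and $\h_A$ is by definition an adjoint homomorphism), so by Remark~\ref{lemma:homomorphisms} each acts element-wise. Now I combine the two defining features of an intensive embedding: $\h_A\circ\e_A=\id$ and $\e_A\circ\h_A=\Lump_A$ inflating. For any $\omega$, inflation gives $\{\omega\}\subseteq\e_A(\h_A(\{\omega\}))$, so $\omega\in\e_A(\{\rho\})$ for some $\rho\in\h_A(\{\omega\})$; applying $\h_A$ and using $\h_A\circ\e_A=\id$ forces $\h_A(\{\omega\})=\{\rho\}$ to be a singleton. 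Writing $\h_A(\{\omega\})=\{a(\omega)\}$ defines a surjection $a:\Omega\to\Omega_A$ with $\h_A(Z)=a(Z)$ (direct image) and $\e_A(V_A)=a^{-1}(V_A)$ (preimage, checked via element-wise $\e_A$ and $\h_A\circ\e_A=\id$), and analogously a surjection $b$ for the second embedding. I would isolate this correspondence (available in Appendix~\ref{appendix:embeddings}) as a lemma, since it turns every clause below into an elementary statement about fibres of $a$ and $b$.

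With this dictionary, compatibility becomes transparent. The condition that $\e_A(V_A)=a^{-1}(V_A)$ be compatible with the second embedding, $\h_B(a^{-1}(V_A))=\Omega_B$, says that each fibre of $a$ lying over $V_A$ is spread by $b$ across all of $\Omega_B$. Testing on singletons $V_A=\{\rho\}$, $W_B=\{\sigma\}$ shows that both clauses of Definition~\ref{def:independent_embeddings} reduce to the single symmetric requirement that the joint map $(a,b):\Omega\to\Omega_A\times\Omega_B$ be surjective, i.e.\ $a^{-1}(\rho)\cap b^{-1}(\sigma)\neq\emptyset$ for every pair. This reformulation is the real content of the proposition; the two clauses collapse to one.

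I would then run the cycle. For $1\Rightarrow 3$, take the canonical witness $Z=\e_A(V_A)\cap\e_B(W_B)=a^{-1}(V_A)\cap b^{-1}(W_B)$: surjectivity of $(a,b)$ supplies, for each $(\rho,\sigma)\in V_A\times W_B$, a point of $Z$ over it, so $Z\neq\emptyset$; the inclusion $a(Z)\subseteq V_A$ is automatic from $Z\subseteq a^{-1}(V_A)$, and hitting every fibre gives the reverse, so $\h_A(Z)=a(Z)=V_A$, and symmetrically $\h_B(Z)=W_B$, which is statement~3. The step $3\Rightarrow 2$ is immediate, as the same $Z$ is the required specification. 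For $2\Rightarrow 1$, given $Z$ with $\h_A(Z)=V_A$ and $\h_B(Z)=W_B$, inflation yields $Z\subseteq\e_A(\h_A(Z))=\e_A(V_A)$ and likewise $Z\subseteq\e_B(W_B)$; specializing $V_A,W_B$ to singletons exhibits a nonempty $Z\subseteq a^{-1}(\rho)\cap b^{-1}(\sigma)$, which is exactly surjectivity of $(a,b)$, hence compatibility.

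The main obstacle is the nonemptiness assertion $\e_A(V_A)\cap\e_B(W_B)\neq\emptyset$ in statement~3: it cannot be extracted from monotonicity and the adjunction alone, since those only give the free inclusion $\h_A(Z)\subseteq V_A$ and produce no points. What rescues it is precisely the single-valuedness of $\h_A$ and $\h_B$ on states established in the first step, which lets me read every specification as a union of fibres and reduce nonemptiness to the pointwise surjectivity of $(a,b)$. This is why I would invest up front in the normal form rather than argue abstractly. I would also note that statement~3 is written asymmetrically, giving only $\h_A(\e_A(V_A)\cap\e_B(W_B))=V_A$, but the $B$-version follows by the manifest symmetry of the setup.
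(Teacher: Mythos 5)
Your proof is correct, and the load-bearing fact is the same one the paper relies on — that an intensive embedding sends singletons of $\Omega$ to singletons of $\Omega_A$ (this is exactly the lemma $|\h(\{\sigma\})|=1$ in Appendix~\ref{appendix:embeddings}), which is what licenses all fibre-wise reasoning — but you organize the argument differently. The paper proves $1\Leftrightarrow 3$ by first isolating a single-embedding statement (Lemma~\ref{lemma:ignorance_composability}, characterizing $\h_A(V)=\Omega_A$ as compatibility of $V$ with every local specification) and then handles $2\Leftrightarrow 3$ by direct manipulation of the lumpings, using $Z\subseteq\Lump_A(Z)\cap\Lump_B(Z)$ together with idempotence and inflation. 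You instead invest up front in the normal form $\h_A=$ direct image of a map $a:\Omega\to\Omega_A$ and $\e_A=a^{-1}$ (the latter follows from the adjunction applied to singletons, as you indicate), reformulate both clauses of Definition~\ref{def:independent_embeddings} once and for all as joint surjectivity of $(a,b)$, and run the cycle $1\Rightarrow 3\Rightarrow 2\Rightarrow 1$. Your route makes the hidden symmetry of compatibility and the nonemptiness of $\e_A(V_A)\cap\e_B(W_B)$ completely transparent — both reduce to the statement that every fibre product $a^{-1}(\rho)\cap b^{-1}(\sigma)$ is inhabited — whereas the paper's route stays at the level of the Galois insertion and yields a reusable one-embedding lemma. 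Your closing observation that clause~3 is stated asymmetrically but the $B$-side identity $\h_B(\e_A(V_A)\cap\e_B(W_B))=W_B$ is recoverable (apply the nonemptiness clause to the pair $(V_A,\{\sigma\})$ for each $\sigma\in W_B$) is a point the paper's own proof of $3\Rightarrow 2$ silently assumes, so it is worth making explicit as you do.
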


It is worth noting that not all resource theories with a natural notion of subsystems satisfy free composition of local resources. For example, in quantum theory, if $\Omega_{AB}$ consists of all pure bipartite states, and Alice sees a mixed state, we know that Bob cannot have a pure state on his subsystem. 
Similarly, in the genetics of some animal, it could be that not all combinations of genes are viable. The same is true if there is some conserved quantity in the global theory, like a fixed overall energy: then two local states of high energy may not be compatible.  

However, free composition of local resources is at the heart of many concepts in traditional resource theories, such as conversion rates and catalysis. Therefore, if one wishes to find one such resource theory within a physical theory, it is essential  to verify that free composition holds on the systems of interest.

\subsection{Independent processing}

In addition to compatibility of the respective embeddings, the second central aspect of two agents to be able to treat knowledge independently lies in the relation between the embeddings and the transformations that they can apply. That is, Alice's local knowledge should still be valid after Bob performs a local operation. In quantum theory, local specifications in $S^{\Omega_A}$ are independent of local maps on $B$. 

\begin{definition}[Independent agents]
\label{def:independent_agents}
Let $(S^\Omega,\cT)$ be a resource theory and $B \subseteq \cT$ a submonoid of transformations.
Let there be an intensive embedding of a specification space $S^{\Omega_A}$ in $S^{\Omega}$, defined by the Galois insertions $(\e_A, \h_A)$.

We say that a specification $V_A \in \Omega_A$ is independent of the transformations in $B$ if,  for all  $f_B\in B$, 
$$\h_A \circ f_B\circ\e_A(V_A) \subseteq V_A .$$

We that the \emph{embedding is independent} of $B$ if all $V_A\in S^{\Omega_A}$ are independent of $B$.

Two restricted  resource theories $(S^{\Omega_A},\tilde{A})$ and $(S^{\Omega_B},\tilde{B})$ of a global theory $(S^\Omega,\cT)$  are said to be \emph{independent} if
$S^{\Omega_A}$ is independent of $B$, and $S^{\Omega_B}$ is independent of $A$.
\end{definition} 

This independence condition is particularly important in cryptographic applications, where Bob could represent some (unpredictable) adversary, but it is also relevant in the context of locality or isolated degrees of freedom.
When this condition is satisfied,  any knowledge $V_A\in S^{\Omega_A}$ can be ignored for the purpose of implementing a transformation in the monoid $B$.

\begin{restatable}[Independent processing]{proposition}{propIndependentProcessing}
\label{prop:independent_processing}
Let $(S^\Omega,\cT)$ be a resource theory and $B \subseteq \cT$ a submonoid of transformations. 
Let there be an intensive embedding of a specification space $S^{\Omega_A}$ in $S^{\Omega}$, defined by the Galois insertion $(\e_A, \h_A)$. 
If the embedding is independent of $B$, then 
$$f_B (\e_A(V_A) \cap W) = \e_A(V_A) \cap f_B(W) ,$$
for any transformation $f_B \in B$, any local specification $V_A \in S^{\Omega_A}$ and any other specification $W \in S^\Omega$ that is compatible with  $\e_A(V_A)$. 
\end{restatable}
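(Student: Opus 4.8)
The plan is to reduce the set identity to element-wise statements, using that every $f\in\cT$ is a join-semilattice homomorphism (Remark~\ref{lemma:homomorphisms}), so $f_B(U)=\bigcup_{\omega\in U}\tilde f_B(\omega)$ for some $\tilde f_B:\Omega\to S^\Omega$, and likewise for $\e_A,\h_A$. The backbone is a concrete description of the intensive embedding. First I would show that, because $\e_A,\h_A$ are join-homomorphisms forming a Galois insertion with $\h_A\circ\e_A=\mathrm{id}$, the map $\h_A$ sends singletons to singletons, so that the atomic local specifications $P_{\omega_A}:=\e_A(\{\omega_A\})$ \emph{partition} $\Omega$. They cover $\Omega$ because $\bigcup_{\omega_A}P_{\omega_A}=\e_A(\Omega_A)=\Lump_A(\Omega)\supseteq\Omega$ (using $\h_A(\Omega)=\Omega_A$ and that $\Lump_A$ is inflating), and they are disjoint because $\sigma\in\e_A(\{\omega_A\})$ forces $\h_A(\{\sigma\})\subseteq\{\omega_A\}$ by the defining adjunction $\h_A(\{\sigma\})\subseteq V_A\iff\sigma\in\e_A(V_A)$, whence $\h_A(\{\sigma\})=\{\omega_A\}$ is a singleton. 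Writing $\pi:\Omega\to\Omega_A$ for the resulting projection (sending $\sigma$ to its block label), I obtain $\h_A(Z)=\pi(Z)$ and $\e_A(V_A)=\pi^{-1}(V_A)$, so local specifications are exactly the $\pi$-saturated sets. In the quantum case $\pi$ is the partial trace and $P_{\rho_A}=\local{\rho_A}$.

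Next I would translate the independence hypothesis into a \emph{fibre-preservation} statement. Independence gives $\h_A\circ f_B\circ\e_A(V_A)\subseteq V_A$ for all $V_A$; specializing to $V_A=\h_A(\{\omega\})=\{\pi(\omega)\}$ and using $\{\omega\}\subseteq\Lump_A(\{\omega\})=\e_A\h_A(\{\omega\})$ yields $\h_A(f_B(\{\omega\}))\subseteq\{\pi(\omega)\}$. Since $\h_A$ is a join-homomorphism and each $\h_A(\{\sigma\})$ is a nonempty singleton, this forces $\pi(\sigma)=\pi(\omega)$ for every $\sigma\in\tilde f_B(\omega)$. In words, every $f_B\in B$ maps each state to states with the same reduced description, so $f_B$ maps each local specification $\e_A(V_A)=\pi^{-1}(V_A)$ into itself. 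This is the precise sense in which Bob's operations leave Alice's knowledge intact.

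With fibre-preservation in hand both inclusions are short, and I would prove them element-wise. For $\subseteq$: if $\tau\in f_B(\e_A(V_A)\cap W)$, pick $\omega\in\e_A(V_A)\cap W$ with $\tau\in\tilde f_B(\omega)$; then $\tau\in f_B(W)$ by monotonicity, while $\pi(\tau)=\pi(\omega)\in V_A$ gives $\tau\in\pi^{-1}(V_A)=\e_A(V_A)$, so $\tau\in\e_A(V_A)\cap f_B(W)$. For $\supseteq$: if $\sigma\in\e_A(V_A)\cap f_B(W)$, pick $\omega\in W$ with $\sigma\in\tilde f_B(\omega)$; fibre-preservation gives $\pi(\omega)=\pi(\sigma)\in V_A$, so $\omega\in\e_A(V_A)\cap W$, whence $\sigma\in\tilde f_B(\omega)\subseteq f_B(\e_A(V_A)\cap W)$. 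The compatibility hypothesis $\e_A(V_A)\cap W\neq\emptyset$ is what guarantees both sides are legitimate (nonempty) specifications.

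The main obstacle is the structural first step: extracting the partition/projection picture from the abstract Galois-insertion axioms, in particular proving that $\h_A$ carries singletons to singletons. This is exactly where the hypotheses (join-homomorphism plus insertion, and that the reduced object is a genuine specification space) are used; without it the backward implication in the $\supseteq$ direction can fail, since one must pull an element of $\e_A(V_A)\cap f_B(W)$ back to a preimage already lying in $\e_A(V_A)$. Once the projection is available, independence collapses to fibre-preservation and the rest is routine, so I would either prove or borrow the structural lemma from the treatment of embeddings in Appendix~\ref{appendix:embeddings}.
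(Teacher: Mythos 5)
Your proof is correct. The paper's own argument rests on exactly the two facts you isolate — the concrete description $\e_A(V_A)=\{\omega\in\Omega:\h_A(\{\omega\})\subseteq V_A\}$ coming from the Galois adjunction, and the consequence of independence that $\h_A\circ f_B$ cannot enlarge reduced descriptions — but packages them differently. Where you partition $\Omega$ into the fibres of a projection $\pi$ and upgrade independence to exact fibre-preservation ($\pi(\sigma)=\pi(\omega)$ for every $\sigma\in\tilde f_B(\omega)$), the paper decomposes $W=(W\cap\e_A(V_A))\cup Z$ with $\h_A(Z)\cap V_A=\emptyset$, distributes $f_B$ over the union, and applies independence twice: once to conclude $\e_A(V_A)\cap f_B(Z)=\emptyset$ (since $\h_A\circ f_B(Z)\subseteq\h_A(Z)$ stays disjoint from $V_A$) and once to absorb $f_B(W\cap\e_A(V_A))$ into $\e_A(V_A)$. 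Your route costs one extra structural lemma — that $\h_A$ maps singletons to singletons, so the fibres genuinely partition $\Omega$ — but this is already proved in Appendix~\ref{appendix:embeddings} and can simply be cited; in exchange you obtain a reusable and arguably sharper intermediate statement (independence of the embedding from $B$ is equivalent to every $f_B$ preserving each fibre of $\pi$), after which both inclusions are immediate element chases and it is transparent where the compatibility hypothesis $\e_A(V_A)\cap W\neq\emptyset$ enters. The logic of the paper's two middle steps is the same fibre-preservation fact applied to the two pieces of $W$ rather than to individual states, so the arguments are equivalent in substance if not in presentation.
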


The following theorem tells us  that  when agents are independent, then their respective knowledge can be composed, transformed and treated individually.  In other words, a local treatment of their individual resource theories is compatible with the global picture.

\begin{restatable}[Independent agents can operate independently]{theorem}{thmIndependentAgents}
\label{thm:independent_agents}
Let  $(S^{\Omega_A},\tilde A)$ and $(S^{\Omega_B},\tilde B)$ be independent agents within a global theory $(S^\Omega,\cT)$.
Then for any local functions $f_A\in A$ and $g_B\in B$,  and for any global specification  $V\in S^{\Omega} $,
\begin{align*}
f_A\circ g_B(V)
 &\subseteq  \e_A \circ  \tilde f_A \circ  \h_A (V) \cap \e_B \circ  \tilde g_B \circ  \h_B (V).
\end{align*}
\end{restatable}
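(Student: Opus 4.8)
The plan is to prove the two inclusions separately, since the right-hand side is an intersection: it suffices to show that $f_A\circ g_B(V)$ is contained in each factor. First I would rewrite each factor through lumpings. Because $\tilde f_A=\h_A\circ f_A\circ \e_A$ and $\Lump_A=\e_A\circ\h_A$, the first factor is $\e_A\circ\tilde f_A\circ\h_A(V)=\Lump_A\big(f_A(\Lump_A(V))\big)$, and symmetrically the second factor is $\Lump_B\big(g_B(\Lump_B(V))\big)$. Throughout I would lean only on the standard Galois-insertion facts recalled in Appendix~\ref{appendix:embeddings}: $\h_X\circ\e_X=\operatorname{id}$, the lumping $\Lump_X=\e_X\circ\h_X$ is inflating ($Z\subseteq\Lump_X(Z)$) and idempotent, and $\e_X,\h_X$ are monotone.

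The linchpin is an \emph{invariance lemma}: for every $g_B\in B$ and $V_A\in S^{\Omega_A}$ the embedded specification is mapped into itself, $g_B(\e_A(V_A))\subseteq\e_A(V_A)$, and symmetrically $f_A(\e_B(W_B))\subseteq\e_B(W_B)$. I would derive this directly from independence (Definition~\ref{def:independent_agents}): independence gives $\h_A(g_B(\e_A(V_A)))\subseteq V_A$; applying the monotone map $\e_A$ and then using that $\Lump_A$ is inflating yields $g_B(\e_A(V_A))\subseteq\Lump_A(g_B(\e_A(V_A)))=\e_A(\h_A(g_B(\e_A(V_A))))\subseteq\e_A(V_A)$. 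This is exactly Proposition~\ref{prop:independent_processing} specialized to $W=\e_A(V_A)$, so it could also simply be quoted.

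With this in hand the two inclusions follow by short monotonicity chains. For the first, write $V_A=\h_A(V)$; from $V\subseteq\Lump_A(V)=\e_A(V_A)$, monotonicity of $g_B$ and $f_A$, and the invariance lemma $g_B(\e_A(V_A))\subseteq\e_A(V_A)$, I get $f_A\circ g_B(V)\subseteq f_A(\e_A(V_A))$; one more use of the inflating $\Lump_A$ turns the right side into $\Lump_A(f_A(\e_A(V_A)))=\e_A\circ\tilde f_A\circ\h_A(V)$. The second inclusion is the subtle one, since the outer operation $f_A$ does not obviously respect Bob's coarse-graining; the trick is to absorb $g_B$ into the embedding \emph{first}. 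Using $V\subseteq\Lump_B(V)$ together with the inflating property gives $g_B(V)\subseteq\Lump_B(g_B(\e_B(\h_B(V))))=\e_B(\tilde g_B(\h_B(V)))$, so $g_B(V)$ already lies inside an embedded specification $\e_B(W_B)$ with $W_B=\tilde g_B(\h_B(V))$; the invariance lemma for $f_A$ then absorbs the outer $f_A$, giving $f_A\circ g_B(V)\subseteq\e_B(W_B)=\e_B\circ\tilde g_B\circ\h_B(V)$. Intersecting the two inclusions finishes the argument. The main obstacle I anticipate is getting this asymmetry right on the $B$-side, namely ordering the ``absorb into $\e_B$'' and ``apply the invariance lemma'' steps correctly, while staying careful about the direction of the Galois adjunction and about the lumpings being inflating rather than deflating.
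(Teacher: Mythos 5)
Your proof is correct, but it is organized differently from the paper's. The paper keeps the intersection together from the start: it bounds $V\subseteq \e_A\circ\h_A(V)\cap\e_B\circ\h_B(V)$, invokes the full strength of Proposition~\ref{prop:independent_processing} to commute $g_B$ past the factor $\e_A\circ\h_A(V)$ inside the intersection (which requires checking a compatibility hypothesis), then distributes $f_A$ over the intersection using sub-distributivity of homomorphisms, and finally tames the term $\e_B\circ\h_B\circ f_A\circ g_B\circ\e_B\circ\h_B(V)$ with independence. You instead prove containment in each factor separately, and the only consequence of independence you need is the special case $g_B(\e_A(V_A))\subseteq\e_A(V_A)$ (and its mirror image), which you correctly derive in one line from Definition~\ref{def:independent_agents} via monotonicity of $\e_A$ and the inflating property of $\Lump_A$. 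Your derivations of both factor containments check out, including the asymmetric $B$-side step where you first absorb $g_B(V)$ into the embedded specification $\e_B(\tilde g_B(\h_B(V)))$ and only then apply invariance under $f_A$. What your route buys is the avoidance of both the compatibility side-condition of Proposition~\ref{prop:independent_processing} and the intersection-versus-homomorphism bookkeeping; what the paper's route buys is that the intermediate identity $f_B(\e_A(V_A)\cap W)=\e_A(V_A)\cap f_B(W)$ is reusable elsewhere (e.g.\ in Corollary~\ref{corollary:commuting_independent_agents}). Both arguments rest on the same three ingredients|inflating idempotent lumpings, monotonicity of all maps involved, and the independence condition|so the difference is one of packaging rather than substance, but your packaging is arguably the more economical of the two.
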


Note that this theorem says that we can always lose some information in going to the local pictures (for instance about correlations between the two local descriptions), but the statements made given only local information are still correct. 
In Part II of this work, we will see that we can apply the usual bottom-up approach of traditional resource theories if the independence conditions are satisfied and the local descriptions are compatible.


\subsection{Locality from commutativity of transformations}
\label{section:modularity}

\begin{bigexample}{Finding complete subsystems from commutativity relations}{subsystems_3qubits}

In this quantum example, the resource theory $(S^\Omega, \cT)$ consists of all unitary transformations acting on three qubits. We may use   commutativity relations to find complete subsystems at the level of transformations. 

\begin{center}
 \includegraphics[height=1cm]{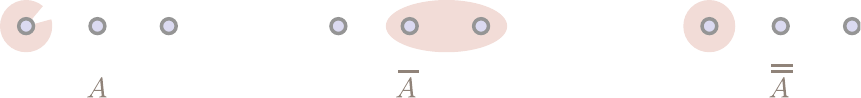}
\end{center}

For example, suppose that we start with a subset of transformations $A \subset \cT$ corresponding to \emph{almost all} transformations on the first qubit.  Taking the  commutant $\com A$  (transformations in $\cT$  that commute with all of those in $A$) gives us the monoid  of all  joint operations on the last two qubits. The bicommutant of $A$, $\bic A$, is then the monoid of all unitaries acting on the first qubit. In this sense, taking the bicommutant corresponds to a form of closure under commutation relations.
 
This is similar in spirit to the von Neumann  bicommutant theorem, which establishes that taking the bicommutant of a von Neumann algebra is equivalent to finding its topological closure. 
Our approach also reminds of the derivation of subsystems on Hilbert spaces through commuting operator algebras (as in Tsirelson's problem \cite{Zanardi2004}).

 \begin{center}
 \includegraphics[height=1cm]{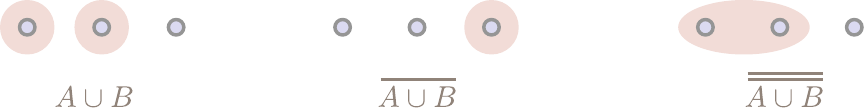}
\end{center}

In the second picture, we see how to find joint systems. Suppose that we want to find the subset of operations on the first two qubits, $A \vee B$. We start from the set $A \cup B$, which corresponds to local operations on the two qubits, but no joint operations. If we take its commutant, we obtain $\com{A\cup B} = C$, the set of operations on the last qubit. The bicommutant is now the set of all operations that commute with $C$, and that is precisely the set of joint operations on $A$ and $B$.

\end{bigexample}

So far we have overlooked another aspect of locality and independence, at the level of action: Alice and Bob's local transformations often commute. For instance, in quantum theory  operations of the form $U_A \otimes \id_B$ commute with those of the form $\id_A \otimes U_B$.
Commutativity of operations is also relevant for a single agent who wants to define subsystems, or identify degrees of freedom, in a larger space that he controls: for example, it can help experimental physicists identify operational qubits within more complex physical systems like trapped ions or superconducting electrodes, or biologists find functional genes within a strand of DNA.  
More generally, here we show  how to use  commutation relations of transformations to find a complete set of independent local resource theories.  
First we formalize commutation relations and independent transformations.

\begin{definition}[Commutant and independent operations]
Let $(S^\Omega, \cT)$ be a resource theory, and let
 $A\subset \cT$ be a subset of operations. The \emph{commutant} of $A$ consists of the transformations that commute with all the elements of $A$, 
\begin{align*}
 \com A := \{g\in \cT: \ &f \circ  g (V)=g \circ  f(V), \\ &\forall \, f\in A, \ \forall \, V \in S^\Omega \}.
\end{align*}
Two submonoids of transformations $A, B \subseteq \cT$ are \emph{independent}, denoted $A\perp B$, if they commute, that is $A \subseteq \com B$ and $B \subseteq \com A$, and $A \cap B = \{1\}$.

Two restricted agents  $(S^{\Omega_A}, \tilde{A})$ and $(S^{\Omega_B}, \tilde{B})$  are said to have access to \emph{independent  operations} if $A \perp B$. 
\end{definition}

The next step is to study the structure induced by these commutation relations, at the level of transformations. 
We  say that a subset of transformations forms a complete subsystem if it is closed under a certain commutation relation (taking the bicommutant).  See Example~\ref{ex:subsystems_3qubits} for an intuitive example.

\begin{definition}[Complete subsystems of transformations]
\label{def:subsystems_transformations}
Let $(S^\Omega, \cT)$ be a resource theory.
The \emph{bicommutant} $\bic A$ of a set of transformations $A\subseteq \cT$, is the commutant of $\com A$. We say that $A$ is \emph{complete}  if $A=\bic A$. (In that case, $A$ forms a submonoid of $\cT$, see Proposition \ref{prop:complete_submonoid}.)

The \emph{space of complete subsystems} of the theory, $\Sys(\cT)$ is the set of all complete submonoids of $\cT$. 
We call each element $A \in \Sys(\cT)$ a \emph{subsystem of transformation}.
Given two subsystems $A, B \in \Sys(\cT)$, we call $A \vee B := \bic{A \cup B}$
the \emph{joint subsystem} of $A$ and $B$. 
\end{definition}

The following proposition tells us that complete subsystems form a well-behaved structure. It further tells us how subsystems can be composed to yield joint systems as well as how to identify common subsystems, and that by this procedure, the smallest subsystem is given by the set of transformations that commute with all others (if two or more operations in the resource theory do not commute, this would be the identity transformation); the largest subsystem is the whole monoid of transformations of the resource theory. Further exploration of these ideas can be found in Appendix \ref{appendix:subsystems}.

\begin{restatable}[Complete subsystems form bounded lattice]{proposition}{thmTransformationsLattice}

\label{prop:complete_lattice}
Let $(S^\Omega, \cT)$ be a resource theory.
The space of complete subsystems $\Sys(\cT)$, together with the operations
\begin{align*}
A \vee B &:= \bic{A \cup B},\\
A \wedge B &:= \bic{A \cap B},
\end{align*}
forms a bounded lattice  $(\Sys(\cT), \vee, \wedge)$. The bottom is formed by the set of transformations that commute with all others, $\com \cT$ and the top is $\cT$.
\end{restatable}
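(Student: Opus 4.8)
The plan is to recognize the commutant map as an antitone Galois connection and then invoke the standard theory of closure operators, whose fixed points are automatically a (complete, hence bounded) lattice. First I would observe that the relation ``$f$ and $g$ commute'', meaning $f \circ g(V) = g \circ f(V)$ for all $V \in S^\Omega$, is symmetric. Consequently, for any $A, B \subseteq \cT$ we have $A \subseteq \com B \iff B \subseteq \com A$, since both sides assert exactly that every element of $A$ commutes with every element of $B$. This is precisely the statement that $\com{\cdot}$ is a polarity on the powerset of $\cT$. From it I would extract the three standard facts: $\com{\cdot}$ is antitone ($A \subseteq B \Rightarrow \com B \subseteq \com A$), the bicommutant is extensive ($A \subseteq \bic A$), and $\com{\bic A} = \com A$. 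Together these say that $A \mapsto \bic A$ is a closure operator whose fixed points are exactly the complete subsystems in $\Sys(\cT)$.

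Second, I would read off the lattice structure. Applying the antitone $\com{\cdot}$ twice shows the bicommutant is monotone, so for complete $A, B$ we get $\bic{A \cap B} \subseteq \bic A \cap \bic B = A \cap B$; combined with extensivity this gives $A \wedge B = \bic{A \cap B} = A \cap B$, which is complete and is the greatest lower bound. Dually, $A \cup B \subseteq \bic{A \cup B}$ shows $A \vee B$ is an upper bound, and if $C$ is complete with $A, B \subseteq C$ then $\bic{A \cup B} \subseteq \bic C = C$, so $A \vee B$ is the least upper bound. Both operations land in $\Sys(\cT)$ because any bicommutant is its own bicommutant by idempotence of the closure. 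The lattice axioms (commutativity, associativity, absorption) then follow formally from these characterizations as meet and join. The submonoid property of each complete subsystem is supplied by Proposition~\ref{prop:complete_submonoid}, so it need not be reproved here.

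Finally I would identify the bounds. The bottom is $\bic \emptyset$; since every transformation vacuously commutes with all elements of $\emptyset$, we have $\com \emptyset = \cT$, and hence $\bic \emptyset = \com \cT$, the transformations commuting with all of $\cT$. The top is $\bic \cT$; because $\com{\cdot}$ always returns a subset of $\cT$ by definition, we have $\bic \cT = \com{\com \cT} \subseteq \cT$, while extensivity gives the reverse inclusion, so $\bic \cT = \cT$ and $\cT$ is itself complete and is the greatest element.

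The content is essentially the classical theorem that the fixed points of a closure operator form a complete lattice, so there is no deep difficulty; the step requiring the most care is verifying that the meet defined as $\bic{A \cap B}$ genuinely coincides with the set-theoretic intersection of complete subsystems, which is where monotonicity of the bicommutant must be applied to both $A$ and $B$. I would also confirm at the outset that $\com A \subseteq \cT$ for every $A$, immediate from the definition, since this is exactly what pins the top element to $\cT$ rather than to some larger set of maps.
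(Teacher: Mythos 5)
Your proof is correct, but it takes a different route from the paper's. The paper verifies the algebraic lattice axioms head-on: it checks closure of $\Sys(\cT)$ under $\vee$ and $\wedge$, then associativity, commutativity, the identity laws for the bounds, and both absorption laws, each by direct computation with the commutant identities of Lemma~\ref{lemma:commutant_properties} (in particular $\com{A\cup B}=\com A\cap\com B$ and the completeness of any commutant). You instead observe at the outset that symmetry of the commutation relation makes $A\mapsto\com A$ a polarity, so that $\bic{\cdot}$ is a closure operator on the powerset of $\cT$, and then read off the lattice structure order-theoretically: $A\wedge B=\bic{A\cap B}=A\cap B$ is the greatest lower bound, $\bic{A\cup B}$ is the least upper bound, and the algebraic axioms follow from the standard equivalence between order lattices and algebraic lattices (which the paper itself records in Appendix~\ref{appendix:algebra_order}). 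Your identification of the bottom as $\bic\emptyset=\com{\com\emptyset}=\com\cT$ is a nice touch that replaces the paper's appeal to Lemma~\ref{lemma:subsystem_contains_centre} (every complete subsystem contains the centre) with pure monotonicity of the closure. What your approach buys is brevity and reusability|the same two-line Galois argument covers arbitrary (not just binary) meets and joins, so you actually get a complete bounded lattice; what the paper's approach buys is self-containedness and explicit formulas (e.g.\ $A\vee(B\vee C)=\com{\com A\cap\com B\cap\com C}$) that are used elsewhere. The one step you rightly flag as needing care, that $\bic{A\cap B}$ collapses to $A\cap B$ for complete $A,B$, you handle correctly via monotonicity plus extensivity, mirroring property~7 of Lemma~\ref{lemma:commutant_properties}.
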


These subsystems represent the most general way in which a global agent could describe subsystems of transformations. However, we should note that they may not necessarily correspond to the operations available to local agents. 
For instance, in the previous thermodynamic example, where Alice controlled an isolated system, if we were to consider $\bic A$, this would contain thermalizing transformations, which we might want to exclude for independent, operational reasons. 

We can now connect commutativity to  notions of locality at the level of specifications:  complete subsystems of transformations can be used to define independent resource theories.  Free composition is achieved with an extra assumption. For constructive proofs, see Appendix~\ref{appendix:subsystems}.

\begin{restatable}[Independent agents from independent transformations]{theorem}{thmTransformationsIndependence}
\label{thm:transformations_independent_agents}

Let $(S^\Omega,\cT)$ be a resource theory and $A, B\subseteq\cT$ two independent complete subsystems of transformations. Then there exist two reduced specification spaces $S^{\Omega_A}$ and $S^{\Omega_B}$ intensively embedded in $S^{\Omega}$ such that the resource theories $(S^{\Omega_A}, \tilde A)$ and $(S^{\Omega_B}, \tilde B)$ represent independent agents.  
\end{restatable}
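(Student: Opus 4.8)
The plan is to manufacture the two intensive embeddings directly from the commutants $\com A$ and $\com B$, and then read off independence from the relations $B\subseteq\com A$ and $A\subseteq\com B$ guaranteed by $A\perp B$. Concretely, for the subsystem $A$ I would first turn the monoid $\com A$ into an equivalence relation on the state space $\Omega$: writing $\tilde g:\Omega\to S^\Omega$ for the element function of a transformation $g$ (Remark~\ref{lemma:homomorphisms}), I declare $\omega\mapsto\omega'$ whenever $\omega'\in\tilde g(\omega)$ for some $g\in\com A$, and let $\sim_A$ be the reflexive--symmetric--transitive closure of this relation. I then set $\Lump_A(V):=\bigcup_{\omega\in V}[\omega]_{\sim_A}$, the union of the $\sim_A$-classes met by $V$, and dually define $\Lump_B$ from $\com B$. (In the quantum example this is exactly what connects every global state to the product state with the same marginal via the replace-channel in $\com A$, so $[\omega]_{\sim_A}=\local{\tr_B\omega}$ and the partial trace is recovered.)

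Next I would check that $\Lump_A$ is a lumping and invoke the existing machinery. It is inflating by reflexivity, element-wise and hence a join-semilattice homomorphism (Remark~\ref{lemma:homomorphisms}), and idempotent because the $\sim_A$-classes partition $\Omega$; thus Proposition~\ref{thm:lumping} supplies a reduced space $S^{\Omega_A}$ and a Galois insertion $(\e_A,\h_A)$ with $\e_A\circ\h_A=\Lump_A$ and $\h_A\circ\e_A=\id$, and likewise $(\e_B,\h_B)$. Since $A$ and $B$ are complete they are submonoids of $\cT$ (Proposition~\ref{prop:complete_submonoid}), so $\tilde A=\{\h_A\circ f\circ\e_A:f\in A\}$ and $\tilde B=\{\h_B\circ f\circ\e_B:f\in B\}$ define restricted agents in the sense of Definition~\ref{def:restricted}. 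To confirm $\tilde A$ is genuinely a monoid I use that each $f\in A$ commutes with every $g\in\com A$ by definition of the commutant, and a zig-zag argument along the generators of $\sim_A$ promotes this to $f\circ\Lump_A=\Lump_A\circ f$; sandwiching then gives $(\h_A f\e_A)\circ(\h_A f'\e_A)=\h_A(f\circ f')\e_A$.

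The independence conditions then fall out cheaply. The image of $\e_A$ is precisely the set of $\Lump_A$-fixed specifications, i.e. the unions of $\sim_A$-classes, and any such $W$ is forward-invariant under $\com A$: if $\omega\in W$ and $g\in\com A$ then $\tilde g(\omega)\subseteq[\omega]_{\sim_A}\subseteq W$. Because $A\perp B$ yields $B\subseteq\com A$, for every $f_B\in B$ and $V_A\in S^{\Omega_A}$ I obtain $f_B(\e_A(V_A))\subseteq\e_A(V_A)$; applying the monotone map $\h_A$ and using $\h_A\circ\e_A=\id$ gives $\h_A\circ f_B\circ\e_A(V_A)\subseteq V_A$, which is exactly independence of the embedding $(\e_A,\h_A)$ from $B$ (Definition~\ref{def:independent_agents}). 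The symmetric argument from $A\subseteq\com B$ shows $S^{\Omega_B}$ is independent of $A$, so $(S^{\Omega_A},\tilde A)$ and $(S^{\Omega_B},\tilde B)$ are independent agents.

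The main obstacle I expect is the set-valued nature of transformations: each $g$ maps states to whole specifications rather than to single states, so "orbit of $\com A$" has to be phrased as the generated equivalence relation, and both load-bearing facts---idempotence of $\Lump_A$ and the descent identity $f\circ\Lump_A=\Lump_A\circ f$---require careful chaining of membership along these zig-zags. This is also precisely where completeness $A=\bic A$ (together with $A\cap B=\{1\}$) is needed, to ensure that $\tilde A$ is a well-defined, non-collapsing monoid of endomorphisms; by contrast the two independence statements themselves are nearly immediate once the commutant-induced lumpings are in hand.
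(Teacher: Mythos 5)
Your construction of the local lumping is genuinely different from the paper's. You generate the equivalence on \emph{states} as the symmetric--transitive closure of the reachability relation ``$\omega'\in\tilde g(\omega)$ for some $g\in\com{A}$'' (an orbit closure), whereas the paper's $\Lump_{\cancel{\com{A}}}$ is generated by the \emph{kernels} of the maps in $\com{A}$, lumping $V$ with $W$ whenever $g(V)=g(W)$ for some $g\in\com{A}$ (Definition~\ref{def:lumping_transformations}, Lemma~\ref{lemma:lumping_generated_homomorphisms}). These relations do not coincide in general. Your version makes the independence half of the theorem clean: unions of $\sim_A$-classes are forward-invariant under every $g\in\com{A}$ by construction of the generating relation, so $f_B(\e_A(V_A))\subseteq\e_A(V_A)$ for $f_B\in B\subseteq\com{A}$, and applying the order-preserving $\h_A$ gives exactly Definition~\ref{def:independent_agents}. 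That part of your argument is correct, and arguably more transparent than the corresponding step in the paper.

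The gap is the identity $f\circ\Lump_A=\Lump_A\circ f$ for $f\in A$, which carries the entire claim that $\widetilde A$ is a well-defined monoid of endomorphisms on $S^{\Omega_A}$, and which you attribute to ``a zig-zag argument'' without executing it. The forward steps of the zig-zag do work: if $\omega'\in\tilde g(\omega)$, then $f(\{\omega'\})\subseteq f(g(\{\omega\}))=g(f(\{\omega\}))$, so every element of $\tilde f(\omega')$ lies in $\tilde g(\tau)$ for some $\tau\in\tilde f(\omega)$ and is hence $\sim_A$-equivalent to an element of $\tilde f(\omega)$. The backward steps do not: from $\omega'\in\tilde g(\omega'')$, commutation only gives $g(f(\{\omega''\}))=\bigcup_{\nu\in\tilde g(\omega'')}\tilde f(\nu)$, which relates $\tilde f(\omega'')$ to the union of $\tilde f$ over the \emph{whole} image $\tilde g(\omega'')$, not to $\tilde f(\omega')$ individually. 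Concretely, if $\tilde g(\omega)=\{\nu_1,\nu_2\}$, commutativity forces only $\Lump_A(\tilde f(\nu_1))\cup\Lump_A(\tilde f(\nu_2))=\Lump_A(\tilde f(\omega))$ with each inclusion possibly proper, even though $\nu_1\sim_A\omega$; then $\Lump_A\circ f\circ\Lump_A(\{\nu_1\})$ can strictly contain $\Lump_A\circ f(\{\nu_1\})$, and $\tilde f$ fails to descend to the quotient. (For singleton-valued $g$ the backward step is fine, which is probably why the obstruction is easy to miss.) So the load-bearing identity is precisely the one left unproved, and the natural proof strategy breaks for set-valued elements of $\com{A}$. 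The paper avoids this by working with the kernel-generated lumping, for which the weaker but sufficient preservation identity $\Lump_A\circ f\circ\Lump_A=\Lump_A\circ f$ is established by a direct computation exploiting closure of the monoid (the step $g_A\circ f_A=\ell_A$ in Proposition~\ref{prop:generated_embeddings_independence}). To repair your route you would need either to switch to that lumping, or to prove the preservation identity for the orbit-closure lumping by some argument that handles backward steps.
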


\subsection{Quantum subsystems}

The tensor product subsystem structure of quantum theory satisfies our generalized notions of subsystems and independence.
Take the resource theory of quantum mechanics in a finite Hilbert space $\hilbert=\bigotimes_{i\in I} \hilbert_i$, where the allowed operations correspond to TPCP maps. The first thing to note is that the set of TPCP maps acting on each individual subspace is each own bicommutant.
Hence, these sets form complete subsystems of transformations, and can be found in the theory through our top-down approach. From Theorem \ref{thm:transformations_independent_agents} it follows that they induce intensive embeddings, that in this case correspond precisely to the partial trace.

\begin{restatable}[Partial trace as an induced embedding]{proposition}{propPartialTrace}
\label{prop:partial_trace}

Let $(S^\Omega, \cM)$ be quantum theory in a finite Hilbert space $\hilbert=\bigotimes_{i\in I} \hilbert_i$. 
Let $J \subseteq I$ denote any collection of subspaces and $\com J$ its complement. 
The set of TPCPMs $\cT_J \subseteq \cT$ acting only on subsystems $J$ (that is maps of the form $\E_J \otimes \mathcal I_{\com J}$) forms a complete subsystem of transformations, $\bic{\cT_J} = \cT_J$.  This set induces the intensive embedding  $(\e_J, \h_J)$, with 
\begin{align*}
 \e_J : \quad  S^{\Omega_J} \to& \ S^{\Omega}\\
  W_J \mapsto& \quad \{\rho \in \Omega: \ \tr_{\com J}(\rho) \in W_J  \} \\
  &= \{ \local{\rho_J}: \ \rho_J \in W_J  \}, \\
 \h_J : \quad  S^{\Omega} \to& \ S^{\Omega_J}\\
  W \mapsto& \quad 
   \{\tr_{\com J} (\rho): \ \rho \in W \} \\
  &=\{\rho_J \in \Omega_J: \  \widehat{\rho_J} \cap W \neq \emptyset \} . 
\end{align*}
In the above, $\Omega_J$ denotes the space of density matrices in the subspace $\bigotimes_{i\in J} \hilbert_i$. 

For non-intersecting collections $J$ and $L$, we have that $S^{\Omega_J}$ and $S^{\Omega_L}$ are freely composable. 
Furthermore,  $(S^{\Omega_J}, \tilde \cT_J)$ and $(S^{\Omega_L}, \tilde \cT_L)$ represent independent agents with commuting operations. 

\end{restatable}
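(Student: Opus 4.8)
The plan is to prove the four assertions in turn, reducing every function-level statement to one about TPCPMs acting on density operators: since each transformation is a join-semilattice homomorphism (Remark~\ref{lemma:homomorphisms}), two transformations commute on $S^\Omega$ exactly when the underlying channels commute on every singleton $\{\rho\}$. For completeness I would first show $\com{\cT_J}=\cT_{\com J}$. The inclusion $\cT_{\com J}\subseteq\com{\cT_J}$ is immediate, as $\I_J\otimes\E_{\com J}$ and $\E_J\otimes\I_{\com J}$ act on disjoint tensor factors. For the reverse inclusion I would test an arbitrary $g\in\com{\cT_J}$ against the replacement channels $\E_J^\tau\otimes\I_{\com J}$ with $\E_J^\tau(\cdot)=\tau_J\,\tr(\cdot)$: commutation forces $g(\tau_J\otimes\rho_{\com J})=\tau_J\otimes\tr_J(g(\rho))$ for all $\rho,\tau$, whose left side depends on $\rho$ only through $\rho_{\com J}=\tr_J(\rho)$, so $\tr_J\circ g$ descends to a channel $\E_{\com J}$ on $\com J$ and $g$ agrees with $\I_J\otimes\E_{\com J}$ on all product operators. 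By linearity and the fact that products span $\End{\hilbert}$, $g=\I_J\otimes\E_{\com J}\in\cT_{\com J}$. Exchanging $J$ and $\com J$ gives $\com{\cT_{\com J}}=\cT_J$, hence $\bic{\cT_J}=\com{\cT_{\com J}}=\cT_J$, so $\cT_J$ is complete.

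Next, I would identify the induced embedding. The commutant $\cT_{\com J}$ fixes every $J$-marginal (since $\tr_{\com J}\circ(\I_J\otimes\E_{\com J})=\tr_{\com J}$) and, via replacement on $\com J$, connects any two global states sharing a $J$-marginal; hence the classes it carves out are exactly the sets $\local{\rho_J}$. This yields the lumping $\Lump_J(W)=\{\omega:\omega_J\in\tr_{\com J}(W)\}$, which one checks is inflating and idempotent. I would then verify directly that the stated $\e_J$ (preimage under the marginal map) and $\h_J$ (its direct image) satisfy $\h_J\circ\e_J=\mathrm{id}$ on $S^{\Omega_J}$, that $\e_J$ is a join-preserving order embedding, and that $\e_J\circ\h_J=\Lump_J$; by Proposition~\ref{thm:lumping} the pair $(\e_J,\h_J)$ is precisely the intensive embedding induced by $\cT_J$.

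Finally, the disjoint case $J\cap L=\emptyset$. Both $\cT_J$ and $\cT_L$ are complete by the argument above, they commute (disjoint factors) and $\cT_J\cap\cT_L=\{\id\}$, so $\cT_J\perp\cT_L$; Theorem~\ref{thm:transformations_independent_agents} then delivers independent agents carrying the embeddings identified above. For free composability I would check that the embeddings are compatible in the sense of Definition~\ref{def:independent_embeddings}: for any $V_J$, $\h_L(\e_J(V_J))=\Omega_L$, because a state with prescribed $J$-marginal in $V_J$ can be completed to carry any desired $L$-marginal when $J$ and $L$ are disjoint; Proposition~\ref{prop:free_composition} then gives $\e_J(V_J)\cap\e_L(W_L)\neq\emptyset$ with local information preserved.

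The main obstacle is the reverse inclusion in the commutant computation, i.e.\ showing that every channel commuting with all of $\cT_J$ must factor as $\I_J\otimes\E_{\com J}$. The rest is order-theoretic bookkeeping together with the earlier structural results; the crux is choosing a family of test channels (here the replacement channels) rich enough to pin down $g$ completely, and then invoking linearity to pass from product inputs to all of $\End{\hilbert}$.
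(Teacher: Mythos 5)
Your proposal is correct, and it is substantially more complete than the proof the paper actually gives. The paper's own proof of Proposition~\ref{prop:partial_trace} consists only of a direct verification that the stated pair $(\e_J,\h_J)$ is a Galois insertion of specification homomorphisms (via $\h_J\circ\e_J=\mathrm{id}$ and the equivalence $W\subseteq\e_J(V_J)\iff\h_J(W)\subseteq V_J$); the completeness claim $\bic{\cT_J}=\cT_J$ is merely asserted in the surrounding text, and the identification of $(\e_J,\h_J)$ as the embedding \emph{generated} by $\cT_J$, as well as the free-composability and independence statements, are left to the general machinery without explicit verification. You supply exactly the missing pieces: the computation $\com{\cT_J}=\cT_{\com J}$ by testing against replacement channels (the crux being that commutation with $\E_J^\tau\otimes\I_{\com J}$ forces $\tr_J\circ g$ to factor through $\tr_J$, after which linearity on the span of product operators pins down $g=\I_J\otimes\E_{\com J}$), the observation that the replacement channels in $\cT_{\com J}$ carve the equivalence classes $\local{\rho_J}$ so that $\Lump_{\cancel{\cT_{\com J}}}$ is precisely the partial-trace lumping, and the check $\h_L(\e_J(V_J))=\Omega_L$ needed for Proposition~\ref{prop:free_composition}. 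The only points requiring care — that commutation on $S^\Omega$ reduces to commutation of the underlying channels because transformations are element-wise, that products of density matrices span the operator space so that agreement on products implies equality of channels, and that $\cT_J\cap\cT_L=\{\id\}$ for disjoint $J,L$ — are all handled or at least correctly flagged. In short, your route subsumes the paper's and would serve as a more self-contained proof of the full statement.
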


The above insight is particularly interesting in the cases where Hilbert space decomposition is not unique, or the quantum resource theory is more restrictive in the set of allowed operations. For example, due to technical constraints, in certain situations it might be easier to work with an entangled basis as opposed to the bases of individual atoms or spins, e.g.\ in the case of quantum dots to encode logical qubits in fault-tolerant quantum computing \cite{Taylor2005} or of virtual qubits in nano heat engines \cite{Brunner2012}. 

One can find further interesting embeddings and restricted resource theories within quantum mechanics.
Indeed, the set of complete subsystems would include not only transformations on individual subspaces, but also transformations on more specific degrees of freedom. For example, in unitary quantum mechanics rotations along the $X$ axis of a qubit 
also correspond to a complete subsystem, because they form their own bicommutant. The corresponding embedding would be of a classical particle with only this degree of freedom.


\subsection{Effective resource theories}

Restricted resource theories arise from more general theories through the limitations in either knowledge  or available transformations of a given agent. 
One may now ask how the agent's possibilities could be enhanced by having access to a particular additional resource, like a heat bath, a laser beam or a catalyst. In traditional resource theories, we model such an additional resource by appending a new subsystem, such as a catalyst, to a given state on a target system. However, from a global point of view, the additional resource corresponds merely to additional knowledge outside the target. For example, suppose that the agent is  restricted in knowledge to $S^{\Omega_A}$, which is embedded in the global space $S^\Omega$. The additional resource can be represented by a global specification $K \in S^\Omega$  that is compatible with the embedding $S^{\Omega_A}$ (for example, information about the state of an independent subsystem).  The resulting induced transformations on the smaller specification space give rise to a new resource theory in $S^{\Omega_A}$, which we call an \emph{effective resource theory}. This concept is similar in spirit to the regularized resource theories of \cite{Fritz2015}.

\begin{definition}[Effective resource theory]
\label{def:effective_resource_theory}
Let $(S^\Omega,\cT)$ be a resource theory, and 
$(S^{\Omega_A}, \widetilde A)$ describe a restricted agent within the theory. 
Now let $K\in S^\Omega$ be a specification compatible with $S^{\Omega_A}$. 

We call the resource theory $(S^{\Omega_A}, \widetilde A^K)$ the  \emph{effective resource theory}  of  $(S^{\Omega_A}, \widetilde A)$ induced by $K$ if the functions in $\widetilde A^K$ have the form
\begin{align*}
 \widetilde f^K_A: \ S^{\Omega_A} &\to S^{\Omega_A} \\
 V &\mapsto \h_A \circ f_A (\e_A(V) \cap K) , 
\end{align*}
where $f_A \in A$ and $(\e_A, \h_A)$ is the Galois insertion defining the embedding.
\end{definition}

A  prominent example is given by the resource theory of thermal operations, which can be seen as an effective resource theory resulting from the resource theory of unitary, energy-conserving operations through access to additional heat baths at a fixed temperature. In the upcoming Part~II of this work we also discuss the relation between catalysis and effective resource theories. 

\section{Approximations and robustness}
\label{sec:approximations}

\begin{bigexample}{Approximation structures in the animal kingdom}{zoo_approximation}
A natural notion of proximity between animals emerges from their biological classification:
take $\E$ to be the ordered set $ \{$ Species $<$ Genus $<$ Subfamily $<$ Family $<$ Suborder $<$ Order $<$ Class $<$ Phylum $<$ Kingdom $\}$. 
We define the approximation structure such that the $\eps$-ball of an animal contains all the animals that belong to the same $\eps$-classification.
For example, 
\begin{align*}
\{\text{King penguin}\}^{\text{Species}}
&= \{\text{King penguin}\}\\
\subseteq \quad
\{\text{King penguin}\}^{\text{Class}}
&= \text{Aves} \\
\subseteq \quad
\{\text{King penguin}\}^{\text{Kingdom}}
&= \text{Animal} = \Omega.
\end{align*}
But maybe there is more than one way to characterize animals; we might want to classify them according to their geographic distribution. For instance, we can extend $\E$ with a chain for habitat characterization,
\begin{align*}
\E'= \E \cup \{0< \text{Ecoregion} < \text{Biome} < \text{Ecozone} < \text{Planet} \}. 
\end{align*}
For example, the approximation structure $A_{\E'}$ tells us that fur seals and penguins are geographically, if not biologically, close,
\begin{align*}
\{\text{Antarctic fur seal}\} &\nsubseteq \{\text{King penguin}\}^{\text{Class}}, \\
\{\text{Antarctic fur seal}\} &\subseteq \{\text{King penguin}\}^{\text{Phylum}},\\
\{\text{Antarctic fur seal} \}&\subseteq \{\text{King penguin}\}^{\text{Ecoregion}}.
\end{align*}
Note that $W^\eps \subseteq W^{\eps'}$ only when  $\eps \leq \eps'$, and that there is no operation $+$ defined on $\E'$, so we cannot talk of properties such as the triangle inequality for this approximation structure.
\end{bigexample}

Some state spaces have a notion of proximity between states. For such notions to be meaningful, they should express, at some level, how easy it is to distinguish the states. They may be very diverse: they could be defined relative to an agent that is constrained in her observations, or they could describe an arbitrary coarse-graining of knowledge according to some criteria. Examples of such notions would be the trace distance in quantum theory, or the distinguishability based on a particular observable feature or criterion.

In the simple case of the trace distance, there is a real number $\eps \in [0,1]$ that parametrizes neighbourhoods of states, like $\ball^\eps(\rho)$, according to precision. 
The neighbourhoods are sets of states, or specifications|special specifications that characterize the metric of the space induced by a given distinguishability criterion. This is in line with Desideratum~\ref{desi_knowledge}: specifications, the same objects that already describe the resources in our framework, can be used to model such coarse-grained descriptions.

We call the structure of such neighbourhoods or coarse-grainings the \emph{approximation structure} of the specification space. Then we generalize this notion by dropping the restriction that  neighbourhoods be parametrized by a real number; indeed, we only require that the parameters be partially ordered. 
For example, the parameters could reflect more than one criterion of precision, and therefore  not have a total order (see Example~\ref{ex:zoo_approximation}). For simplicity of notation, we use $W^\eps = \ball^\eps (W)$.

\begin{definition}[Approximation structures]
\label{def:approximations}
Let $(S^\Omega, \cT)$ be a resource theory and let $(\E, \leq)$ be a set with a partial order.
An approximation structure is a family  $A_{\E}(\Omega) = \{\ball^\eps\}_{\eps \in \E}$ of inflating endomorphisms,
\begin{align*}
\ball^\eps: \ S^\Omega &\to S^\Omega \\
W &\mapsto W^\eps \ \supseteq W,
\end{align*}
such that
\begin{enumerate}
\item for all $ W \in S^\Omega$ and $\eps, \eps' \in \E$,\quad  $\eps \leq \eps' \implies W^\eps \subseteq W^{\eps'}$, and
\item  there exists a \emph{saturating element} $ \eps_{\max} \in \E$ such that  for all $W \in S^\Omega$, \quad  $W^{\eps_{\max}}= \Omega$.
\end{enumerate}

We call the parameter $\eps$ the \emph{precision} of the approximation, and the specification $W^\eps$ the \emph{$\eps$-ball}, \emph{$\eps$-neighbourhood} or \emph{$\eps$-approximation} of $W$.

If there exists an element $0\in \E$ such that $W^0 = W$ for all specifications, we say that the structure is \emph{attainable}.
\end{definition} 

When applicable, we may also demand triangle inequalities (e.g.~for the trace distance). 
For an example of an unattainable approximation structure, take $\Omega = \mathbb R$ and $x^\eps$ being the $10^{- \eps}$-digit approximation of a real number $x$. No finite number of digits would allow us to reach $x$ in this way. This and other examples,  definitions and results are explored in  Appendix~\ref{appendix:approximations}.

Now we may characterize resources and resource theories with respect to a given approximation structure. For example, quantum mechanics is stable under the trace distance, because TPCP maps, being linear, cannot increase the distance between two states. In classical mechanics, on the other hand, there are many examples of chaotic theories under specific metrics.

\begin{definition}[Stability and robustness]
We say that  a resource theory $(S^\Omega, \cT)$ is \emph{stable} according to an approximation structure  $A_{\E}(\Omega)$ if 
$$
V \to W \ \implies V^\eps \to W^\eps,
$$
for all $V,W \in S^\Omega, \eps \in \E$.

We say that a resource $V \in S^\Omega$ is at least \emph{$\eps$-robust} if $V^\eps$ is not a free resource.
\end{definition}

One natural direction is to quantify how stable or  chaotic a theory is, by bounding the maximum divergence achievable in the theory. This is similar to the usage of  Lyapunov exponents, but it may be generalized to the case where  $\E$ is not a commutative monoid.
The robustness of a resource can be related to the fine-tuning of theories. 
The following proposition ensures that, for stable theories, one cannot obtain robust resources by processing resources that are not robust.

\begin{restatable}[Stability and robustness]{proposition}{propRobustness}

\label{prop:robustness}

Let  $(S^\Omega, \cT)$ be a stable resource theory according to an approximation structure $A_{\E}(\Omega)$. Then, for any $V,W\in S^\Omega$ such that  $V \to W$, if $W$ is $\eps$-robust then $V$ is also $\eps$-robust. 
\end{restatable}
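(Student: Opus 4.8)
The plan is to unfold the definitions of $\eps$-robustness and free resources and reduce the claim to a short chaining of the stability hypothesis with transitivity of the pre-order $\to$. Recall that $W$ being $\eps$-robust means that $W^\eps$ is not a free resource, i.e.\ it is \emph{not} the case that $\Omega \to W^\eps$; likewise $V$ being $\eps$-robust means that $V^\eps$ is not free. Since robustness is a negative statement about freeness, the cleanest route is contraposition: assuming $V$ is \emph{not} $\eps$-robust, I would show that $W$ fails to be $\eps$-robust as well.

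Concretely, the first step is to write out the negation of the conclusion: suppose $V^\eps$ is a free resource, so that $\Omega \to V^\eps$. The second step is to invoke stability: since the theory is stable according to $A_\E(\Omega)$ and $V \to W$ holds by hypothesis, the defining property of stability immediately gives $V^\eps \to W^\eps$. The third step is to combine these two reachability statements using transitivity of $\to$ (established in the earlier remark that $\to$ is a pre-order on $S^\Omega$): from $\Omega \to V^\eps$ and $V^\eps \to W^\eps$ we obtain $\Omega \to W^\eps$. This says exactly that $W^\eps$ is a free resource, hence $W$ is not $\eps$-robust, which is precisely the contrapositive of the desired implication.

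I do not anticipate any serious obstacle, since the argument is purely the composition of three facts already available: the definition of a free resource, the stability hypothesis, and transitivity of the pre-order. The only point requiring mild care is to match the directions of the implications correctly—stability is stated as $V \to W \implies V^\eps \to W^\eps$ and not the reverse, so threading the argument through the contrapositive keeps the logic clean. I would also confirm that a single precision parameter $\eps \in \E$ is fixed throughout (it is arbitrary but the same on both sides), consistent with the statement, so that the same $\eps$-ball appears in every reachability relation used.
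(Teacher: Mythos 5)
Your argument is correct and is essentially the paper's own proof: the paper also combines stability ($V \to W \implies V^\eps \to W^\eps$) with transitivity of $\to$ to conclude that freeness of $V^\eps$ would force freeness of $W^\eps$, merely phrasing it as a contradiction rather than a contrapositive. No gaps.
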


Approximation structures can be carried over through embeddings. The following proposition guarantees that given an approximation structure on a specification space $S^\Omega$, we are able to construct an approximation structure on any smaller specification space $S^\Sigma$ that is related to $S^\Omega$ via an intensive embedding.

\begin{restatable}[Reduced approximation structures]{proposition}{propReducedApproximations} 
\label{prop:reduced_approximations}
Let $S^\Sigma$ be a specification space equipped with an approximation structure $A_\E(\Sigma)$, which is parametrized by a poset $\E$. Let there be $S^\Omega$ embedded in $S^\Sigma$ via $(\e, \h)$. 
Then there exists a  approximation structure $A_ \E$ in $S^\Omega$ also parametrized by $\E$, such that $ \e(V^\eps) \supseteq (\e(V))^\eps $.

Any saturating element $\epsilon_{\max} \in\E$ is still saturating for the reduced approximation structure.  Similarly, if $A_\E(\Sigma)$ is attainable, so is $A_\E(\Omega)$. 
\end{restatable}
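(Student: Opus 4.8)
The plan is to \emph{pull back} the approximation structure on $S^\Sigma$ through the Galois insertion $(\e,\h)$. Concretely I would define the candidate reduced structure $A_\E(\Omega)=\{\ball^\eps_\Omega\}_{\eps\in\E}$ by
$$ V^\eps := \h\big( (\e(V))^\eps \big) = \h \circ \ball^\eps \circ \e\,(V), \qquad V\in S^\Omega,\ \eps\in\E, $$
that is, embed $V$ into the larger space, take its $\eps$-ball there, and reduce the result back with $\h$. Everything then reduces to checking the axioms of Definition~\ref{def:approximations} one at a time, using only that $\e$ and $\h$ are order homomorphisms with $\h\circ\e=\mathrm{id}_{S^\Omega}$ and that, by Proposition~\ref{thm:lumping}, the composite $\Lump=\e\circ\h$ is a lumping, i.e.\ an inflating (idempotent) endomorphism of $S^\Sigma$.

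\textbf{Verifying the axioms.} First I would check that each $\ball^\eps_\Omega$ is an inflating endomorphism. It is a join-semilattice homomorphism because it is a composition of three homomorphisms ($\e$, $\ball^\eps$ and $\h$ all preserve unions), and it is inflating because $\ball^\eps$ inflates and $\h$ is order-preserving: $\h((\e V)^\eps)\supseteq\h(\e V)=V$. Monotonicity in the precision parameter (axiom~1) is immediate, since $\eps\le\eps'$ gives $(\e V)^\eps\subseteq(\e V)^{\eps'}$ on $S^\Sigma$, and applying the monotone map $\h$ preserves this. For the saturating element (axiom~2) I would reuse the same $\eps_{\max}\in\E$ and compute $V^{\eps_{\max}}=\h((\e V)^{\eps_{\max}})=\h(\Sigma)$; it then only remains to note that $\h(\Sigma)=\Omega$, which follows from $\Omega=\h(\e(\Omega))\subseteq\h(\Sigma)\subseteq\Omega$. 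This simultaneously establishes the claim that the saturating element carries over.

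\textbf{The key inequality.} This is where the construction pays off. By definition $\e(V^\eps)=\e\big(\h((\e V)^\eps)\big)=\Lump\big((\e V)^\eps\big)$, and since $\Lump$ is inflating this is $\supseteq(\e V)^\eps$, exactly as required. Note that this step is what fixes the orientation of the statement: had $\e\circ\h$ been deflating rather than inflating, the inclusion would reverse, so the proof hinges essentially on the closure nature of the lumping guaranteed by the \emph{intensive} embedding. Attainability then transfers directly: if some $0\in\E$ satisfies $W^0=W$ for all $W\in S^\Sigma$, then $V^0=\h((\e V)^0)=\h(\e V)=V$, so the same $0$ witnesses attainability of $A_\E(\Omega)$.

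\textbf{Main obstacle.} The routine verifications (preservation of unions and monotonicity) make up the bulk but carry no real difficulty. I expect the only genuinely delicate points to be (a) confirming that $\ball^\eps_\Omega$ is still an \emph{endomorphism} of the reduced space, which needs $\h$ to be a homomorphism and not merely an order-preserving adjoint---precisely what the intensive-embedding hypothesis supplies---and (b) tracking the direction of the inclusions throughout, in particular using the lumping on its inflating side to obtain $\e(V^\eps)\supseteq(\e(V))^\eps$ rather than the reverse.
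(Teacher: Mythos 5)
Your construction $V^\eps:=\h\bigl((\e(V))^\eps\bigr)$ is exactly the one the paper uses, and your verifications of the axioms, the saturating/attainable elements, and the key inclusion $\e(V^\eps)=\Lump\bigl((\e V)^\eps\bigr)\supseteq(\e V)^\eps$ via the inflating lumping all match the paper's argument (which states the last step more tersely as holding "by definition"). The proof is correct and takes essentially the same route.
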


\section{Probability and convexity}
\label{sec:convexity}

An issue that we have not discussed so far is that of probabilistic knowledge and convexity. The reader may have wondered how an agent may express probabilistic knowledge of reality. At first sight, specifications only allow for statements of the sort `I know that the state is either $\rho$ or $\sigma$', and not `I know that it is $\rho$ with probability $p$ or $\sigma$ with probability $1-p$'. 
Indeed, our formalism does not impose a notion of probability; instead, it assumes that if the agent has such a notion, it is already incorporated in an underlying convex state space $\Omega$. For example, in quantum theory,  the space of density matrices $\Omega$ is convex: for any two $\rho, \sigma \in \Omega$, there is a density matrix  $p\ \rho + (1-p)\ \sigma \in \Omega$  that expresses the probabilistic knowledge discussed above. Therefore, there is no need to impose another subjective notion of probability at the level of specifications. 

However, the state space might be convex for other reasons than to express probabilistic knowledge: for example, in chemistry we might want to describe mixtures of substances through convex combinations, such as a solution of $1$ gram of salt in $50$ grams of water. The structure of such mixtures will be much the same as the structure of probabilistic mixtures, and so we treat both types of convexity in this section. 
In either case, we may define the convex mixture\footnote{For more complete results, see Appendix~\ref{appendix:convexity}.  In particular, we use a generalized notion of convexity for sets that may not have all the structure of a real vector space.} of two specifications as 
$$ p\ V + (1-p) \ W
  := \bigcup_{\nu \in V} \bigcup_{ \omega \in W }
   \{p\ \nu + (1-p)\ \omega \}.$$

Let us have a closer look at probabilistic mixtures.
Consider the simple specification $V= \{ \omega, \nu\}$,  interpreted as `I might have state $\omega$ or state $\nu$'. An agent with some notion of probability might also read it as `I have either state $\omega$ or $\nu$, and I do not know the probability distribution over those two possibilities'. For him, an equivalent description of $V$ would be 
$$V^\P = \bigcup_{0\leq p \leq 1} \{p \ \omega + (1-p) \ \nu \}.$$
We call this new specification the \emph{probabilistic version} of $V$, and it corresponds to the convex hull of $V$. 
It is important to note that, unless a specification $V$ is already convex, $V$ and $V^\P$ are not strictly equal; their equivalence is given from an interpretation of probability that we take as natural.\footnote{A Bayesian might interpret this as `the specification is compatible with all possible priors.'} More precisely, this \emph{probabilistic equivalence} is given by an equivalence relation $\sim_\P$ defined by $V\sim_\P V^\P$. Given such an interpretation, however, we may choose to work in the probabilistic quotient space, that is, we can always identify specifications that are equivalent under $\sim_\P$. In Appendix  \ref{appendix:convexity} we  derive consistency conditions to work in the quotient space.

For example, if $\Omega$ is the set of possible outcomes of an exam, and my two specifications are $V = \{\text{Pass}\}$ and $W= \{$Pass, Fail$\}$, then the convex mixture 
\begin{align*}
0.95\, V + 0.05\,  W 
&= \{\text{Pass}\} \cup \{ 0.95 \ \text{Pass} + 0.05 \ \text{Fail} \}  \\
&\sim_\P \bigcup_{p \geq 0.95} \{p \text{ Pass} + (1-p)\ \text{Fail}\}
\end{align*}
corresponds to the specification `Pass with probability at least $95\%$'.  In other words, our framework allows us to formalize fuzzy knowledge of an agent that may not have a total prior distribution over a set of states or events, but only a partial prior of the sort `I know that the probability is between $p$ and $p+\delta$.'

\begin{bigexample}{Quantum convexity}{mixtures_specifications}
Let $\Omega$ be the set of all density operators for a qubit. This set is convex|the extremal points are the set of all pure states (the surface of the Bloch sphere).
 Let $V= \{\pure0, \pure1 \} \in S^\Omega$. Then 
\begin{align*}
 V^\P = \bigcup_{0\leq p \leq 1} \{p\ \pure0 + (1-p)\ \pure1 \}.
\end{align*}
We can denote the elements of $V^\P$ by $\{v_p\}_{0\leq p  \leq 1}$. For instance, $v_{0.7} = \{0.7 \ \pure0 + 0.3\ \pure1 \} $. 
In particular,  $ v_{1/2}  = \{\frac12 (\pure0+\pure1 )\}$ is the fully mixed state, which corresponds to taking the uniform distribution. 
Note that this is still more precise knowledge than not knowing the distribution: $\{v_{1/2}\} \subset V^\P$.

\end{bigexample}

Now we may study whether a set of transformations preserves convexity. 
For convex mixtures in general, this is not always desirable. For instance, in the example of a solution of salt in water the properties of the resulting substance, such as the electrical conductivity, might not directly reflect the properties of the individual substances. However, in the case of probabilistic mixtures, the transformations should 
respect the character of subjective probabilistic knowledge of mixtures. In this case, the requirement of convexity-preserving transformations is analogous in spirit to asking transformations to be specification homomorphisms.

\begin{definition}[Convex resource theories] \label{def:convex_rt_main}
A resource theory $(S^\Omega, \cT)$ is \emph{convex} if the state space $\Omega$ is convex  and all transformations in $\cT$ preserve convexity, that is, 
    \begin{align*}
    f (p \ V + (1-p) \ W) = p\ f(V) + (1-p)\ f(W),
    \end{align*} 
for all $f \in \cT$, all $V, W \in S^\Omega$,  and 
$ p \in [0,1]$.

The resource theory  is \emph{doubly convex} if, in addition, the set of transformations $\cT$ is also convex, and convex combination of transformations behave naturally (see Appendix~\ref{appendix:convexity} for a full definition), in particular 
$$ (p \ f + (1-p) \ g) (V) 
\sim_\P  p \ f(V) + (1-p) \ g(V).$$
\end{definition}

For example, the set of density operations over a fixed Hilbert space is convex, and quantum transformations (TPCPMs) are linear, and so convexity-preserving,  $\E(p \ \rho + (1-p) \ \sigma) = p\ \E(\rho) + (1-p)\  \E(\sigma)$. Therefore all quantum resource theories are convex. Furthermore, quantum theory  is doubly convex, and so is LOCC. The theory of unitary operations is not doubly convex (because we are not allowed to mix unitaries). 
In Appendix~\ref{appendix:convexity} we characterize doubly convex resource theories. The following result is of particular interest.
\begin{restatable}[Convexity of free resources]{theorem}{thmFreeConvex}
\label{thm:free_convex}
Let $(S^\Omega, \cT)$ be a doubly convex resource theory. Then, the set of free resources is convex (under probabilistic equivalence), that is, for any two free resources  $V$ and $W$,  $(p \ V + (1-p) \ W)^\P$ is also a free resource. 

In particular, the set of free states is convex: if $\{\nu\}$ and $\{\omega\}$ are free, then $\{p \ \nu + (1-p) \ \omega\}$ is also free.

\end{restatable}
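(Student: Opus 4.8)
The plan is to exploit the convexity of the transformation monoid $\cT$ to build, out of the two witnesses of freeness for $V$ and $W$, a single transformation that reaches the mixed specification. Concretely, since $V$ is free there is some $f \in \cT$ with $f(\Omega) \subseteq V$, and since $W$ is free there is some $g \in \cT$ with $g(\Omega) \subseteq W$; these are just the unpacked statements $\Omega \to V$ and $\Omega \to W$. The candidate transformation is the convex combination $h := p\, f + (1-p)\, g$, which lies in $\cT$ precisely because the theory is doubly convex (the set of transformations is itself convex).

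Next I would evaluate $h$ on the whole state space. Double convexity guarantees that convex combinations of transformations behave naturally, so $h(\Omega) = (p\, f + (1-p)\, g)(\Omega) \sim_\P p\, f(\Omega) + (1-p)\, g(\Omega)$. Independently, the convex mixture of specifications is monotone under inclusion: from $f(\Omega) \subseteq V$ and $g(\Omega) \subseteq W$ together with the elementwise definition $p\, A + (1-p)\, B = \bigcup_{\nu \in A}\bigcup_{\omega \in B}\{p\,\nu + (1-p)\,\omega\}$, one reads off directly that $p\, f(\Omega) + (1-p)\, g(\Omega) \subseteq p\, V + (1-p)\, W$.

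It remains to combine these two facts. Taking probabilistic versions (i.e.\ convex hulls), which preserve inclusion, the monotonicity step yields $(p\, f(\Omega) + (1-p)\, g(\Omega))^\P \subseteq (p\, V + (1-p)\, W)^\P$. Since $\sim_\P$ is by definition equality of probabilistic versions, $h(\Omega)^\P = (p\, f(\Omega) + (1-p)\, g(\Omega))^\P$, and since every specification is contained in its own probabilistic version, $h(\Omega) \subseteq h(\Omega)^\P$. Chaining these inclusions gives $h(\Omega) \subseteq (p\, V + (1-p)\, W)^\P$, that is $\Omega \to (p\, V + (1-p)\, W)^\P$, which is exactly the assertion that the mixture is a free resource. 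The \emph{in particular} claim about states then follows by taking $V = \{\nu\}$ and $W = \{\omega\}$: their mixture is the single point $\{p\,\nu + (1-p)\,\omega\}$, which is already convex and hence equal to its own probabilistic version.

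The one place demanding care---and the only real obstacle---is the bookkeeping between the equivalence $\sim_\P$ and plain set inclusion $\subseteq$. The image $h(\Omega)$ need not be literally a subset of $p\, V + (1-p)\, W$; one only controls it up to $\sim_\P$. The argument must therefore pass through the probabilistic version of the target, using that $V \subseteq V^\P$ and that the convex hull is monotone, in order to land inside $(p\, V + (1-p)\, W)^\P$ rather than inside the bare mixture. Making this chain precise (and confirming that the \textbf{behave naturally} clause of double convexity is exactly what licenses the first $\sim_\P$) is the substance of the proof; everything else is the elementwise algebra of the mixture operation.
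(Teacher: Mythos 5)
Your argument is correct and is essentially the paper's own proof: the paper routes through Lemma~\ref{lemma:double_convex} (with $\Omega$ in place of a general source $V$), whose proof is exactly your chain $\mixf p f g(\Omega) \subseteq [\mixf p f g(\Omega)]^\P = (\mixt p{f(\Omega)}{g(\Omega)})^\P \subseteq (\mixt p V W)^\P$, and the singleton case is handled identically. Your closing remark correctly identifies the only delicate point, namely that the image of the mixed transformation is controlled only up to $\sim_\P$, so one must land in the convex hull of the target rather than the bare mixture.
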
  

If a state space is 
convex, then we may ask whether a restricted agent also describes his knowledge with a convex state space. It turns out that whether or not this is true only depends on the lumping map induced by the embedding, which should satisfy a weaker version of convexity preservation. 
In quantum theory, this condition applies to the embeddings given by the partial trace. 

\begin{restatable}[Convex embeddings]{theorem}{thmConvexReduced}
\label{thm:embedding_convex}

Let $\Omega$ be a convex state space, and let $\Lump$ be a lumping in  $S^\Omega$ inducing a reduced specification space $S^T$. If $\Lump$ satisfies  
\begin{align*}
\Lump (p \ V + &(1-p) \ W)\  \supseteq\\
&
p\  \Lump (V) + (1-p)\  \Lump (W),
\end{align*}
then the reduced state space $T$ is also convex.

\end{restatable}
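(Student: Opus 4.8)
The plan is to transport the convex structure of $\Omega$ to the reduced state space $T$ through representatives, and to show that the hypothesis on $\Lump$ is exactly what makes this transport well-defined. By Proposition~\ref{thm:lumping}, the lumping $\Lump$ comes with a Galois insertion $(\e, \h)$ satisfying $\Lump = \e \circ \h$, and the states of $T$ are identified with the $\Lump$-classes of states of $\Omega$: writing $\omega \sim \omega'$ whenever $\Lump(\{\omega\}) = \Lump(\{\omega'\})$, each state $t \in T$ corresponds to a class $[\omega]$ via $\e(\{t\}) = \Lump(\{\omega\})$. Since $\Omega$ is convex, for $\omega, \omega' \in \Omega$ and $p \in [0,1]$ the combination $p\,\omega + (1-p)\,\omega'$ again lies in $\Omega$, so I would define the convex combination on $T$ by $p\,[\omega] + (1-p)\,[\omega'] := [\,p\,\omega + (1-p)\,\omega'\,]$. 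The whole theorem then reduces to checking that this definition does not depend on the chosen representatives.

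The key step is the identity
\[
\Lump(\{p\,\omega + (1-p)\,\omega'\}) = \Lump\big(p\,\Lump(\{\omega\}) + (1-p)\,\Lump(\{\omega'\})\big),
\]
which I would prove by a double inclusion using only that $\Lump$ is inflating, monotone (being a semilattice homomorphism) and idempotent. For ``$\subseteq$'', note that $\{\omega\} \subseteq \Lump(\{\omega\})$ and $\{\omega'\} \subseteq \Lump(\{\omega'\})$, that the element-wise mixture is monotone in each argument, and hence $\{p\,\omega+(1-p)\,\omega'\} \subseteq p\,\Lump(\{\omega\})+(1-p)\,\Lump(\{\omega'\})$; applying the monotone map $\Lump$ gives the inclusion. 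For ``$\supseteq$'', I would apply the hypothesis with $V=\Lump(\{\omega\})$ and $W=\Lump(\{\omega'\})$ and use idempotence $\Lump\circ\Lump=\Lump$ to obtain $\Lump(p\,\Lump(\{\omega\})+(1-p)\,\Lump(\{\omega'\})) \supseteq p\,\Lump(\{\omega\})+(1-p)\,\Lump(\{\omega'\}) \supseteq \{p\,\omega+(1-p)\,\omega'\}$; applying $\Lump$ to this outer inclusion and again invoking idempotence on the left yields the reverse inclusion.

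Granting the identity, well-definedness is immediate: if $\omega\sim\omega_1$ and $\omega'\sim\omega_1'$, then the right-hand side of the identity is unchanged, so $\Lump(\{p\,\omega+(1-p)\,\omega'\}) = \Lump(\{p\,\omega_1+(1-p)\,\omega_1'\})$, i.e.\ $p\,\omega+(1-p)\,\omega' \sim p\,\omega_1+(1-p)\,\omega_1'$. Hence the combination descends to $T=\Omega/\!\sim$, which is therefore closed under convex combinations, the convex-space axioms being inherited directly from those of $\Omega$ through the quotient. I expect the main obstacle to be conceptual rather than computational: pinning down the correct identification of $T$ with $\Lump$-classes from Proposition~\ref{thm:lumping} (so that ``$T$ convex'' literally means closure of these classes under the induced combination), and confirming that the element-wise convex mixture of specifications interacts monotonically with $\Lump$ as used above. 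The single inequality in the hypothesis enters only in the ``$\supseteq$'' direction, while the reverse inclusion follows for free from inflation and monotonicity, so the genuine content of the theorem is precisely that one-sided condition.
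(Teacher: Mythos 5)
Your proposal is correct and follows essentially the same route as the paper: the paper defines the reduced mixture as $\h\bigl(p\,\e(V_T)+(1-p)\,\e(W_T)\bigr)$ and uses the one-sided hypothesis together with $\h\circ\e=\mathrm{id}$ to show that pairs of singletons are sent to singletons, which is exactly your well-definedness identity $\Lump(\{p\,\omega+(1-p)\,\omega'\})=\Lump\bigl(p\,\Lump(\{\omega\})+(1-p)\,\Lump(\{\omega'\})\bigr)$ read through the Galois insertion. The only difference is presentational: you package the construction as a quotient operation on representatives with the convex-structure axioms inherited from $\Omega$, while the paper works with the induced map $\h\circ f_p\circ(\e\times\e)$ and cites Proposition~\ref{prop:convex_embeddings} for the axioms, but both arguments invoke the hypothesis at precisely the same step and obtain the reverse inclusion for free.
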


In particular, if $(S^\Omega, \cT)$ is a doubly convex resource theory, and general consistency conditions are satisfied between convex combinations and the embedding, then the induced restricted resource theory  $(S^T, \widetilde \cT)$ is also doubly convex. For details of this, see Appendix~\ref{appendix:convexity}.

\section{Discussion}
\label{sec:conclusions}

Guided by operational principles and explicit desiderata, we have developed a generalized framework of resource theories that extends the range of applicability of resource theories by making the subjectivity of the agent explicit. We wrap up by returning to our original desiderata, and seeing how they are satisfied by our results. 
Then we discuss relations to other approaches and finally we map future directions of research.

\subsection{Meeting the desiderata}

\desiknowledge*

Specification spaces are a simple yet powerful tool to model descriptions of resources. A state of knowledge or specification is formalized as a set of possible physical states admitted by an agent (Definition~\ref{def:specification_space}).
Combining knowledge is achieved by intersecting those sets, and forgetting by order-inflating functions (Definition~\ref{def:specification_space}). 
Approximation structures formalize the coarse-graining of knowledge through arbitrary criteria, reflecting the distinguishability of resources according to particular observable features (Definition~\ref{def:approximations}).
Specification spaces also accommodate the notion of probabilistic knowledge via convexity of the state space, and probabilistic equivalence relations (Definition~\ref{def:convex_rt_main}). 

Physical transformations should respect the nature and structure of knowledge: for this we require them to be order homomorphisms and, in the case of probabilistic knowledge, convexity-preserving (Definitions~\ref{def:resource_theory} and \ref{def:convex_rt_main}). This is also taken into account in the pre-order induced in a resource theory by
specification transformations: forgetting information always comes for free (Definition~\ref{def_construction}). 
We see how a resource theory behaves under approximation structures and probabilistic knowledge by studying robustness and convexity of free resources (Proposition~\ref{prop:robustness} and Theorem~\ref{thm:free_convex} respectively).

\desirelating*

Theories that differ in their language to describe resources can be related via specification embeddings (Definition~\ref{def:specification_embeddings}).
Theorem~\ref{prop:general_embeddings} tells us that a relation between different resources can always be established combining intensive embeddings (which link coarse- and fine-grained descriptions of the same resources) and extensive embeddings  (which add entirely new resources). We show how to 
build intensive embeddings from arbitrary coarse-grainings of information  and how to nest successive embeddings (Propositions~\ref{thm:lumping} and~\ref{prop:nested} respectively). Finally, we formalized the notion of local states of knowledge|or equivalently specifications that correspond to the knowledge of a limited agent (Definition~\ref{def:local_specification}).

More generally, agents that differ in both language and action, such as local agents acting within a global theory, can be modelled via restricted resource theories (Definition~\ref{def:restricted}).  In particular we can combine the restrictions of different resource theories (Definition~\ref{def:combined_rt}). 
We characterize embeddings that respect approximation structures and convexity, such that these aspects of knowledge are carried over between related specification spaces (Proposition~\ref{prop:reduced_approximations} and Theorem~\ref{thm:embedding_convex}).

\desicomposition*

We find an operational subsystem structure  in a global resource theory by building local resource theories based only on  commutation relations between physical transformations (Theorem~\ref{thm:transformations_independent_agents}). 
In particular, quantum subsystems and the partial trace can be derived in this way (Theorem~\ref{prop:partial_trace}).
We further explore the subsystem structure  induced by commutation relations in Definitions~\ref{def:effective_resource_theory} and~\ref{def:subsystems_transformations}  and in  Proposition~\ref{prop:complete_lattice}.

This is not necessarily the unique way of finding a subsystem structure; more generally, we find two aspects of locality that are used in traditional building-block approaches to resource theories. The first is that local descriptions are independent of each other, which ensures that  they can  always be combined (Definition~\ref{def:independent_embeddings} and Proposition~\ref{prop:free_composition}). This condition does not hold in the presence of global constraints. The second aspect of locality is that the descriptions of one local agent are independent of the actions of another (Definition~\ref{def:independent_agents}). This ensures that we can consistently process the actions of independent agents locally  (Theorem~\ref{thm:transformations_independent_agents}).

These two notions in their strongest sense are taken for granted in traditional approaches; however, we will see in Part II of this work that they are not strictly necessary in order to define relevant concepts such as copies of resources, correlations, memories or catalysis. To pave the road to that work we explore weaker and asymmetric versions of the two notions of independence in Definitions~\ref{def:independent_embeddings} and~\ref{def:independent_agents} and in Proposition~\ref{prop:independent_processing}. These allow us to introduce the notion of effective resource theory  (induced for example by a catalyst) in Definition~\ref{def:effective_resource_theory}.

\subsection{Relation to existing work}
\label{sec:literature}

\subsubsection{Theories of knowledge}
\label{sec:theories_knowledge}

There are many ways how one could describe an agent's subjective knowledge, and there is a huge body of work within a variety of fields dedicated to this question, ranging from philosophy to economics, engineering and computer science. The present  work bridges those disciplines with quantum information theory and resource theories, and in the future, more sophisticated ideas from those fields could be incorporated in our framework.

\paragraph{Modal logic.}
The idea to represent knowledge in terms of sets of possible states is not new. In fact, the analogous concept of \emph{possible worlds} dates back to Leibniz~\cite{Leibniz1739}, and was most prominently brought forward by Kripke~\cite{Kripke1963} and Lewis~\cite{Lewis2013}. Kripke  developed the semantics for \emph{modal logic}, which quantifies over possible worlds to express knowledge and inference (for a historical survey on modal logic, see~\cite{Goldblatt2003}). 

\paragraph{Epistemic logic.}
Building up on these notions, \emph{epistemic logic}~\cite{Fagin1995, Halpern1995} pursues the task of characterizing knowledge, based on the formalism of Kripke structures. Kripke structures are a way of modelling the knowledge of different agents, in which one starts with a set of possible states (in our case $\Omega$), and models an agent's knowledge by means of so-called \emph{possibility relations}\footnote{The exact formalism   is different but  isomorphic to this.} $R_i(\nu,\omega) \in \{0,1\}$  that express whether or not agent $i$ can exclude the possibility $\omega\in\Omega$ if the true state is $\nu\in\Omega$. Note in particular that $R_i$ is not necessarily symmetric. 
Given a particular true state $\nu$, the knowledge of agent $i$ essentially corresponds to a subset of possible states  $V_{\nu, i}= \{ \omega  \in \Omega: R_i(\nu, \omega) =1\}$, and is thus just a specification. In our language, we can understand the possibility relations as inducing an approximation structure, with $\E =\{i\}_i$, and $\{\nu\}^i = V_{\nu, i}$.
Epistemic logic then analyses the propositions known by a particular agent: agent $i$ knows a proposition $\phi$ if and only if $\phi(\omega)$ is true in all states $\omega \in V_{\nu, i}$ that are indistinguishable for the agent $i$ from the true state $\nu$. In our framework, this could be modeled by means of homomorphisms $\phi: S^\Omega \to S^{\{ 0,1\}}$. 
Within epistemic logic, further statements about an agent's knowledge follow from studying the semantics of this model of Kripke structures. A set of complete axioms for this semantics have been characterized by Hintikka~\cite{Hintikka1962}.

\paragraph{Linear and dynamic epistemic logic}
It would  be interesting to explore in more detail how our framework connects to modern developments in epistemic logic. In particular, one could draw ideas from \emph{dynamic epistemic logic}, which studies knowledge update in such models (see e.g.\ \cite{Ditmarsch2007}, or for a recent algebraic analysis \cite{Baltag2007}). 
Related to this, \emph{linear logic}~\cite{Girard1987, Girard1995}  tries to extend classical logic to situations where inferences can change the underlying premise, similar in spirit to resource theories. A good review of the connection between linear logic and resource theories has been given in \cite{Fritz2015}. 
In particular, notions of \emph{common knowledge} shared by several agents, and the way it evolves, are fundamental to communication, cryptography and information security settings.  In this context, it would be promising to look at some of the puzzles that are studied in dynamic epistemic logic such as the muddy children's puzzle  ~\cite{Fagin1995,Halpern1995} and understand whether our framework presents an advantage in treating them.

\subsubsection{Notions of probability}

\paragraph{Bayesian formalism.}
Our framework includes the possibility of  a convex state space to encode subjective probabilities, taking the specification space as a structure on top of that. In this way, specifications can be used to model Bayesian agents. Moreover, we may explore interesting cases, like a bounded Bayesian agent that is limited in her memory and information processing capacities.
For example, even if the agent has good reason to model local knowledge with a probability distribution or a density matrix $\rho_A$, it might be computationally hard (and artificial) to pick  and store a prior over compatible global states.
By means of specifications like $\local {\rho_A}$, such an agent may still predict the effect of global actions on her local state of knowledge, consistent with any possible prior, without over-modelling her knowledge. 
In the introduction we argued that a Bayesian approach might not always be suitable to represent a realistic agent's knowledge. In the following we will briefly review theories that weaken the axioms of probability theory towards a more operational approach.

\paragraph{Dempster-Shafer theory, subjective logic and plausibility.}
Dempster-Shafer theory studies an agent's degree of belief on propositions (which are essentially specifications).
To do so, it introduces a new layer on top of the basic structure of epistemic logic:  a set of \emph{belief functions} which follow similar but weaker axioms as probabilities in a Bayesian framework~\cite{Dempster1967, Shafer1976}. 
Based on these belief functions, an agent can determine lower and upper probabilities (or degrees of confidence) for particular propositions. Dempster-Shafer theory also guides the agent on how to update these beliefs  as he learns more, combines knowledge and forgets or coarse-grains the knowledge, much like our  intensive specification embeddings. For an extensive review, see~\cite{Yager2008};
this approach has been recently taken up by the quantum logic community~\cite{Fritz2015a}.
Similarly, \emph{subjective logic}~\cite{Josang2001} quantifies agents' knowledge about certain propositions by means of a degree of belief $b$, disbelief $d$ and a remaining level of uncertainty $u=1-p-d$. Interestingly, it also studies the interaction of many agents and resulting trust networks~\cite{Josang2006}.
As future work, it would be interesting to apply these ideas to our framework. For example, we could investigate the impact of   our notions of independence on the rules for updating belief.

\paragraph{Fuzzy logic.}
Finally, one could draw links to fuzzy logic~\cite{Novak1999,Zadeh1965,Zimmermann2001}, where the premise is that propositions can take truth values anywhere between $0$ and $1$, or elements can belong to a set to a degree between $0$ and $1$ respectively.  
In our framework, this could be used construct a Bayesian model on top of specifications, in which different elements in the specification would receive different degrees of certainty. However, such a model could already be simulated by a convex state space, and so it is not clear if more insight can be gained from this approach.

\subsubsection{Approximation structures}

In order to quantify proximity between states and specifications, we have introduced a definition of approximation structures, which parametrize neighbourhoods of states in a  very general way. For instance, we do not even  require an underlying symmetric distance measure, so notions such as the work distance from \cite{Brandao2013b} are covered. Indeed, its generality stands out among other approaches to proximity in the literature.

\paragraph{Topology.} By defining approximations through inflating homomorphisms, our definition is analogous to the notion of \emph{neighbourhoods} in topology (see e.g.\ \cite{Munkres2000}). However, the general structure of our approximation parameters $(\E, \leq)$, characterized by a partial order, allows us to formalize operational notions of proximity which do not constrain the underlying structure of the topology (such as a metric) but nonetheless provide structure on the neighborhoods. 
In the context of topology, one could also compare our approach to the concept of \emph{proximity spaces}, which give a way of characterizing the nearness of sets~\cite{Naimpally1971}. However, this approach relies on a set of fairly restrictive axioms (such as symmetry), and as such seems less general than similar relations that follow from our approximation structures.

\paragraph{Rough sets.} The approach that is probably most interesting to compare to our notion of approximation is given by the \emph{rough sets} introduced by Pawlak~\cite{Pawlak1973, Pawlak1982}. In this approach, sets are approximated by means of \emph{lower} and \emph{upper approximations}, which result from indistinguishability relations according to particular \emph{attributes}. In our language, these attributes can be understood as lumpings, which correspond to a special case of our approximation homomorphisms, so that they induce equivalence classes of indiscernible objects according to the respective criteria. It would be interesting to generalize the ideas of lower and upper approximations for our definitions of approximation structures.

\subsubsection{Resource theories}

We have referred to previous work on resource theories throughout this paper. However, some recent generalizations of resource theories \cite{Brandao2015, Coecke2014, Fritz2015} deserve a more detailed analysis. 

\paragraph{Asymptotic reversibility.}
Ref.~\cite{Brandao2015}  addresses the question of the asymptotic reversibility of a resource theory, showing that such reversibility follows from a maximal set of allowed transformations. It would be interesting to connect these ideas to our framework, and see if their result can be generalized beyond quantum resource theories. One could draw inspiration from their work to define conversion rates in our framework, and see how their expressions can be adapted to fit generalized resource theories.  
Another notable connection of \cite{Brandao2015} to our paper lies in the set of free states: while in their work, convexity of the free resources is assumed, we derive it in Theorem \ref{thm:free_convex}. 

\paragraph{Categorical approach.} 
Recent works~\cite{Coecke2014, Fritz2015} present an approach connecting resource theories to category theory. These papers in general provide a very good introduction to the structure of resource theories as they have been studied to date, and the interested reader can find many references to related work therein. In particular, Section $10$ of \cite{Fritz2015} gives an excellent overview of previous research in the various areas connected to resource theories including linear logic and constructor theory. In fact, this paper discusses the various aspects of resource theories in detail and points out valuable links to related concepts within mathematics and wider research areas throughout, referring to von Neumann and Morgenstern in decision theory~\cite{vonNeumann1947} and Lieb and Yngvason in the foundations of thermodynamics~\cite{Lieb1999}. As we have also mentioned in the introduction, the approach in \cite{Fritz2015} differs from ours in that it does not treat knowledge and the subjectivity of an agent explicitly, and takes a bottom-up approach to subsystems and composition.

\paragraph{Abstract cryptography.}
In cryptography resource theories emerge naturally: constraints and subjectivity of agents are the name of the game, and always treated explicitly. 
The present work is heavily influenced by the ideas of abstract  cryptography  found in~\cite{Maurer2011}.
In this work, cryptographic constructions are phrased in terms of resource theories: for example, the one-time-pad encryption scheme can be seen as a transformation of an initial resource (an authenticated channel together with a secret key) into a target resource (a private channel). 
Furthermore,  resources may be formulated by means of specifications. 
For example, an authenticated channel between two parties, Alice and Bob, may leak information to a third party, Eve. It is implicit in this notion that
 Alice and Bob do not know how much is leaked to Eve (for example, they cannot use the channel as a means to communicate with Eve). 
The idea of authenticated channel may therefore be modeled as the set of all channels between Alice and Bob with different levels of leakage to Eve|a specification.

\subsection{Directions}

With the basic framework set up, there is much to explore.
Concepts in traditional resource theories like memories, catalysis, currencies or conversion rates should be generalized to our framework, where in particular one uses generalized notions of subsystems|this is the main subject of the upcoming Part II of this work. On the other hand, it would also be very interesting to explore monotones on the pre-order on the specification space, and see whether results from traditional resource theories (like the uniqueness of monotones~\cite{Lieb1999, HorodeckiOppenheim2013}) can be carried over to more general settings.
Also, the framework itself leaves interesting directions open to explore further, such as to introduce a specification space of transformations. Finally, one could look at particular applications of our framework in more detail, such as the connection between microscopic and macroscopic thermodynamics or cryptographic settings. In the following we detail some of these directions.

\subsubsection{Memories and catalysis}

The concept of memories and catalysis in traditional resource theories relies heavily on the strict notion of subsystems. In traditional resource theories, a memory corresponds to a system that stores information, which can then be used to perform certain tasks more efficiently, like the erasure of another system that is correlated with the memory~\cite{delRio2011}. Since this notion uses the idea of subsystems, it will be interesting to reformulate results concerning memories in our framework | in particular, one can highlight which aspects of subsystems are sufficient for defining a memory.

Similarly, catalysis in traditional resource theories is the phenomenon that additional resources can help speed up, or make possible a process which otherwise would have been more costly or impossible. This concept is well-known in chemistry (for example, the reaction of hydrogen peroxide into water is facilitated by manganese dioxide), but has also received attention recently in quantum resource theories~\cite{Turgut2007, Aberg2013, Brandao2013b,Ng2014}. Again, catalysis in traditional resource theories is defined in terms of an additional subsystem that is appended to the system in consideration; in our framework we can study which aspect of generalized subsystems is necessary to capture the concept of catalysis. In particular, catalysis induces an effective resource theory on the remaining systems (similarly to \cite{Fritz2015}, where catalysis gives rise to a regularization); we make this idea precise in Part II of this work.

\subsubsection{Composition and copies of local resources}
Another concept central to traditional resource theories is the idea of composition and copies of local resources, which can then be used to define conversion rates between resources (see e.g.\ \cite{Brandao2015}). These concepts  make heavy use of the traditional subsystem structure, where local resources can be combined with the tensor product to produce uncorrelated composite resources. In Part II of this work, we
take a more operational approach that goes beyond the tensor product. For example, to compose two heat baths we could allow them to be slightly correlated, as long as these correlations are weak enough that they do not affect their joint behaviour. 
We are still able to find meaningful concepts of copies and conversion rates. One can then check if for familiar resource theories, asymptotic results like those of \cite{Brandao2015} could be recovered.

\subsubsection{Cost and yield of resources}

Once one has defined the notions of composition and copies of local resources, one can study the cost and yield of resources and resource transformations in terms of a ``standard'' resource, a currency. Such a currency allows to facilitate any resource transformation, given enough of it, and provides a useful means of quantifying the value of resources. This does not exist in all resource theories; however, many traditional resource theories are equipped with a currency. For example, in bi-partite LOCC maximally entangled pairs of qubits form a currency, in noisy operations it would be pure qubits, and in thermodynamics we may quantify resources by means of energy stored in an explicit battery system. 
In Part II of this work, we  formalize the notion of a currency in terms of generalized subsystems. We may then compare currencies that arise from copies of resources (such as pure or maximally entangled qubits) to those defined on one system (such as the energy of a battery). Finally, one can draw connections to monotones, conversion rates and reversibility, as well as to existing concepts such as the entropy meters in \cite{Lieb2014}.

\subsubsection{Monotones}
Another main direction for future work is to classify more explicitly the pre-order on specification space that is induced from a resource theory. The general question of whether or not an arbitrary transformation $V\to W$ is possible can be very  hard.
Hence, we should look for ways to answer this question in special cases, for instance by finding monotones. There are two steps to approach this issue: the first would be to exploit that some specifications have a concise description, and such descriptions can be reached from any specification by means of coarse-graining. In this context, it would be interesting to explore the ideas of lower and upper approximations introduced in the works on rough sets~\cite{Pawlak1973, Pawlak1982}. The second important step is that monotones along the pre-order that are known for traditional resource theories should be extended to specifications. If this is not possible directly, still lower and upper bounds to monotones could be found in terms of state space monotones. In some explicit cases, where the pre-order on state space is fully known and characterized, for example through majorization, a corresponding analogue for specification spaces could be identified.

\subsubsection{Specifications of transformations}
Another straightforward extension of our framework would be to consider specification spaces of allowed transformations, as well as formalizing approximation structures and embeddings on the level of transformations. A specification space of transformations $S^\cT$ would correspond to coarse-graining, or operationally, to uncertainty over which transformation exactly has been applied, for example after applying process tomography.  Mathematically, a specification of transformations would itself act as a homomorphism, that is, for $F=\{f,g\}, f,g\in\cT$ we shall demand that it acts as 
$F(V)= f(V)\cup g(V)$. 
Embeddings of transformations have been implicitly  treated in the context of restricted and local resource theories. More generally, this could be done via embeddings on $S^\cT$. Lastly, we may set up approximation structures on $S^\cT$  that reflect approximation structures on $S^\Omega$, such that for instance $f^\epsilon(V)=f(V^\epsilon)$, or $f(V)=W\implies f^\epsilon(V)=W^\epsilon$. When exploring these concepts, one should look for general duality between resources and transformations, that is conditions under which statements about specifications of transformations might be reduced to statements about resources.

\subsubsection{Concrete applications}
Finally, it remains to look at particular applications of our generalized framework where our results yields further insight. One example would be to study in more detail the connection between macroscopic thermodynamics and microscopic models such as classical statistical mechanics or noisy and thermal operations through embeddings, and see if this link can be made formal. In this context, one should formalize the insight that specification spaces make the equal a priori probability postulate for the microcanonical ensemble in statistical mechanics obsolete.
Another option is to explore applications to cryptography and information security in general, where our framework could be used to relate the knowledge of  adversarial agents.

\subsubsection{Foundations}

While resource theories may be purely operational frameworks, they can reveal fundamental aspects of the underlying physical theory. 
For example, the task of deriving quantum theory from a set of axioms~\cite{Chiribella2016, Masanes2013, Henson2015} is naturally cast in terms of a resource theory formalism. This may be done by starting from a general framework for probabilistic theories, where resources are sets of black boxes characterized by inputs,  outputs and probability distributions (expressing things like state preparation, a physical transformation or measurement  statistics), and transformations correspond to different ways of connecting inputs and outputs of those boxes (for example by preparing two local states and then  performing a joint measurement). 
One then imposes  subsequent operational constraints on both resources and allowed transformations (based on things like non-signalling, local tomography, no-cloning, or familiar macroscopic behaviour) until the resulting resource theory singles out quantum theory. 
It would be interesting to  extend this approach beyond probabilistic theories, by using specifications to describe arbitrary resources.

\newpage

\begin{acknowledgements}
We thank Fernando Brand\~ao, Bob Coecke, Philippe Faist, Tobias Fritz, Patrick Hayden, Philipp Kammerlander, Ueli Maurer, Christopher Portmann and Gilles P\"utz for insightful discussions and feedback, and Bruno Montalto, who years ago ranted `people say random a lot when they mean uniform', thus planting the seeds for this work. This work is partly based on LdR's PhD thesis.
We acknowledge support from the European Research Council (ERC) via grant No.~258932, from the Swiss
National Science Foundation through the National Centre of
Competence in Research \emph{Quantum Science and Technology}
(QSIT),  by  the  European  Commission  via  the  project \emph{RAQUEL}, and by the COST Action MP1209.
LdR further acknowledges support  from the ERC Advanced Grant NLST. 
\end{acknowledgements}

\onecolumngrid
\newpage
\appendix

\addcontentsline{toc}{section}{\sc{Appendix}}

\section{Basics of algebra and order theory, general statements}
\label{appendix:algebra_order}

In this appendix we review standard notions from algebra and order theory, and introduce a few general concepts and results of our own.

\subsection{Algebraic structures}

\begin{definition}[Basic algebraic structures]
  A pair $(S,\cdot )$ composed of a set $S$ and a binary operation $\cdot \ : S \times S \to S$ is:
  \begin{itemize}
  \item a \emph{magma} if $S$ is closed under $\cdot$, $x\cdot y \in S$,  for all $ x,y \in S$;
  \item a \emph{semigroup} if it is an associative magma, $x \cdot (y \cdot z) = (x \cdot y) \cdot z$,  for all $ x,y,z \in S$;
  \item a \emph{band} if it is an \emph{idempotent} semigroup, $x \cdot x =x$, for all $x \in S$;
  \item a \emph{monoid} if it is a semigroup with an identity element $1\in S$,  $1\cdot x = x \cdot 1 =x$, 
  for all $ x,y,z \in S$; 
  \item a \emph{group} if it is a monoid in which every element  $x \in S$ has an inverse $x^{-1} \in S$, such that $x^{-1}\cdot x = x \cdot x^{-1} =1$.
    \end{itemize}
\end{definition}

\begin{definition}[Submonoids]
Let $(S, \cdot)$ be a monoid. A \emph{submonoid} of $S$ is a subset $A \subseteq S$ such that $(A, \cdot)$ is a monoid. If $ \{1\} \subset A \subset S$ we say that it is a \emph{proper} submonoid; otherwise it is a \emph{trivial} submonoid.
\end{definition}

The above definition applies analogously to more complex algebraic structures, like groups.

\begin{lemma}
\label{lemma:monoid_intersection}
Let $(S, \cdot)$ be a monoid, and $(A,\cdot)$ and $(B,\cdot)$ be two submonoids of $(S, \cdot)$. Then $(A\cap B,\cdot)$ is also a submonoid of $(S, \cdot)$.
\end{lemma}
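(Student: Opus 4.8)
The plan is to verify directly that $(A \cap B, \cdot)$ satisfies the three defining properties of a monoid, using only that $A$ and $B$ are themselves submonoids of $(S,\cdot)$. There are three things to check: that $A \cap B$ is closed under $\cdot$, that $\cdot$ is associative on $A \cap B$, and that $A \cap B$ contains an identity element acting as the identity. Since all of these properties hold on the ambient monoid $S$, the strategy is to show that each one descends to the intersection, and the only mild point of care is which element plays the role of the identity.

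First I would handle closure. Take any $x, y \in A \cap B$. Because $x, y \in A$ and $A$ is closed under $\cdot$, we have $x \cdot y \in A$; symmetrically, since $x, y \in B$ and $B$ is closed, $x \cdot y \in B$. Hence $x \cdot y \in A \cap B$, giving closure. Associativity is immediate: $\cdot$ is associative on all of $S$ by hypothesis, and $A \cap B \subseteq S$, so the restriction of $\cdot$ to $A \cap B$ remains associative.

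The last and only slightly delicate step is the identity. Here I would invoke the standard convention, implicit in the paper's use of submonoids (e.g.\ the condition $A \cap B = \{1\}$ in the definition of independent operations), that a submonoid of $S$ shares the identity $1$ of $S$. Then $1 \in A$ and $1 \in B$, so $1 \in A \cap B$, and since $1 \cdot x = x \cdot 1 = x$ for every $x \in S$ it does so in particular for every $x \in A \cap B$. Combining closure, inherited associativity, and the common identity, $(A \cap B, \cdot)$ is a monoid contained in $S$, hence a submonoid, which completes the argument. I do not expect any genuine obstacle: the only thing to flag explicitly is that the identities of $A$ and $B$ must be taken to coincide with $1_S$, which is exactly what the notion of submonoid used elsewhere in the paper requires.
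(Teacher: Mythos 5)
Your proof is correct and follows essentially the same route as the paper's: the identity lies in both $A$ and $B$ hence in the intersection, and closure of the intersection follows from closure of each submonoid. Your extra remarks on inherited associativity and on the convention that submonoids share the identity $1_S$ are fine but not points the paper bothers to spell out.
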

\begin{proof}
As the identity is element of both $A$ and $B$, it is also in $A\cap B$. It is left to show that if $a,b \in A\cap B$, then also $a\cdot b\in A\cap B$. But since $a,b$ are elements of both $A$ and $B$, and $(A,\cdot)$ and $(B,\cdot)$ are monoids, also $a\cdot b\in A$ and $a\cdot b\in B$. Hence $a\cdot b\in A\cap B$.
\end{proof}

\begin{definition}[Monoid homomorphisms]
Let $(S, \cdot)$ and $(T, \star)$ be two monoids with identities $1_S$ and $1_T$ respectively. A map $f$ from $S$ to $T$ is a \emph{monoid homomorphism} if $f(a\cdot b)= f(a)\star f(b)$ and $f(1_S) = 1_T$.
If in addition $f$ is bijective, then $f$ is called a \emph{monoid isomorphism}, and the two monoids are said to be \emph{isomorphic}.
\end{definition}

\subsection{Order structures}

\begin{definition}[Pre and partial orders]
A \emph{preorder} $\leq$ on a set $S$ is a is a binary relation between elements of $S$ that is reflexive ($x \leq x$) and transitive ($x\leq y \wedge y\leq z \implies x\leq z$).

A \emph{partial order} $\leq$ is a an antisymmetric  preorder ($x \leq y \wedge y \leq x \implies x=y$). We call the structure $(S, \leq)$ a partially ordered set, or \emph{poset}. 

A \emph{strict partial} order $<$ is a binary relation that is irreflexive, antisymmetric (if $x<y$ holds, then $y<x$ doesn't) and transitive. It may always be built from any partial order $\leq$ by excluding reflexivity (we define it as $x<y: \ x\leq y \ $ and $ \  x\neq y$). 

The \emph{dual relations} $\geq$ and $>$ are built from $\leq$ and $<$ respectively analogously ($x \geq y \Leftrightarrow y \leq x$).

\end{definition}

Now we introduce terminology for relations between elements of a poset, as well as special elements and subsets. For intuitive examples, see Figs.~\ref{fig:meet_join} and~\ref{fig:ideal_filter}.

\begin{definition}[Relations between elements of a poset]
Let $(S, \leq)$ be a poset.
Two elements $x, y \in S$ are said \emph{comparable} if either $x \leq y$ or $y \leq x$.

If all elements of $S$ are comparable, then $\leq$ is a \emph{total order}. A \emph{chain} is a totally ordered subset of $S$.

Let $X \subseteq S$ be a subset of $S$. We say that an element $y \in S$ is:
\begin{itemize}
\item an \emph{upper bound} of $X$,  $y \geq X$, if $ x\leq y$, for all $x \in X $;
\item a \emph{lower bound} of $X$, $y \leq X$, if  $ y\leq x$, for all $x \in X $;
\item the \emph{join} of $X$, $y = \vee X$, if it is the least upper bound of $X$, that is, $y \geq X$ and $y\leq z$, for all $ z \geq X$;
\item the \emph{meet} of $X$, $y = \wedge X$, if it is the greatest lower bound of $X$, that is,  $y \leq X$ and $y\geq z$, for all $ z \leq X$;
\item the \emph{top} or greatest element of $S$, $y=1$, if $y\geq S$;
\item the \emph{bottom} or least element of $S$, $y=0$, if $y\leq  S$.
\end{itemize}

\end{definition}

\begin{figure}[t]

\begin{center}
\includegraphics[width=0.4\textwidth]{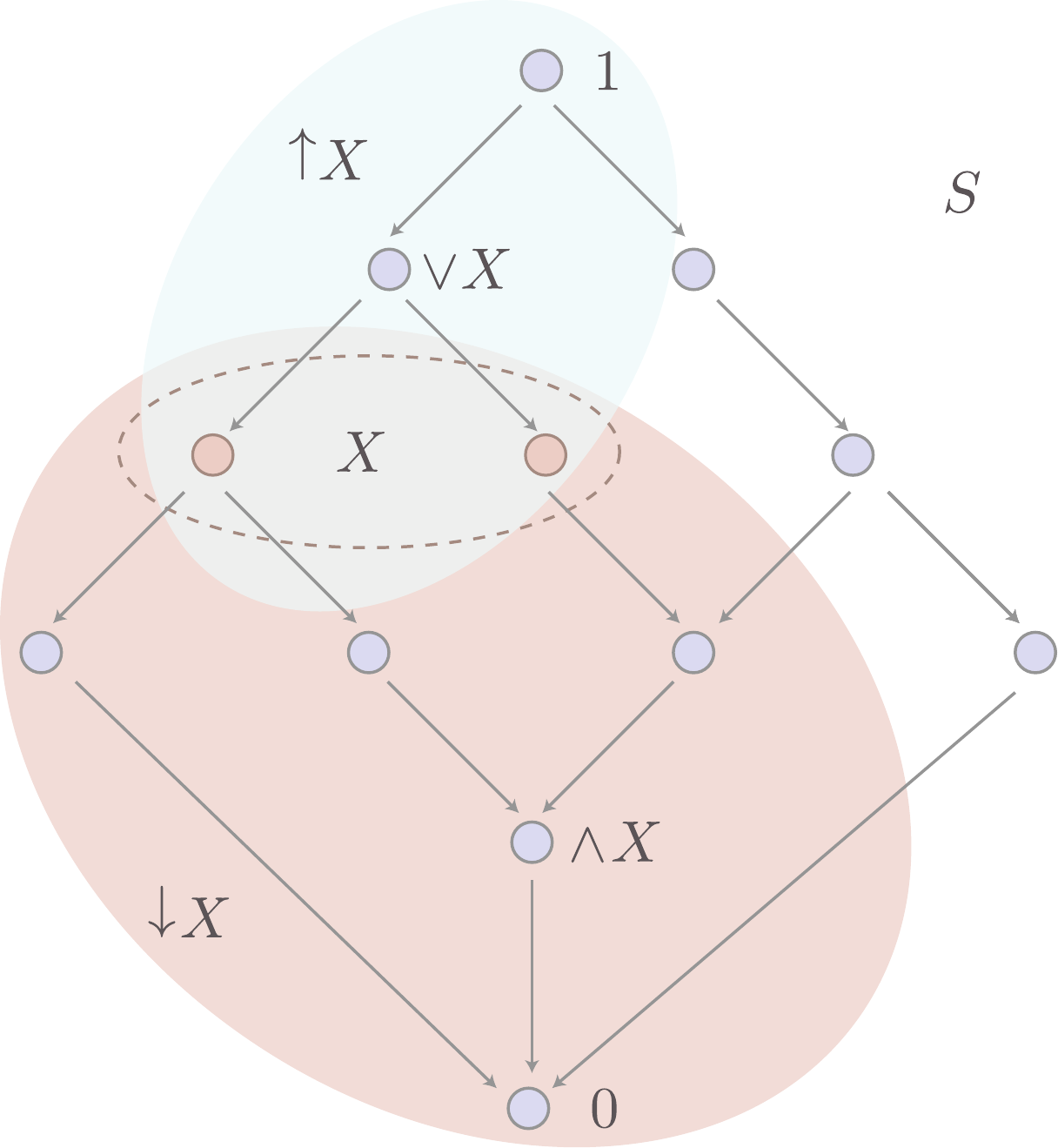}
\end{center}

\caption{
	{\bf Join, meet, top and bottom.}
	In this diagram of a poset $S$, arrows represent the partial order: $x \to y \Leftrightarrow x \geq y$. 
	We start with the two-element set $X$ (dashed line), and we identify the sets $\uparrow X$ (in blue) and $\downarrow X$ (in orange). We can also identify the join and meet of $X$: $\vee X$ and $\wedge X$. In this example, none of them is contained in $X$, but that is not always the case. Finally, we can see the top and bottom of $S$, $1$ and $0$. 
}

\label{fig:meet_join}
\end{figure}

\begin{figure}[t]

\begin{center}
\includegraphics[width=0.9\textwidth]{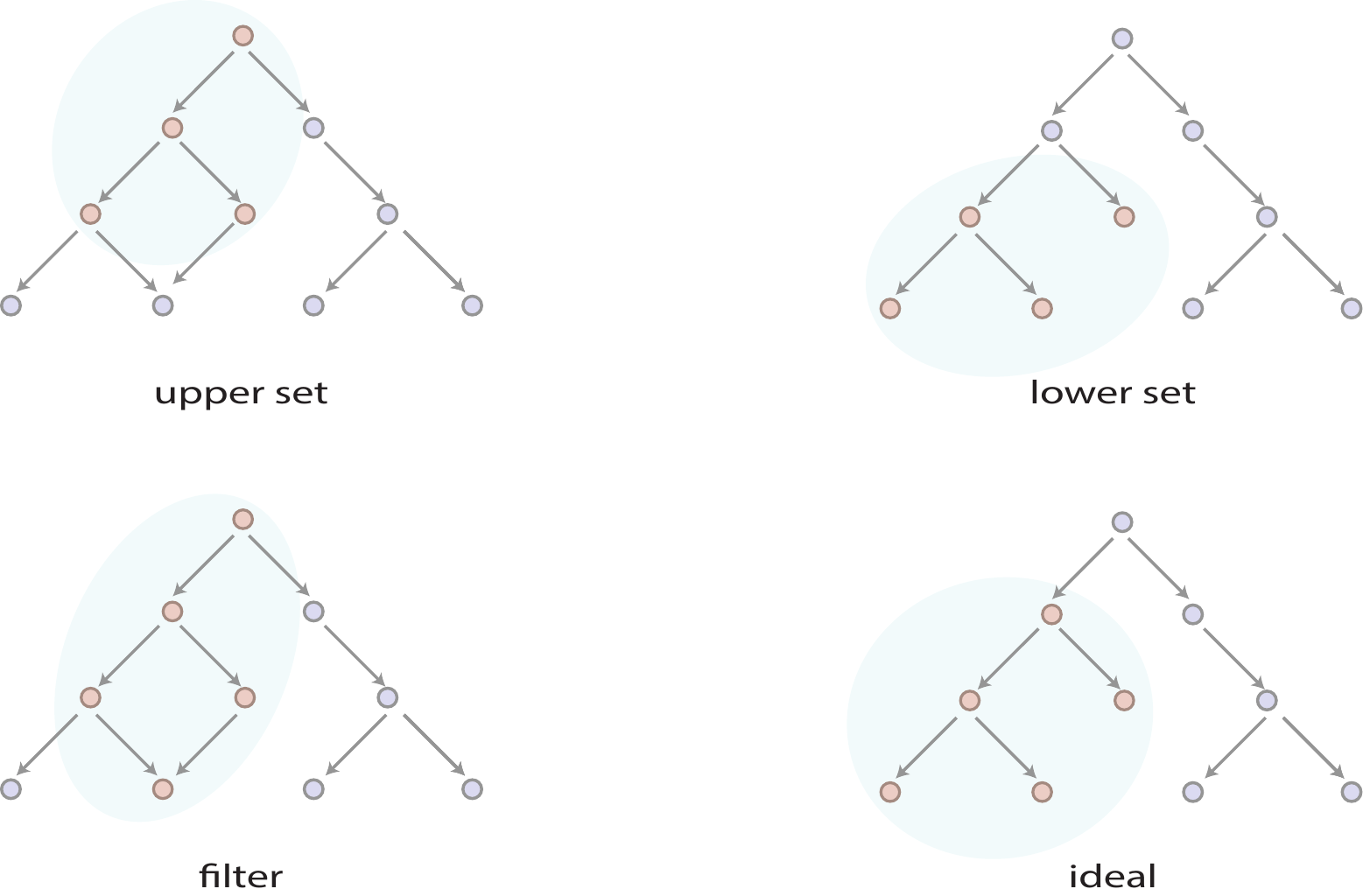}
\end{center}

\label{fig:ideal_filter}

\caption{
	{\bf Upper and lower sets, ideals and filters.}
	In each example, the subset $X$ of a poset is indicated by orange elements and blue background. 
	On the top left, $X$ is an upper set, given that $X = \uparrow\!X$. Note however that $X$ is not filtered, as not every subset of $X$ has a lower-bound in $X$. To obtain a filter (bottom right), we add an element to $X$. Now $X$ is a principal filter, because it has a maximum element (in this case, the top of the poset).
	Similarly, on the top right we have a lower set ($X=\downarrow\! X$) which is not directed, because not all of is subsets have an upper bound in $X$. This is fixed in the bottom right diagram, with the addition of an element. Note that the resulting ideal is not principal, because it lacks a minimum element.
}

\end{figure}

\begin{definition}[Labeling subsets of a poset]
Let $S$ be a poset and  $X \subseteq S$  a non-empty subset. We define:
\begin{itemize}
\item $\uparrow\! X = \bigcup_{x \in X} \{y\in S: y \geq x \}$;
\item $\downarrow\! X = \bigcup_{x \in X} \{y\in S: y \leq x \}$.
\end{itemize}
We say that $X$ is 
\begin{itemize}
\item an \emph{upper set} if $X = \uparrow\! X$;
\item a \emph{lower set} if $X = \downarrow\! X$;

\item \emph{directed} if every finite subset of $X$ has an upper bound in $X$;
\item \emph{filtered} if every finite subset of $X$ has a lower bound in $X$;

\item an \emph{ideal} if $X$ is a directed lower set;
\item a \emph{filter} if $X$ is a filtered upper set;

\item a \emph{principal} ideal if $X$ is an ideal and has a maximum element;
\item a \emph{principal} filter if $X$ is a filter and has a minimum element.
\end{itemize}

\end{definition}

\begin{remark}
Let $S$ be a poset and  $X \subseteq S$  a non-empty subset.

The set of upper bounds of $X$ is given by $\bigcap_{x\in X} \uparrow\! \{x\}$. If the join of $X$ exists, then this equals $\uparrow(\vee X)$.

Analogously, the set of lower bounds of $X$ is $\bigcap_{x\in X} \downarrow\! \{x\}$. If the meet of $X$ exists, this equals $\downarrow(\wedge X)$.
\end{remark}

\begin{definition}[Lattices and semilattices]
A poset $(S, \leq)$ is
\begin{itemize}
  \item a \emph{join-semilattice}  if  all finite non-empty subsets $X\subseteq S$ have a join in $S$;
  \item a \emph{meet-semilattice}  if  all finite non-empty subsets $X\subseteq S$ have a meet in $S$;
  \item a \emph{lattice} if is it both a join- and a meet-semilattice;
  \item a \emph{bounded lattice} if it is a lattice with both a top and a bottom;
  \item a \emph{complete} semilattice or lattice if the above applies to all non-empty subsets of $S$, not only to finite ones.
\end{itemize}

\end{definition}

\begin{remark}
  Lattices and semilattices define algebraic structures: 
\begin{itemize}

 \item A join-semilattice $(S, \leq)$ defines a commutative band $(S, \vee)$ with the binary join operation, $x \vee y := \vee \{x,y\}$.  The two definitions are related as 
 $$x \leq y \iff x \vee y = y. $$
 
  \item A meet-semilattice $(S, \leq)$ defines a commutative band $(S, \wedge)$ with the binary meet operation, $x \wedge y := \wedge \{x,y\}$. 
The two definitions are related as 
 $$x \leq y \iff x \wedge y = x. $$
  
  \item A lattice $(S, \leq)$ defines a structure $(S, \vee, \wedge)$ with two associative, commutative, idempotent operations, linked by the absorption law, $x \wedge (x \vee y ) = x \vee (x \wedge y) =x$. We call such structures\dots lattices. 
  
  \item A bounded lattice $(S, \leq)$ with top $1$ and bottom $0$ forms an algebraic lattice $(S, \vee, \wedge, 1, 0)$ where $1$ is the identity element for $\wedge$ (and absorbing for $\vee$), and $0$ is the identity element for $\vee$ (and absorbing for $\wedge$).
\end{itemize}
\end{remark}

\begin{example}
 The power set $2^\Omega$ of any set $\Omega$ is the set of all subsets of $\Omega$. If we order it by inclusion, $(2^\Omega, \subseteq)$, it forms a complete lattice, $(2^\Omega, \cup, \cap)$.  The join is given by set union, and the meet by set intersection. The top is $\Omega$ (which acts like an identity towards $\cap$ and like the absorbing element towards $\cup$) and the bottom is the empty set (vice-versa).
\end{example}

\subsection{Functions in order theory}

\begin{definition}[Functions between posets]
Let $(S,\leq)$, $(T,\leq)$ be two posets. A function $f: S \to T$ is said to be
\begin{itemize}
\item \emph{order-preserving} if $x \leq y \implies f(x) \leq f(y)$;  
\item \emph{order-reflecting} if $f(x) \leq f(y)\implies  x \leq y$;
\item \emph{order-embedding} if it is both order-preserving and order-reflecting, $x \leq y \Leftrightarrow f(x) \leq f(y)$;
\item an \emph{order isomorphism} if it is a bijective order-embedding function, in which case we say that the two posets are \emph{isomorphic}, and finally
\item if $S=T$, we say that $f$ is \emph{inflating} if $x \leq f(x)$ and \emph{deflating} if $f(x) \leq x$, for all $x$.
\end{itemize}
\end{definition}

\begin{remark} \label{lemma:injective_embeddings}
Order-reflecting functions, and in particular embeddings, are injective, given that $f(x)= f(y)  \implies x \leq y \wedge y \leq x \Leftrightarrow x=y$.
\end{remark}

\begin{definition}[Semilattice homomorphisms and quasi-homomorphisms]
Let $(S, \vee)$, $(T, \vee)$ be two join-semilattices. A \emph{homomorphism} between the two is a function $f: S\to T$ that preserves binary joins,
\begin{align*}
      f(x \vee y) = f(x) \vee f(y),
\end{align*}
for all $x, y \in S$.
The same goes for meet-semilattices, with $\wedge$ replacing $\vee$.
For complete semilattices, we demand that 
\begin{align*}
      f( \vee X) = \bigvee_{x \in X} f(x),
\end{align*}
for any subset $X \subseteq S$.
An \emph{semilattice endomorphism} is an semilattice homomorphism from a semilattice to itself. 
An \emph{semilattice isomorphism} is a surjective semilattice homomorphism.

A join-semilattice \emph{quasi-homomorphism} is a function $f: S\to T$ such that 
$$f(V\vee W)\geq f(V)\vee f(W).$$ 
A join-semilattice \emph{quasi-endomorphism} is  a quasi-homomorphism from a join-semilattice $S$ to  itself.
\end{definition}

\begin{remark}
 Homomorphisms of join-semilattices are in particular quasi-homomorphisms. 
\end{remark}

\begin{lemma}
Quasi-homomorphisms of join-semilattices are order-preserving functions. 
\end{lemma}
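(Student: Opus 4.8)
The plan is to exploit the algebraic characterization of the semilattice order recorded in the earlier remark, namely that in a join-semilattice $x \leq y$ holds if and only if $x \vee y = y$. This converts the hypothesis $x \leq y$ into an equation that can be fed directly into the defining inequality of a quasi-homomorphism, after which order-preservation falls out by transitivity.

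Concretely, suppose $f: S \to T$ is a join-semilattice quasi-homomorphism and fix $x, y \in S$ with $x \leq y$. First I would rewrite the hypothesis as $x \vee y = y$, so that $f(y) = f(x \vee y)$. Applying the quasi-homomorphism inequality $f(x \vee y) \geq f(x) \vee f(y)$ then gives $f(y) \geq f(x) \vee f(y)$. Finally, since the join $f(x) \vee f(y)$ is by definition an upper bound of its arguments, we have $f(x) \vee f(y) \geq f(x)$; chaining these two inequalities by transitivity of the order on $T$ yields $f(y) \geq f(x)$, that is $f(x) \leq f(y)$, which is exactly the order-preservation we want.

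There is essentially no obstacle here beyond careful bookkeeping, and the single point that requires attention is invoking the inequality in the correct direction. The quasi-homomorphism condition only guarantees $f(V \vee W) \geq f(V) \vee f(W)$ rather than equality, but this is precisely the direction needed, since we use $f(y)$ as the larger term and bound it below by $f(x)$. Note that the argument never appeals to antisymmetry of the target order, so it would hold verbatim even if $T$ were merely preordered. An alternative route closes the loop through antisymmetry instead, combining $f(y) \geq f(x) \vee f(y)$ with the always-valid $f(x) \vee f(y) \geq f(y)$ to conclude $f(x) \vee f(y) = f(y)$, which directly reads as $f(x) \leq f(y)$; but the transitivity route above is shorter and more robust.
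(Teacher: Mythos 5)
Your argument is correct and is essentially identical to the paper's proof, which chains $f(x) \preceq f(x)\vee f(y) \preceq f(x\vee y) = f(y)$ using the same rewriting of $x \leq y$ as $x \vee y = y$ and the same direction of the quasi-homomorphism inequality. The remarks about not needing antisymmetry and the alternative closure via $f(x)\vee f(y)=f(y)$ are fine but add nothing beyond the paper's one-line chain.
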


\begin{proof}
Let $(S, \leq )$, $(T, \preceq)$ be two join-semilattices, and $f: S\to T$ be a quasi-homomorphism. Let $x \leq y \in S$.
Recall that according to the definition of algebraic join-semilattices, $x \leq y \iff x \vee y = y$.
We have
\begin{align*}
f(x) \preceq f(x)\vee f(y) \preceq f(x\vee y) = f(y).
\end{align*}
\end{proof}

\begin{corollary}
Let $(S, \leq )$, $(T, \preceq)$ be two join-semilattices, and $f: S\to T$ be a quasi-homomorphism.  Let $x, y \in S$. If $x\leq y$, then $f(x) \vee f(y) = f(x \vee y)$.
\end{corollary}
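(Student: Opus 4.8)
The plan is to obtain the equality as an immediate consequence of the order-preserving property just established in the preceding Lemma, together with the standard algebraic characterisation of the join in a join-semilattice, namely $a \leq b \iff a \vee b = b$. The key observation is that, once $f$ is known to be order-preserving, the claimed identity collapses to the fact that both $f(x)\vee f(y)$ and $f(x\vee y)$ coincide with $f(y)$, so there is nothing left to compute beyond unfolding these equivalences on each side.

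Concretely, I would first work on the $S$ side. Since $(S,\leq)$ is a join-semilattice, the hypothesis $x \leq y$ is equivalent to $x \vee y = y$ by the algebraic characterisation recalled in the excerpt. Applying $f$ to both sides gives $f(x \vee y) = f(y)$, which already disposes of the right-hand side of the target equation. Next I would turn to the $T$ side: the Lemma tells us that every quasi-homomorphism is order-preserving, so from $x \leq y$ we get $f(x) \preceq f(y)$ in $(T,\preceq)$. Reading this back through the same order–join correspondence, now applied in $(T,\preceq)$, yields $f(x) \vee f(y) = f(y)$, which handles the left-hand side. Chaining the two identities gives $f(x) \vee f(y) = f(y) = f(x \vee y)$, as required.

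There is no genuine obstacle here, since the content of the quasi-homomorphism inequality has already been spent inside the proof of the Lemma; the only point requiring care is to apply the equivalence $a \leq b \iff a \vee b = b$ in the correct semilattice and with the correct order symbol ($\leq$ in $S$, $\preceq$ in $T$). As an alternative route that avoids invoking order-preservation explicitly, one can squeeze directly: the quasi-homomorphism inequality gives $f(x)\vee f(y) \preceq f(x\vee y)$, while $f(x\vee y) = f(y) \preceq f(x)\vee f(y)$ because a join is always an upper bound of its arguments, and the two inequalities together force equality.
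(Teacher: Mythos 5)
Your proposal is correct and follows essentially the same route as the paper's proof: invoke the preceding lemma to get $f(x)\preceq f(y)$ from $x\leq y$, translate both sides via the characterisation $a\leq b \iff a\vee b = b$ (in $S$ to get $f(x\vee y)=f(y)$, in $T$ to get $f(x)\vee f(y)=f(y)$), and chain the equalities. The alternative squeeze you mention at the end is a harmless variation, but the main argument matches the paper's one-line proof exactly.
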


\begin{proof}
Since $f$ is order-preserving, 
$$x \leq y \implies f(x) \preceq f(y) \iff f(x) \vee f(y) = f(y) = f(x \vee y).$$
\end{proof}

\begin{definition}[Lumping]
Let ($S, \leq$) be a join-semilattice. A map  $\Lump: S \to S$ is called a \emph{lumping} if it  is an idempotent inflating endomorphism. If it is an idempotent inflating quasi-endomorphism, we call it a \emph{quasi-lumping}.
\end{definition}

\subsection{Galois connections}
\label{sec:Galois}

\begin{definition}[Galois connection]
Let $(S,\leq)$, $(T,\leq)$ be two posets. A (antitone) \emph{Galois connection} between them is a set of two  functions, $g: S \to T$ and $h: T \to S$ such that
\begin{align*}
  y\leq g(x)  \ \Leftrightarrow \ h(y) \leq x,
\end{align*}
for all $x \in S, y \in T$.
The functions $g$ an $h$  are   \emph{adjoints}.
The composition $h \circ g: S\to S$ is called the associated \emph{closure operator}, while  $g \circ h: T\to T$ is the associated \emph{kernel operator}.

If $h \circ g= \id_S$, we say that the connection is a \emph{Galois insertion} of $S$ in $T$. In that case, we denote by $\Lump$ the kernel operator,  $\Lump :=g \circ h: T\to T$.
\end{definition}

\begin{remark}
 If $(g, h)$ is a Galois connection between $(S, \leq)$ and $(T, \leq)$, then 
 \begin{itemize}
 \item $h , g, h \circ g$ and $g \circ h$ are order-preserving;
 \item $h \circ g$ and $g \circ h$ are idempotent;
 \item $h \circ g(x) \leq x$, and $g \circ h(y) \geq y$, for all $x \in S$ and $y \in T$.
 \end{itemize}
\end{remark}

\begin{lemma}
In a Galois insertion, $h$ is surjective.
\end{lemma}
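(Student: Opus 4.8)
The plan is to unpack the single defining property of a Galois insertion and observe that it immediately exhibits a preimage for every element. Recall that the insertion condition states $h \circ g = \id_S$; that is, $h(g(x)) = x$ for all $x \in S$. Surjectivity of $h$ requires that every $x \in S$ lie in the image of $h$, so the natural candidate for a preimage of $x$ is its image $g(x) \in T$ under the adjoint map.

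Concretely, first I would fix an arbitrary $x \in S$. Next I would set $y := g(x)$, which is a well-defined element of $T$ since $g: S \to T$. Then applying $h$ and invoking the insertion identity gives $h(y) = h(g(x)) = (h \circ g)(x) = x$. Since $x$ was arbitrary, every element of $S$ is attained by $h$, so $h$ is surjective. There is no real obstacle here: the result is an immediate consequence of the definition, and the map $g$ furnishes an explicit right inverse (indeed, a section) of $h$. The only point worth stating carefully is that the insertion hypothesis is exactly what licenses the substitution $h(g(x)) = x$; without it, one would only have the closure inequality $h \circ g(x) \leq x$ from the general Galois-connection properties, which would not suffice to conclude surjectivity.
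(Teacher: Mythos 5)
Your proof is correct and matches the paper's own argument exactly: both take an arbitrary element of the codomain of $h$ and exhibit $g(x)$ as an explicit preimage via the insertion identity $h \circ g = \id$. The closing remark distinguishing the insertion hypothesis from the weaker deflating property of a general Galois connection is accurate but not needed for the result.
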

\begin{proof}
In a Galois insertion of $T$ in $S$, $h(g(x))=x$ for all $x\in T$. Hence for any $x\in T$, there exists an element $y$ in the domain $S$ of $h$ such that $h(y)=x$, namely $y=g(x)$.
\end{proof}

\begin{remark}[Composition of Galois connections]
If $(g_1, h_1)$ is a Galois connection between $(R, \leq)$ and $(S, \leq)$  and 
 $(g_2, h_2)$ is a Galois connection between $(S, \leq)$ and $(T, \leq)$, then the composition 
$(g_2 \circ g_1, h_1 \circ h_2)$ is a  Galois connection between $(R, \leq)$ and $(T, \leq)$. If they were both Galois insertions, then the composed connection is also a Galois insertion.
\end{remark}

\subsection{Quotient spaces}
\label{sec:quotient_spaces}

\begin{definition}[Quotient spaces]
Let $S$ be a set. An \emph{equivalence relation} $\sim$ is a binary relation that is reflexive, transitive and symmetric. 
Equivalence classes induced by an equivalence relation are the sets $[x] := \{y \in S: x \sim y \}$. 
The \emph{quotient space} of $S$ under the equivalence relation $\sim$ is defined as the set of all equivalence classes,
$$S/\!\sim := \{ [x], x \in S\} .$$
\end{definition}

\begin{definition}[Induced equivalent relations]
Let $(S, \leq)$ and $(T, \leq)$ be two  join-semilattices and $f: S \to T$ a semilattice order-homomorphism. 
We define the binary \emph{relation induced} by $f$ as
$$ x \sim y : f(x) = f(y). $$
\end{definition}

\begin{lemma}
\label{lemma:properties_induced_relation}
Let $(S, \leq)$ be a complete join-semilattice and $\Lump$ a quasi-lumping in $S$.  The relation $\sim$ induced by $\Lump$ has the following properties:
\begin{enumerate}

\item $\sim$ is an equivalence relation.

\item $[\Lump(x)]=[x]$.

\item If $x \sim y$, then $\Lump(x) \vee y = \Lump(x)$; more generally
$\bigvee_{y\in[x]} y = \Lump(x)$.

\end{enumerate}

If, in addition, $\Lump$ is an endomorphism (and therefore a lumping), then the following properties also hold:

\begin{enumerate}
\item $\sim$ is \emph{cumulative}: $x \sim y \implies x \sim  x \vee y$.

\item If $x \sim y$ and $w \sim z$, then $x \vee w \sim y \vee z$.
\end{enumerate}

\end{lemma}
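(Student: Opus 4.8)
The plan is to treat $\sim$ throughout as the \emph{kernel} of the map $\Lump$, i.e.\ the relation $x \sim y \iff \Lump(x) = \Lump(y)$, and to exploit the two defining features of a quasi-lumping separately: idempotence ($\Lump \circ \Lump = \Lump$) and the inflating property ($x \leq \Lump(x)$). The recurring translation I will use is the algebraic characterisation of the order in a join-semilattice, $a \leq b \iff a \vee b = b$. A useful structural observation to keep in mind is that the quasi-endomorphism inequality plays no role in the first three claims — they rest only on idempotence and inflation — and is upgraded to a genuine equality only for the last two.

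For the first claim I would simply note that any relation defined by equality of a function's outputs is automatically reflexive, symmetric and transitive, so $\sim$ is an equivalence relation with no appeal to any special property of $\Lump$. For the second claim, $[\Lump(x)] = [x]$, it suffices to show $\Lump(x) \sim x$, which is exactly $\Lump(\Lump(x)) = \Lump(x)$ and hence immediate from idempotence.

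The third claim is the one deserving genuine care. First I would establish the pointwise statement: if $x \sim y$, then inflation gives $y \leq \Lump(y) = \Lump(x)$, and rewriting $y \leq \Lump(x)$ as $\Lump(x) \vee y = \Lump(x)$ closes it. For the general join $\bigvee_{y \in [x]} y = \Lump(x)$ I would argue by antisymmetry: the pointwise bound shows $\Lump(x)$ is an upper bound of $[x]$, hence dominates the join; conversely $\Lump(x)$ itself lies in $[x]$ by the second claim, so the join dominates $\Lump(x)$. The one point requiring attention is \emph{existence} of this join, which is guaranteed precisely because $S$ is a \emph{complete} join-semilattice, so the arbitrary (possibly infinite) join over the class $[x]$ is well-defined.

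For the final two claims I would invoke the genuine endomorphism equality $\Lump(V \vee W) = \Lump(V) \vee \Lump(W)$, now available by hypothesis. Cumulativity follows from $\Lump(x \vee y) = \Lump(x) \vee \Lump(y) = \Lump(x) \vee \Lump(x) = \Lump(x)$, using $x \sim y$ and idempotence of the join. The last claim is the same computation applied twice: substituting $\Lump(x) = \Lump(y)$ and $\Lump(w) = \Lump(z)$ gives $\Lump(x \vee w) = \Lump(x) \vee \Lump(w) = \Lump(y) \vee \Lump(z) = \Lump(y \vee z)$. I do not anticipate a real obstacle anywhere; the only genuinely structural ingredient beyond routine bookkeeping is completeness, which enters solely to legitimise the infinite join in the third claim.
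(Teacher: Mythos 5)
Your proof is correct and follows essentially the same route as the paper's: the first two claims from the definition of the kernel relation and idempotence, the third from the inflating property plus $\Lump(x)\in[x]$, and the last two from the homomorphism equality. Your treatment is slightly more explicit about where completeness enters (existence of the join over $[x]$) and about the antisymmetry argument, but these are just elaborations of the same argument.
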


\begin{proof} 
For quasi-endomorphisms, we have:
\begin{enumerate}
\item $\sim$ is an equivalence relation (reflexive, symmetric and transitive), because $=$ is itself an equivalence relation.
\item Follows from idempotence of $\Lump$: $\Lump(\Lump(x)) = \Lump(x)$.

\item Follows from the facts that $\Lump(x)\in [x]$  and  $\Lump$ is inflating, $\Lump(x) = \Lump(y) \geq y$. 
\end{enumerate}

For endomorphisms, we have in addition:

\begin{enumerate}

\item $\sim$ is cumulative, because $\Lump$ is a homomorphism. Indeed, let $x \sim y \Leftrightarrow \Lump(x) = \Lump(y)$. Then $\Lump(x \vee y) = \Lump(x) \vee \Lump(y) = \Lump(x) \vee \Lump(x) = \Lump(x)$. In other words, $x \sim x \vee y$.

\item Since $\Lump$ is a homomorphism and $x\sim y$ as well as $w\sim z$, $\Lump(x\vee w)=\Lump(x)\vee \Lump(w)=\Lump(y)\vee \Lump(z)=\Lump(y\vee z)$.
\end{enumerate}

\end{proof}

As we will see, these induced equivalence classes are useful to create Galois insertions. The following lemma tell us that we can always build such classes starting from any set of homomorphisms.

\begin{lemma}[Lumping generated by homomorphisms] \label{lemma:lumping_generated_homomorphisms}

Let $(S, \leq)$ and $(T, \leq)$ be two join semilatices and $f: S \to T$ a semilattice homomorphism.  Then the map
\begin{align*}
\Lump_f: S &\to S \\
x &\mapsto \bigvee \{ y: f(x) = f(y) \}
\end{align*}
is a lumping. 
More generally, if  $A$ is a set of homomorphisms $\{ f_A: S \to T\}_f$, then
\begin{align*}
\Lump_A: S &\to S \\
x&\mapsto \bigvee_{f_A\in A} \Lump_{f_A} (x)
\end{align*}
is  a lumping. 
\end{lemma}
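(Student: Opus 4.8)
The plan is to read the defining join through the equivalence relation $\sim_f$ induced by $f$ (with $x \sim_f y$ iff $f(x) = f(y)$), so that $\Lump_f(x) = \bigvee [x]$ is the supremum of the class $[x]$ of $x$. Because this join ranges over a potentially infinite set, I work in the setting relevant to specification spaces, where $S$ is a \emph{complete} join-semilattice so that all such joins exist. Every property of a lumping will be deduced from one master identity,
\[ f(\Lump_f(x)) = f(x), \]
which says that $\Lump_f(x)$ lies in $[x]$ and is therefore the greatest element of its own class. To prove it I would use that $f$ is a complete join-homomorphism: $f(\bigvee_{y \in [x]} y) = \bigvee_{y \in [x]} f(y)$, and every term of the right-hand join equals $f(x)$, so it collapses to $f(x)$ by idempotence of $\vee$.

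From the master identity the structural properties follow directly. \emph{Inflating}: $x \in [x]$ gives $x \leq \bigvee [x] = \Lump_f(x)$. \emph{Idempotent}: since $\Lump_f(x) \sim_f x$ we have $[\Lump_f(x)] = [x]$, hence $\Lump_f(\Lump_f(x)) = \bigvee [\Lump_f(x)] = \bigvee[x] = \Lump_f(x)$. \emph{Order-preserving}: for $x \leq y$ (so $x \vee y = y$), applying $f$ gives $f(\Lump_f(x) \vee \Lump_f(y)) = f(x) \vee f(y) = f(y)$, so $\Lump_f(x) \vee \Lump_f(y) \in [y]$ and thus $\Lump_f(x) \leq \Lump_f(x) \vee \Lump_f(y) \leq \bigvee[y] = \Lump_f(y)$.

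The step I expect to be the main obstacle is the remaining half of the \emph{endomorphism} property, $\Lump_f(x \vee y) \leq \Lump_f(x) \vee \Lump_f(y)$. The opposite inequality is free, since $\sim_f$ is a join-congruence (from $f(u \vee v) = f(u) \vee f(v)$) and hence $\Lump_f(x) \vee \Lump_f(y) \in [x \vee y]$, giving $\Lump_f(x) \vee \Lump_f(y) \leq \Lump_f(x \vee y)$. For the reverse I would invoke Remark~\ref{lemma:homomorphisms}: when $f$ arises from a map sending states to states --- the case of the reduction maps used in the framework, such as the partial trace or the species-to-family map --- the induced $\Lump_f$ is itself element-wise, $\Lump_f(W) = \bigcup_{\omega \in W} \Lump_f(\{\omega\})$, and element-wise maps are join-homomorphisms, which closes the argument. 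I would flag that in full generality (when $f$ may send a state to a proper specification) only the quasi-endomorphism inequality survives, so that $\Lump_f$ is then a quasi-lumping; the genuine lumping claim is exactly the content carried by the state-to-state hypothesis.

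Finally, for the combined map $\Lump_A = \bigvee_{f_A \in A} \Lump_{f_A}$, the plan is to reduce to the single-homomorphism case. Inflating is immediate, as $\Lump_A$ dominates each inflating $\Lump_{f_A}$; when all the $f_A$ are element-wise, interchanging the two unions shows $\Lump_A$ is again element-wise and hence a join-homomorphism. The delicate point here is \emph{idempotence}: a pointwise supremum of idempotent operators need not be idempotent, so I would obtain $\Lump_A$ as the closure reached by iterating the pointwise join, or equivalently pass to the single join-congruence generated by the whole family $\{f_A\}$ and take class-suprema there. Verifying that this fixed point is reached and agrees with the stated formula is the part I would treat most carefully.
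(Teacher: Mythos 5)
Your core argument is sound and, where it overlaps with the paper's proof, follows essentially the same route: inflating is immediate from $x \in \{y : f(y)=f(x)\}$, and your ``master identity'' $f(\Lump_f(x)) = f(x)$ (which, as you note, needs $f$ to preserve the possibly infinite join over the class) gives idempotence of $\Lump_f$ more directly than the paper, which instead derives idempotence from the endomorphism property. The $\geq$ half of the endomorphism property via the join-congruence is also exactly the paper's (correct) step.

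The two points you flag as obstacles are not defects of your write-up --- they are precisely the places where the paper's own proof breaks. For a single $f$ the paper asserts $\bigvee\{y : f(y) = f(x) \vee f(z)\} = \Lump_f(x) \vee \Lump_f(z)$ without justification, and this equality is false for a general semilattice homomorphism: take $S = S^{\{a,b,c\}}$, $T = S^{\{1,2\}}$ and the element-wise $f$ with $\tilde f(a) = \{1\}$, $\tilde f(b) = \{2\}$, $\tilde f(c) = \{1,2\}$; then $\Lump_f(\{a\}) \cup \Lump_f(\{b\}) = \{a,b\}$ while $\Lump_f(\{a\}\cup\{b\}) = \{a,b,c\}$, so $\Lump_f$ is only a quasi-lumping. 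Your proposed repair --- requiring $\tilde f$ to send states to singletons, so that $\Lump_f$ is itself element-wise --- is exactly the hypothesis under which the missing inequality holds, and it covers the uses made of the lemma in the paper (partial trace, species-to-family). Your worry about $\Lump_A$ is likewise justified: the paper proves the endomorphism property of $\Lump_A$ but never its idempotence, and idempotence genuinely fails already for a two-element family of state-to-state homomorphisms (on $\Omega = \{1,2,3\}$ let $f$ be induced by a state map identifying $1$ with $2$ and $g$ by one identifying $2$ with $3$; then $\Lump_A(\{1\}) = \{1,2\}$ but $\Lump_A(\{1,2\}) = \{1,2,3\}$). Passing to the congruence generated by the whole family, or iterating the pointwise join to a fixed point, as you propose, does change the stated formula but is the only way to obtain an actual lumping. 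In short: your proof is as complete as the statement admits, and the residual gaps you identify are gaps in the lemma and in the paper's proof, not in your argument.
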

\begin{proof}

To see that $\Lump_A$ is inflating, note that $x \in \{ y: f_A(x) = f_A(y) \}$, so $x \leq \Lump_{f_A} \leq \Lump_{A}$. 
To see that $\Lump_{f_A}$ is an endomorphism, note that
\begin{align*}
\Lump_f(x \vee z) 
&= \bigvee \{ y: f(x \vee z) = f(y) \} \\
&= \bigvee \{ y: f(x) \vee f(z) = f(y) \} \\
&= \left( \bigvee \{ y: f(x) = f(y) \} \right) \vee \left( \bigvee \{ y: f(z) = f(y) \}  \right) \\
&= \Lump_f(x) \vee \Lump_f(z),
\end{align*}
and from this it follows that $\Lump_A$ is also an endomorphism, because
\begin{align*}
\Lump_A(x \vee z) &= \bigvee_{f_A\in A} \Lump_{f_A} (x \vee z) \\
&= \bigvee_{f_A\in A} \Lump_{f_A} (x) \vee \Lump_{f_A}  (z) \\
&= \left( \bigvee_{f_A\in A} \Lump_{f_A} (x)  \right) \vee \left( \bigvee_{f_A\in A} \Lump_{f_A} (z)  \right)  \\
&= \Lump_A(x) \vee \Lump_A (x).
\end{align*}
Finally, to show that $\Lump$ is idempotent, we use the fact that $\Lump_f$ is an endomorphism,
\begin{align*}
\Lump_f \circ \Lump_f(x)
&=   \Lump_f \left( \bigvee \{ y: f(x ) = f(y) \}\right) \\
&= \bigvee_{y: f(x ) = f(y) } \Lump_f(y)  \\
&= \bigvee_{y: f(x ) = f(y) } \bigvee \{ {z: f(y) = f(z) }  \} \\
&= \bigvee \{ {z: f(x) = f(z) }  \} 
= \Lump_f (x).
\end{align*}

\end{proof}

The following proposition tell us that every lumping induces a Galois insertion. We will use this idea to define intensive embeddings and probabilistic equivalence.  

\begin{proposition}[Galois insertion induced by a lumping]
\label{thm:induced_Galois}
Let $(S, \leq)$ be a complete join-semilattice, let $\Lump$ be a lumping in $S$, and $\sim$ the  equivalence relation induced by $\Lump$. 
Then there is a Galois insertion of $S/\!\sim$ in $S$ defined by the pair $(g, h)$, where
\begin{align*}
g: \quad S/\!\sim &\to S\\
[x] &\mapsto \bigvee_{y \in [x]} y,\\
h: \quad S &\to S/\!\sim \\
x &\mapsto [x].
\end{align*}
Also, $(S/\!\sim, \preceq)$ is a join-semilattice with the induced partial order
\begin{align*}
[y] \preceq [x] : \quad 
g([y]) \leq g([x]),
\end{align*} 
and the join operation
$$[x] \vee [y] := h ( g([x]) \vee g([y])).  $$
Furthermore, $\Lump = g \circ h$, both $g$ and $h$ are semilattice homomorphisms, and $g$ is an order embedding of $S/\!\sim$ in $S$. 
\end{proposition}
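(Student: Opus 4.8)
The plan is to reduce every claim to the single identity $g([x]) = \Lump(x)$, which follows at once from Lemma~\ref{lemma:properties_induced_relation}: its third item states that $\bigvee_{y \in [x]} y = \Lump(x)$, and the left-hand side is exactly $g([x])$. Once this is in hand, the composite $g \circ h$ sends $x \mapsto g([x]) = \Lump(x)$, so $\Lump = g \circ h$ is immediate, while $h \circ g([x]) = h(\Lump(x)) = [\Lump(x)] = [x]$ by the second item of the same lemma, giving $h \circ g = \id_{S/\!\sim}$. Everything else then rests on short computations that use only that $\Lump$ is inflating, idempotent, and a (complete) join-endomorphism.

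First I would check that $\preceq$ is a genuine partial order on $S/\!\sim$. Reflexivity and transitivity are inherited from $\leq$ through $g$; antisymmetry needs $g$ to be injective, which holds because $g([x]) = g([y])$ means $\Lump(x) = \Lump(y)$, i.e.\ $x \sim y$ and hence $[x] = [y]$. Injectivity together with the definition $[x] \preceq [y] :\Leftrightarrow g([x]) \leq g([y])$ shows directly that $g$ is an order embedding of $S/\!\sim$ into $S$. Next I would verify that the proposed binary operation is the join: using $g([x]) = \Lump(x)$ and the endomorphism property, $[x] \vee [y] = h(\Lump(x) \vee \Lump(y)) = h(\Lump(x \vee y)) = [x \vee y]$. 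This is an upper bound since $\Lump(x), \Lump(y) \leq \Lump(x) \vee \Lump(y) = \Lump(x \vee y)$, and it is least because any $[z]$ with $\Lump(x), \Lump(y) \leq \Lump(z)$ satisfies $\Lump(x \vee y) = \Lump(x) \vee \Lump(y) \leq \Lump(z)$, i.e.\ $[x \vee y] \preceq [z]$, by the least-upper-bound property of $\vee$ in $S$.

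To establish the Galois insertion I would unfold the defining biconditional of a Galois connection (Section~\ref{sec:Galois}) with $S/\!\sim$ playing the role of the inserted poset and $S$ the ambient one: for $a \in S$ and $[c] \in S/\!\sim$ it reads $a \leq g([c]) \Leftrightarrow h(a) \preceq [c]$, which via $g([c]) = \Lump(c)$ and the definition of $\preceq$ becomes $a \leq \Lump(c) \Leftrightarrow \Lump(a) \leq \Lump(c)$. The forward direction applies $\Lump$ and uses idempotence, $\Lump(a) \leq \Lump(\Lump(c)) = \Lump(c)$; the backward direction uses inflation, $a \leq \Lump(a) \leq \Lump(c)$. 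Combined with $h \circ g = \id_{S/\!\sim}$, this is precisely a Galois insertion. Finally $h$ and $g$ are homomorphisms: $h(x \vee y) = [x \vee y] = h(x) \vee h(y)$ by the join formula above, and $g([x] \vee [y]) = g([x \vee y]) = \Lump(x \vee y) = \Lump(x) \vee \Lump(y) = g([x]) \vee g([y])$; the identical computations, invoking that $\Lump$ preserves \emph{arbitrary} joins, upgrade these to the complete-semilattice statements.

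The computations are all routine; the one point that needs care is bookkeeping the direction of the adjunction — deciding which of $g, h$ is the lower adjoint and placing $a$ and $[c]$ on the correct sides of the biconditional — together with checking that the quotient join is genuinely \emph{least} rather than merely an upper bound. Both hinge on the interplay of inflation and idempotence captured by $a \leq \Lump(c) \Leftrightarrow \Lump(a) \leq \Lump(c)$, so I would prove that equivalence once and reuse it throughout.
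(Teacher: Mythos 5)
Your proposal is correct and follows essentially the same route as the paper's proof: the same maps $g$ and $h$, the same induced order and join, and the same chain of verifications, with the identity $g([x])=\Lump(x)$ (item 3 of Lemma~\ref{lemma:properties_induced_relation}) doing the work that $g\circ h=\Lump$ does in the paper. The only (harmless) differences are that you prove the adjunction biconditional directly from the equivalence $a\leq\Lump(c)\Leftrightarrow\Lump(a)\leq\Lump(c)$ via inflation and idempotence, where the paper routes one direction through the homomorphism property of $h$, and that you verify leastness of the quotient join directly rather than deducing it from the Galois insertion at the end.
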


\begin{proof}
Before we begin, observe that
\begin{align*}
g([x]) = g([y]) 
 \Leftrightarrow 
\bigvee_{x' \in [x]} x'
= \bigvee_{y' \in [y]} y' 
\Leftrightarrow f(x) = f(y)
\Leftrightarrow 
[x] = [y].
\end{align*}
Also note that by definition, any element of $S/\!\sim$ can be written as $[x]$ for some $x \in S$.

First we show that $h \circ g$ is the identity in $S/\!\sim$ and that $g \circ h = \Lump$. 
\begin{enumerate}
\item To show that $g \circ h =\Lump$, we use the fact that $\Lump$ is inflating,
$g \circ h(x) = g([x]) = \bigvee_{y \in [x]} y = \Lump(x)$  (Lemma~\ref{lemma:properties_induced_relation}).
\item We now have $h \circ g([x])  = h(\Lump(x)) = [x]$.
\end{enumerate}

Now we show that $(S/\!\sim, \preceq)$ is a poset:
\begin{enumerate}
\item $\preceq$ is reflexive by definition: 
$g([x])\leq g([x]) \Leftrightarrow [x] \preceq [x]$.
\item Transitivity of $\preceq$ follows from transitivity of $\leq$,
\begin{align*}
g([x]) \leq g([y])  
\quad  \text{and} \quad 
g([y]) \leq g([z])  
&\implies 
g([x]) \leq g([z])   
\quad \Leftrightarrow \\ 
\Leftrightarrow \quad
[x] \preceq [y]
\quad  \text{and} \quad 
[y] \preceq [z] 
&\implies 
[x] \preceq [z] .
\end{align*}
\item  $\preceq$ is antisymmetric:
\begin{align*}
[x] \preceq [y]
\quad  &\text{and} \quad 
[y] \preceq [x]  
\quad \Leftrightarrow \\ \Leftrightarrow \quad
g([x]) \leq g([y])
\quad  &\text{and} \quad 
g([y]) \leq g([x]) 
\quad \Leftrightarrow \\ \Leftrightarrow \quad
g([x]) = g([y])
\quad & \Leftrightarrow \quad
[x] = [y].
\end{align*}
\end{enumerate}

Note that $g$ is an order embedding by definition of the partial order in the reduced space. 
Now we show that $\preceq$ and $\vee$ are related in the usual way. Before we do it, note that $[x] \vee [y]= h ( g([x]) \vee g([y]))$ is well-defined, because $S$ is a join-semilattice.
\begin{enumerate}
\item First we show that $[x] \preceq [y] \implies [x] \vee [y] = [y]$. 
We have 
\begin{align*}
[x]\preceq [y]
&\iff g([x])\leq g([y]) \\
\flag{S \text{ join-semilattice}}
&\iff g([x])\vee g([y]) = g([y]) \\
&\implies h(g([x])\vee g([y]))=h(g([y]))\\
\flag{h \circ g \text{ identity, definition of } \vee} 
&\iff [x] \vee [y] = [y].
\end{align*}

\item Now we show that $ [x] \vee [y] = [y] \implies [x] \preceq [y]$. 
We have 
\begin{align*}
[x]\vee [y]=[y]
&\iff 
h(g([x])\vee g([y]))=[y] \\
&\implies 
g\circ h(g([x])\vee g([y]))=g([y]) \\
&\iff 
\Lump (g([x])\vee g([y]))=g([y]) \\
\flag{\Lump \text{ inflating}}
&\implies 
g([x])\vee g([y]) \leq g([y]) \\
&\implies 
g([x]) \leq g([y]) \\
&\iff 
[x] \preceq [y].
\end{align*}
\end{enumerate}

Now we show that $g$ and $h$ are homomorphisms. 
\begin{enumerate}
\item First we show that $h$ is a homomorphism.
We have 
\begin{align*}
h(x)\vee h(y)
&=h(g(h(x))\vee g(h(y))) \\
\flag{\Lump = g \circ h} &= h(\Lump(x) \vee \Lump(y)) \\
\flag{\Lump \text{ homomorphism}} &= h(\Lump(x \vee y)) \\
&= h\circ g \circ h (x \vee y)\\
\flag{h \circ g \text{ identity}} &= h(x \vee y).
\end{align*}

\item Now we prove that $g$ is a homomorphism. We have
\begin{align*}
g([x])\vee g([y])
&= \Lump(x) \vee \Lump(y)\\
\flag{\Lump \text{ homomorphism}} &= \Lump(x \vee y)\\
&= g \circ h(x \vee y) \\
\flag{h \text{ homomorphism}}
&= g([x] \vee [y]).
\end{align*}

\end{enumerate}

To prove that $(g, h)$ is a Galois insertion, it remains to show that it is a Galois connection:
\begin{enumerate}

\item First we show that $[y]\preceq [x] \implies  y\leq g([x])$.
We have that $ [y]\preceq [x] 
\iff g([y]) \leq g([x])$, so we just need to show that $y \leq g([y])$. This comes directly from the definition of join in the semilattice $S$.

\item Now we show $ y\leq g([x]) \implies [y]\preceq [x]$. We have
\begin{align*}
y\leq g([x])  
&\iff y\vee g([x])=g([x]) \\
&\implies h(y\vee g([x]))=h(g([x]))\\
\flag{h \text{ homomorphism}}
&\iff
h(y)\vee h(g([x]))=h(g([x]))\\
\flag{h \circ g \text{ identity}}
&\iff 
[y]\vee [x]=[x] \\
&\iff [y]\preceq [x].
\end{align*}

\end{enumerate}

Note also that, since the two posets are related by a Galois insertion, if $[x] \vee [y]$ were not a least upper bound for $[x]$ and $[y]$, then $g([x]) \vee g([y])$ would not be a least upper bound to $g([x])$ and $g([y])$, leading to a contradiction. 
\end{proof}

We obtain an analogous result if instead of a homomorphism we have a quasi-homomorphism. The proof is identical, and we include it here for completeness.

\begin{proposition}[Galois connection induced by a quasi-lumping]
\label{thm:quasi_reduced_induced_Galois}
Let $(S, \leq)$ be a complete join-semilattice, let $\Lump$ be a quasi-lumping in $S$, which induces an equivalence relation $\sim$ that is cumulative. Then there is a Galois insertion of $S/\!\sim$ in $S$ defined by the pair $(g, h)$, where
\begin{align*}
g: \quad S/\!\sim &\to S\\
[x] &\mapsto \bigvee_{y \in [x]} y,\\
h: \quad S &\to S/\!\sim \\
x &\mapsto [x].
\end{align*}
Also, $(S/\!\sim, \preceq)$ is a join-semilattice with the induced partial order
\begin{align*}
[y] \preceq [x] : \quad 
g([y]) \leq g([x]),
\end{align*} 
and the join operation
$$[x] \vee [y] := h ( g([x]) \vee g([y])).  $$
Furthermore, $\Lump = g \circ h$, $g$ is an order-embedding semilattice quasi-homomorphism, and $h$ is an (order-preserving) quasi-homomorphism.
\end{proposition}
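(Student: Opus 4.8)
The plan is to follow the proof of Proposition~\ref{thm:induced_Galois} essentially line by line, replacing the equalities that came from $\Lump$ being a homomorphism by the inequalities available for a quasi-homomorphism, and tracking how each conclusion weakens accordingly. The backbone of that earlier argument relies on only three facts that survive the passage to quasi-lumpings: first, $g([x]) = \bigvee_{y \in [x]} y = \Lump(x)$, which is exactly Lemma~\ref{lemma:properties_induced_relation}(3) and holds already for quasi-lumpings; second, that $\Lump$ is inflating and idempotent; and third, that $\Lump$, being a quasi-homomorphism, is order-preserving. From $g([x]) = \Lump(x)$ one recovers the key observation $g([x]) = g([y]) \iff [x] = [y]$, and then $g \circ h = \Lump$ and $h \circ g = \id_{S/\!\sim}$ follow verbatim. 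The verification that $(S/\!\sim, \preceq)$ is a poset, and that $[x] \preceq [y] \iff [x] \vee [y] = [y]$, also carries over unchanged, since those steps invoke only that $S$ is a complete join-semilattice together with $h \circ g = \id_{S/\!\sim}$, $g \circ h = \Lump$, and $\Lump$ inflating.

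The genuine departure occurs where Proposition~\ref{thm:induced_Galois} proved $g$ and $h$ to be homomorphisms using $\Lump(x \vee y) = \Lump(x) \vee \Lump(y)$; here I would instead use only $\Lump(x) \vee \Lump(y) \leq \Lump(x \vee y)$. For $h$, from $h(x) \vee h(y) = h(\Lump(x) \vee \Lump(y))$ and the order-preservation of $h$ one gets $h(x) \vee h(y) \preceq h(x \vee y)$, i.e.\ a quasi-homomorphism rather than a homomorphism. For $g$, the computation $g([x] \vee [y]) = \Lump(\Lump(x) \vee \Lump(y)) \geq \Lump(x) \vee \Lump(y) = g([x]) \vee g([y])$, using $g \circ h = \Lump$ and $\Lump$ inflating, again yields a quasi-homomorphism; that $g$ is an order embedding is immediate from the definition of $\preceq$.

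I expect the only real subtlety to be the backward implication of the Galois connection, $y \leq g([x]) \implies [y] \preceq [x]$, since the earlier proof pushed $h$ through the join $y \vee g([x])$ — precisely the step a mere quasi-homomorphism does not license. I would bypass $h$ entirely: as $\Lump$ is order-preserving and idempotent, $y \leq g([x]) = \Lump(x)$ gives $\Lump(y) \leq \Lump(\Lump(x)) = \Lump(x)$, that is $g([y]) \leq g([x])$, which is exactly $[y] \preceq [x]$. The forward implication is unchanged, via $y \leq \Lump(y) = g([y]) \leq g([x])$. This establishes the Galois insertion, after which the least-upper-bound character of $[x] \vee [y]$ follows from the insertion precisely as in the closing remark of Proposition~\ref{thm:induced_Galois}.

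Throughout, the cumulative hypothesis on $\sim$ plays the role supplied automatically by Lemma~\ref{lemma:properties_induced_relation}(4)--(5) in the genuine endomorphism case: it guarantees that the induced operation $[x] \vee [y] := h(g([x]) \vee g([y]))$ is compatible with the equivalence classes, so that $(S/\!\sim, \vee)$ really is a join-semilattice, and I would invoke it exactly at the point where join-compatibility with the relation is needed. The main obstacle is therefore bookkeeping rather than a new idea: ensuring that at each spot where the original proof silently used an equality, the weaker inequality still points in the direction the conclusion demands, and that no step secretly relied on $h$ preserving joins.
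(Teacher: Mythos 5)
Your proposal is correct and mirrors the paper's own proof, which likewise replays the lumping argument of Proposition~\ref{thm:induced_Galois} while downgrading each homomorphism equality to the corresponding quasi-homomorphism inequality (using inflation, idempotence and order-preservation of $\Lump$ exactly where you do). The one step where you diverge---the implication $y \leq g([x]) \implies [y] \preceq [x]$, which you flag as the step a quasi-homomorphism ``does not license''---is in fact handled in the paper by pushing $h$ through the join anyway: the resulting inequality $h(y) \vee h(g([x])) \preceq h(y \vee g([x])) = [x]$ gives $[y] \vee [x] \preceq [x]$ and hence equality by antisymmetry; your direct argument via monotonicity and idempotence of $\Lump$ is equally valid and slightly shorter.
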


\begin{proof}
Before we begin, observe that
\begin{align*}
g([x]) = g([y]) 
 \Leftrightarrow 
\bigvee_{x' \in [x]} x'
= \bigvee_{y' \in [y]} y' 
\Leftrightarrow \Lump(x) = \Lump(y)
\Leftrightarrow 
[x] = [y]
\end{align*}
because we assumed that $\sim$ is cumulative.
Also note that by definition, any element of $S/\!\sim$ can be written as $[x]$ for some $x \in S$.

First we show that $h \circ g$ is the identity in $S/\!\sim$ and that $g \circ h = \Lump$. 
\begin{enumerate}
\item To show that $g \circ h =\Lump$, we use the fact that $\Lump$ is inflating,
$g \circ h(x) = g([x]) = \bigvee_{y \in [x]} y = \Lump(x)$  (Lemma~\ref{lemma:properties_induced_relation}).
\item We now have $h \circ g([x])  = h(\Lump(x)) = [x]$.
\end{enumerate}

Now we show that $(S/\!\sim, \preceq)$ is a poset:
\begin{enumerate}
\item $\preceq$ is reflexive by definition: 
$g([x])\leq g([x]) \Leftrightarrow [x] \preceq [x]$.
\item Transitivity of $\preceq$ follows from transitivity of $\leq$,
\begin{align*}
g([x]) \leq g([y])  
\quad  \text{and} \quad 
g([y]) \leq g([z])  
&\implies 
g([x]) \leq g([z])   
\quad \Leftrightarrow \\ 
\Leftrightarrow \quad
[x] \preceq [y]
\quad  \text{and} \quad 
[y] \preceq [z] 
&\implies 
[x] \preceq [z] .
\end{align*}
\item  $\preceq$ is antisymmetric:
\begin{align*}
[x] \preceq [y]
\quad  &\text{and} \quad 
[y] \preceq [x]  
\quad \Leftrightarrow \\ \Leftrightarrow \quad
g([x]) \leq g([y])
\quad  &\text{and} \quad 
g([y]) \leq g([x]) 
\quad \Leftrightarrow \\ \Leftrightarrow \quad
g([x]) = g([y])
\quad & \Leftrightarrow \quad
[x] = [y].
\end{align*}
\end{enumerate}

Note that $g$ is an order embedding by definition of the partial order in the reduced space. 
Now we show that $\preceq$ and $\vee$ are related in the usual way. Before we do it, note that $[x] \vee [y]= h ( g([x]) \vee g([y]))$ is well-defined, because $S$ is a join-semilattice.
\begin{enumerate}
\item First we show that $[x] \preceq [y] \implies [x] \vee [y] = [y]$. 
We have 
\begin{align*}
[x]\preceq [y]
&\iff g([x])\leq g([y]) \\
\flag{S \text{ join-semilattice}}
&\iff g([x])\vee g([y]) = g([y]) \\
&\implies h(g([x])\vee g([y]))=h(g([y]))\\
\flag{h \circ g \text{ identity, definition of } \vee} 
&\iff [x] \vee [y] = [y].
\end{align*}

\item Now we show that $ [x] \vee [y] = [y] \implies [x] \preceq [y]$. 
We have 
\begin{align*}
[x]\vee [y]=[y]
&\iff 
h(g([x])\vee g([y]))=[y] \\
&\implies 
g\circ h(g([x])\vee g([y]))=g([y]) \\
&\iff 
\Lump (g([x])\vee g([y]))=g([y]) \\
\flag{\Lump \text{ inflating}}
&\implies 
g([x])\vee g([y]) \leq g([y]) \\
&\implies 
g([x]) \leq g([y]) \\
&\iff 
[x] \preceq [y].
\end{align*}
\end{enumerate}

Now we show that $h$ is an order-preserving quasi-homomorphism and that $g$ is a quasi-homomorphism (which as we noticed before is actually order-embedding). 
\begin{enumerate}
\item First we show that $h$ is order-preserving. To see that this has to be the case, note that $\Lump$ is order-preserving because it is a semilattice quasi-homomorphism. Since $g$ is an order-embedding, now also $h$ has to be order-preserving since $\Lump=g\circ h$.
\item Then we show that since $h$ is order-preserving, $h$ is a quasi-homomorphism. We have 
\begin{align*}
h(x)\vee h(y)
&=h(g(h(x))\vee g(h(y))) \\
\flag{\Lump = g \circ h} &= h(\Lump(x) \vee \Lump(y)) \\
\flag{\Lump \text{ quasi-homomorphism}, h \text{ order-preserving}} &\preceq h(\Lump(x \vee y)) \\
&= h\circ g \circ h (x \vee y)\\
\flag{h \circ g \text{ identity}} &= h(x \vee y).
\end{align*}

\item Now we prove that $g$ is a quasi-homomorphism. We have
\begin{align*}
g([x])\vee g([y])
&= \Lump(g([x])) \vee \Lump(g([y]))\\
\flag{\Lump \text{ quasi-homomorphism}} &\leq \Lump(g([x]) \vee g([y]))\\
&= g (h(g([x])\vee g([y]))) \\
&= g([x] \vee [y]).
\end{align*}

\end{enumerate}

To prove that $(g, h)$ is a Galois insertion, it remains to show that it is a Galois connection:
\begin{enumerate}

\item First we show that $[y]\preceq [x] \implies  y\leq g([x])$.
We have that $ [y]\preceq [x] 
\iff g([y]) \leq g([x])$, so we just need to show that $y \leq g([y])$. This comes directly from the definition of join in the semilattice $S$.

\item Now we show $ y\leq g([x]) \implies [y]\preceq [x]$. We have
\begin{align*}
y\leq g([x])  
&\iff y\vee g([x])=g([x]) \\
&\implies h(y\vee g([x]))=h(g([x]))\\
\flag{h \text{ quasi-homomorphism}}
&\implies
h(y)\vee h(g([x]))\leq h(g([x]))\\
\flag{h \circ g \text{ identity}}
&\iff 
[y]\vee [x]\preceq [x] \\
\flag{=\text{ follows as $\succeq$ trivial}}&\iff [y]\preceq [x].
\end{align*}

\end{enumerate}

Note also that, since the two posets are related by a Galois insertion, if $[x] \vee [y]$ were not a least upper bound for $[x]$ and $[y]$, then $g([x]) \vee g([y])$ would not be a least upper bound to $g([x])$ and $g([y])$, leading to a contradiction. 
\end{proof}

\subsection{Commutant and bicommutant}

The notions in this section  will be used to find  modularity of transformations in terms of commutativity relations in Appendix \ref{appendix:subsystems}. The semigroup considered will then be the monoid of transformations.

\begin{definition}[Commutant and bicommutant]
Let $(S, \cdot)$ be a semigroup. 
We say that $a$ and $b \in S$ \emph{commute} if $a \cdot b = b \cdot a$.
We say that two subsets $A$ and $B$ of $S$ \emph{commute} if every two elements $a \in A, b \in B$ commute.

The \emph{commutant} of a subset $A \subseteq S$ is the set of all elements that commute with $A$; we denote it $\com A$. The \emph{bicommutant}  or \emph{completion} of $A$, denoted $\bic A$, is the commutant of the commutant of $A$. 
If $\bic{A} = A$, we say that $A$ is \emph{complete}.
\end{definition}

\begin{lemma} \label{lemma:commutant_properties}
Let $(S, \cdot)$ be a semigroup. Then the following properties hold for all subsets $A, B \subseteq S$:
\begin{enumerate}
\item  $A \subseteq \bic{A}$;
\item if $A\subseteq B$, then $\com B \subseteq \com A$;
\item if $A=\com A$, then $A$ is complete;
\item $\com A$ is complete; as a corollary, $\bic A$ is complete; \label{lemma:complete_commutant}
\item $\com A \cap \com B = \com{A \cup B}$; 
\item  $ \com A \cup \com B \subseteq \com{A \cap B} $;
\item if $A$ and $B$ are complete, then $A \cap B$ is complete.
\end{enumerate}

\end{lemma}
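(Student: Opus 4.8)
The plan is to establish the two inclusions $A \cap B \subseteq \bic{A \cap B}$ and $\bic{A \cap B} \subseteq A \cap B$ separately, and then conclude equality. The first inclusion is immediate from property~1 applied to the set $A \cap B$, so it requires no work; the entire content of the statement lies in the reverse inclusion $\bic{A \cap B} \subseteq A \cap B$.

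The key idea for the reverse inclusion is to chain the order-reversing behaviour of the commutant (property~2) together with the two ``de Morgan''-type relations of properties~5 and~6, invoking the completeness of $A$ and $B$ only at the very end. Concretely, I would start from property~6, which gives $\com A \cup \com B \subseteq \com{A \cap B}$. Applying property~2 — taking commutants reverses inclusions — then yields
\[
\bic{A \cap B} \subseteq \com{\com A \cup \com B}.
\]
Next I would apply property~5 to the subsets $\com A$ and $\com B$ (in place of $A$ and $B$), which gives $\com{\com A \cup \com B} = \com{\com A} \cap \com{\com B} = \bic A \cap \bic B$. Combining the two steps produces
\[
\bic{A \cap B} \subseteq \bic A \cap \bic B.
\]

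Finally I would use the hypothesis: since $A$ and $B$ are complete, $\bic A = A$ and $\bic B = B$, so $\bic A \cap \bic B = A \cap B$. This gives $\bic{A \cap B} \subseteq A \cap B$, and together with the trivial inclusion from property~1 we obtain $A \cap B = \bic{A \cap B}$, i.e.\ $A \cap B$ is complete.

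I do not anticipate a genuine obstacle, as each step is a direct application of an already-proved property; the only point requiring care is keeping track of the direction of the inclusions when passing to commutants, which is exactly the role played by property~2. It is worth noting that the argument never uses the semigroup structure beyond what properties~1,~2,~5 and~6 already encode, so the result is really a purely order-theoretic consequence of those four facts.
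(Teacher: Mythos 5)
Your argument for item 7 is correct, and it takes a mildly different route from the paper's. The paper identifies $A \cap B$ directly as a commutant: by item 5 applied to the sets $\com A$ and $\com B$, one has $A \cap B = \bic A \cap \bic B = \com{\com A \cup \com B}$, which is complete by item 4. You instead run a two-inclusion argument, bounding $\bic{A\cap B}$ from above via items 6, 2 and 5 and from below via item 1. Both work; the paper's version is shorter (and sidesteps item 1 entirely), while yours makes explicit that the result is a purely order-theoretic consequence of the Galois-connection-type facts 1, 2, 5, 6 --- your closing observation to that effect is accurate. Incidentally, your route also avoids a small slip in the paper's own line, which writes $\com{\com A \cap \com B}$ where $\com{\com A \cup \com B}$ is meant (harmless there, since either is a commutant and hence complete by item 4).

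The genuine problem is coverage, not logic: the statement is the full seven-item lemma, and your proposal proves only item 7 while invoking items 1, 2, 5 and 6 as ``already-proved properties.'' In the paper those properties are proved as part of this very lemma, so a complete proof must supply them (each is a one-liner: item 1 because every $a \in A$ commutes with everything in $\com A$; item 2 because an element commuting with all of $B$ commutes in particular with all of $A$; item 5 by unpacking the definition of the commutant of a union; item 6 from item 2 applied to $A \cap B \subseteq A$ and $A\cap B \subseteq B$), and items 3 and 4 --- which your argument never touches --- likewise need their short derivations. As written, six of the seven claims are left unestablished.
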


\begin{proof}
\begin{enumerate}
\item  Every $a \in A$ commutes with $ \com A$, so $a \in \bic A$.
\item  Every element  $b \in \com B$ commutes with $B$, and in particular with $A$, therefore $b \in \com A$.
\item $\com A = A \implies \bic A = \com A = A$;
\item First we use  property 1, with  $A= \com C$. We obtain  $\com C \subseteq \bic{\com C}$. Then we use  property 2, taking  $A=C$ and $B= \bic C$. We obtain $\com{\bic C} \subseteq \com C$.
\item 
$\com A \cap \com B 
= \{y: ay =ya, \ \forall a \in A\} \cap \{y: by=yb, \ \forall b \in B\} 
=  \{y: ay =ya, \ \forall a \in A \cup B \}
=\com{A \cup B}$ ;
\item Using the second property,
$A, B \supseteq A \cap B  \implies \com A, \com B \subseteq \com{A \cap B} \implies \com A \cup \com B \subseteq \com{A \cap B} $.
\item Using property 5, we have $A \cap B = \bic A \cap \bic B = \com{\com A \cap \com B}$, which is complete by property 4.
\end{enumerate}

\end{proof}

\newpage
\section{Resource theories}
\label{appendix:resource_theories}

\subsection{General definitions and remarks}

\begin{definition}[Equality of maps]
Let $f$ and $g$ be two maps defined in a specification space $S^\Omega$.  We say that $f=g$ if $f(W) = g(W)$, for all $W \in S^\Omega$.  
\end{definition}

\begin{remark}
If both $f$ and $g$ are order homomorphisms,  it  suffices to demand 
$f(\omega) = g(\omega)$, for all $\omega \in \Omega$.
\end{remark}

\subsection{Proofs of claims from the main text}

Given a set of allowed transformations, we can analyse the structure that they induce in our specification space. We show that $\to$ is a pre-order: $V \to V$ and, if $V \to W$ and $W\to Z$, then $V \to Z$.

\begin{proposition}[Pre-order in resource theories]
Let $(S^\Omega,\cT) $ be a resource theory.
 The relation $V \to W$ is a pre-order in $S^\Omega$.
\end{proposition}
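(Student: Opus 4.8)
The plan is to verify the two defining properties of a pre-order directly from Definition~\ref{def_construction}: reflexivity ($V \to V$) and transitivity ($V \to W$ and $W \to Z$ imply $V \to Z$). The only structural facts I need are that $\cT$ is a monoid under composition (so it contains an identity and is closed under $\circ$) and that every element of $\cT$ is a join-semilattice endomorphism, hence order-preserving.

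First I would establish reflexivity. Since $\cT$ is a monoid equipped with composition, it contains the identity endomorphism $\id$. For any specification $V \in S^\Omega$ we have $\id(V) = V \subseteq V$, so $V \stackrel{\id}{\longrightarrow} V$ by definition of reaching, and therefore $V \to V$. This step is immediate and carries no difficulty.

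Next I would prove transitivity. Assume $V \to W$ and $W \to Z$. By definition there exist transformations $f, g \in \cT$ such that $f(V) \subseteq W$ and $g(W) \subseteq Z$. The candidate witnessing $V \to Z$ is the composite $g \circ f$, which lies in $\cT$ because $\cT$ is closed under composition. It then remains to check that $g \circ f(V) \subseteq Z$. Here I would invoke the fact that $g$, being a semilattice homomorphism (equivalently an element-wise function, cf.\ Remark~\ref{lemma:homomorphisms}), is order-preserving, as shown for quasi-homomorphisms earlier. Applying $g$ to the inclusion $f(V) \subseteq W$ gives $g(f(V)) \subseteq g(W)$, and combining with $g(W) \subseteq Z$ yields $(g \circ f)(V) \subseteq Z$. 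Hence $V \stackrel{g \circ f}{\longrightarrow} Z$, so $V \to Z$.

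The only point requiring care—and the step I expect to be the mild obstacle—is the use of order-preservation of $g$ in the chain $f(V) \subseteq W \Rightarrow g(f(V)) \subseteq g(W)$; this is what makes composition of transformations interact correctly with the $\subseteq$-relaxation built into the definition of $\to$. It is not a deep obstacle, since order-preservation follows directly from the homomorphism property of transformations, but it is the one place where the assumption that $\cT$ consists of semilattice homomorphisms (rather than arbitrary set maps) is genuinely used. Reflexivity and closure of $\cT$ under composition are otherwise purely formal.
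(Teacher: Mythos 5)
Your proof is correct and follows essentially the same route as the paper's: reflexivity via the identity of the monoid, and transitivity via closure under composition together with the fact that applying $g$ to $f(V)\subseteq W$ yields $g(f(V))\subseteq g(W)$. The only cosmetic difference is that the paper establishes this last inclusion by explicitly expanding $g$ as an element-wise union over $f(V)\subseteq W$, whereas you cite order-preservation of semilattice homomorphisms --- the same fact in packaged form.
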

 
\begin{proof}
We have to show that $\to$ is reflexive and transitive. These properties hold because $\cT$ is a monoid. Firstly, $\cT$ has an identity element $1 \in \cT: \ 1(W) =W, \ \forall \, W\in S^\Omega$, and therefore $W \to W$.
Secondly, $\cT$ is closed under composition of transformations (for any $f, g \in \cT, f\circ g \in \cT$). We use this to prove transitivity, that is,  $V\to W $ and $W \to X$ implies $V\to X$. Let $f$ be the function that achieves $V\to W: f(V) \subseteq W$ and $g$ the function that achieves $W\to X: g(W) \subseteq X$. Then
\begin{align*}
 g\circ f(V) 
 &= \bigcup_{\omega \in f(V)}  g (\{\omega\}) 
 \subseteq \bigcup_{\omega \in W}  g (\{\omega\})
 = g(W) \subseteq X.
\end{align*} 
\end{proof}

\subsection{Additional Results}

\subsubsection{Properties of the pre-order}

\begin{lemma}[Knowing more cannot hurt]
Let $(S^\Omega,\cT )$ be a resource theory in a power set $S^\Omega$. Let $V, W \in S^\Omega$ be two compatible specifications, and let $Z\in S^\Omega$. If $V \to Z$ then $V \cap W \to Z$. 
\end{lemma}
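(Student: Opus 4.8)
The plan is to use the very same transformation that witnesses $V \to Z$, relying only on the fact that the transformations in $\cT$ are monotone. First I would observe that the hypothesis plays a purely bookkeeping role: since $V$ and $W$ are compatible, $V \cap W \neq \emptyset$, so $V \cap W$ is a genuine element of $S^\Omega$ and the statement is well-posed. By Definition~\ref{def_construction}, the assumption $V \to Z$ means there exists a transformation $f \in \cT$ with $f(V) \subseteq Z$. My goal is to show that this same $f$ already reaches $Z$ from the smaller specification $V \cap W$.

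The key step is to establish that $f$ is order-preserving with respect to inclusion. This follows because $f$ is an endomorphism of the specification space, hence a join-semilattice homomorphism (Definition~\ref{def:resource_theory}), and by Remark~\ref{lemma:homomorphisms} such a homomorphism acts element-wise: there is a map $\tilde f : \Omega \to S^\Omega$ such that $f(X) = \bigcup_{\omega \in X} \tilde f(\omega)$ for every $X \in S^\Omega$. From this representation, monotonicity is immediate: whenever $X \subseteq Y$ we have $f(X) = \bigcup_{\omega \in X} \tilde f(\omega) \subseteq \bigcup_{\omega \in Y} \tilde f(\omega) = f(Y)$.

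Applying monotonicity to the inclusion $V \cap W \subseteq V$ then yields $f(V \cap W) \subseteq f(V) \subseteq Z$. Thus $f$ witnesses $V \cap W \stackrel{f}{\longrightarrow} Z$, and by Definition~\ref{def_construction} we conclude $V \cap W \to Z$, as required.

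There is no substantial obstacle in this argument; the entire content reduces to the fact that a homomorphism of specification spaces is monotone together with the trivial inclusion $V \cap W \subseteq V$. Conceptually, intersecting $V$ with the additional knowledge $W$ can only shrink the set of admissible states, and a smaller input specification is never harder to map into $Z$ than $V$ itself was. The compatibility hypothesis contributes nothing beyond guaranteeing that $V \cap W$ is nonempty and hence a legitimate specification, which is precisely why the empty set is excluded from $S^\Omega$ in the first place.
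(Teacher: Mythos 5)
Your proof is correct and follows essentially the same route as the paper's: both use the element-wise representation $f(X)=\bigcup_{\omega\in X}\tilde f(\omega)$ of the witnessing transformation to get $f(V\cap W)\subseteq f(V)\subseteq Z$. The only cosmetic difference is that you isolate monotonicity as an explicit intermediate step, whereas the paper writes the chain of inclusions directly.
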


\begin{proof}
Let $f$ be the transformation that achieves $f(V) \subseteq Z$. Then
\begin{align*}
  f(V \cap W) 
  &= \bigcup_{\omega \in V \cap W} f(\{\omega \}) 
  \subseteq \bigcup_{\omega \in V} f(\{\omega \}) 
  = f(V)
  \subseteq Z.
\end{align*}
\end{proof}

The following lemma simply tells us that in order to reach a specification $W$ from $V$ we need to apply the same transformation to all elements of $W$.

\begin{lemma}
Let $(S^\Omega,\cT )$ be a resource theory. Then, for any $V, W \in S^\Omega$, 
\begin{align*}
  V \to W 
  \quad \Leftrightarrow \quad
  \exists\, f \in \cT \
  \forall \, \omega \in V:
  \ f(\{\omega\}) \subseteq W.
\end{align*}
\end{lemma}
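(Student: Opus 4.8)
The plan is to reduce both directions of the biconditional to the element-wise characterization of transformations, which does essentially all the work. By Definition~\ref{def:resource_theory} every $f\in\cT$ is an endomorphism of $S^\Omega$, hence a join-semilattice homomorphism; by Remark~\ref{lemma:homomorphisms} this is equivalent to $f$ acting element-wise, so that for any specification $V$ one has the identity
$$ f(V)=\bigcup_{\omega\in V} f(\{\omega\}). $$
This is the only structural input I need.

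Next I would unfold the definition of the pre-order. By Definition~\ref{def_construction}, $V \to W$ means precisely that there exists some $f\in\cT$ with $f(V)\subseteq W$. Substituting the element-wise expression above, this reads $\bigcup_{\omega\in V} f(\{\omega\})\subseteq W$. At this point I invoke the elementary set-theoretic fact that a union of sets is contained in a fixed set $W$ if and only if each set in the union is contained in $W$, namely
$$ \bigcup_{\omega\in V} f(\{\omega\})\subseteq W \quad\iff\quad \forall\,\omega\in V:\ f(\{\omega\})\subseteq W. $$
Chaining these two equivalences yields the statement, and crucially the \emph{same} transformation $f$ witnesses both sides: the $f$ achieving $f(V)\subseteq W$ is exactly the one sending each singleton $\{\omega\}$ into $W$, and conversely. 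Neither direction therefore requires constructing a new map.

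I do not expect a genuine obstacle here; the content of the lemma is simply that reaching a specification $W$ from $V$ is a pointwise condition on the states of $V$, a fact forced by the requirement that transformations act individually on each element of a specification and cannot mix or correlate them. The only point to handle carefully is that the existential quantifier over $f$ is shared across the equivalence, so that what is proved is a witness-preserving biconditional rather than a mere equivalence of solvability; making this explicit is what turns the chain of equivalences into a clean proof.
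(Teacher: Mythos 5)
Your proof is correct and follows essentially the same route as the paper's: both rest on the element-wise identity $f(V)=\bigcup_{\omega\in V}f(\{\omega\})$ from Remark~\ref{lemma:homomorphisms} together with the observation that a union is contained in $W$ iff each term is, with the same witness $f$ serving both directions. The paper merely writes the two directions separately instead of as a single chain of equivalences.
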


\begin{proof}
  We use the fact that all the endomorphisms in $\cT$ are element-wise functions (Lemma~\ref{lemma:homomorphisms}).
  To prove $\implies$ we take the transformation $f \in \cT$ that achieves $V\to W$ on the left-hand side. We have
  \begin{align*}
   W &\supseteq f(V) \\
     &= \bigcup_{\omega \in V} f(\{\omega \})\\
     &\supseteq f(\{\omega \}), \ \forall \, \omega \in V.
  \end{align*}
  
  To prove that $\impliedby$ holds,  we take the function $f$ that achieves the right-hand side. We have
  \begin{align*}
    f(V) 
    &= \bigcup_{\omega \in V} \underbrace{f(\{\omega \})}_{\subseteq W} \\
    &\subseteq \bigcup_{\omega \in V} W \\
    &=W ,
  \end{align*}
  which means that $V \to W$.
\end{proof}

\subsubsection{Partial order and quotient space}
Now we can define sets of equivalent resources: we say that two resources $V, W \in S^\Omega$ are equivalent if they are inter-convertible, that is, both $V \to W$ and $W \to V$. 

\begin{definition}[Quotient space]
Let $(S^\Omega, \cT)$ be a resource theory. We define the equivalence relation of mutual convertibility as 
\begin{align*}
V \sim_{\cT} W
\quad \Leftrightarrow \quad
V\to W \text{ and } W \to V,
\end{align*}
the corresponding equivalence classes
\begin{align*}
[W]_{\cT}:= \{V \in S^\Omega:\ V\sim_{\cT} W\},
\end{align*}
and the quotient space
\begin{align*}
S^\Omega/\cT:= S^\Omega/\!\sim_{\cT} = \{[W]_{\cT}: W \in S^\Omega \}.
\end{align*}
\end{definition}

In the quotient space, $\to$ is a partial order. 

\begin{remark} 
Let $(S^\Omega, \cT)$ be a resource theory.
The quotient space $S^\Omega/\cT$, together with the partial order induced by $\to$,
\begin{align*}
 [V]_{\cT}  \leq [W]_{\cT} 
  \ \Leftrightarrow  \ V\to W,
\end{align*}
is bounded; the top element is $[\Omega]_{\cT}$.  
\end{remark}

We may now define equivalent resource theories. Note that the following definition does not require the two state spaces $\Omega$ and $\Sigma$ to be equal, only that we can find the same structure under the respective allowed transformations.

\begin{definition}[Equivalent resource theories]
Two resource theories $(S^\Omega, \cT)$ and $(S^\Sigma, \cM)$ are  \emph{equivalent} if $S^\Omega / \cT$ is isomorphic to $S^\Sigma / \cM$. 
\end{definition}

\subsubsection{Free resources and conservation laws}

\begin{remark}
Let $(S^\Omega, \cT)$ be a resource theory. Then $V \in S^\Omega$ is a \emph{free resource} if and only iff $V \in [\Omega]_{\cT}$. 
\end{remark}

\begin{proof}
Note that $V \to \Omega$ is trivially true for any $V \in S^\Omega$, because $V\subseteq \Omega$.  This means that $\Omega \to V \Leftrightarrow V \in [\Omega]_{\cT}$.
\end{proof}

Some resource theories satisfy conservation laws: for instance, applying a thermal operation to a Gibbs state results in a Gibbs state, or applying local operations to bipartite states conserves the mutual information between the two systems. The following definition formalizes this intuition.

\begin{definition}[Conserved resources]
Let $(S^\Omega, \cT)$ be a resource theory. A resource $V \in S^\Omega$ is \emph{conserved} by the theory if $f(V) = V$, for all transformations $f\in \cT$.

Let $(S^\Omega, \cT)$ be a resource theory. We say that a  restricted theory $(S^\Omega, \cM)$ is obtained via \emph{conservation} of a set of specifications $\mathcal V \subseteq S^\Omega$ if $
  \cM= \{f \in \cT: f(V) = V, \ \forall \ V \in \mathcal V  \}$.

\end{definition}

\begin{definition}[Resource-independent transformations]
  Let  $(S^\Omega, \cT)$ be a resource theory. A transformation $f_V \in \cT$ is said \emph{resource-independent} if $f_V(W)= f_V(\Omega) = V$, for all specifications $W\in S^\Omega$.
\end{definition}

In the context of quantum resource theories,  resource-independent transformations correspond to replacing the global state with a fixed specification. For instance, in the case of thermal operations or Gibbs-preserving maps, this could be a global Gibbs state.

\begin{remark}
 Let $(S^\Omega, \cT)$ be a resource theory. If there is a resource-independent $f_V \in \cT$, then $V$ is a free resource.

If a resource theory preserves one specification $V$, there can be at most one resource-independent transformation, $f_V: W \to V, \forall \ W \in S^\Omega$. If the theory conserves more than one specification, then there can no resource-independent transformations.
\end{remark}

\newpage
\section{Specification embeddings}
\label{appendix:embeddings}
In this appendix we explore specification embeddings, and provide rigorous proofs to the results presented in the article.

\subsection{General definitions and remarks}

\begin{definition}[Maps preserved by a lumping or embedding]

Let $S^\Omega$ be a specification space, $\Lump$ an inflating, idempotent quasi-endomorphism in $S^\Omega$ and $F$ a set of maps $f: S^\Omega \to S^\Omega$. We say that $\Lump$ \emph{preserves} $F$ if  $\Lump \circ f = \Lump \circ f \circ \Lump$, for all $f\in F$.

If  $\Lump$ induces an intensive embedding $(\e, \h)$, we may also say that the embedding preserves $F$.

\end{definition}

\begin{remark}
For intensive embeddings, it suffices to demand $\h \circ f = \h \circ f \circ \Lump$.
\end{remark}

\subsection{Proofs of claims from the main text}

\propGeneralEmbeddings*

\begin{proof}
For the first equality,
$\e=\e_\text{ext}\circ\e_\text{int},$
first we define a new state space $$\Gamma:=\{\nu\in\Sigma':\nu\in i\circ\e(\{\omega\})\text{ for some }\omega\in\Omega\},$$
where $\Sigma'$ is a copy of $\Sigma$ and $i$ the isomorphism between $\Sigma$ and $\Sigma'$.

As $\Gamma\subseteq\Sigma'$, there is trivially an extensive embedding $\e_\text{ext}$ that connects $S^\Gamma$ to $S^\Sigma$. Now, define the intensive embedding $\e_\text{int}:\ S^\Omega\to S^\Gamma$ through the corresponding Galois adjoint $$\h_\text{int}(\{\gamma\})=\omega\iff\gamma\in i\circ\e(\{\omega\}).$$
Here, $\h_\text{int}$ is well-defined because $\e$ is an order-embedding, and gives a surjective homomorphism. The corresponding $\e_\text{int}$ is defined in the usual way: $$\e_\text{int}(V):=\{\gamma\in\Gamma: \h_\text{int}(\{\gamma\})\subseteq V\}.$$

Now, it is easy to verify that $\e=\e_\text{ext}\circ\e_\text{int}$.

To show that $\e$ can be also written as $\e=\widetilde\e_\text{int}\circ\widetilde\e_\text{ext}$, define a space $\Gamma$ through $$\Gamma:=\Omega'\cup\{\gamma\in \Sigma': \gamma\nsubseteq i\circ\e(\{\omega\})\text{ for any }\omega\in\Omega\}$$
with $\Omega'$ a copy of $\Omega$ and $\Sigma'$ a copy of $\Sigma$ with the isomorphism $i$, and take $\widetilde\e_\text{ext}$ to be the trivial extensive embedding between $S^\Omega$ and $S^\Gamma$. Now, define the intensive embedding $\widetilde\e_\text{int}: S^\Gamma\to S^\Sigma$ analogue to before in $\e_\text{int}$ through
$$\widetilde\h_\text{int}(\{\sigma\})=\omega\iff\gamma\in i\circ\e(\{\omega\}),$$
for any $\sigma\in\Sigma$ for which $i(\sigma)\notin\Gamma$. The elements $\gamma\in\Gamma$ that are already element of $\Sigma$ get mapped to themselves, that is
$$\widetilde\e_\text{int}(\{\gamma\})=i^{-1}(\{\gamma\})\iff\gamma\in\Sigma'.$$
Then, we find as required that
$\e=\widetilde\e_\text{int}\circ\widetilde\e_\text{ext}$.
\end{proof}

\propLumpingEmbedding*
\begin{proof}
The direct statement follows directly from Lemma \ref{lemma:induced_reduced_space} below. The converse follows by simply noting that $\e\circ\h$ is an idempotent and inflating endomorphism for any intensive embedding $\e$ with Galois adjoint $\h$.
\end{proof}

\begin{lemma}
\label{lemma:induced_reduced_space}
Let $S^\Omega$ be a specification space, let $\Lump$ be an idempotent, inflating specification endomorphism in $S$, and $\sim$ the  equivalence relation induced by $\Lump$.
Let $(g, h)$ be the induced Galois insertion of the quotient space $S^\Omega/\!\sim$ in $S^\Omega$ (see Proposition \ref{thm:induced_Galois}). 

Under these conditions, there exists an isomorphism $i$ between $S^\Omega/\!\sim$ and a specification space $S^\Sigma$, such that $(g \circ i^{-1}, i \circ h)$ is an intensive embedding of $S^\Sigma$ in $S^\Omega$. We call $S^\Sigma$ the \emph{reduced specification space induced} by $\Lump$.
\end{lemma}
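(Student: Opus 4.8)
The plan is to reduce everything to the structure already produced by Proposition~\ref{thm:induced_Galois}. That proposition gives a Galois insertion $(g,h)$ of $S^\Omega/\!\sim$ in $S^\Omega$ with $g([V])=\Lump(V)$, where $g$ is an order embedding whose image is exactly the set of fixed points $\Lump(S^\Omega)=\{V\in S^\Omega:\ \Lump(V)=V\}$. Hence $S^\Omega/\!\sim$ is order-isomorphic, via $g$, to $(\Lump(S^\Omega),\subseteq)$. So it suffices to exhibit a set $\Sigma$ together with an order isomorphism $j:\Lump(S^\Omega)\to S^\Sigma$ that is also a semilattice homomorphism; then $i:=j\circ g$ is the required isomorphism $S^\Omega/\!\sim\ \to S^\Sigma$, and $(g\circ i^{-1},\,i\circ h)$ is automatically a Galois insertion, because composing a Galois insertion with an isomorphism on the reduced side again yields a Galois insertion (Section~\ref{sec:Galois}). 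Thus the whole statement rests on recognising $(\Lump(S^\Omega),\subseteq)$ as a specification space.

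To build $\Sigma$ I would take the states of the reduced space to be the lumped singletons. Define $\omega\approx\omega'$ iff $\Lump(\{\omega\})=\Lump(\{\omega'\})$, set $\Sigma:=\Omega/\!\approx$, and identify each class $[\omega]$ with the fixed point $\Lump(\{\omega\})$. The key observation, immediate from $\Lump$ being an element-wise endomorphism (Remark~\ref{lemma:homomorphisms}), is the decomposition $\Lump(V)=\bigcup_{\omega\in V}\Lump(\{\omega\})$; equivalently, every fixed point is a union of lumped singletons. Idempotence guarantees that such unions are themselves fixed, since $\nu\in\Lump(\{\omega\})$ implies $\Lump(\{\nu\})\subseteq\Lump(\{\omega\})$ (cf.\ Lemma~\ref{lemma:properties_induced_relation}). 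This suggests defining $j:\Lump(S^\Omega)\to S^\Sigma$ by $j(A)=\{[\omega]:\ \Lump(\{\omega\})\subseteq A\}$, with candidate inverse $T\mapsto\bigcup_{[\omega]\in T}\Lump(\{\omega\})$. Surjectivity of this inverse and the fact that it preserves and reflects inclusions, and hence is a semilattice homomorphism, then follow routinely from the decomposition above.

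The main obstacle is injectivity of $j$, that is, showing that distinct subsets of $\Sigma$ produce distinct unions of lumped singletons, so that each fixed point has a unique representation as a union of states. This amounts to $(\Lump(S^\Omega),\subseteq)$ being atomistic with the lumped singletons as an antichain of atoms. Crucially, this does \emph{not} follow from the inflating and idempotent properties alone: a lumped singleton $\Lump(\{\omega\})$ may strictly contain another one $\Lump(\{\nu\})$, in which case $\{[\omega]\}$ and $\{[\omega],[\nu]\}$ have the same union and $j$ fails to be injective; when this happens the fixed-point poset need not even have cardinality of the form $2^{|\Sigma|}-1$. The heart of the proof is therefore to identify and invoke the additional regularity of $\Lump$ under which the lumped singletons form an antichain, equivalently under which no lumped singleton properly contains another. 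Under that condition the atomic decomposition is unique, $j$ is an order isomorphism, and transporting $(g,h)$ through $i=j\circ g$ closes the argument as described in the first paragraph.
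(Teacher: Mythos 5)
Your construction is essentially the paper's: the paper also takes $\Sigma$ to be (a copy of) the set of classes $[\{\omega\}]$, represents every element of $S^\Omega/\!\sim$ as a join $\bigvee_{\omega\in W}[\{\omega\}]$, and sends it to $\bigcup_{\omega\in W}\{\sigma_{[\omega]}\}$. The obstruction you isolate --- injectivity of this correspondence, equivalently uniqueness of the representation of a fixed point of $\Lump$ as a union of lumped singletons --- is exactly the point at which the paper's proof is silent: it asserts the map is ``bijective, for the reasons above-mentioned'', but those reasons only cover surjectivity. Your diagnosis is correct, and it is not a defect of your write-up that you could not derive the antichain property from the stated hypotheses: it does not follow. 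Take $\Omega=\{a,b\}$ and let $\Lump$ be the element-wise map with $\Lump(\{a\})=\{a\}$ and $\Lump(\{b\})=\{a,b\}$, hence $\Lump(\{a,b\})=\{a,b\}$. This is an inflating, idempotent specification endomorphism, yet $\Lump(\{a\})\subsetneq\Lump(\{b\})$, and the quotient $S^\Omega/\!\sim$ is the two-element chain $\{[\{a\}],[\{b\}]\}$ (since $\{b\}\sim\{a,b\}$). No specification space has exactly two elements ($|S^\Sigma|=2^{|\Sigma|}-1$), so the lemma as stated is false. Idempotence only yields $\nu\in\Lump(\{\omega\})\implies\Lump(\{\nu\})\subseteq\Lump(\{\omega\})$; the lemma needs the stronger condition that this inclusion is an equality, i.e.\ that the lumped singletons partition $\Omega$.

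Under that extra hypothesis your argument does close: the antichain property gives uniqueness of the decomposition (if $\bigcup_{[\nu]\in T}\Lump(\{\nu\})=\bigcup_{[\omega]\in T'}\Lump(\{\omega\})$, each $\Lump(\{\nu\})$ with $[\nu]\in T$ is contained in, hence equal to, some $\Lump(\{\omega\})$ with $[\omega]\in T'$, so $T=T'$), whence $j$ is an order isomorphism and transporting the Galois insertion through $i=j\circ g$ is routine, as you say. The condition holds for every lumping the paper actually uses (species to family, global state to marginal), so the damage is to the stated generality rather than to the applications; but you should state it explicitly as a hypothesis rather than leaving it as ``the additional regularity to be identified''.
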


\begin{proof}
First we identify the set of images of singletons,
$$\Sigma'= \{h(\{\omega\}), \  \omega \in \Omega \} = \{[\{\omega\}],  \omega \in \Omega \} \subseteq S^\Omega/\!\sim . $$
Our new state-space $\Sigma$ is simply a  copy of $\Sigma'$. We define this copy  via a constructive isomorphism,
\begin{align*}
\widetilde i: \quad 
\Sigma' &\to \Sigma \\
[\{\omega\}] &\mapsto \widetilde i([\{\omega\}] ) =: \sigma_{[\omega]}.
\end{align*}
We build $S^\Sigma$ from $\Sigma$ as usual. 
The second step is to posit that every element of $S^\Omega/\!\sim $ can be written as 
$$ \bigvee_{\omega \in W}  [\{\omega\}] ,$$
for some specification $W \in S^\Omega$. This holds because $(g,h)$ is a Galois insertion and $h$ is a semilattice homomorphism: the image of lower sets (like singletons) are lower sets.
Moreover, we know that for any specification $W \in S^\Omega$, the element  $\bigvee_{\omega \in W}  [\{\omega\}]$ exists, because the quotient space is a semilattice. 

Finally, we define the general isomorphism between $S^\Omega/\!\sim $ and $S^\Sigma$ as
\begin{align*}
i : \quad
S^\Omega/\!\sim &\to S^\Sigma \\
\bigvee_{\omega \in W}  [\{\omega\}]  &\mapsto \bigcup_{\omega \in W} \{\ \widetilde i(\ [\{\omega\}]\ )\  \} = \bigcup_{\omega \in W} \{ \sigma_{[\omega]} \}.
\end{align*}
This is by definition a semilattice homomorphism, which is simply mapping one order to the other. It is also bijective, for the reasons above-mentioned, so it is an isomorphism. 

To show that $(g \circ i^{-1}, i \circ h)$ is an intensive embedding of $S^\Sigma$ in $S^\Omega$, note that the pair of isomorphisms $(i^{-1},i)$ is in particular a  (trivial)  Galois insertion. The composition of two Galois insertions is a Galois insertion, and the composition of two homomorphisms is a homomorphism, so our pair is an intensive embedding.

\end{proof}

\propNested*

\begin{proof}
Let us prove the first statement. Thanks to Proposition \ref{thm:lumping} all we have to show is that $\Lump_{ABC \to A}:= \e_{A\to ABC} \circ \h_{ABC \to A}$ is an idempotent,
inflating specification endomorphism in $S^{ABC}$ (with $\h_{ABC \to A} = \h_{AB \to A} \circ \h_{ABC \to AB}$). This then ensures that $\e_{A\to ABC}$ forms an intensive embedding since $\Lump_{ABC \to A}$ would induce an embedding isomorphic to it.

To show that $\Lump_{ABC \to A} $ is inflating, we have
\begin{align*}
\Lump_{ABC \to A} (V) 
  &= \e_{AB \to ABC} \circ \underbrace{\e_{A \to AB} \circ  \h_{AB \to A}}_{\flag{\Lump_{AB \to A} \text{ inflating}}} \circ \, \h_{ABC \to AB} (V) 
  \supseteq  \underbrace{\e_{AB \to ABC}  \circ \h_{ABC \to AB} }_{\flag{\Lump_{ABC \to AB} \text{ inflating}}} (V)
  \supseteq V.
\end{align*}
For idempotence, we have
\begin{align*}
&\Lump_{ABC \to A} \circ \Lump_{ABC \to A} \\
  &\quad = \e_{AB \to ABC} \circ \e_{A \to AB} \circ  \h_{AB \to A} \circ \, 
    \underbrace{\h_{ABC \to AB}  \circ 
  \e_{AB \to ABC} }_{\flag{\text identity}}
  \circ\, \e_{A \to AB} \circ  \h_{AB \to A} \circ \,
  \h_{ABC \to AB} \\
  & \quad= \e_{AB \to ABC} \circ \e_{A \to AB} \circ  \underbrace{\h_{AB \to A}  \circ \e_{A \to AB} }_{\flag{\text identity}} \circ \, \h_{AB \to A} \circ \, \h_{ABC \to AB} 
  = \Lump_{ABC \to A}.
\end{align*}
Finally, $\Lump_{ABC \to A} $ is a composition of homomorphisms, so it is a specification endomorphism (Lemma~\ref{lemma:composing_homomorphisms}).

For the converse, we will prove the same properties for $\Lump_{AB \to A}:= \e_{A \to AB} \circ \, \h_{AB \to A}$  (with $\h_{AB \to A} = \h_{ABC \to A} \circ \, \e_{AB\to ABC}$).
To show that the lumping is inflating, we have 
\begin{align*}
\Lump_{AB \to A} (V) 
  &= \h_{ABC \to AB} \circ \underbrace{\e_{A \to ABC} \circ  \h_{ABC \to A}}_{\flag{\Lump_{ABC \to A} \text{ inflating}}} \circ \, \e_{AB \to ABC}\, (V) 
  \supseteq  \underbrace{\e_{AB \to ABC}  \circ \h_{ABC \to AB} }_{\flag{\text{identity}}}\, (V)
  = V.
\end{align*}
To show that it is idempotent, we write 
\begin{align*}
&\Lump_{AB \to A} \circ \Lump_{AB \to A} \\
  &\quad = \h_{ABC \to AB} \circ
  \Lump_{ABC \to A} \circ  
  \underbrace{\e_{AB \to ABC}\, \circ \h_{ABC \to AB}}_{\flag{\text{inflating }\Lump_{ABC \to AB} \subseteq \Lump_{ABC \to A}}}
  \circ \Lump_{ABC \to A} \circ \, \e_{AB \to ABC}  \\
  &\quad = \h_{ABC \to AB} \circ
    \Lump_{ABC \to A} \  
  \circ \underbrace{\Lump_{ABC \to A}}_{\flag{\text{idempotent}}} \circ \, \e_{AB \to ABC}  \\
  &\quad =\h_{ABC \to AB} \circ     \Lump_{ABC \to A} \circ \, \e_{AB \to ABC}
  =\Lump_{AB \to A}.
\end{align*}
 Finally, this lumping is again trivially a specification endomorphism.

To see that the definitions in the two parts coincide, note that taking $\e'_{A \to AB}:=\h_{ABC \to AB}\circ\e_{A \to ABC}$ yields
\begin{align*}
\e'_{A \to ABC} :=& \e_{AB \to ABC} \circ e'_{A \to AB} \\
=& \underbrace{\e_{AB \to ABC} \circ \h_{ABC \to AB}}_{\flag{\Lump_{ABC \to AB}}}\circ\e_{A \to ABC}  \circ \underbrace{\h_{ABC \to A} \circ \e_{A \to ABC}}_{\flag{\text{identity}}}  \\
=& \underbrace{\Lump_{ABC \to AB}}_{\flag{\subseteq \ \Lump_{ABC \to A}, \text{ inflating}}} \circ \Lump_{ABC \to A} \circ\ \e_{A \to ABC} \\
=& \Lump_{ABC \to A} \circ \ \e_{A \to ABC} 
= \e_{A \to ABC} .
\end{align*}

\end{proof}

\subsection{Additional results}

\subsubsection{Homomorphisms}

In the following we use the term `homomorphism' to refer to specification homomorphisms, that is order homomorphisms in specification spaces.
For simplicity, we may denote $f(\{\omega\})$ by $f(\omega)$.

\begin{remark}
 \label{rem:build_homomorphisms}
 Any function between two state-spaces, $\com{f}: \Omega \to \Sigma$ can be used to build a homomorphism,
  \begin{align*}
    f: \quad S^\Omega &\to S^\Sigma \\
    W &\mapsto \bigcup_{\omega \in W} \{\com f(\omega)\} .
  \end{align*} 
\end{remark}

\begin{lemma} \label{lemma:composing_homomorphisms}
 The composition of any two  element-wise functions in $S^\Omega$ is also an element-wise function.
\end{lemma}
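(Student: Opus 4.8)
The plan is to reduce the claim to the characterization already established in Remark~\ref{lemma:homomorphisms}: a map between specification spaces is element-wise precisely when it is a join-semilattice homomorphism, i.e.\ when it preserves arbitrary unions. Once this is in hand, the lemma is just the familiar fact that homomorphisms are closed under composition, so there is essentially no work left to do. Although the lemma is phrased for endomorphisms of a single $S^\Omega$, I would prove it for two maps $f: S^\Omega \to S^\Sigma$ and $g: S^\Sigma \to S^\Gamma$ with possibly different codomains, since the argument is identical and this more general form is what is needed when composing embeddings; the stated case is recovered by taking $\Sigma = \Gamma = \Omega$.

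Concretely, I would take the two element-wise functions $f$ and $g$, invoke Remark~\ref{lemma:homomorphisms} to treat them as union-preserving, and then compute, for any family $\mathcal W \subseteq S^\Omega$,
$$
(g \circ f)\Big(\bigcup_{W \in \mathcal W} W\Big)
= g\Big(\bigcup_{W \in \mathcal W} f(W)\Big)
= \bigcup_{W \in \mathcal W} g\big(f(W)\big)
= \bigcup_{W \in \mathcal W} (g \circ f)(W),
$$
where the first equality uses that $f$ preserves unions, the second that $g$ preserves unions (applied to the family $\{f(W)\}_{W \in \mathcal W} \subseteq S^\Sigma$), and the last is the definition of composition. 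Hence $g \circ f$ preserves unions and is therefore element-wise by Remark~\ref{lemma:homomorphisms}. If an explicit element function is wanted instead, I would write $(g\circ f)(W) = g\big(\bigcup_{\omega \in W} \tilde f(\omega)\big) = \bigcup_{\omega \in W} g(\tilde f(\omega))$, exhibiting $\omega \mapsto g(\tilde f(\omega))$ as the element function of $g \circ f$.

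The only step needing any care — and the closest thing to an obstacle — is pushing the outer map $g$ through a union, which is legitimate exactly because it is the homomorphism property of $g$ and because the objects involved are genuine specifications. I would note in passing that each intermediate union is nonempty (as $W \neq \emptyset$ and every $\tilde f(\omega) \in S^\Sigma$ is nonempty), so it indeed lies in $S^\Sigma$ where $g$ is defined and no empty set is ever produced. Beyond this bookkeeping the proof is immediate.
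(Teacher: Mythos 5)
Your proof is correct and essentially matches the paper's: the paper directly unpacks $g\circ f(W)=\bigcup_{\omega\in W}\bigcup_{\omega'\in\widetilde f(\omega)}\widetilde g(\omega')$ and names $\omega\mapsto\bigcup_{\omega'\in\widetilde f(\omega)}\widetilde g(\omega')$ as the element function, which is exactly the explicit construction you give in your second paragraph. Routing the main argument through Remark~\ref{lemma:homomorphisms} and closure of union-preserving maps under composition is only a cosmetic repackaging of the same computation, and your observation that no empty set arises is a harmless bit of bookkeeping the paper leaves implicit.
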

\begin{proof}
 Take two element-wise functions, $f(W) =  \bigcup_{\omega \in W} \widetilde f(\omega) $ and $g(W) =  \bigcup_{\omega \in W} \widetilde g(\omega)$. We have
  \begin{align*}
  g \circ f (W) 
   &= g \left( \bigcup_{\omega \in W} \widetilde f(\omega) \right) \\
   &= g \left( \bigcup_{\omega \in W} \bigcup_{\omega' \in \widetilde f(\omega)} \ \{\omega'\} \right) \\
   &= \bigcup_{\omega \in W} \bigcup_{\omega' \in \widetilde f(\omega)} \ \widetilde g (\omega')\\
   &=: \bigcup_{\omega \in W}  \widetilde{g\circ f}(\omega) ,
  \end{align*}
  where we defined the function
   \begin{align*}
       \widetilde{g\circ f}: \quad \Omega 
       &\to S^\Omega \\
     \omega &\mapsto  
     \bigcup_{\omega' \in \widetilde f(\omega)} \ \widetilde g (\omega') .
   \end{align*}
\end{proof}

The following lemma concerns knowledge combination and homomorphisms: it says that, given two specifications, we always get a more precise description by combining them first and applying an homomorphism second than the other way around. 

\begin{lemma}\label{lemma:intersection_homomorphisms}
Let $S^\Omega$ and $S^\Sigma$ be two specification spaces,  and let $\mathcal W \subseteq S^\Omega$ be any subset of compatible specifications (that is, $\cap \mathcal W \neq \emptyset$). 
Then, for any homomorphism $f: S^\Omega \to S^\Sigma$,  
$$f(\cap \mathcal W )\subseteq \bigcap_{W \in \mathcal W } f(W) \in S^\Sigma.$$

\end{lemma}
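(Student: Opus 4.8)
The plan is to derive the inclusion purely from monotonicity of $f$, together with the elementary fact that a function applied to an intersection lands inside the intersection of the images. Since a join-semilattice homomorphism is in particular a quasi-homomorphism, and the excerpt already establishes that quasi-homomorphisms of join-semilattices are order-preserving, I may take for granted that $f$ is order-preserving with respect to $\subseteq$.

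First I would record that the hypothesis of compatibility, $\cap \mathcal W \neq \emptyset$, guarantees that $\bigcap_{W \in \mathcal W} W$ is a genuine element of $S^\Omega$, so that $f$ can legitimately be applied to it. Then, for each fixed $W \in \mathcal W$ I would observe that $\bigcap_{W' \in \mathcal W} W' \subseteq W$, and apply order-preservation of $f$ to obtain $f\!\left(\bigcap_{W' \in \mathcal W} W'\right) \subseteq f(W)$. Because this containment holds simultaneously for every $W \in \mathcal W$, the set $f(\bigcap \mathcal W)$ is contained in each of the sets $f(W)$, hence in their intersection $\bigcap_{W \in \mathcal W} f(W)$, which is precisely the desired inclusion. (An equivalent route, avoiding the explicit appeal to monotonicity, would use the element-wise representation from Remark~\ref{lemma:homomorphisms}: writing $f(V) = \bigcup_{\omega \in V} \tilde f(\omega)$, any $\sigma \in f(\bigcap \mathcal W)$ arises from some $\omega \in \bigcap \mathcal W$, and since $\omega \in W$ for every $W$ one gets $\sigma \in f(W)$ for every $W$.)

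It remains to justify the final clause $\bigcap_{W \in \mathcal W} f(W) \in S^\Sigma$, i.e.\ that this intersection is nonempty and hence a legitimate specification. Here I would pick any $\omega \in \bigcap \mathcal W$ (available by compatibility); then $\tilde f(\omega)$ is a nonempty subset of $f(\bigcap \mathcal W)$, which by the inclusion just proved sits inside $\bigcap_{W \in \mathcal W} f(W)$, forcing the latter to be nonempty as well.

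The main obstacle is essentially bookkeeping rather than a conceptual difficulty: the inclusion itself is immediate from monotonicity, and the only point demanding care is tracking nonemptiness so that both $\bigcap \mathcal W$ and $\bigcap_{W \in \mathcal W} f(W)$ qualify as specifications (recall the empty set is excluded from $S^\Omega$ and $S^\Sigma$). This is exactly where the compatibility hypothesis $\cap \mathcal W \neq \emptyset$ enters, and it is used twice, once to apply $f$ and once to conclude the image intersection is a valid specification.
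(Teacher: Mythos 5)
Your proof is correct. Your primary route---observing that $\cap\,\mathcal W \subseteq W$ for each $W$ and invoking order-preservation of $f$ (which the paper establishes for quasi-homomorphisms, hence for homomorphisms)---is a slightly more streamlined version of the paper's argument: the paper instead expands $f(W)=\bigcup_{\omega\in W}\tilde f(\omega)$ directly, splits each $W$ into $\cap\,\mathcal W$ and $W\setminus(\cap\,\mathcal W)$, and extracts the common part of the union, which is exactly the element-wise alternative you sketch in your parenthetical. The two arguments rest on the same fact and buy the same thing; yours is marginally more abstract and reusable, while the paper's makes the set-theoretic mechanism explicit. One small point in your favour: you explicitly verify that $\bigcap_{W\in\mathcal W} f(W)$ is nonempty (so that it is a legitimate element of $S^\Sigma$), a step the paper's proof leaves implicit even though the lemma statement asserts membership in $S^\Sigma$; note that this already follows from $f(\cap\,\mathcal W)\in S^\Sigma$ being nonempty by definition of the codomain, so your appeal to $\tilde f(\omega)\neq\emptyset$ is sound but not strictly needed.
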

\begin{proof}
Using the fact that  $f$ is a homomorphism, and therefore we can write $f(W) = \bigcup_{\omega \in W} \widetilde f(\omega)$, we have
  \begin{align*}
    \bigcap_{W \in \mathcal W }  f(W)
    &= \bigcap_{W \in \mathcal W } 
    \left(\bigcup_{\omega \in W } f(\omega)  \right) \\
    &= \bigcap_{W \in \mathcal W } 
      \left(     
        \left( 
          \bigcup_{\omega \in \, \cap \mathcal W } f (\omega)  
       \right)
       \cup
        \left( 
          \bigcup_{\omega \in\,  W \backslash (\cap \mathcal W) } f (\omega)  
        \right) 
      \right) \\
    &\supseteq  
       \bigcap_{W \in \mathcal W}    
            \left( 
              \bigcup_{\omega \in \, \cap \mathcal W } f (\omega)  
           \right) \\
    &=\bigcup_{\omega \in \, \cap \mathcal W } f (\omega)  \\
    &= f(\cap \mathcal W).
  \end{align*}
\end{proof}

\begin{lemma} \label{lemma:quasi_endomorphism_ss}
Let $S^\Omega$ be a specification space and $\Lump$ an idempotent, inflating quasi-endomorphism in $S^\Omega$. Let $V, W \in S^\Omega$. Then
\begin{enumerate}
\item  $\Lump (\Lump(V) \cup \Lump(W)) = \Lump(V \cup W)$,
\item if $V$ and $W$ are compatible, $\Lump(V) \cap \Lump(W) \supseteq \Lump(V \cap W)$,
\item if $V$ and $W$ are compatible, $\Lump (\Lump(V) \cap \Lump(W)) = \Lump(V) \cap \Lump(W)$.
\end{enumerate}
\end{lemma}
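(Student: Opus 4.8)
The plan is to observe that the only properties of $\Lump$ we actually need are that it is \emph{order-preserving}, \emph{idempotent}, and \emph{inflating}. Order-preservation is not assumed directly in the statement, but it follows from $\Lump$ being a quasi-endomorphism, since quasi-homomorphisms of join-semilattices are order-preserving (established earlier in this appendix). The defining quasi-homomorphism inequality itself will not be needed: all three parts are consequences of $\Lump$ behaving as a closure-type operator on the poset $(S^\Omega, \subseteq)$. Throughout I would keep an eye on non-emptiness, since $\Lump$ is only defined on $S^\Omega$, i.e. on non-empty sets.

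For part 1 I would prove the two inclusions separately. For $\supseteq$, since $\Lump$ is inflating we have $\Lump(V) \cup \Lump(W) \supseteq V \cup W$, and applying the order-preserving map $\Lump$ gives $\Lump(\Lump(V) \cup \Lump(W)) \supseteq \Lump(V \cup W)$. For $\subseteq$, from $V \subseteq V \cup W$ and $W \subseteq V \cup W$ together with order-preservation I get $\Lump(V) \subseteq \Lump(V \cup W)$ and $\Lump(W) \subseteq \Lump(V \cup W)$, hence $\Lump(V) \cup \Lump(W) \subseteq \Lump(V \cup W)$; applying $\Lump$ once more and using idempotence yields $\Lump(\Lump(V) \cup \Lump(W)) \subseteq \Lump(\Lump(V \cup W)) = \Lump(V \cup W)$.

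Part 2 is immediate: compatibility guarantees $V \cap W \in S^\Omega$, and from $V \cap W \subseteq V$ and $V \cap W \subseteq W$ order-preservation yields $\Lump(V \cap W) \subseteq \Lump(V)$ and $\Lump(V \cap W) \subseteq \Lump(W)$, whence $\Lump(V \cap W) \subseteq \Lump(V) \cap \Lump(W)$. A useful by-product, which I would record because it makes part 3 well-posed, is that $\Lump(V) \cap \Lump(W)$ is itself non-empty: it contains the non-empty specification $\Lump(V \cap W)$, so it is a genuine element of $S^\Omega$ on which $\Lump$ may legitimately act.

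For part 3 the inclusion $\supseteq$ is just the inflating property applied to $\Lump(V) \cap \Lump(W)$. For $\subseteq$, I would start from $\Lump(V) \cap \Lump(W) \subseteq \Lump(V)$ and apply the order-preserving $\Lump$ to obtain $\Lump(\Lump(V) \cap \Lump(W)) \subseteq \Lump(\Lump(V)) = \Lump(V)$ by idempotence, and symmetrically $\Lump(\Lump(V) \cap \Lump(W)) \subseteq \Lump(W)$; intersecting the two gives the claim. The only point requiring genuine care is precisely this bookkeeping of non-emptiness inherited from part 2, so that every object to which $\Lump$ is applied lies in $S^\Omega$. Beyond that there is no real obstacle, as each assertion reduces to a one-line application of monotonicity combined with either idempotence or the inflating property.
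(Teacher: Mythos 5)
Your proof is correct, and parts 1 and 3 follow essentially the same route as the paper's (inflating $+$ monotone for one inclusion, monotone $+$ idempotent for the other). Where you genuinely diverge is part 2: the paper decomposes $V=(V\cap W)\cup(V\setminus W)$ and $W=(V\cap W)\cup(W\setminus V)$ and invokes the quasi-endomorphism inequality on these unions, whereas you simply apply monotonicity to $V\cap W\subseteq V$ and $V\cap W\subseteq W$ and intersect. Your version is shorter and also cleaner at the edges: the paper's decomposition silently applies $\Lump$ to $V\setminus W$, which may be empty and hence not an element of $S^\Omega$, a case your argument never encounters. Your preliminary observation is also a genuine clarification of scope: since the binary quasi-homomorphism inequality $\Lump(V)\cup\Lump(W)\subseteq\Lump(V\cup W)$ is \emph{equivalent} to order-preservation on a join-semilattice (one direction is the paper's earlier lemma, the other is your one-line derivation), the whole statement is really a fact about closure operators (idempotent, inflating, monotone maps), and nothing specific to quasi-endomorphisms is used. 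Finally, your bookkeeping that $\Lump(V)\cap\Lump(W)\supseteq\Lump(V\cap W)\neq\emptyset$, so that part 3 applies $\Lump$ only to elements of $S^\Omega$, is a point the paper leaves implicit and is worth recording.
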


\begin{proof}

\begin{enumerate}
\item First observe that, since $\Lump$ is a quasi-endomorphism, order-preserving and idempotent,
\begin{align*}
\Lump(V)\, \cup\, \Lump(W) &\subseteq \Lump(V \cup W)  \\
\implies  \Lump (\Lump(V) \cup \Lump(W)) &= \Lump \circ \Lump(V \cup W)  \\
&=\Lump(V \cup W) . 
\end{align*}

To show the other direction, we use the fact that $\Lump$ is inflating and order preserving,
\begin{align*}
V \cup W &\subseteq \Lump(V) \cup \Lump(W) \\
\implies \Lump(V \cup W)  &\subseteq \Lump( \Lump(V) \cup \Lump(W) ).
\end{align*}

\item  We can write $V=(V\cap W)\cup (V\backslash W)$ and $W=(V\cap W)\cup (W\backslash V)$. Now we use the fact that $\Lump$ is a quasi-endomorphism to show
\begin{align*}
\Lump(V)\cap \Lump(W) &= \Lump((V\cap W)\cup (V\backslash W)) \ \cap \ \Lump((V\cap W)\cup (W\backslash V))\\
&\supseteq (\Lump(V\cap W)\cup \Lump(V\backslash W))\ \cap \ (\Lump(V\cap W)\cup \Lump(W\backslash V))\\
&\supseteq \Lump(V\cap W).
\end{align*}

\item Similarly to 1., direction $\supseteq$ follows from the fact that $\Lump$ is inflating. The other direction follows from 2.\ together with the fact that $\Lump$ is idempotent.
\end{enumerate}

\end{proof}

\subsubsection{Specification embeddings}

Here stand some properties of specification embeddings. Order embeddings, Galois connections and related notions are defined in Appendix~\ref{appendix:algebra_order}. 
In the following, $S^\Omega$ and $S^\Sigma$ are two specification spaces, and there is a specification embedding $\e: S^\Omega \to S^\Sigma$.

\begin{lemma}
Embeddings are directed, that is $\e(S^\Omega)$ is a directed set. Furthermore, it has a maximum element, $\e(\Omega)$. 
\end{lemma}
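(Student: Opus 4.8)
The plan is to read off both claims directly from the two defining properties of a specification embedding---that $\e$ is an order embedding (so $V\subseteq W \Leftrightarrow \e(V)\subseteq \e(W)$) and that it is a join-semilattice homomorphism (so $\e(\bigcup_{W\in\mathcal W} W)=\bigcup_{W\in\mathcal W}\e(W)$)---while keeping in mind that ``directed'' and ``maximum'' are being asserted for $\e(S^\Omega)$ viewed as a sub-poset of $(S^\Sigma,\subseteq)$, so any witnessing upper bound must itself lie in $\e(S^\Omega)$.

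First I would establish the maximum element, since it is the stronger and more useful statement. The state space $\Omega$ is the top of $S^\Omega$, because every specification $V\in S^\Omega$ satisfies $V\subseteq\Omega$ and $\Omega$ is itself a (non-empty) element of $S^\Omega$. Applying the order-preservation half of the embedding gives $\e(V)\subseteq\e(\Omega)$ for all $V\in S^\Omega$, so $\e(\Omega)$ is an upper bound for the whole image; and since $\e(\Omega)\in\e(S^\Omega)$ by construction, it is a genuine maximum of the sub-poset.

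For directedness I would take an arbitrary finite subset of $\e(S^\Omega)$, written as $\{\e(V_1),\dots,\e(V_n)\}$ with $V_1,\dots,V_n\in S^\Omega$, and propose $\e(V_1\cup\dots\cup V_n)$ as the required upper bound. The union $V_1\cup\dots\cup V_n$ is again a non-empty subset of $\Omega$, hence lies in $S^\Omega$, so its image lies in $\e(S^\Omega)$; and because $\e$ is a join-semilattice homomorphism we have $\e(V_1\cup\dots\cup V_n)=\e(V_1)\cup\dots\cup\e(V_n)$, which manifestly contains each $\e(V_i)$. Thus every finite subset of $\e(S^\Omega)$ has an upper bound inside $\e(S^\Omega)$, which is exactly directedness. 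I would also remark that directedness in fact follows for free once the maximum is known, since the maximum bounds every subset; the homomorphism argument is worth recording anyway because it shows the image is closed under finite joins, i.e.\ is a sub-join-semilattice of $(S^\Sigma,\subseteq)$.

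This lemma is essentially a bookkeeping exercise, so I do not expect a genuine obstacle. The only point requiring care---and the one I would flag---is not to confuse being directed or bounded as a subset of $(S^\Sigma,\subseteq)$ with being directed or bounded as the sub-poset $\e(S^\Omega)$: the definition demands that the witnessing bounds belong to $\e(S^\Omega)$ itself, and this is precisely what the homomorphism property (for the finite joins) and the fact that $\e(\Omega)$ lands in the image (for the top) are there to guarantee.
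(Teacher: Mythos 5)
Your proof is correct and follows essentially the same route as the paper's: the union of preimages lies in $S^\Omega$ and the homomorphism property identifies its image with the union of the images, while $\e(\Omega)$ is the maximum because $\Omega$ tops $S^\Omega$ and $\e$ is order-preserving. Your extra remarks (handling arbitrary finite subsets rather than pairs, and noting that the maximum already implies directedness) are harmless refinements of the same argument.
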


\begin{proof}
Let $V, W \in S^\Omega$. We have to show that $\e(V) \cup \e(W) \in \e(S^\Omega)$. Since embeddings are homomorphisms, $\e(V) \cup \e(W)  = \e(V \cup W)$, and $V \cup W \in S^\Omega$. 
Similarly, because $S^\Omega$ has a top, $\Omega$, the maximum of $\e(S^\Omega)$ is $\e(\Omega)$.
\end{proof}

\begin{lemma}
Embeddings are filtered, up to the empty set. That is, let $V, W \in S^\Omega$. If $V \cap W \neq \emptyset$, then $\e(V) \cap \e(W) \in \e(S^{\Omega})$.
\end{lemma}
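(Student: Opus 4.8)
The plan is to show that the meet $\e(V)\cap\e(W)$, formed in $S^\Sigma$, is itself the image of some specification in $S^\Omega$. The first thing I would record is the easy inclusion: since $\e$ is a specification homomorphism, Lemma~\ref{lemma:intersection_homomorphisms} gives $\e(V\cap W)\subseteq \e(V)\cap\e(W)$. Because $V\cap W\neq\emptyset$ we have $V\cap W\in S^\Omega$, so $\e(V\cap W)$ is a nonempty specification; in particular $\e(V)$ and $\e(W)$ are compatible. This compatibility is exactly what the order-theoretic lemmas below require. I would resist the temptation to prove the clean-looking equality $\e(V)\cap\e(W)=\e(V\cap W)$: the reverse inclusion can fail for a general order embedding, since the intersection may pick up extra elements coming from overlaps of the images of distinct singletons. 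The correct target is therefore membership in $\e(S^\Omega)$, not this equality.

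The main tool is Theorem~\ref{prop:general_embeddings}, which lets me write $\e=\e_\text{ext}\circ\e_\text{int}$, with $\e_\text{int}\colon S^\Omega\to S^\Gamma$ intensive (Galois insertion $(\e_\text{int},\h_\text{int})$ and associated lumping $\Lump=\e_\text{int}\circ\h_\text{int}$) and $\e_\text{ext}\colon S^\Gamma\to S^\Sigma$ extensive. I would treat the two factors separately. For the intensive factor, the key observations are that every image $\e_\text{int}(V)$ is a fixed point of the lumping, $\Lump(\e_\text{int}(V))=\e_\text{int}\circ\h_\text{int}\circ\e_\text{int}(V)=\e_\text{int}(V)$ (using $\h_\text{int}\circ\e_\text{int}=\id$), and that the fixed points of $\Lump$ are precisely the elements of $\e_\text{int}(S^\Omega)$. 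Applying Lemma~\ref{lemma:quasi_endomorphism_ss}(3) to the compatible specifications $\e_\text{int}(V),\e_\text{int}(W)$ then yields $\Lump(\e_\text{int}(V)\cap\e_\text{int}(W))=\e_\text{int}(V)\cap\e_\text{int}(W)$, so the intersection is $\Lump$-fixed and hence equals $\e_\text{int}(U)$ for $U:=\h_\text{int}(\e_\text{int}(V)\cap\e_\text{int}(W))\in S^\Omega$.

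For the extensive factor I would use that $\e_\text{ext}$ sends singletons to singletons along an injective map of states, so $\e_\text{ext}$ commutes with intersection: $\e_\text{ext}(A)\cap\e_\text{ext}(B)=\e_\text{ext}(A\cap B)$ for all $A,B\in S^\Gamma$. Combining the two steps closes the argument, $\e(V)\cap\e(W)=\e_\text{ext}(\e_\text{int}(V))\cap\e_\text{ext}(\e_\text{int}(W))=\e_\text{ext}(\e_\text{int}(V)\cap\e_\text{int}(W))=\e_\text{ext}(\e_\text{int}(U))=\e(U)\in\e(S^\Omega)$, as required.

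The step I expect to be the crux is showing that $\e_\text{int}(V)\cap\e_\text{int}(W)$ is stable under the lumping; everything else (the easy inclusion, compatibility, and the extensive computation) is routine. This is precisely where the hypothesis $V\cap W\neq\emptyset$ enters, since Lemma~\ref{lemma:quasi_endomorphism_ss}(3) needs the two specifications to be compatible, and it is also what guarantees that $U=\h_\text{int}(\e_\text{int}(V)\cap\e_\text{int}(W))$ is a genuine nonempty specification rather than the excluded empty set.
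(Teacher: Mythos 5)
Your proof is correct (granting Theorem~\ref{prop:general_embeddings}) but takes a genuinely different, and considerably longer, route than the paper. The paper's own proof is a single line: it asserts $\e(V)\cap\e(W)=\e(V\cap W)$ ``because $\e$ is an embedding'' and is done. You instead refuse to claim that equality, decompose $\e=\e_\text{ext}\circ\e_\text{int}$, show that the intersection of two intensive images is a fixed point of the lumping (via Lemma~\ref{lemma:quasi_endomorphism_ss}) and hence itself an intensive image, and then pull the result through the extensive factor. Two remarks on the comparison. First, your suspicion of the equality is not idle: Definition~\ref{def:specification_embeddings} (order embedding plus homomorphism) does not by itself forbid the images of two distinct singletons from overlapping without either containing the other, and in that situation $\e(V)\cap\e(W)$ really can pick up points outside $\e(V\cap W)$, and even outside $\e(S^\Omega)$. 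Order-reflection only shows that $\e(V\cap W)$ is the greatest element of $\e(S^\Omega)$ \emph{below} the intersection, not that the intersection itself lies in the image, so the paper's one-liner does not follow from the stated definition alone; what rescues the statement is exactly the structural fact you invoke, namely that intensive and extensive embeddings both send distinct singletons to disjoint images. In that sense your longer argument supplies the justification the paper's proof omits. Second, your hedge about proving only membership rather than equality is unnecessary: once the decomposition is in hand, your own computation gives $U=\h_\text{int}\left(\e_\text{int}(V)\cap\e_\text{int}(W)\right)\subseteq \h_\text{int}(\e_\text{int}(V))=V$ and likewise $U\subseteq W$, which together with the easy inclusion forces $U=V\cap W$ and hence the equality $\e(V)\cap\e(W)=\e(V\cap W)$ after all. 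The two proofs land in the same place; yours earns it.
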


\begin{proof}
If $V \cap W \neq \emptyset$  that is  $V\cap W \in S^\Omega$, then $ \e(V) \cap \e(W)= \e(V \cap W) \in \e(S^\Omega)$, because $\e$ is an embedding. 
\end{proof}

\begin{lemma}
 The composition of two specification embeddings is a specification embedding. In particular, the composition of two extensive embeddings is an extensive embedding, and the composition of two intensive embeddings is an intensive embedding.
\end{lemma}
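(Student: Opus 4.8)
The plan is to verify each of the three defining conditions in turn, exploiting the fact that both the order-embedding property and the homomorphism property compose, and that Galois insertions compose (as recorded in the remark on composition of Galois connections in Section~\ref{sec:Galois}).

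First I would treat the general statement. Let $\e_1: S^\Omega \to S^\Sigma$ and $\e_2: S^\Sigma \to S^\Gamma$ be two specification embeddings and set $\e := \e_2 \circ \e_1$. To show $\e$ is an order embedding, I would simply chain the two defining equivalences: for any $V, W \in S^\Omega$,
\[
V \subseteq W \iff \e_1(V) \subseteq \e_1(W) \iff \e_2(\e_1(V)) \subseteq \e_2(\e_1(W)),
\]
where the first step is the order-embedding property of $\e_1$ and the second that of $\e_2$. To show $\e$ is a homomorphism, I would invoke Lemma~\ref{lemma:composing_homomorphisms}: by Remark~\ref{lemma:homomorphisms} every specification homomorphism is an element-wise function, and the composition of element-wise functions is again element-wise, so $\e$ is a homomorphism. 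Being both an order embedding and a homomorphism, $\e$ is a specification embedding.

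Second, for the extensive case I would use the singleton condition directly. Assuming $\abs{\e_1(\{\omega\})} = 1$ for all $\omega \in \Omega$, write $\e_1(\{\omega\}) = \{\sigma\}$; then $\e(\{\omega\}) = \e_2(\{\sigma\})$, which is again a singleton by the extensivity of $\e_2$. Hence $\abs{\e(\{\omega\})} = 1$ for all $\omega$, so $\e$ is extensive. Third, for the intensive case, let $(\e_1, \h_1)$ and $(\e_2, \h_2)$ be the Galois insertions witnessing intensivity. I would appeal to the remark on composition of Galois connections (Section~\ref{sec:Galois}), which gives that $(\e_2 \circ \e_1,\ \h_1 \circ \h_2)$ is again a Galois insertion. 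It then remains only to observe that the adjoint $\h_1 \circ \h_2$ is a homomorphism, being a composition of the homomorphisms $\h_1$ and $\h_2$ (again via Lemma~\ref{lemma:composing_homomorphisms}); thus $\e$ admits an adjoint homomorphism forming a Galois insertion, and is therefore intensive.

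I expect no serious obstacle: the result is essentially bookkeeping resting on the compositionality lemmas already established. The only points requiring care are, in the intensive case, to confirm that the composed adjoint $\h_1 \circ \h_2$ is genuinely a homomorphism and that the identity $\h \circ \e = \id_{S^\Omega}$ survives composition — both of which follow cleanly from the cited remark on Galois insertions together with Lemma~\ref{lemma:composing_homomorphisms}.
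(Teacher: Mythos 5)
Your proof is correct and follows essentially the same route as the paper's: the general claim via composition of order embeddings and of homomorphisms, the extensive case via the singleton condition, and the intensive case via composition of Galois insertions (with the composed adjoint $\h_1 \circ \h_2$ in the correct order). You simply spell out the steps the paper's terser proof leaves implicit.
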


\begin{proof}
The general claim follows from the facts that the composition of two order embeddings is an order embedding, and the composition of two homomorphisms is an homomorphism. For intensive embeddings, we have in addition that the composition of two Galois insertions is a Galois insertion. For extensive embeddings, $|\e_2 \circ \e_1(\{\omega\})| = 1$, since $|\e_1(\{\omega\})| = 1$ and $\e_2$ is also an extensive embedding. 
\end{proof}

\begin{lemma}
  For extensive embeddings, $\{\emb(\{\omega\})\}_{\omega \in \Omega} \subseteq \Sigma$. In other words, we could relabel the elements of $\Sigma$ such that $\emb = \id$ and $\Omega \subseteq \Sigma$.
\end{lemma}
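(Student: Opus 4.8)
The plan is to unpack the defining condition $|\e(\{\omega\})|=1$ of an extensive embedding to extract an underlying map on states, and then to use injectivity of the embedding to justify the relabeling of $\Sigma$. The statement $\{\e(\{\omega\})\}_{\omega\in\Omega}\subseteq\Sigma$ contains a mild abuse (it compares singletons with elements), which I would first make explicit via the identification $\{\sigma\}\leftrightarrow\sigma$.

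First I would observe that, since $\e$ is extensive, each image $\e(\{\omega\})$ of a singleton is a one-element set; hence there is a unique element $\bar e(\omega)\in\Sigma$ with $\e(\{\omega\})=\{\bar e(\omega)\}$, and this defines a map $\bar e:\Omega\to\Sigma$. The family $\{\e(\{\omega\})\}_{\omega\in\Omega}$ is then precisely the family of singletons $\{\{\bar e(\omega)\}\}_{\omega}$, and identifying each singleton with its element gives $\bar e(\Omega)\subseteq\Sigma$, which is the first assertion.

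Next I would show that $\bar e$ is injective. Since $\e$ is a specification embedding it is in particular an order embedding, hence injective (Remark~\ref{lemma:injective_embeddings}). So if $\bar e(\omega)=\bar e(\omega')$ then $\e(\{\omega\})=\e(\{\omega'\})$, whence $\{\omega\}=\{\omega'\}$ and $\omega=\omega'$. The relabeling then proceeds as follows: because $\bar e$ is injective, I can rename the elements of $\Sigma$ lying in $\bar e(\Omega)$ so that $\bar e(\omega)$ is called $\omega$, leaving the remaining elements of $\Sigma$ untouched; this is a well-defined bijection of $\Sigma$ precisely because distinct $\omega$ are sent to distinct $\bar e(\omega)$. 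Under this new labeling one has $\Omega=\bar e(\Omega)\subseteq\Sigma$ and $\e(\{\omega\})=\{\omega\}$. Since $\e$ is an order homomorphism it is element-wise (Remark~\ref{lemma:homomorphisms}), so $\e(W)=\bigcup_{\omega\in W}\e(\{\omega\})=\bigcup_{\omega\in W}\{\omega\}=W$ for every $W\in S^\Omega$; that is, $\e=\id$.

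The argument is essentially bookkeeping, so I do not expect a genuine obstacle. The one point that needs care is phrasing the relabeling as an honest bijection of $\Sigma$ (and not merely of its image $\bar e(\Omega)$), which is exactly what injectivity of $\bar e$ secures; the rest follows from the element-wise character of homomorphisms already established in the excerpt.
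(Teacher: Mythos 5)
Your proof is correct and follows essentially the same route as the paper's: extract the state-level map from $|\e(\{\omega\})|=1$, invoke injectivity of order embeddings (Remark~\ref{lemma:injective_embeddings}), and relabel. You merely spell out the final step---that $\e=\id$ on all specifications via the element-wise property---which the paper's two-sentence proof leaves implicit.
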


\begin{proof}
For any $\omega \in \Omega$, we have $|\e(\{\omega\}) | = 1 \implies \e(\{\omega\}) \in \Sigma$. Furthermore, $\e$ is an embedding, and therefore injective (see Remark~\ref{lemma:injective_embeddings}). This means that $\Omega$ gets mapped to an isomorphic subset of $\Sigma$.
\end{proof}

\begin{lemma}
  For extensive embeddings, $\emb(S^\Omega)$ is a principal ideal of $S^\Sigma$.
\end{lemma}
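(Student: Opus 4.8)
The plan is to combine the identifications furnished by the two preceding lemmas with a direct verification of the lower-set property. Recall that a principal ideal is precisely a directed lower set that possesses a maximum element, so three things must be checked: that $\e(S^\Omega)$ is directed, that it has a maximum, and that it is a lower set. The first two are already in hand: the general lemma on embeddings above shows that $\e(S^\Omega)$ is directed and has maximum element $\e(\Omega)$, and this applies to extensive embeddings in particular.

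It therefore remains only to show that $\e(S^\Omega)$ is a lower set in $(S^\Sigma, \subseteq)$, that is, that every non-empty $Z \in S^\Sigma$ with $Z \subseteq \e(V)$ for some $V \in S^\Omega$ itself lies in $\e(S^\Omega)$. Here I would invoke the preceding lemma, which lets us relabel the elements of $\Sigma$ so that $\e$ becomes the identity and $\Omega \subseteq \Sigma$. Under this identification $\e$ is a homomorphism fixing each singleton, so $\e(V) = \bigcup_{\omega \in V} \e(\{\omega\}) = \bigcup_{\omega \in V}\{\omega\} = V$ for every $V \in S^\Omega$; thus $\e(S^\Omega)$ is nothing but the collection of non-empty subsets of $\Omega$ sitting inside $S^\Sigma$.

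With this description the lower-set property is immediate: if $Z \subseteq \e(V) = V \subseteq \Omega$ and $Z \neq \emptyset$, then $Z$ is a non-empty subset of $\Omega$, hence $Z = \e(Z) \in \e(S^\Omega)$. Combining this with directedness and the maximum $\e(\Omega)$ yields that $\e(S^\Omega)$ is a principal ideal, as claimed.

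The step I expect to be the crux is the lower-set verification, and its entire content is the extensivity hypothesis $|\e(\{\omega\})| = 1$: it is exactly this that guarantees $\e(V)$ contains no ``new'' states beyond images of elements of $\Omega$, so that any subset of $\e(V)$ is again the image of a specification. For a general (e.g.\ intensive) embedding this fails, since $\e(V)$ may contain finer states of $\Sigma$ that are not themselves singleton-images, which is precisely why $\e(S^\Omega)$ need not be a lower set in that case.
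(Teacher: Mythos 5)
Your proof is correct and follows essentially the same route as the paper's: both reduce the claim to the lower-set property (directedness and the maximum $\e(\Omega)$ being already established) and derive it from the extensivity hypothesis $|\e(\{\omega\})|=1$, which forces every non-empty subset of $\e(V)=\bigcup_{\omega\in V}\e(\{\omega\})$ to equal $\e(W)$ for some $W\subseteq V$. Your relabelling of $\Sigma$ so that $\e$ acts as the identity on $\Omega\subseteq\Sigma$ is just a cleaner packaging of the same decomposition into singleton images that the paper writes out explicitly.
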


\begin{proof}
We already know that $\emb(S^\Omega)$ is directed and has a maximum element. It remains to show that it is a lower set. 
Let $V \in S^\Omega$ and $X \in S^\Sigma$. We have to show that if $X \subseteq \e(V)$, then there exists $W \in S^\Omega$ such that $X= \e(W)$. Because embeddings are homomorphisms, we have $\e(V)= \bigcup_{\omega \in V} \e(\omega)$. Now we use the fact that $|\e(\omega)|=1$, 
\begin{align*}
  X \subseteq \bigcup_{\omega \in V} \e(\omega) 
  \quad \text{and} \quad
   |\e(\omega)|=1, \forall \omega \in V
   \quad \implies \quad
   \exists W \subseteq V:
  X &\subseteq \bigcup_{\omega \in W} \e(\omega)
   = \e(W).
\end{align*}

\end{proof}

\begin{lemma}
For intensive embeddings, $|\h(\{\sigma\})|=1$, for all $\sigma \in \Sigma$. 
\end{lemma}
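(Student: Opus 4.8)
The plan is to exploit three facts that accompany any intensive embedding $(\e,\h)$, viewed as a Galois insertion of $S^\Omega$ in $S^\Sigma$: that $\e$ is a semilattice homomorphism (hence distributes over arbitrary unions), that the kernel operator $\Lump=\e\circ\h$ is inflating, and that the adjunction reads $Z\subseteq\e(V)\iff\h(Z)\subseteq V$ for $V\in S^\Omega$ and $Z\in S^\Sigma$. No separate appeal to the surjectivity of $\h$ is required. Throughout, recall that every specification is non-empty by definition, so in particular $\h(\{\sigma\})\neq\emptyset$.

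First I would fix $\sigma\in\Sigma$ and locate a single state of $\Omega$ that ``witnesses'' it. Since $\Lump=\e\circ\h$ is inflating, $\{\sigma\}\subseteq\e(\h(\{\sigma\}))$. Because $\e$ is a homomorphism and $\h(\{\sigma\})=\bigcup_{\omega\in\h(\{\sigma\})}\{\omega\}$, the image splits as
$$\e(\h(\{\sigma\}))=\bigcup_{\omega\in\h(\{\sigma\})}\e(\{\omega\}).$$
Hence there exists at least one $\omega_0\in\h(\{\sigma\})$ with $\sigma\in\e(\{\omega_0\})$, that is $\{\sigma\}\subseteq\e(\{\omega_0\})$.

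The second step pins $\h(\{\sigma\})$ down using the adjunction applied to the singletons $Z=\{\sigma\}$ and $V=\{\omega_0\}$: from $\{\sigma\}\subseteq\e(\{\omega_0\})$ we obtain $\h(\{\sigma\})\subseteq\{\omega_0\}$, and combined with $\h(\{\sigma\})\neq\emptyset$ this forces $\h(\{\sigma\})=\{\omega_0\}$, so $|\h(\{\sigma\})|=1$, as claimed. Along the way one actually obtains the sharper equivalence $\sigma\in\e(\{\omega\})\iff\h(\{\sigma\})=\{\omega\}$, which could be recorded as a by-product.

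The only genuinely delicate point is bookkeeping about the direction of the Galois adjunction and, correspondingly, which composite is inflating. One must use $\e\circ\h\supseteq\id$ (the lumping/kernel operator is inflating, as in the Galois-connection remark) together with the adjunction in the form $Z\subseteq\e(V)\iff\h(Z)\subseteq V$. Once these conventions are fixed the argument is immediate; the main obstacle is simply ensuring the inflating property is invoked for $\e\circ\h$ rather than for $\h\circ\e$, which is the identity here.
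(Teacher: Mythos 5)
Your proof is correct and rests on the same three ingredients as the paper's own argument: the inflating property $\{\sigma\}\subseteq\e(\h(\{\sigma\}))$, the fact that $\e$ distributes over the union of singletons so that $\sigma$ lands in $\e(\{\omega_0\})$ for a single $\omega_0\in\h(\{\sigma\})$, and $\h\circ\e=\id$ (equivalently the adjunction) to pull back and get $\h(\{\sigma\})\subseteq\{\omega_0\}$. The paper phrases this as a proof by contradiction (splitting $\h(\{\sigma\})$ into a singleton and a remainder), whereas you run it forward directly; note also that your orientation of the adjunction, $Z\subseteq\e(V)\iff\h(Z)\subseteq V$, is the one consistent with the appendix definition and with the inflating lumping, rather than the (apparently misstated) direction in Definition~\ref{def:specification_embeddings}'s surrounding text.
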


\begin{proof}
Suppose that $|\h(\{\sigma\})|>1$. This implies that we can write $h(\{\sigma\})=\{\widetilde a\} \cup \widetilde X$ for some $\widetilde a\in\Omega, \widetilde X \in S^\Omega, \widetilde a \notin \widetilde X$. For a Galois insertion, $\e(\h(\{\sigma\}))\supseteq \{\sigma\}$. But since $\e$ is a homomorphism, we know that $\e(\{\widetilde a\}\cup\widetilde X)=\e(\{\widetilde a\})\cup \e(\widetilde X)$, and so this implies that either $\e(\{\widetilde a\})\supseteq \{\sigma\}$ or $\e(\widetilde X)\supseteq \{\sigma\}$. But as $\h$ is also a homomorphism and in particular order-preserving, and $\h\circ\e$ is the identity on $S^\Omega$, this implies either $\{\widetilde a\}=\h(\e(\{\widetilde a\}))\supseteq \h(\{\sigma\})$ or $\widetilde X=\h(\e(\widetilde X))\supseteq \h(\{\sigma\})$. But this contradicts our assumption that $\h(\{\sigma\})=\{\widetilde a\}\cup \widetilde X$ with $\widetilde a\notin \widetilde X$.
\end{proof}

\begin{remark}
If a specification embedding $\e: S^\Omega \to S^\Sigma$ is both intensive and extensive, then it is trivial, that is $\Omega $ is isomorphic to $ \Sigma$. 
\end{remark}

\newpage
\section{Locality}
\label{appendix:subsystems}

\subsection{General definitions and remarks}

The following definition will help us create a structure of embeddings corresponding to local descriptions, based only on commutativity relations between transformations. That is, we go from a subsystem structure at the level of transformations to one at the level of specifications (Theorem \ref{thm:transformations_independent_agents}).

We start with any submonoid of transformations $A$, and lump over them. This will give us reduced descriptions that are invariant under those transformations. For example, If our theory is quantum mechanics on three qubits and $A$ consists of all quantum maps acting on the first qubit, then $\Lump_{\cancel A}$ will generate local descriptions of the other two qubits.

\begin{definition}[Embedding generated by a monoid]  \label{def:lumping_transformations}

Let $(S^\Omega,\cT)$ be a resource theory and $A\subseteq\cT$ a submonoid of transformations. 
We say that the monoid $A$ \emph{generates} the lumping $\Lump_{\cancel A}$, as in Lemma 
\ref{lemma:lumping_generated_homomorphisms},
\begin{align*}
\Lump_{\cancel A}: S^\Omega &\to S^\Omega \\
V&\mapsto \bigcup_{f_A\in A}  \{W \in S^\Omega: f_A(V) = f_A(W)  \}.
\end{align*}
 Similarly, we say that an intensive embedding $(\e_{\cancel A}, \h_{\cancel A})$ that arises from $\Lump_{\cancel A}$ (as in Prop.\ \ref{thm:induced_Galois}) is \emph{generated} by the monoid $A$.
\end{definition}

\subsection{Proofs of claims from the main text}

\subsubsection{Independence of local descriptions}

\propFreeComposition*

In order to prove this proposition,  we need the following lemma.

\begin{lemma}
\label{lemma:ignorance_composability}
Let $S^A$ be a specification space intensively embedded in $S^\Omega$, and $V\in S^\Omega$ be a specification. The following are equivalent:
\begin{enumerate}
\item $\Lump_A(V)=\Omega$,
\item For all specifications $\local{W_A}$ local in $A$, it holds that $V\cap \local{W_A}\neq \emptyset$ and $\Lump_A(V\cap \local{W_A})=\local{W_A}$.
\end{enumerate}
\end{lemma}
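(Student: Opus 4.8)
The plan is to trivialize the statement by first making the Galois insertion concrete at the level of states, turning both conditions into elementary assertions about a surjection and its fibres, and then to treat the two implications separately: the reverse implication is essentially a one-line instantiation, while the forward implication carries all the content.

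First I would record the state-level structure of an intensive embedding. Since $(\e_A,\h_A)$ is intensive, $\h_A$ is a surjective semilattice homomorphism and $|\h_A(\{\omega\})|=1$ for every $\omega\in\Omega$ (the appendix lemmas on intensive embeddings). This yields a surjection $a\colon\Omega\to\Omega_A$ onto the state space $\Omega_A$ of $S^A$, defined by $\h_A(\{\omega\})=\{a(\omega)\}$, and because $\h_A$ preserves unions we get $\h_A(Z)=a(Z)$ for all $Z$. Next I would identify $\Lump_A$ with fibre-saturation: by Lemma~\ref{lemma:properties_induced_relation} the value $\Lump_A(\{\omega\})$ is the union of the whole $\sim_A$-class of $\{\omega\}$, and since $\{\nu\}\sim_A\{\omega\}\iff a(\nu)=a(\omega)$ (using injectivity of $\e_A$), this class is exactly the fibre $a^{-1}(a(\omega))$; as $\Lump_A$ is a join-homomorphism it follows that $\Lump_A(Z)=a^{-1}(a(Z))$. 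Consequently the local specifications are precisely the fibre-saturated sets $\local{W_A}=a^{-1}(W_A)$ with $W_A=\h_A(\local{W_A})$, and condition~(1), $\Lump_A(V)=a^{-1}(a(V))=\Omega$, is equivalent (by surjectivity of $a$) to $a(V)=\Omega_A$, i.e.\ to $V$ being compatible with the embedding.

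For the implication $(2)\Rightarrow(1)$ I would simply instantiate the hypothesis at the top local specification. Since $\Lump_A$ is inflating and $\Omega$ is the top element, $\Lump_A(\Omega)=\Omega$, so $\Omega$ is itself local (it is $\local{W_A}$ for $W_A=\Omega_A$). Applying~(2) to this choice gives $\Lump_A(V\cap\Omega)=\Lump_A(V)=\Omega$, which is exactly~(1); this direction is therefore immediate.

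The substance lies in $(1)\Rightarrow(2)$, which I would establish by two fibre computations under the assumption $a(V)=\Omega_A$. Nonemptiness of $V\cap\local{W_A}$ follows by choosing any $b\in W_A$: since $a(V)=\Omega_A$ there is $\omega\in V$ with $a(\omega)=b$, whence $\omega\in V\cap a^{-1}(W_A)$. Then I would show $a(V\cap\local{W_A})=W_A$: the inclusion $\subseteq$ is immediate since every element of $a^{-1}(W_A)$ maps into $W_A$, while $\supseteq$ again uses $a(V)=\Omega_A$ to produce, for each $b\in W_A$, a preimage $\omega\in V$ with $a(\omega)=b$. Applying $a^{-1}(\cdot)$ gives $\Lump_A(V\cap\local{W_A})=a^{-1}(W_A)=\local{W_A}$. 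The only delicate point is the reverse inclusion $\Lump_A(V\cap\local{W_A})\supseteq\local{W_A}$: purely formally one only gets $\h_A(V\cap\local{W_A})\subseteq\h_A(V)\cap W_A=W_A$ from Lemma~\ref{lemma:intersection_homomorphisms}, and the opposite inclusion genuinely requires the surjectivity $a(V)=\Omega_A$ rather than any abstract property of the lumping. This is the main obstacle, and it is precisely what the reduction to the surjection $a$ is designed to overcome.
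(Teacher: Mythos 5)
Your proof is correct and follows essentially the same route as the paper's: the reverse implication by instantiating (2) at the top local specification $\Omega$, and the forward implication by using $\h_A(V)=\Omega_A$ to produce, for each $\omega_A\in W_A$, a preimage $\nu\in V$ with $\h_A(\{\nu\})=\{\omega_A\}$, which yields both nonemptiness and the nontrivial inclusion $\local{W_A}\subseteq\Lump_A(V\cap\local{W_A})$. Your repackaging of $\h_A$ as a state-level surjection $a$ with $\Lump_A=a^{-1}\circ a$ just makes explicit the singleton-image property of intensive embeddings that the paper's proof uses implicitly; the substance is identical.
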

\begin{proof}
It is easy to show that $(2)\implies (1)$ by taking $\local{W_A}=\Omega$. For the other direction, we rewrite  $(1)$ as
\begin{align*}
\e_A \circ \h_A (V) &=\Omega 
\implies \underbrace{\h_A \circ\e_A}_{\flag{\text{identity}}} \circ\ \h_A (V)=\h_A (\Omega)  
\iff \h_A (V)=\Omega_A 
\iff \bigcup_{\nu \in V} \h_A (\{\nu \}) = \Omega_A \supseteq W_A.
\end{align*}
This implies that for any $\omega_A\in W_A$ there exists $\nu \in V$ such that $\h_A (\{ \nu\}) = \{\omega_A\}$. Now, $$\local{W_A}=\{\omega\in \Omega: \h_A(\{\omega\})\subseteq W_A\}
\ni \nu ,
$$
which shows that $V \cap \local{W_A}\neq \emptyset$, and also that
\begin{align*}
W_A \subseteq \h_A (V \cap \local{W_A}) \implies \e_A(W_A) \subseteq\e_A \circ  \h_A (V \cap \local{W_A})  \iff \local{W_A} \subseteq \Lump_A  (V \cap \local{W_A}) .
\end{align*}
However, trivially also $\Lump_A(\local{W_A}\cap V)\subseteq\local{W_A}$ and so $\Lump_A(\local{W_A}\cap V)=\local{W_A}$.
\end{proof}

We may now proceed.

\begin{proof}[Proof of Proposition \ref{prop:free_composition}]
The equivalence between $1$ and $3$ follows from Lemma \ref{lemma:ignorance_composability}.
Next we prove $2 \implies 3$. That the intersection is non-empty follows from $2$ together with the fact that the lumping is inflating:
$Z\subseteq \local{Z_A}\cap \local{Z_B} = \local{V_A} \cap \local{W_B}.$
The second property also holds since  $\Lump_{A}:=\e_A\circ\h_A$ is idempotent and inflating, so that $$\Lump_{A}(\local{V_A} \cap \local{W_B})=
\Lump_{A}(\Lump_{A}(Z)\cap \Lump_{B}(Z))
\supseteq \Lump_{A} (\Lump_{A}(Z) \cap Z)
=\Lump_{A}(Z) 
= \local{V_A},$$
and the other direction ($\subseteq$) holds trivially. Applying $\h_A$ on both sides yields the desired equality. 
To show $3 \implies 2$, set $Z:= \local{V_A} \cap \local{W_B} \neq \emptyset$. 
Now trivially
\begin{align*}
\h_{A} (Z) 
&= V_A,\\
\h_{B} (Z) 
&= W_B.
\end{align*}
\end{proof}

\propIndependentProcessing*

\begin{proof}
Since $\e(V)$ is compatible with $W$, we can write $W=(W\cap \e(V)) \cup Z$ for some $Z\in S^\Omega$ such that $\h(Z)\cap V=\emptyset$. This is because if it were the case that $\h(Z)\cap V\neq \emptyset$, there would be an element $z\in Z$ such that $\h(\{z\})\subseteq V$, and so $z\in\e(V)$ by definition of $\e(V)$ and we can take it out of $Z$. Now
\begin{align*}
\e(V)\cap f(W)&=\e(V)\cap f((W\cap \e(V))\cup Z)\\
&=(\e(V)\cap f(W\cap \e(V)))\cup (\e(V)\cap f(Z))\\
\flag{\h\circ f(Z) \subseteq \h(Z)}&=(\e(V)\cap f(W\cap \e(V)))\cup \emptyset\\
\flag{\h(W\cap \e(V))\subseteq V}&=f(W\cap \e(V))
\end{align*}
where we have again used that $\e(V)=\bigcup\{\omega\in \Omega: \h(\{\omega\})\subseteq V\}$.
\end{proof}

\thmIndependentAgents*

\begin{proof}
With Proposition \ref{prop:independent_processing}, using the independence of the embeddings $\e_A$ and $\e_B$ from the transformations in $B$ and $A$ respectively,

\begin{align*}
f_A\circ g_B(V)&\subseteq f_A\circ g_B(\e_A\circ\h_A(V)\cap\e_B\circ\h_B(V))\\
&=f_A(\e_A\circ\h_A(V)\cap g_B\circ\e_B\circ\h_B(V))\\
&\subseteq f_A\circ \e_A\circ\h_A(V) \cap f_A\circ g_B\circ\e_B\circ\h_B(V)\\
&\subseteq f_A\circ \e_A\circ\h_A(V) \cap \e_B\circ\h_B\circ f_A\circ g_B\circ\e_B\circ\h_B(V)\\
&\subseteq \e_A \circ  \widetilde f_A \circ  \h_A (V) \cap \e_B \circ  \widetilde g_B \circ  \h_B (V)
\end{align*}
\end{proof}

\subsubsection{Modularity of transformations}

\begin{proposition}[Complete subsets of monoids are submonoids]
\label{prop:complete_submonoid}
Let $(S, \cdot)$ be a monoid and let $A$ be a complete subset of $S$. Then $(A, \cdot)$ is a submonoid.
\end{proposition}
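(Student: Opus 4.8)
The plan is to reduce everything to a single general fact: the commutant of an \emph{arbitrary} subset is always a submonoid. Since $A$ is complete we have $A = \bic{A} = \com{\com A}$, so it suffices to show that $\com X$ is a submonoid of $S$ for every $X \subseteq S$ and then apply this with $X = \com A$. This isolates the only real content of the statement and makes the use of the completeness hypothesis entirely transparent.

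First I would check that the identity lies in $\com X$. Because $1 \cdot s = s = s \cdot 1$ for every $s \in S$, the identity commutes with every element of $S$, and in particular with every element of $X$, so $1 \in \com X$. Next I would verify closure under the monoid operation: suppose $a, b \in \com X$ and let $x \in X$. Using associativity (available since the monoid $S$ is in particular a semigroup) together with the fact that both $a$ and $b$ commute with $x$, I compute $(a \cdot b)\cdot x = a \cdot (b \cdot x) = a \cdot (x \cdot b) = (a \cdot x)\cdot b = (x \cdot a) \cdot b = x \cdot (a \cdot b)$. Hence $a \cdot b$ commutes with every $x \in X$, i.e. $a \cdot b \in \com X$. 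Combined with the previous step, this shows that $(\com X, \cdot)$ is a submonoid.

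Finally, I apply this observation to $X = \com A$: the bicommutant $\bic A = \com{\com A}$ is a submonoid of $S$. Since $A$ is complete, $A = \bic A$ by Definition of completeness, and therefore $(A, \cdot)$ is a submonoid, as claimed.

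I do not expect any genuine obstacle here; the argument is a routine algebraic verification. The only points requiring minor care are the correct bookkeeping of the commutation chain in the closure step and the (trivial) appeal to associativity, while the completeness assumption enters only in the final line to identify $A$ with its bicommutant. One could alternatively inline the computation directly for $a,b \in \bic A$ and $c \in \com A$, but factoring it through the general submonoid property of commutants is cleaner and reusable.
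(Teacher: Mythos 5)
Your proposal is correct and is essentially the paper's own argument: the same commutation chain via associativity shows closure, and the identity element is handled identically, with the completeness hypothesis entering only through $A=\bic A$. The only difference is cosmetic---you factor the computation through the general observation that $\com X$ is a submonoid for any $X\subseteq S$ and then specialize to $X=\com A$, whereas the paper inlines the identical calculation directly for $a,\widetilde a\in A$ and $b\in\com A$.
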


\begin{proof}
We need to show that $(A, \cdot)$ is associative, that $A$ contains the identity $1$, and that it is closed under the operation~$\cdot$.
Associativity is inherited from $(S, \cdot)$, and $1$ commutes with all elements (in particular with all those in $\com A$), so $1 \in \bic A = A$. 
Now let $a,\widetilde a \in A$. Then, for any $b \in \com A$ we have
\begin{align*}
b \cdot ( a \cdot \widetilde a  )
&=( b \cdot  a ) \cdot \widetilde a  \\
&= ( a \cdot  b ) \cdot  \widetilde a  \\
&= a \cdot ( b \cdot \widetilde a)  \\
&= a \cdot ( \widetilde a  \cdot b )  \\
&= (a \cdot  \widetilde a ) \cdot b   ,
\end{align*}
where we used the associativity of $S$, and the fact that $b$ commutes with both $a$ and $\widetilde a$. This implies that $a\cdot \widetilde a \in \bic A = A$.
\end{proof}

\thmTransformationsLattice*

\begin{proof}
This proof relies on properties of the commutant, expressed in Lemma~\ref{lemma:commutant_properties}.

Let $A, B, C \in \Sys(S)$. We have to show that $\Sys(S)$ is closed under the two operations, that they are associative, commutative, obey the identities, and are linked by the absorption law. 
\begin{enumerate}
\item $A\vee B \in  \Sys(S)$ and $A\wedge B \in   \Sys(S)$ :

$A \vee B$ is complete by definition, therefore forms a submonoid. The same applies to $A \wedge B$, by property 7 of Lemma~\ref{lemma:commutant_properties}. Note that this set is never empty because both submonoids contain the identity $1$. 

\item $A \vee (B \vee C) = (A \vee B) \vee C $:

We have
\begin{align*}
A \vee (B \vee C) 
&= \bic{A \cup \bic{B \cup C}}\\
 &= \com{\com A \cap \com{\bic{B \cup C}}}  \quad \flag 1  \\
 &= \com{\com A \cap \com{B \cup C}} \quad \flag 2 \\
&= \com{\com A \cap \left( \com B \cap \com C\right)} \quad \flag 1 \\
&= \com{\com A \cap \com B \cap \com C},
\end{align*}
where we used the properties of the commutant from Lemma~\ref{lemma:commutant_properties}:  \flag{1} $\com{F \cup G} = \com F \cap \com G$ and \flag{2} $\com F$ is complete; the last step comes from associativity of $\cap$. The same argument shows that $(A \vee B) \vee C =  \com{\com A \cap \com B \cap \com C}$.

\item $A \wedge (B \wedge C) = (A \wedge B) \wedge C$:

Follows directly from associativity of $\cap$. 

\item $A\vee B = B \vee A$ and  $A\wedge B = B \wedge A$ ;

Follow from commutativity of $\cup$ and $\cap$ respectively. 

\item $A\vee \com \cT = A$:

We have $A\vee \cT = \bic{A \cup \cT } = \bic A = A$, because every complete subsystem contains $\com \cT$ (Lemma \ref{lemma:subsystem_contains_centre}).

\item $A\wedge S = A$:

We have $A\wedge S = A\cap S =A $ .

\item $A \wedge (A \vee B ) = A  $:

We start from
$ A \wedge (A \vee B ) 
= A \cap \bic{A \cup B} $.
Then, on the one hand we have $ A \cap \bic{A \cup B} \subseteq A$, and on the other 
$A \cap \bic{A \cup B} \supseteq A \cap (A \cup B) = A$. 

\item $A\vee (A \wedge B) = A$:

We have 
$
A \vee (A \wedge B ) 
= \bic{A \cup (A\cap B)} 
= \bic A =A.
$

\end{enumerate}

\end{proof}

\subsubsection{Independent agents from independent transformations}

\thmTransformationsIndependence*

\begin{proof}
We take $\Lump_A := \Lump_{\cancel {\com A}} $ and $\Lump_B := \Lump_{\cancel {\com B}} $, as in Definition \ref{def:lumping_transformations}.  Then, by Proposition \ref{prop:generated_embeddings_independence} below,  $\Lump_A$ is independent of $B$ (because $B \subseteq \com A$), and vice-versa. 
\end{proof}

\begin{proposition}[Properties of embeddings generated by transformations]\label{prop:generated_embeddings_independence}
Let $(S^\Omega,\cT)$ be a resource theory and $A\subseteq\cT$ a submonoid of transformations.  Then $A$ generates an intensive embedding $(\e_{\cancel A}, \h_{\cancel A})$ that is independent of $A$, that is
$$f_A \circ \Lump_{\cancel A}(V)\subseteq\Lump_{\cancel A}(V),$$ 
for all $f_A \in A$ and $V\in S^\Omega$.
Furthermore, the embedding preserves $A$ and any functions that commute with $A$, that is 
$$ \Lump_{\cancel A}\circ f (V) =  \Lump_{\cancel A} \circ f \circ  \Lump_{\cancel A} (V),$$
for all $f \in A \cup \com A$ and $V\in S^\Omega$.
\end{proposition}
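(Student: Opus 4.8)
The plan is to reduce both claims to statements about the generated lumping $\Lump_{\cancel A}$, exploiting that by Lemma~\ref{lemma:lumping_generated_homomorphisms} it is an idempotent, inflating endomorphism built as $\Lump_{\cancel A}(V) = \bigcup_{g\in A}\Lump_g(V)$, and that by Proposition~\ref{thm:induced_Galois} the induced pair $(\e_{\cancel A},\h_{\cancel A})$ is a Galois insertion with $\Lump_{\cancel A}=\e_{\cancel A}\circ\h_{\cancel A}$. First I would record the equivalence between the independence inclusion $f_A\circ\Lump_{\cancel A}(V)\subseteq\Lump_{\cancel A}(V)$ and the form $\h_{\cancel A}\circ f_A\circ\e_{\cancel A}(V_A)\subseteq V_A$ of Definition~\ref{def:independent_agents}: one direction applies $\e_{\cancel A}$ and uses $\h_{\cancel A}\circ\e_{\cancel A}=\id$ together with $\Lump_{\cancel A}\circ\e_{\cancel A}=\e_{\cancel A}$, the other applies $\h_{\cancel A}$ and uses order-preservation. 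This lets me work entirely at the level of the lumping.

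For the independence claim (and the $f\in A$ half of the preservation claim), the key is that $A$ is a monoid, hence closed under composition. Applying the homomorphism $f_A$ to $\Lump_{\cancel A}(V)=\bigcup_{g\in A}\Lump_g(V)$, I would show each generating piece is reabsorbed: because $f_A\circ g\in A$ for every $g\in A$, the image $f_A(\Lump_g(V))$ remains controlled by a composite that still belongs to $A$, so that $f_A(\Lump_{\cancel A}(V))\subseteq\Lump_{\cancel A}(V)$. Combined with idempotence this also yields $\Lump_{\cancel A}\circ f_A=\Lump_{\cancel A}\circ f_A\circ\Lump_{\cancel A}$ for $f_A\in A$: the inclusion $\subseteq$ is immediate since $\Lump_{\cancel A}$ is inflating and order-preserving, while $\supseteq$ follows by pushing $f_A(\Lump_{\cancel A}(V))\subseteq\Lump_{\cancel A}(V)$ through a second application of $\Lump_{\cancel A}$.

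For the preservation claim with $f\in\com A$ the argument is cleaner and I expect it to go through directly. The $\subseteq$ inclusion again holds for free from $\Lump_{\cancel A}$ being inflating, so it suffices to prove the intermediate statement $f(\Lump_{\cancel A}(V))\subseteq\Lump_{\cancel A}(f(V))$ and then apply $\Lump_{\cancel A}$ with idempotence. To prove it I would take $\omega$ in some $\Lump_g(V)$, so $g(\{\omega\})\subseteq g(V)$ with $g\in A$, pick any $\omega'\in f(\{\omega\})$, and use that $g$ commutes with $f$: then $g(\{\omega'\})\subseteq(g\circ f)(\{\omega\})=(f\circ g)(\{\omega\})\subseteq f(g(V))=(g\circ f)(V)=g(f(V))$, whence $\omega'\in\Lump_g(f(V))\subseteq\Lump_{\cancel A}(f(V))$. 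Thus $f$ slides past the lumping using commutation alone, and every witness $g$ is reused verbatim.

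The main obstacle I anticipate is the $f\in A$ case, i.e.\ showing the local descriptions are genuinely invariant under $A$'s own transformations rather than merely permuted. The delicate point is that $\Lump_{\cancel A}(V)$ is a union over the whole monoid, so controlling how $f_A$ interacts with this union requires careful use of closure under composition (and, in the degenerate case of invertible actions, the full richness of $A$ as a commutant, $A=\bic A$, to supply the absorbing composites witnessing membership). By contrast the $\com A$ part is comparatively routine. I would therefore isolate the invariance $f_A\circ\Lump_{\cancel A}\subseteq\Lump_{\cancel A}$ as a standalone lemma, derive the $f\in A$ preservation from it via idempotence, and treat the commuting case separately by the sliding argument above.
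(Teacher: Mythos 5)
Your first and third paragraphs are fine. The reduction between the two formulations of independence via the Galois insertion is correct, and your $\com A$ argument is essentially the paper's own: slide $f$ past each generator $g$ using $f\circ g=g\circ f$ to get $f(\Lump_{\cancel A}(V))\subseteq\Lump_{\cancel A}(f(V))$, then close with idempotence and inflation. The only detail you elide there is the witness set certifying $\omega'\in\Lump_g(f(V))$ from $g(\{\omega'\})\subseteq g(f(V))$; take $W'=\{\omega'\}\cup f(V)$, for which $g(W')=g(f(V))$.

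The gap is exactly where you anticipate it, and your proposed mechanism does not close it. For $f_A\in A$ and a generator $g\in A$ with $g(W)=g(V)$, you need some $h\in A$ and some $W'\supseteq f_A(W)$ with $h(W')=h(V)$; closure under composition supplies no such $h$, since e.g.\ $h=g\circ f_A$ gives $h(f_A(W))=g(f_A(f_A(W)))$, which bears no relation to $h(V)=g(f_A(V))$. Worse, the target inclusion can fail outright under the stated hypotheses: take $\Omega=\{1,2\}$, let $\sigma$ be the swap of the two states, and $A=\{\id,\sigma\}$, a submonoid of the monoid of all endomorphisms. Every $g\in A$ is injective on specifications, so each $\Lump_g$ is the identity, hence $\Lump_{\cancel A}=\id$, yet $\sigma(\{1\})=\{2\}\nsubseteq\{1\}$. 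So neither closure under composition nor completeness ($A=\bic A$) can supply the "absorbing composites" you hope for; what is actually needed is that the classes $\{W:g(W)=g(V)\}$ be stable under the action of $A$, which holds in the intended examples only because $A$ contains enough resetting maps (replace-by-fixed-state channels) whose classes swallow everything $f_A$ can do. For comparison, the paper's own proof of this half is the single assertion that it "follows trivially from the definition", and its proof that the embedding preserves $A$ rests on the unproved identity $\h_{\cancel A}\circ f_A=\h_{\cancel A}$, which fails in the same two-state example. You have correctly located the soft spot, but your sketch does not repair it, and no argument from the submonoid structure alone can.
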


\begin{proof}
The first statement  follows trivially from the definition of the embedding through the equivalence class induced by $\Lump_{\cancel A}$ generated by $A$, which satisfies $f_A(\Lump_{\cancel A}(V))\subseteq\Lump_{\cancel A}(V)$ for all $f_A\in A$ and $V\in S^\Omega$. Thus the embedding is independent of $A$.

To prove the second statement, we have $\h_{\cancel A}\circ f_A(V)=\h{\cancel A}(V)$ for all $f_A\in A$ and $V\in S^\Omega$. Now we may plug in $\Lump_{\cancel A}$ on the right to obtain 
$\h_{\cancel A}\circ\Lump_{\cancel A}(V)=\h_{\cancel A}\circ f_A\circ\Lump_{\cancel A}(V)$, which means that the embedding preserves $A$.
To see that it also preserves functions that commute with $A$, let $f_B \in \com A$. We have
\begin{align*}
\Lump_{\cancel A} \circ f_B\circ \Lump_{\cancel A}(V)
&=\Lump_{\cancel A} \circ f_B \left( \bigcup_{f_A\in A}\{ W:f_A(W)=f_A(V)\} \right)\\
&=\Lump_{\cancel A} \left( \bigcup_{f_A\in A}\{ f_B(W):f_A(W)=f_A(V)\} \right)\\
&=\bigcup_{g_A\in A}\bigcup_{f_A\in A}\{W':g_A(W')=g_A\circ f_B(W),\ f_A(W)=f_A(V)\}\\
&\subseteq \bigcup_{g_A\in A}\bigcup_{f_A\in A}\{W':g_A(W')=g_A\circ f_B(W),\ f_B\circ f_A(W)=f_B\circ f_A(V)\}\\
\flag{f_B\circ f_A=f_A\circ f_B}
&=\bigcup_{g_A\in A}\bigcup_{f_A\in A}\{W':g_A(W')=g_A\circ f_B(W),\ f_A\circ f_B(W)=f_A\circ f_B(V)\}\\
\flag{g_A\circ f_A= \ell_A \in A}
&\subseteq \bigcup_{\ell_A\in A}\{W':\ell_A(W')=\ell_A\circ f_B(V)\}\\
&=\Lump_{\cancel A}\circ f_B(V),
\end{align*}
while the other direction holds because the lumping is inflating. 
\end{proof}

In fact, as the transformations on $A$ and $B$ commute, we can obtain a stronger version of the Theorem above. This is shown in Corollary~\ref{corollary:commuting_independent_agents} later.

In addition, the following proposition states that with an extra assumption on the state space and the transformations in two complete subsystems $A$ and $B$, independent transformations yield independent agents with freely composable resources.

\begin{proposition}[Freely composable resource theories from independent transformations]
Let $(S^{\Omega}, \cT)$ be a resource theory, and let $A,B\in\Sys(\cT)$ be two independent subsystems. If for all $V,W\in S^{\Omega}$ 
$$\exists X\in S^\Omega,f_{\com A}\in \com A,f_{\com B}\in \com B \text{ s.t. } f_{\com A}(V)=f_{\com A}(X), f_{\com B}(W)=f_{\com B}(X)$$
then there exist two reduced specification spaces $S^{\Omega_A}$ and $S^{\Omega_B}$ intensively embedded in $S^\Omega$ which are freely composable and such that the resource theories $(S^{\Omega_A},\tilde A)$ and $(S^{\Omega_B},\tilde B)$ represent independent agents. 
\end{proposition}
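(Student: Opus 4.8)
The plan is to reuse the construction from Theorem~\ref{thm:transformations_independent_agents} and then upgrade it with the free composability guaranteed by the new hypothesis. Concretely, I would take the two intensive embeddings $(\e_A,\h_A)$ and $(\e_B,\h_B)$ generated by the commutants $\com A$ and $\com B$, that is $\Lump_A:=\Lump_{\cancel{\com A}}$ and $\Lump_B:=\Lump_{\cancel{\com B}}$ as in Definition~\ref{def:lumping_transformations}. Since $A\perp B$ we have $B\subseteq\com A$ and $A\subseteq\com B$, so Theorem~\ref{thm:transformations_independent_agents} (through Proposition~\ref{prop:generated_embeddings_independence}) already yields that $(S^{\Omega_A},\tilde A)$ and $(S^{\Omega_B},\tilde B)$ represent independent agents. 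The only thing left is to show these embeddings are freely composable, for which it suffices to verify condition~3 of Proposition~\ref{prop:free_composition}.

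To verify it, fix $V_A\in S^{\Omega_A}$, $W_B\in S^{\Omega_B}$ and set $V:=\e_A(V_A)$, $W:=\e_B(W_B)$; these are local, $\Lump_A(V)=V$ and $\Lump_B(W)=W$, because $\h_A\circ\e_A$ and $\h_B\circ\e_B$ are identities and $\Lump_A=\e_A\circ\h_A$. I would then apply the hypothesis to $V$ and $W$ to obtain $X\in S^\Omega$, $f_{\com A}\in\com A$, $f_{\com B}\in\com B$ with $f_{\com A}(X)=f_{\com A}(V)$ and $f_{\com B}(X)=f_{\com B}(W)$. Recalling from Lemma~\ref{lemma:lumping_generated_homomorphisms} that $\Lump_A=\bigcup_{f\in\com A}\Lump_f$ with $\Lump_f$ the single-map lumping, and that $\Lump_f(Y)$ depends on $Y$ only through $f(Y)$, the equality $f_{\com A}(X)=f_{\com A}(V)$ forces $\Lump_{f_{\com A}}(X)=\Lump_{f_{\com A}}(V)$ and hence $X\subseteq\Lump_{f_{\com A}}(V)\subseteq\Lump_A(V)=V$; symmetrically $X\subseteq W$. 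Thus $X\subseteq V\cap W$, and since $X\neq\emptyset$ (the empty set is excluded from $S^\Omega$) this establishes $\e_A(V_A)\cap\e_B(W_B)\neq\emptyset$, the first half of condition~3.

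For the second half I would prove $\Lump_A(X)=V$ through the chain $V\subseteq\Lump_{f_{\com A}}(V)=\Lump_{f_{\com A}}(X)\subseteq\Lump_A(X)\subseteq\Lump_A(V)=V$, where the first inclusion is inflation of $\Lump_{f_{\com A}}$, the middle one holds because $\Lump_A$ is the union of all single-map lumpings over $\com A$, and the last uses $X\subseteq V$ with locality of $V$. Injectivity of the order embedding $\e_A$ then turns $\Lump_A(X)=V=\e_A(V_A)$ into $\h_A(X)=V_A$, and from $X\subseteq V\cap W\subseteq V$ with order-preservation of $\h_A$ I obtain $V_A=\h_A(X)\subseteq\h_A(\e_A(V_A)\cap\e_B(W_B))\subseteq\h_A(V)=V_A$, i.e.\ $\h_A(\e_A(V_A)\cap\e_B(W_B))=V_A$; the analogous computation with $\com B$ gives $\h_B(\e_A(V_A)\cap\e_B(W_B))=W_B$. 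This is precisely condition~3, so Proposition~\ref{prop:free_composition} delivers compatibility and free composability, completing the proof.

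The main obstacle is the step $\Lump_A(X)=V$. One is tempted to worry that $X$ agrees with $V$ only under the single map $f_{\com A}$, whereas $\Lump_A$ lumps over the entire monoid $\com A$, so a priori $\Lump_A(X)$ could sit strictly below $V$. The resolution, and the one delicate point to get right, is that the single-map lumping $\Lump_{f_{\com A}}$ is itself inflating, so agreement under one element of $\com A$ already pushes the lump of $X$ past $V$ from below, while the locality $\Lump_A(V)=V$ caps it from above. It is the interplay of these two facts, not any global agreement under all of $\com A$, that forces the equality.
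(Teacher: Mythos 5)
Your proof is correct. The overall architecture coincides with the paper's: you build the same embeddings from $\Lump_{\cancel{\com A}}$ and $\Lump_{\cancel{\com B}}$, get independence from Theorem~\ref{thm:transformations_independent_agents}, and reduce free composability to Proposition~\ref{prop:free_composition}. Where you diverge is in which of the equivalent conditions of that proposition you verify, and how the hypothesis is instantiated. The paper verifies compatibility of the embeddings (condition~1) by computing $\Lump_A\circ\Lump_B(V)$ as a triple union and observing that the hypothesis, applied to an arbitrary state $Y$ paired with $V$, places every $Y\in\Omega$ in that union, so $\Lump_A\circ\Lump_B(V)=\Omega$ for every $V$. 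You instead verify condition~3 directly: applying the hypothesis to the pair $(\e_A(V_A),\e_B(W_B))$ produces an explicit nonempty witness $X$, and the squeeze $X\subseteq\Lump_{f_{\com A}}(V)\subseteq\Lump_A(V)=V$ together with the reverse chain $V\subseteq\Lump_{f_{\com A}}(X)\subseteq\Lump_A(X)\subseteq V$ pins down both $X\subseteq\e_A(V_A)\cap\e_B(W_B)$ and $\h_A(X)=V_A$. Your route is slightly longer but more informative, since it exhibits a concrete subset of the intersection carrying the correct marginals rather than only establishing that the marginal of a local specification is maximally uninformative; the paper's computation is shorter and shows the stronger statement that $\Lump_A\circ\Lump_B$ saturates on \emph{all} of $S^\Omega$, not only on specifications of the form $\e_B(W_B)$. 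The one delicate step you flag — that agreement under a single $f_{\com A}\in\com A$ suffices because $\Lump_{f_{\com A}}$ is already inflating while locality of $V$ caps $\Lump_A(V)$ from above — is exactly right and is the correct resolution.
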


\begin{proof}
As for the proof of Theorem~\ref{thm:transformations_independent_agents}, we construct the embedding through 
$\Lump_A=\Lump_{\cancel{\com A}}$
and 
$\Lump_B=\Lump_{\cancel{\com B}}$. 
From the Theorem, it follows that the two resulting resource theories 
$(S^{\Omega_A},\tilde A)$ and $(S^{\Omega_B},\tilde B)$ 
are independent. It 
is left to show that
$\Lump_{A}\circ\Lump_{B}(V)=\Lump_B\circ\Lump_A(V)=\Omega$ 
for all $V\in S^\Omega$
(since by Proposition~\ref{prop:free_composition} compatibility of embeddings is equivalent to free composition of local resources).
We have
\begin{align*}
\Lump_A\circ\Lump_B(V)&=\Lump_{\cancel {\com A}}\circ\Lump_{\cancel {\com B}}(V)\\
&=\Lump_{\cancel{\com A}}\left(\bigcup_{X\in S^\Omega,f_{\com B}\in \com B}(X: f_{\com B}(X)=f_{\com B}(V))\right)\\
&=\bigcup_{Y\in S^\Omega}\bigcup_{X\in S^\Omega}\bigcup_{f_{\com A}\in \com A,f_{\com B}\in \com B}(Y: f_{\com A}(Y)=f_{\com A}(X),f_{\com B}(X)=f_{\com B}(V))\\
&=\Omega
\end{align*}
and vice versa for $\Lump_B\circ\Lump_A(V)$, and so the local descriptions are freely composable.
\end{proof}

\propPartialTrace*

\begin{proof}
In order to show that $(\e,\h)$ induces an embedding, we have to show that the pair forms a Galois insertion and that $\e,\h$ are specification homomorphisms. The latter is easy to see by definition of $\e$ and $\h$, which can be understood as element-wise functions. To see that they form a Galois insertion, we note that
$\h \circ \e$ is trivially the identity, and that by construction 
for all $V_J\in S^{\Omega_J}, W \in S^\Omega$
$$ W \subseteq \e(V_J) 
\iff \tr_{\com J} (\rho) \in V_J \ \forall \rho \in W
\iff \h(W) \subseteq V_J $$
as required.
\end{proof}

\subsection{Additional Results}

\begin{remark}
If two agents are independent, then any transformation in $A \cap B$ acts trivially in the local specification spaces $S^{\Omega_A}$ and $S^{\Omega_B}$. That is, those operations only change degrees of freedom that are out of reach for the two agents (like some common environment or microscopic parameters).
\end{remark}

\subsubsection{Independent agents with commuting transformations}

The following corollary corresponds to Theorem~\ref{thm:independent_agents}, with the additional feature that the transformations of the two agents commute. This is the case, for example, for the independent agents constructed in the proof of Theorem~\ref{thm:transformations_independent_agents}.

\begin{corollary}
\label{corollary:commuting_independent_agents}
Let $(S^{\Omega_A},\widetilde A)$ and $(S^{\Omega_B},\widetilde B)$ represent two independent agents within a global theory  $(S^\Omega,\cT)$, such that their transformations $A,B\in\cT$ commute.
Then the two agents can operate independently and the resulting knowledge can be recombined, that is, for any $f_A\in A$ and $f_B\in B$ and $V\in S^{\Omega}$,
$$f_A\circ g_B(V)\subseteq f_A\circ \Lump_A(V)\cap g_B\circ\Lump_B(V).$$
\end{corollary}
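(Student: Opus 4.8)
The plan is to establish the two inclusions $f_A\circ g_B(V)\subseteq f_A\circ\Lump_A(V)$ and $f_A\circ g_B(V)\subseteq g_B\circ\Lump_B(V)$ separately and then take their intersection. I would stress at the outset that this bound is genuinely tighter than the one in Theorem~\ref{thm:independent_agents} and so cannot be obtained merely by weakening it: since $\Lump_A=\e_A\circ\h_A$ is inflating, $f_A\circ\Lump_A(V)\subseteq\Lump_A\circ f_A\circ\Lump_A(V)=\e_A\circ\tilde f_A\circ\h_A(V)$, so the right-hand side claimed here sits \emph{inside} the bound of that theorem. A direct argument is therefore needed, exploiting both the independence of the embeddings and the commutativity hypothesis.

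The key preliminary step is to lift the independence condition of Definition~\ref{def:independent_agents}, which is phrased on the reduced space as $\h_A\circ f_B\circ\e_A(V_A)\subseteq V_A$, to the global statement $f_B\circ\Lump_A(V)\subseteq\Lump_A(V)$ for every $f_B\in B$ and $V\in S^\Omega$. To see this I would set $V_A:=\h_A(V)$; applying the order-preserving map $\e_A$ to the defining inequality gives $\Lump_A\circ f_B\circ\e_A(V_A)\subseteq\e_A(V_A)=\Lump_A(V)$, and since $\Lump_A$ is inflating, $f_B\circ\Lump_A(V)=f_B\circ\e_A(V_A)\subseteq\Lump_A\circ f_B\circ\e_A(V_A)$, which chains to the claim. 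This is exactly the global form of independence already recorded, for monoid-generated embeddings, in Proposition~\ref{prop:generated_embeddings_independence}. The symmetric statement $f_A\circ\Lump_B(V)\subseteq\Lump_B(V)$ for $f_A\in A$ follows identically from the independence of $S^{\Omega_B}$ from $A$.

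With these in hand the first inclusion is immediate and does not even use commutativity: because $\Lump_A$ is inflating and $g_B$ is a homomorphism, hence order-preserving, $g_B(V)\subseteq g_B(\Lump_A(V))\subseteq\Lump_A(V)$, where the last step is the lifted independence; applying the monotone $f_A$ yields $f_A\circ g_B(V)\subseteq f_A\circ\Lump_A(V)$. The second inclusion is where commutativity enters. Using $f_A\circ g_B=g_B\circ f_A$, I would first note $f_A(V)\subseteq f_A(\Lump_B(V))\subseteq\Lump_B(V)$ by the symmetric lifted independence, and then apply the monotone $g_B$ to get $f_A\circ g_B(V)=g_B\circ f_A(V)\subseteq g_B\circ\Lump_B(V)$. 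Intersecting the two inclusions produces the stated bound.

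The only real obstacle is the lifting step: the independence of Definition~\ref{def:independent_agents} lives on $S^{\Omega_A}$, whereas the corollary is stated with the global lumping $\Lump_A$, so the inflating property of $\Lump_A$ and the order-preservation of $\e_A,\h_A$ (all inherited from the Galois insertion $\h_A\circ\e_A=\id$) must be invoked exactly once, in the right places and in the right direction. Everything downstream is just monotonicity of homomorphisms together with the single commutation relation $f_A\circ g_B=g_B\circ f_A$.
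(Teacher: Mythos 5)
Your proof is correct. The difference from the paper's argument is one of organization and of which auxiliary facts get invoked: the paper keeps the intersection $\Lump_A(V)\cap\Lump_B(V)$ intact from the start, pushes $f_A\circ g_B$ through it, and needs Proposition~\ref{prop:independent_processing} (the equality $g_B(\e_A(V_A)\cap W)=\e_A(V_A)\cap g_B(W)$) plus the sub-distributivity of homomorphisms over $\cap$ (Lemma~\ref{lemma:intersection_homomorphisms}) before applying commutativity and independence at the end. You instead prove the two containments $f_A\circ g_B(V)\subseteq f_A\circ\Lump_A(V)$ and $f_A\circ g_B(V)\subseteq g_B\circ\Lump_B(V)$ separately and intersect only at the very end, which lets you bypass Proposition~\ref{prop:independent_processing} entirely; the only nontrivial ingredient is the ``lifted'' global form of independence $f_B\circ\Lump_A(V)\subseteq\Lump_A(V)$, which you derive cleanly from Definition~\ref{def:independent_agents} via the Galois insertion, and which the paper also uses (implicitly in its final step, and explicitly for generated embeddings in Proposition~\ref{prop:generated_embeddings_independence}). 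Your route is the more elementary one — pure monotonicity plus one commutation — and it makes transparent that commutativity is needed only for the $B$-side containment, whereas the paper's chain makes the intermediate expression $f_A(\Lump_A(V)\cap g_B(\Lump_B(V)))$ available, which is a slightly sharper object if one wanted to track correlations before splitting the intersection. Your opening observation that the corollary's bound sits strictly inside that of Theorem~\ref{thm:independent_agents} (because $f_A\circ\Lump_A\subseteq\Lump_A\circ f_A\circ\Lump_A=\e_A\circ\tilde f_A\circ\h_A$) is also correct and matches the paper's framing of this corollary as a strengthening.
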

\begin{proof}
The proof follows from the independence of the embeddings $\e_A$ and $\e_B$ from the transformations in $B$ and $A$ respectively, as well as from the fact that transformations in $A$ and $B$ commute.

\begin{align*}
f_A\circ g_B(V)&\subseteq f_A\circ g_B(\e_A\circ\h_A(V)\cap \e_B\circ\h_B(V))\\
\flag{\text{Proposition \ref{prop:independent_processing}}}&=f_A(\e_A\circ\h_A(V)\cap g_B\circ\e_B\circ\h_B(V))\\
&\subseteq f_A\circ \e_A\circ\h_A(V) \cap f_A\circ g_B\circ\e_B\circ\h_B(V)\\
&= f_A\circ \e_A\circ\h_A(V) \cap g_B\circ f_A\circ\e_B\circ\h_B(V)\\
&\subseteq f_A\circ \e_A\circ\h_A(V) \cap g_B\circ\e_B\circ\h_B(V).
\end{align*}
\end{proof}

\subsubsection{Inherited subsystems}

Sometimes, a resource theory $(S^\Omega,\cT)$ arises from particular (perhaps very limiting) constraints on an agent within a world of more general physical processes: think for example of the set of thermal operations within quantum theory. In this case, the subsystems induced by $\cT$ might not reflect the actual subsystem structure that an agent would like to assign based on his understanding of locality. In such cases, it can be natural to define subsystems at the level of the mother theory|a resource theory that allows a broader set of allowed transformations. Such a theory would allow all transformations an agent regards as physically possible, while at the same time a particular resource theory in consideration could be recovered from the mother theory by restricting to a subset of the allowed operations.

\begin{definition}[Inherited subsystems]
Let $(S^\Omega, \cT)$ be a restricted resource theory within $(S^\Omega, \cM)$. We define the \emph{inherited subsystem structure} of the former as
$$\Sys(\cT \langle \cM) = \{A \cap \cT, \ A \in \Sys(\cM)  \}. $$
\end{definition}

\begin{remark}
Although the reduced subsystems are now not necessarily complete, Lemma~\ref{lemma:monoid_intersection} ensures that they are still monoids. Similarly, independent subsystems in the mother theory induce independent subsystems in the daughter theory.
\end{remark}

In the end, it depends on the particular implementation of this framework which resource theory to consider for an operational definition of subsystems.  We introduce an example in the next section.

\subsubsection{Abelian sets and centre}

Suppose that a restricted resource theory $(S^\Omega, \cT)$ contains a subset of transformations that commute with all  others: for example the preparation of a conserved resource (such as a global Gibbs state for thermal operations). Those transformations form the Abelian \emph{centre} of $\cT$, and will be part of every bicommutant, and therefore every subsystem of transformations, just like the identity operation. 
For this reason, the centre of $\cT$, if non-trivial, can be problematic: having such global operations in local subsystems might not have the operational meaning that we would look for in a subsystem structure. 
We can overcome this issue by inheriting the subsystem structure from a higher theory with a trivial centre, such as general TPCPMs instead of thermal operations.  
In this section we define the notion of centre and show that it is a subsystem itself, contained in every other subsystem.

\begin{definition}[Commutative subsets]
Let $S$ be a semigroup. We say that $A\subseteq S$ is \emph{abelian} or \emph{commutative} if $A \subseteq \com A$.

We say that two subsets $A, B\subseteq S$ \emph{commute} if $A \subseteq \com B$ and $B \subseteq \com A$.

\end{definition}

\begin{definition}[Centre]
Let $S$ be a semigroup and $A\subseteq S$. We define the \emph{centre} of $A$ as
$$Z(A):= A \cap \com A .$$

In case $S$ is a monoid with identity $1$, we say that $A$ is \emph{centreless} if $Z(A) =\{1\}$.
\end{definition}

\begin{lemma}
Let $S$ be a semigroup and $A\subseteq S$ a  subset.  Then $Z(A)$ is  abelian.
\end{lemma}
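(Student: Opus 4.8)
The plan is to prove that $Z(A)$ is abelian directly from the definitions, i.e. to establish the inclusion $Z(A) \subseteq \com{Z(A)}$, which is exactly the statement that any two elements of the centre commute.

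First I would unfold the definition of the centre: an element $x$ lies in $Z(A) = A \cap \com A$ precisely when both $x \in A$ and $x \in \com A$. The key step is then a short element chase. Take arbitrary $x, y \in Z(A)$. Since $y \in \com A$, by the definition of the commutant $y$ commutes with every element of $A$; and since $x \in A$, in particular $y \cdot x = x \cdot y$. As $x$ and $y$ were arbitrary, every pair of elements of $Z(A)$ commutes, which is exactly $Z(A) \subseteq \com{Z(A)}$, so $Z(A)$ is abelian.

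I expect essentially no obstacle here; the single point requiring a moment of care is the asymmetry built into the commutant relation. To conclude that $x$ and $y$ commute I only need one of them to lie in $A$ and the other in $\com A$, and the two inclusions $Z(A) \subseteq A$ and $Z(A) \subseteq \com A$ supply both memberships for each element simultaneously. This is why intersecting $A$ with $\com A$ is precisely the right construction to force commutativity within the centre.

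As an alternative I would note that the same conclusion follows purely from the algebra of commutants in Lemma~\ref{lemma:commutant_properties}: applying its sixth property with the second set taken to be $\com A$ gives $\com A \cup \bic A \subseteq \com{A \cap \com A} = \com{Z(A)}$, so in particular $\com A \subseteq \com{Z(A)}$; combined with $Z(A) = A \cap \com A \subseteq \com A$ this yields $Z(A) \subseteq \com{Z(A)}$ at once. I would present the direct element-wise argument as the main proof, since it is the most transparent, and mention this second route only as a remark.
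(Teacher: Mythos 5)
Your proof is correct, and your main argument takes a genuinely more elementary route than the paper's. The paper proves $Z(A)\subseteq \com{Z(A)}$ entirely at the level of the commutant calculus of Lemma~\ref{lemma:commutant_properties}, via the chain $\com{Z(A)}=\com{A\cap\com A}\supseteq \com A\cup\bic A\supseteq \com A\cap\bic A\supseteq \com A\cap A=Z(A)$, whereas you chase elements: for $x,y\in Z(A)=A\cap\com A$, the membership $y\in\com A$ together with $x\in A$ forces $x\cdot y=y\cdot x$. Your version makes the mechanism transparent (the intersection supplies, for each pair, one element in $A$ and one in $\com A$, which is all the commutant relation needs), while the paper's version stays uniform with the surrounding lemmas that are all proved by manipulating $\com{(\cdot)}$ abstractly. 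Your remarked alternative is essentially the paper's proof, and in fact slightly cleaner: from property 6 you extract only $\com A\subseteq\com{Z(A)}$ and combine it with $Z(A)\subseteq\com A$, avoiding the paper's detour through $\com A\cup\bic A\supseteq\com A\cap\bic A\supseteq\com A\cap A$. Either presentation is fine; there is no gap.
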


\begin{proof}
We have to show that $Z(A) \subseteq \com{Z(A)}$. From Lemma~\ref{lemma:commutant_properties} we have
\begin{align*}
\com{Z(A)} 
= \com{A \cap \com A} \
\supseteq \com A \cup \bic A\
\supseteq \com A \cap \bic A\
\supseteq \com A \cap  A\ = Z(A).
\end{align*}
\end{proof}

\begin{lemma}
Let $S$ be a semigroup and $A\subseteq S$ a complete subset.  Then $Z(A)$ is complete.
\end{lemma}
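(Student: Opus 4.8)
The plan is to observe that $Z(A)$ is, by definition, the intersection of two sets that are both complete, and then to invoke the closure of complete sets under intersection already established in Lemma~\ref{lemma:commutant_properties}. Recall that $Z(A) = A \cap \com A$. By hypothesis $A$ is complete, so $A = \bic A$. For the second factor, property~4 of Lemma~\ref{lemma:commutant_properties} tells us that the commutant $\com A$ of \emph{any} subset is complete, so $\com A = \bic{\com A}$ in our situation as well.

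With both $A$ and $\com A$ complete, I would then apply property~7 of Lemma~\ref{lemma:commutant_properties} — the intersection of two complete subsets of a semigroup is complete — to the pair $A$ and $\com A$. This yields at once that $Z(A) = A \cap \com A$ is complete, which is the claim. If one prefers not to cite property~7 as a black box, the same conclusion follows by unfolding its short proof in this special case: using completeness of $A$ and of $\com A$ together with property~5 ($\com F \cap \com G = \com{F \cup G}$), one obtains
\begin{align*}
Z(A) = \bic A \cap \bic{\com A} = \com{\com A \cup \bic A},
\end{align*}
and the right-hand side is a commutant, hence complete by property~4.

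Since every step is a direct appeal to an already-established property of the commutant, there is no genuine obstacle to overcome; the result is essentially a corollary of Lemma~\ref{lemma:commutant_properties}. The only point requiring a little care is the bookkeeping of complementations — in particular using the fact that $\com A$ is its own bicommutant — so that the two factors of $A \cap \com A$ are correctly identified as complete before the intersection property is applied.
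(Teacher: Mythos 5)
Your proof is correct and takes essentially the same route as the paper: the paper directly computes $A \cap \com A = \bic A \cap \com A = \com{\com A \cup A}$ and notes that a commutant is complete, which is exactly your unfolded version (and also the content of property~7 that you cite in your primary argument). The only cosmetic difference is that you write the union as $\com A \cup \bic A$ rather than $\com A \cup A$, which agrees since $A = \bic A$ by hypothesis.
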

\begin{proof}
From Lemma~\ref{lemma:commutant_properties}, we have that 
\begin{align*}
A \cap \com A 
 &= \bic A \cap \com A \\
 &= \com{\com A \cup A},
\end{align*}
which is the commutant of a set, and therefore complete.
\end{proof}

\begin{lemma}
Let $S$ be a monoid, and let $A$ and $B$ be two commuting complete subsystems. Then $A \cap B \subseteq Z(A) \cap Z(B)$.
\end{lemma}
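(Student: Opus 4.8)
The plan is to prove the inclusion by a direct element-wise membership argument, unwinding only the definitions of the centre and of commuting subsets. Recall that $Z(A) = A \cap \com A$, and that since $A$ and $B$ commute we have both $A \subseteq \com B$ and $B \subseteq \com A$.

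First I would fix an arbitrary element $x \in A \cap B$ and show that $x \in Z(A)$. Since $x$ already lies in $A$, it remains only to verify that $x \in \com A$; but $x \in B$ and $B \subseteq \com A$ by the commutation hypothesis, so $x \in \com A$, and therefore $x \in A \cap \com A = Z(A)$. By the symmetric argument --- using $x \in A \subseteq \com B$ together with $x \in B$ --- I would obtain $x \in B \cap \com B = Z(B)$. As $x$ was arbitrary, this yields $A \cap B \subseteq Z(A) \cap Z(B)$.

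The one point worth flagging is that this argument never actually invokes completeness of $A$ or $B$: the inclusion holds for any two commuting subsets of a monoid (indeed of a semigroup), and completeness is assumed here only to keep the statement within the intended setting of subsystems. So rather than there being a real obstacle, the main thing to be careful about is not to over-engineer the proof by trying to bring in the completeness hypothesis or the properties of the commutant in Lemma~\ref{lemma:commutant_properties}; the result is immediate from the definitions alone, and the cleanest write-up is the two-line membership chase above together with its symmetric counterpart.
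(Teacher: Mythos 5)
Your argument is correct and is essentially the paper's own proof, just phrased element-wise rather than as a direct chain of set inclusions ($A \cap B \subseteq A \cap \com A = Z(A)$ and symmetrically). Your observation that completeness is never used is also accurate --- the paper's proof likewise relies only on $B \subseteq \com A$ and $A \subseteq \com B$.
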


\begin{proof}
Because $A$ and $B$ commute, we have $B \subseteq \com A $, therefore $A \cap B \subseteq A \cap \com A = Z(A)$. Analogously, $A \subseteq \com B$, so $A\cap B \subseteq Z(B)$. Therefore $A \cap B \subseteq Z(A)\cap Z(B)$.
\end{proof}

\begin{corollary}
If $A$ and $B$ are two commuting complete subsystems, then $A\cap B$ is abelian.
\end{corollary}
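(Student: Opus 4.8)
The plan is to assemble the corollary directly from the two lemmas immediately preceding it, together with the anti-monotonicity of the commutant recorded in Lemma~\ref{lemma:commutant_properties}. Recall that a subset $C \subseteq S$ is abelian precisely when $C \subseteq \com C$, so the goal is simply to verify this inclusion for $C = A \cap B$.

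First I would invoke the preceding lemma, which states that for two commuting complete subsystems $A \cap B \subseteq Z(A) \cap Z(B)$; in particular $A \cap B \subseteq Z(A)$. Next I would recall that $Z(A)$ is itself abelian, i.e. $Z(A) \subseteq \com{Z(A)}$, as shown just above. The final ingredient is property~2 of Lemma~\ref{lemma:commutant_properties}: since $A \cap B \subseteq Z(A)$, passing to commutants reverses the inclusion, giving $\com{Z(A)} \subseteq \com{A \cap B}$.

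Chaining these three facts yields
\begin{align*}
A \cap B \subseteq Z(A) \subseteq \com{Z(A)} \subseteq \com{A \cap B},
\end{align*}
which is exactly the statement that $A \cap B$ is abelian.

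Since every step is a one-line application of an already-established result, there is no genuine obstacle here; the only point requiring care is the direction of the commutant inclusion, where the anti-monotonicity of $\com{\,\cdot\,}$ must be applied correctly (larger sets have smaller commutants). An alternative, equally short route avoids the monotonicity step altogether: as $A \cap B \subseteq Z(A)$ and all elements of the abelian set $Z(A)$ commute pairwise, any two elements of $A \cap B$ commute, so $A \cap B$ is abelian straight from the definition.
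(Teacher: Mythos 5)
Your argument is correct and is essentially the paper's own route: the corollary is stated as an immediate consequence of the preceding lemma $A\cap B\subseteq Z(A)\cap Z(B)$ together with the fact that $Z(A)$ is abelian, which is exactly what both your chain of inclusions and your one-line alternative exploit. Both versions check out, with the commutant anti-monotonicity applied in the right direction.
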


\begin{corollary}\label{cor:centreless_independent}
If $A$ and $B$ are two commuting complete subsystems and either of them is centreless, then $A \cap B = \{1\}$.
\end{corollary}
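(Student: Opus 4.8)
The plan is to derive this directly from the preceding lemma, since it is a near-immediate corollary. First I would invoke the lemma stating that for two commuting complete subsystems, $A \cap B \subseteq Z(A) \cap Z(B)$. The whole content of the corollary is then just to unpack what ``centreless'' buys us.

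Without loss of generality I would assume $A$ is the centreless one, so that $Z(A) = \{1\}$ by the definition of centreless. Feeding this into the lemma gives
$$
A \cap B \subseteq Z(A) \cap Z(B) \subseteq Z(A) = \{1\},
$$
so the intersection is contained in $\{1\}$. (If instead $B$ is the centreless one, the same chain runs through $Z(B)$.) This establishes one inclusion.

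For the reverse inclusion, I would note that $A$ and $B$ are complete, hence by Proposition~\ref{prop:complete_submonoid} they are submonoids of $S$ and each therefore contains the identity $1$. Consequently $1 \in A \cap B$, which together with $A \cap B \subseteq \{1\}$ yields $A \cap B = \{1\}$.

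I do not expect any genuine obstacle here: the corollary is essentially a one-line consequence of the lemma plus the observation that complete subsystems are submonoids. The only point requiring a moment's care is the reverse inclusion $\{1\} \subseteq A \cap B$, which is why it is worth explicitly citing Proposition~\ref{prop:complete_submonoid} to guarantee that both subsystems contain the identity rather than taking it for granted.
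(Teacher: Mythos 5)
Your proof is correct and follows exactly the route the paper intends: the corollary is stated without an explicit proof precisely because it is the immediate consequence of the lemma $A \cap B \subseteq Z(A)\cap Z(B)$ combined with $Z(A)=\{1\}$ for the centreless subsystem. Your extra care in citing Proposition~\ref{prop:complete_submonoid} for the reverse inclusion $1\in A\cap B$ is a sensible touch, consistent with how the paper establishes that complete subsets contain the identity.
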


\begin{corollary}
If $S$ is a monoid and $A\subseteq S$ a complete submonoid, then $Z(A)$ is a complete submonoid.
\end{corollary}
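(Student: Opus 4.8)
The plan is to obtain this corollary by simply chaining together two results that have already been established, so that almost no new work is required. Observe first that a complete submonoid is in particular a complete subset, so the hypothesis of the earlier lemma asserting that $Z(A)$ is complete (for $A$ a complete subset of a semigroup) is met. That lemma's argument is the identity $Z(A) = A \cap \com A = \bic A \cap \com A = \com{\com A \cup A}$, which holds because $A = \bic A$ by completeness and by the commutant rule $\com F \cap \com G = \com{F \cup G}$ from Lemma~\ref{lemma:commutant_properties}; since $\com{\com A \cup A}$ is a commutant, it is automatically complete (property~4 of Lemma~\ref{lemma:commutant_properties}). Hence $Z(A)$ is complete.

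The second ingredient is Proposition~\ref{prop:complete_submonoid}, which guarantees that any complete subset of a monoid is in fact a submonoid. Here $S$ is assumed to be a monoid and we have just shown $Z(A)$ is a complete subset of $S$, so the proposition applies directly and yields that $(Z(A), \cdot)$ is a submonoid of $S$. Putting the two observations together, $Z(A)$ is both complete and a submonoid, which is precisely the assertion that it is a complete submonoid.

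I do not anticipate any genuine obstacle, since the statement is essentially a specialization combining the completeness lemma for $Z(A)$ with the proposition identifying complete subsets of monoids as submonoids. The only points that warrant a line of care are bookkeeping ones: checking that passing from ``complete submonoid $A$'' to ``complete subset $A$'' is legitimate so that the first lemma applies, and recording that the ambient structure $S$ is a monoid (not merely a semigroup) so that Proposition~\ref{prop:complete_submonoid} is available. Both are immediate from the hypotheses, so the proof reduces to citing the two prior results in sequence.
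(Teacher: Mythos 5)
Your proof is correct and follows exactly the route the paper intends: the corollary is an immediate consequence of the preceding lemma (that $Z(A)=\com{\com A\cup A}$ is complete whenever $A$ is a complete subset of a semigroup) combined with Proposition~\ref{prop:complete_submonoid} (complete subsets of monoids are submonoids). The bookkeeping points you flag are indeed the only things to check, and both hold.
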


\begin{lemma} \label{lemma:subsystem_contains_centre}
Let $S$ be a monoid and $A\subseteq S$ a  subset.  Then $Z(S) \subseteq Z(A) \subseteq \bic A$.
\end{lemma}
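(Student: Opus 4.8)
The plan is to split the chain into its two links $Z(S)\subseteq Z(A)$ and $Z(A)\subseteq\bic A$, and to drive both from the elementary commutant facts gathered in Lemma~\ref{lemma:commutant_properties}. As a preliminary I would unfold the definitions: $Z(S)=S\cap\com S$ and $Z(A)=A\cap\com A$. Since by definition the commutant of any set is a subset of the ambient monoid $S$, we have $\com S\subseteq S$, and hence the first centre simplifies to $Z(S)=\com S$, i.e.\ the set of elements commuting with everything in $S$.

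The second link is immediate and I would dispose of it first. By definition $Z(A)=A\cap\com A\subseteq A$, and property~1 of Lemma~\ref{lemma:commutant_properties} gives $A\subseteq\bic A$; composing the two inclusions yields $Z(A)\subseteq\bic A$ with no computation.

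For the first link $Z(S)\subseteq Z(A)=A\cap\com A$ I would establish the two memberships $Z(S)\subseteq\com A$ and $Z(S)\subseteq A$ separately. The former is pure monotonicity of the commutant: from $A\subseteq S$ and property~2 of Lemma~\ref{lemma:commutant_properties} we get $\com S\subseteq\com A$, that is $Z(S)\subseteq\com A$. For the latter I would observe that every element of $Z(S)=\com S$ commutes with all of $S$, in particular with the subset $\com A\subseteq S$, so $Z(S)\subseteq\bic A$.

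The hard part---and really the only delicate point---is upgrading $Z(S)\subseteq\bic A$ to $Z(S)\subseteq A$, which is exactly what the literal inclusion $Z(S)\subseteq Z(A)$ requires. This step fails for a completely arbitrary subset (for instance $A=\{1\}$ in any monoid with nontrivial centre gives $Z(A)=\{1\}$ while $Z(S)$ is strictly larger), and it holds precisely when $A$ is \emph{complete}, since then $\bic A=A$ and the gap closes automatically. As the lemma is only ever applied to complete subsystems $A\in\Sys(\cT)$---notably to justify $A\vee\com{\cT}=A$ in Proposition~\ref{prop:complete_lattice}---I would prove it under the standing completeness hypothesis (or equivalently record the unconditional statement $Z(S)\subseteq\bic A$ and note that the sharper landing in $Z(A)$ needs $A=\bic A$). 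With completeness assumed, combining $Z(S)\subseteq\com A$ with $Z(S)\subseteq\bic A=A$ gives $Z(S)\subseteq A\cap\com A=Z(A)$, closing the chain.
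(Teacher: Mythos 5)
The paper states this lemma without any proof, so there is nothing to compare your argument against; what you have found is that the statement is in fact false as written, and your diagnosis is correct. The inclusions $Z(A)=A\cap\com A\subseteq A\subseteq\bic A$, $Z(S)=\com S\subseteq\com A$ (antitonicity from $A\subseteq S$) and $Z(S)\subseteq\bic A$ (antitonicity from $\com A\subseteq S$) all hold unconditionally, but $Z(S)\subseteq A$ does not: your counterexample $A=\{1\}$ in a monoid with nontrivial centre is valid (there $\com A=S$, so $Z(A)=\{1\}$ while $Z(S)\supsetneq\{1\}$; indeed any abelian monoid with more than one element already breaks the lemma for every proper subset). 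Your repair is also the right one for the paper's purposes: the lemma's only invocation is in the proof of Proposition~\ref{prop:complete_lattice} (the step $A\vee\com\cT=A$), where $A\in\Sys(\cT)$ is complete, and in fact that step only needs $\com\cT\subseteq\bic A$, which is exactly the part you prove without any hypothesis on $A$. So either of your proposed fixes---adding completeness of $A$ as a hypothesis, or weakening the conclusion to $Z(S)\subseteq\com A\cap\bic A=Z(\bic A)\subseteq\bic A$---preserves everything the paper uses.

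One small overstatement: the upgrade from $Z(S)\subseteq\bic A$ to $Z(S)\subseteq A$ does not hold \emph{precisely} when $A$ is complete; the exact condition is $Z(S)\subseteq A$, for which completeness is merely a convenient sufficient condition (e.g.\ $A=Z(S)\cup\{x\}$ with $x$ non-central satisfies $Z(S)\subseteq Z(A)$ without being complete). This does not affect the correctness of your argument under the completeness hypothesis.
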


\newpage
\section{Approximation structures}
\label{appendix:approximations}
\subsection{General definitions and remarks}

In the main text, we have defined approximation structures through the partially ordered set $(\E, \leq)$. Often, $(\E, \leq)$ is also a monoid (for instance, $ (\mathbb R^+, +)$) that induces a distance in $\Omega$. In that case, an approximation structure might  satisfy a triangle inequality. The following definition generalizes that idea to the case where there might be more than one monoid in $\E$.

\begin{definition}[Triangle inequality]
\label{def:triangle_inequality}
Let $S^\Omega$ be a specification space equipped with an approximation structure $A_{\E}$. 
If there are chains  $\{\E_i \subseteq \E\}_{i \in \I }$ which are also commutative monoids $\{(\E_i, +_i)\}_{i \in \I }$, then we say that the
approximation structure satisfies a \emph{triangle inequality} if 
\begin{align*}
     \forall \, W \in S^\Omega,\
     \forall i \in \I ,\
      \forall \eps,\eps' \in \E_i:\quad
    (W^{\eps'})^{\eps'}
    \subseteq  W^{\eps+_i \eps'}.
\end{align*}
\end{definition}

\begin{remark}\label{rem:triangle}
Let $S^\Omega$ be a specification space equipped with an approximation structure  $A_{\E}$ that satisfies the triangle inequality. Then, for every commutative monoid $\E_i\subseteq \E$, and every $\eps, \eps'\in \E_i$,
\begin{align*}
 \text{if } \quad V\subseteq W^\eps \quad \text{ and } \quad
  \widetilde V \subseteq V^{\eps'}, \quad \text{ then } \quad
  \widetilde V \subseteq  W^{\eps+ \eps'},
\end{align*}
for all $ W, V , \widetilde V \in S^\Omega$.
\end{remark}

An approximation structure induces a special way to organize knowledge in terms of $\eps$-balls around elements of the state-space (see the examples at the end of this appendix). This is captured by the following definition.

\begin{definition}[Approximation space]
Let $\Omega$ be a state space equipped with an approximation structure $A_\E(\Omega)$. The \emph{approximation space} induced by $A_\E$ on $\Omega$ is the set of all specifications that are neighbourhoods of states, $\appspace{\Omega}\subseteq S^\Omega$.
\end{definition}

\subsection{Proofs of claims from the main text}

\propRobustness*

\begin{proof}
 Since the theory is stable, $V^\eps \to  W ^\eps$. If $V^\eps \in [\Omega]_{\cT}$, then $\Omega \to V^\eps \to W^\eps$. But $W^\eps \notin [\Omega]_{\cT}$, so we reach a contradiction.
\end{proof}

\propReducedApproximations*

\begin{proof} 

We define the new family of functions $A_\E(\Omega)=\{\cdot^\epsilon\}_{\epsilon\in\E}$ on the reduced specification space through
$$ W^\epsilon 
;= \h((\e(W))^\epsilon)$$
for any $W\in S^\Omega$, where $\e$ and $\h$ are the functions defining the Galois insertion between the two spaces. By definition, 
$$\e(W^\epsilon)\supseteq (\e(W))^\eps.$$
We need to show the following things:
\begin{enumerate}
\item The new functions are inflating specification endomorphisms on $S^\Omega$.

This follows from  being a composition of homomorphisms. 

\item For all $W \in S^\Omega$ and $\eps, \eps' \in \E$,\quad  $\eps \leq \eps' \implies W^\eps \subseteq W^{\eps'}$.

Follows from the facts that $\e$ and $\h$ are order-preserving and the approximations and  $A^\eps(\Sigma)$  satisfy the desired property.

\item There exists a \emph{saturating element} $ \eps_{\max} \in \E$ such that  for all $W \in S^\Omega$, \quad  $W^{\eps_{\max}}= \Omega$.

Since $\h(\Sigma)=\Omega$, any saturating element $\epsilon_{\max}$ stays saturating.

\item If the original approximation structure is attainable for some $0\in\E$, so is the new approximation structure.

Since $V^0=V$ for any $V\in S^\Sigma$, we have that $W^0=\h((\e(W))^0)=\h\circ\e(W)=W$, for any $W\in S^\Omega$.

\end{enumerate}

\end{proof}

\begin{remark}
Note that in the reduced approximation structure, it can happen that $\cdot^\epsilon=\cdot^{\epsilon'}$ although $\cdot^\epsilon\neq\cdot^{\epsilon'}$ in the original approximation structure.
\end{remark}

\begin{remark}
If an approximation structure $(\E,\leq)$ satisfies a triangle inequality on $S^\Sigma$, this property does not necessarily carry over to a reduced approximation structure on $S^\Omega$ embedded in $S^\Sigma$ as above. When this is actually the case is studied in Proposition \ref{prop:triangle_reduced} below.
\end{remark}

\subsection{Additional Results}

\subsubsection{General properties of approximation structures}

\begin{corollary}\label{lemma:intersection_approximations}
Let $S^\Omega$ be a specification space  equipped with an approximation structure $A_\E(\Omega)$. 
It is more precise to combine two states of knowledge  $V$ and $W$ and then approximate the result than to approximate first and then combine the approximations, 
$$(V\cap W)^\eps \subseteq V^\eps \cap W^\eps,$$
for all $V, W \in S^\Omega$ and all $\eps \in \E$.
\end{corollary}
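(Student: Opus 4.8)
The plan is to recognize this corollary as an immediate specialization of the already-established relationship between knowledge combination and homomorphisms, namely Lemma~\ref{lemma:intersection_homomorphisms}. That lemma states that for any homomorphism $f: S^\Omega \to S^\Sigma$ and any family $\mathcal W$ of compatible specifications, $f(\cap \mathcal W)\subseteq \bigcap_{W\in\mathcal W} f(W)$. First I would observe that, by Definition~\ref{def:approximations}, each map $\ball^\eps$ in an approximation structure is an \emph{inflating endomorphism} on $S^\Omega$, hence in particular a specification homomorphism. Applying Lemma~\ref{lemma:intersection_homomorphisms} with $f=\ball^\eps$ and $\mathcal W=\{V,W\}$ then yields exactly $(V\cap W)^\eps \subseteq V^\eps \cap W^\eps$, which is the claim.

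For a fully self-contained argument I would instead extract only the \emph{monotonicity} of $\ball^\eps$. Since every homomorphism of join-semilattices is in particular a quasi-homomorphism, the lemma in Appendix~\ref{appendix:algebra_order} asserting that quasi-homomorphisms are order-preserving applies, so $A\subseteq B \implies A^\eps \subseteq B^\eps$. From the two trivial inclusions $V\cap W \subseteq V$ and $V\cap W \subseteq W$, monotonicity gives $(V\cap W)^\eps \subseteq V^\eps$ and $(V\cap W)^\eps \subseteq W^\eps$; a set contained in both $V^\eps$ and $W^\eps$ lies in their intersection, which is the desired inclusion. One point worth flagging explicitly is that $V$ and $W$ must be compatible for $V\cap W$ to lie in $S^\Omega$ (the empty set being excluded), so the statement should be read as applying to combinable states of knowledge; this is implicit in the phrasing ``combine two states of knowledge''.

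There is essentially no obstacle here: the entire content is the order-preservation of $\ball^\eps$, which is forced by its being an endomorphism of the specification space, and the result then follows formally. The only choice to make is presentational — whether to cite Lemma~\ref{lemma:intersection_homomorphisms} directly (shortest, and the reason this is labelled a corollary) or to reprove the two-element case via monotonicity. I would lead with the citation and relegate the monotonicity chain to a single clause, keeping the proof to one or two lines.
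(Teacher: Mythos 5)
Your proposal matches the paper's proof exactly: the paper proves this corollary by the single line ``Follows directly from Lemma~\ref{lemma:intersection_homomorphisms}'', which is precisely your primary route (apply that lemma with $f=\ball^\eps$ and $\mathcal W=\{V,W\}$, using that each $\ball^\eps$ is an endomorphism by Definition~\ref{def:approximations}). Your additional monotonicity argument and the remark about compatibility of $V$ and $W$ are correct but not needed beyond what the paper states.
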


\begin{proof}
Follows directly from Lemma~\ref{lemma:intersection_homomorphisms}.
\end{proof}

We have seen that intensive embeddings induce reduced approximation structures. The following propositions concern further properties of such induced approximation structures that are carried over from the initial approximation structure.

\begin{proposition}[Robustness in restricted resource theories] \label{prop:robust_restricted}
Let  $(S^\Sigma ,\cT)$ be a resource theory equipped with an approximation structure  $A_\E(\Sigma)$. Let there be an intensive embedding  $(\e, \h)$ resulting in restricted theory $(S^\Omega,\widetilde \cT)$ and approximation structure $A_\E(\Omega)$. 

Then, if $V\in S^\Sigma$ is not $\epsilon$-robust according to $A_\E(\Sigma)$, neither is $\h(V)$ $\epsilon$-robust according to $A_\E(\Omega)$.
\end{proposition}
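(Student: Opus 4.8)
The plan is to unfold the definitions of robustness and of the reduced approximation structure, and then to transport a single witnessing transformation from the large theory down to the restricted one. Recall that ``$V$ is not $\eps$-robust'' means precisely that the neighbourhood $V^\eps$ is a \emph{free} resource, i.e.\ reachable from the top element of the state space. So my hypothesis is $\Sigma \to V^\eps$ in $(S^\Sigma,\cT)$, and my goal is to establish $\Omega \to (\h(V))^\eps$ in $(S^\Omega,\widetilde\cT)$, which by definition says that $(\h(V))^\eps$ is free and hence $\h(V)$ is not $\eps$-robust.

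First I would extract from the hypothesis a transformation $f\in\cT$ with $f(\Sigma)\subseteq V^\eps$, and consider its reduced version $\widetilde f := \h\circ f\circ \e$, which lies in $\widetilde\cT$ by construction. I would then show $\widetilde f(\Omega)\subseteq (\h(V))^\eps$ through the chain
\[
\widetilde f(\Omega)=\h\circ f\circ\e(\Omega)\subseteq \h\circ f(\Sigma)\subseteq \h(V^\eps)\subseteq (\h(V))^\eps,
\]
where the first inclusion uses $\e(\Omega)\subseteq\Sigma$ together with order-preservation of the homomorphisms $f$ and $\h$, the middle inclusion is the hypothesis, and the last inclusion is the technical heart of the argument. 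For this last step I would invoke the explicit form of the induced approximation structure from Proposition~\ref{prop:reduced_approximations}, namely $(\h(V))^\eps=\h\big((\e\circ\h(V))^\eps\big)=\h\big((\Lump(V))^\eps\big)$, combine it with the fact that the lumping $\Lump=\e\circ\h$ is inflating (so $\Lump(V)\supseteq V$), and use that both the $\eps$-ball and $\h$ are order-preserving to conclude $\h(V^\eps)\subseteq\h\big((\Lump(V))^\eps\big)=(\h(V))^\eps$.

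Once $\widetilde f(\Omega)\subseteq (\h(V))^\eps$ is in hand the conclusion is immediate: $\widetilde f\in\widetilde\cT$ reaches $(\h(V))^\eps$ from the top $\Omega$, so $(\h(V))^\eps$ is a free resource and therefore $\h(V)$ is not $\eps$-robust.

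I expect the only genuinely delicate step to be the inclusion $\h(V^\eps)\subseteq(\h(V))^\eps$, since it rests on correctly unwinding the definition of the reduced approximation structure and on the inflating property of the Galois kernel $\Lump$; every other inclusion reduces to order-preservation of semilattice homomorphisms and the elementary fact that each specification lies below the top of its space. A minor point worth checking is simply that the witnessing $f$ can be pulled back into $\widetilde\cT$, which holds because $\widetilde\cT=\{\h\circ g\circ\e:\, g\in\cT\}$.
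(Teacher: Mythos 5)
Your proposal is correct and follows essentially the same route as the paper's own proof: extract the witness $f$ with $f(\Sigma)\subseteq V^\eps$, pass to $\widetilde f=\h\circ f\circ\e$, and push through $\h(V^\eps)\subseteq\h\bigl((\Lump(V))^\eps\bigr)=(\h(V))^\eps$ using the definition of the reduced approximation structure, the inflating property of $\Lump$, and order-preservation. The only cosmetic difference is that the paper tacitly uses $\e(\Omega)=\Sigma$ where you use the weaker (and equally sufficient) $\e(\Omega)\subseteq\Sigma$.
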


\begin{proof}
If $V$ is not $\epsilon$-robust, this means that $\Sigma\to V^\epsilon$, that is, there exists a function $f\in\cT$ such that $f(\Sigma)\subseteq V^\epsilon$.
Since $\h$ is order-preserving, we have
\begin{align*}
\h\circ f(\Sigma)&\subseteq\h(V^\epsilon) \\
\iff 
\h \circ f \circ \e(\Omega)  &\subseteq\h(V^\epsilon) \\
\iff  \widetilde f (\Omega)  &\subseteq\h(V^\epsilon)
\subseteq \h((\e\circ\h(V))^\epsilon) =(\h(V))^\epsilon.
\end{align*}
Hence $\Omega\to(\h(V))^\epsilon$, and so $\h(V)$ is also not $\epsilon$-robust.
\end{proof}

The next proposition shows that if an embedding preserves the approximation structure, then an induced approximation structure inherits any triangle inequality.

\begin{proposition}[Triangle inequality in reduced approximation structures]
\label{prop:triangle_reduced}
Let $S^\Omega$ be a specification space equipped with an approximation structure $A_\E(\Omega)$ that respects a triangle inequality, and $S^T$ be a specification space embedded in $S^\Omega$ with an intensive embedding $\e$ that preserves the endomorphisms of the approximation structure. Then the reduced approximation structure $A_\E(T)$ on $S^T$ induced by $\e$ and $A(\Omega)$ also respects the triangle inequality.
\end{proposition}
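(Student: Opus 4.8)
The plan is to reduce the triangle inequality for the reduced structure $A_\E(T)$ directly to the one assumed for $A_\E(\Omega)$, by unfolding the definition of the reduced approximation and transporting the upstairs inequality down through the Galois insertion. Recall from Proposition~\ref{prop:reduced_approximations} that the reduced approximation is $W^\eps = \h\big((\e(W))^\eps\big)$ for $W \in S^T$, that it is parametrized by the \emph{same} poset $\E$, and that the Galois insertion $(\e,\h)$ satisfies $\h\circ\e = \id_{S^T}$ and $\Lump = \e\circ\h$. In particular the commutative-monoid chains $(\E_i,+_i)$ witnessing the triangle inequality are inherited unchanged, so I may fix a chain index $i$, elements $\eps,\eps' \in \E_i$, a specification $W \in S^T$, and set $X := \e(W) \in S^\Omega$; the goal is then $(W^\eps)^{\eps'} \subseteq W^{\eps +_i \eps'}$.

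First I would unfold the outer reduced approximation: by definition $(W^\eps)^{\eps'} = \h\big((\e(W^\eps))^{\eps'}\big)$, and since $W^\eps = \h(X^\eps)$ we get $\e(W^\eps) = \e(\h(X^\eps)) = \Lump(X^\eps)$, so that $(W^\eps)^{\eps'} = \h\big((\Lump(X^\eps))^{\eps'}\big)$. Next I would invoke the hypothesis that $\e$ preserves the endomorphisms of the approximation structure; in the intensive case this reads $\h \circ \ball^{\eps'} = \h \circ \ball^{\eps'} \circ \Lump$, and applying both sides to $X^\eps$ yields $\h\big((\Lump(X^\eps))^{\eps'}\big) = \h\big((X^\eps)^{\eps'}\big)$. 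Finally, the triangle inequality for $A_\E(\Omega)$ gives $(X^\eps)^{\eps'} \subseteq X^{\eps +_i \eps'}$, and because $\h$ is order-preserving (a standard property of Galois adjoints) this descends to $\h\big((X^\eps)^{\eps'}\big) \subseteq \h\big(X^{\eps +_i \eps'}\big) = W^{\eps +_i \eps'}$. Chaining these equalities and the inclusion gives exactly $(W^\eps)^{\eps'} \subseteq W^{\eps +_i \eps'}$.

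The main obstacle is using the preservation hypothesis in precisely the right form and direction: I would state it explicitly from the definition of maps preserved by an intensive embedding (namely $\h \circ f = \h \circ f \circ \Lump$) before substituting $f = \ball^{\eps'}$ and argument $X^\eps$, so that the step $\h\big((\Lump(X^\eps))^{\eps'}\big) = \h\big((X^\eps)^{\eps'}\big)$ is unambiguous. Everything else is bookkeeping: the identity $\e\circ\h=\Lump$ and $\h\circ\e=\id_{S^T}$ come from the Galois insertion, order-preservation of $\h$ is standard, and the invariance of the chains $(\E_i,+_i)$ under the reduction is immediate since Proposition~\ref{prop:reduced_approximations} delivers an approximation structure over the unchanged parameter set $\E$.
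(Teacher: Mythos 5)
Your proof is correct and follows essentially the same route as the paper's: unfold the reduced ball twice, use the preservation hypothesis in the form $\h \circ \ball^{\eps'} = \h \circ \ball^{\eps'} \circ \Lump$ to strip off the intervening $\Lump$, apply the upstairs triangle inequality, and push the inclusion down via order-preservation of $\h$. The only cosmetic difference is that you make the preservation identity explicit via the remark on intensive embeddings, where the paper just flags it inline.
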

\begin{proof}
Since the embedding preserves the approximation endomorphisms, we find that for $\epsilon,\epsilon'\in\E_i$ for all chains $\E_i\subseteq\E$ that are commutative monoids,
\begin{align*}
(V^\epsilon)^{\epsilon'}&:=(\h((\e(V))^\epsilon)^{\epsilon'}\\
&=\h((\e\circ\h((\e(V))^\epsilon))^{\epsilon'})\\
\flag{\e\text{ preserves }\cdot^\epsilon}&=\h(((\e(V))^\epsilon)^{\epsilon'})\\
&\subseteq \h((\e(V))^{\epsilon+_i\epsilon'})\\
&=V^{\epsilon+_i\epsilon'}.
\end{align*}
\end{proof}

\subsubsection{Examples of approximation structures}

\begin{example}[Approximations in quantum theory]
Approximations arise very naturally in the setting of quantum theory.
We start by taking a distance measure in state-space, that is, a distance between density matrices. For this we pick the purified distance, a generalization of the trace distance that is invariant under purifications and extensions, and can only decrease under physical operations and projections \cite{Tomamichel2010}. The purified distance is originally defined for subnormalized states; here we present a simpler version for normalized states, given that we are working in $\Omega$. 

\begin{definition}[Purified distance]
The \emph{purified distance}~\cite{Tomamichel2010} between two normalized quantum states $\rho$ and $\sigma$ is given by 
\begin{align*}
 \label{eq:purified_distance}
 d(\rho, \sigma) := \sqrt{1 - F(\rho, \sigma)^2},
\end{align*}
where $F$ is the fidelity between quantum states, 
\begin{align*} 
 F(\rho, \sigma) := \| \sqrt \rho \, \sqrt \sigma \|_1,
\end{align*}
and $\|A \|_1$ is the trace norm. 
\end{definition}
The approximation structure for the specification space is based on the $\eps$-balls of quantum states, according to the purified distance: 
 The $\eps$-approximation of a specification $W \in S^{\Omega}$ is the set
\begin{align*}
 W^\eps := \bigcup_{\omega \in W} \  \omega^\eps,
\end{align*}
where $ \omega^\eps = \{\rho \in \Omega: d(\rho, \omega) \leq \eps \}$ is the $\eps$-ball of $\omega$, according to the purified distance.

Note that $ V^1 = \Omega$ and $V^0 =V$, for all specifications $V \in S^\Omega$; the approximation structure induced by the purified distance is attainable.
\end{example}

\begin{example}[Animal approximations]
Continuing with $\E$ from Example~\ref{ex:zoo_approximation},  the approximation space $\appspace{\Omega}$ is the subset of $S^\Omega$  that consists of all phyla, classes, orders, etc., down to individual species. 
\end{example}

\begin{example}[Real approximations] 
\label{ex:real_approx}
Let $\Omega =[0,1]$, and $\E=\{10^{-n}\}_{n \in \mathbb N^0}$, with the usual order relation. Then we define the approximation structure $A_{\E}$ with  $x^\eps= \{\omega \in [0,1]: |x-\omega|< \eps \}$.  The approximation space induced by this structure, $\appspace{\Omega}$, consists of the $n$-digit approximations of all real numbers in $[0,1]$, for all $n$. Note that this is not an attainable approximation space, as we never reach the real line (because for all $\eps >0$ and all $x \in \mathbb R$, $x^\eps$ has an infinite number of elements). 
\end{example}

\newpage
\section{Convexity}
\label{appendix:convexity}

In the following we explore the structure that arises from convex state spaces for specifications. This convexity can sometimes be interpreted as probabilistic mixtures, such as in the case of quantum mechanics. On the other hand, convexity can also have a different meaning, such as a mixture of liquids in chemistry. Here, we shall not separate these two cases, since they induce the same structure on state space and specification space. In this sense, we are also not discussing how exactly we understand the notion of probability (frequencies, beliefs, bets?), which is a huge foundational topic by itself. Instead, we simply observe that some state spaces are convex, and could be interpreted as the spaces of probabilistic mixtures of extremal points. In the case where convexity does represent probabilities, we introduce more formally the notion of \emph{probabilistic equivalence} and we show in which sense it can be applied.

\subsection{General definitions and  results}

\subsubsection{Convexity}

Since we  leave the interpretation of convex mixtures open,  we  do not assume a particular structure on the state space other than that it may allow for convex mixtures. 
This implies that we require an  approach to convexity that is more general than standard definitions based on vector spaces. To this end, we shall define convexity solely based on a convex combination operation satisfying a certain set of axioms. There have been several similar proposals to generalize the concept of convexity \cite{Gudder1973, Flood1981, Fritz2009}.\footnote{See \url{https://golem.ph.utexas.edu/category/2009/04/convex_spaces.html} for a discussion. A very different approach (with some terminology clashes) is
followed e.g.\ in Ref.~\cite{Vel1993}.}
The following definition is adapted from Ref.~\cite{Fritz2009}.

\begin{definition}[Convex structure \cite{Fritz2009}]
\label{def:convexity}

A \emph{convex structure} is a set $\Omega$ equipped with a map 
\begin{align*}
f: [0,1] \times \Omega \times \Omega &\to \Omega \\ 
(p, \nu, \omega) &\mapsto \mix p \nu \omega
\end{align*}
satisfying:
\begin{enumerate}
\item $\mix 1 \nu \omega  = \nu$ (extremicity),
\item $\mix p \nu \omega  = \mix {1-p} \omega \nu $ (parametric commutativity),
\item $\mix  p {\mix q \nu \omega  } \tau = \mix{p \, q} \nu {\mix {\frac{1-p}{1- p\,q}} \tau \omega  }   $, for $p\, q\neq 1$ (parametric associativity), 

\item $\mix p \nu \nu = \nu$ (idempotence).
\end{enumerate}
We call $\mix p \nu \omega$ the \emph{convex mixture} of the two elements, and may represent the convex structure as $(\Omega, f)$. 
A pair $(\Omega, f)$ that satisfies all the above properties except idempotence is called a \emph{quasi-convex structure}. 

A subset $V \subseteq \Omega$ of a quasi-convex structure $(\Omega, f)$ is called a (quasi-)convex subset if  $(V, f_p)$ is itself a (quasi-)convex structure. (In particular, $V$ should be 
 closed under $f$, that is $\nu, \omega \in V \implies \mix p \nu \omega \in V, \forall p \in \{0, 1\}$.)
\end{definition}

For parametric associativity, note that for $p\,q=1$ the second coefficient is irrelevant due to extremicity.
The usual convex vector spaces are a special case of convex structures. 

\begin{lemma}[Convex vector spaces.]
Let $\Omega$ be a subset of a real vector space such that  for any two elements $\omega, \nu \in \Omega$ and any $p \in [0,1]$, the element $p\ \nu + (1-p)\  \omega =: \mix p \nu \omega$  is also in   $  \Omega$. Then $(\Omega, f)$  is an idempotent convex structure. 
\end{lemma}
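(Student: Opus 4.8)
The plan is to verify directly that the convex combination map $f$ defined by $\mix{p}{\nu}{\omega} := p\,\nu + (1-p)\,\omega$ satisfies each of the four axioms of Definition~\ref{def:convexity}, using only the standard vector space operations and the hypothesis that $\Omega$ is closed under these combinations. The closure hypothesis guarantees that $f$ is well-defined as a map $[0,1]\times\Omega\times\Omega\to\Omega$, so the only real content is checking the algebraic identities, each of which reduces to elementary arithmetic in the ambient real vector space.

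First I would check extremicity: $\mix{1}{\nu}{\omega} = 1\cdot\nu + 0\cdot\omega = \nu$, which is immediate. Next, parametric commutativity follows from rewriting $\mix{p}{\nu}{\omega} = p\,\nu + (1-p)\,\omega = (1-p)\,\omega + (1-(1-p))\,\nu = \mix{1-p}{\omega}{\nu}$. Idempotence is equally direct, since $\mix{p}{\nu}{\nu} = p\,\nu + (1-p)\,\nu = \nu$.

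The only step requiring a genuine (though still routine) calculation is parametric associativity, and this is the one I expect to be the main obstacle — not because it is deep, but because one must carefully track the coefficients and confirm they lie in $[0,1]$ so that both sides are legitimate elements of the convex structure. The left-hand side expands as
\begin{align*}
\mix{p}{\mix{q}{\nu}{\omega}}{\tau}
&= p\,(q\,\nu + (1-q)\,\omega) + (1-p)\,\tau \\
&= p\,q\,\nu + p(1-q)\,\omega + (1-p)\,\tau.
\end{align*}
For the right-hand side, writing $r = \frac{1-p}{1-p\,q}$ (well-defined since $p\,q \neq 1$), one computes
\begin{align*}
\mix{p\,q}{\nu}{\mix{r}{\tau}{\omega}}
&= p\,q\,\nu + (1-p\,q)\,(r\,\tau + (1-r)\,\omega) \\
&= p\,q\,\nu + (1-p\,q)\,r\,\tau + (1-p\,q)(1-r)\,\omega.
\end{align*}
Substituting $(1-p\,q)\,r = 1-p$ and $(1-p\,q)(1-r) = (1-p\,q) - (1-p) = p - p\,q = p(1-q)$, the two expressions coincide.

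Finally I would remark that the closure assumption ensures each intermediate convex mixture remains in $\Omega$, so every expression above is a valid element of $\Omega$ and the axioms hold within $\Omega$ rather than merely in the ambient vector space. This establishes that $(\Omega, f)$ is a convex structure, and since idempotence was verified, it is in particular an idempotent convex structure, as claimed.
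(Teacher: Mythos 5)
Your proof is correct and follows essentially the same route as the paper's: a direct verification of the four axioms, with the only nontrivial step being the coefficient bookkeeping in parametric associativity, which you carry out by the same substitution $(1-pq)\tfrac{1-p}{1-pq}=1-p$ and $(1-pq)\bigl(1-\tfrac{1-p}{1-pq}\bigr)=p(1-q)$ that the paper uses. Your added remark that the coefficient $\tfrac{1-p}{1-pq}$ lies in $[0,1]$ (since $1-pq\geq 1-p\geq 0$) is a small point of care the paper omits, but it does not change the argument.
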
 

\begin{proof}
We have to prove:
\begin{enumerate}

\item Extremicity, $\mix 1 \nu \omega  = \nu$.

Follows directly from the fact that $0 \ \omega = 0$ is the identity element of addition  in a real vector space, and $1$ is the identity element of multiplication,  so  $1 \ \nu + 0\  \omega = \nu$.

\item Commutativity, $\mix p \nu \omega  = \mix {1-p} \omega \nu $.

Follows from commutativity of addition, $p \ \nu + (1-p) \ \omega = (1-p) \ \omega + p \nu$. 

\item Associativity, $\mix  p {\mix q \nu \omega  } \tau = \mix{p \, q} \nu {\mix {\frac{1-p}{1- p\,q}} \tau \omega  }   $, for $p\, q\neq 1$.

We have
\begin{align*}
p (q \nu + (1-q) \omega ) + (1-p) \tau 
&= p q \ \nu + p\ (1-q) \ \omega +(1-p) \tau \\
&= p q \ \nu + \frac{1- pq}{1-pq} \  p\ (1-q) \ \omega + \frac{1- pq}{1-pq} \ (1-p)\ \tau \\
&= p q \ \nu +  (1-pq)  \left( \frac{1- p}{1-pq}  \ \tau  +  \frac{ p\ (1-q)}{1-pq}  \ \omega  \right)\\
&= p q \ \nu +  (1-pq)  \left( \frac{1- p}{1-pq}\  \tau  +  \left(1- \frac{1- p}{1-pq}  \right)\ \omega  \right),
\end{align*}
where we used distributivity, commutativity of addition, and the identity element of multiplication|we can see you rolling your eyes.

\item Idempotence, $\mix p \nu \nu = \nu$. 

Follows from distributivity of scalar multiplication with respect to vector addition, $p \ \nu + (1-p) \ \nu = (p+ 1- p ) \ \nu$.

\end{enumerate}
\end{proof}

We can iterate the convex mixture of two elements to create a probabilistic mixture over a finite subset of elements, according to any finite probability distribution.

\begin{definition}[Mixture over finite sets]
\label{def:mixture_finite}
Let $(\Omega, f)$ be a convex structure and $\vec V  = (\nu_1, \nu_2, \dots, \nu_n )$ a finite vector of elements of $\Omega$. Finally, let $\vec P = (p_1, \dots, p_n)$ be a vector whose elements form a probability distribution  (that is, all $p_i \geq 0$ and $\sum_i p_i =1$). We define the mixture over $\vec V$ given by $\vec P$ as
$$f_{\vec P}(\vec V):= \mix {p'_1} {\nu_1} {\mix {p'_2} {\nu_2} {\mix {p'_3} {\nu_3} {\dots    }   }  } \in \Omega, $$
where the innermost nested term is $ \mix {p'_{n-1}} {\nu_{n-1}} {\nu_n}$, and the coefficients are given by $p'_1 = p_1$ and 
$p'_k =  {p_k}/ ({\prod_{i < k} (1-p'_i)})$, for $1<k<n$.
\end{definition}

\begin{lemma}
If $\Omega$ is a convex subset of a vector space then the above definition gives us the usual convex mixture, $f_{\vec P}(\vec V) = p_1 \ \nu_1 + p_2\ \nu_2 + \dots + p_n \ \nu_n$.
\end{lemma}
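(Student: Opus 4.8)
The plan is to prove the identity by induction, unfolding the nested convex mixture of Definition~\ref{def:mixture_finite} one layer at a time. First I would specialize the convex combination to a vector space, where $\mix p \nu \omega = p\,\nu + (1-p)\,\omega$, so that the whole expression $f_{\vec P}(\vec V)$ becomes a genuine linear combination and the task reduces to tracking coefficients.

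The key bookkeeping device is the normalization factor $S_k := \prod_{i<k}(1-p'_i)$, with $S_1 = 1$ (empty product), in terms of which the definition reads $p'_k = p_k / S_k$ for all $1 \le k \le n-1$ (this absorbs the special case $p'_1 = p_1$). The first step is to establish the recursion $S_{k+1} = S_k(1-p'_k) = S_k - p_k$, which telescopes to the closed form $S_k = 1 - \sum_{j<k} p_j = \sum_{j=k}^n p_j$, using $\sum_j p_j = 1$. In particular this gives $1 - p'_k = S_{k+1}/S_k$, the identity that drives the induction.

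Next I would introduce the tail terms $T_k$, defined by $T_n = \nu_n$ and $T_k = \mix{p'_k}{\nu_k}{T_{k+1}}$ for $1 \le k \le n-1$, so that $f_{\vec P}(\vec V) = T_1$. I would then prove by downward induction on $k$ that
\[
T_k = \frac{1}{S_k} \sum_{i=k}^n p_i\, \nu_i .
\]
The base case $k = n-1$ follows by direct substitution, using $S_{n-1} = p_{n-1} + p_n$ to check that $1 - p'_{n-1} = p_n / S_{n-1}$; the inductive step expands $T_k = p'_k \nu_k + (1-p'_k) T_{k+1}$, substitutes the inductive hypothesis, and uses $(1-p'_k)/S_{k+1} = 1/S_k$ to collapse the two normalizations into a single factor $1/S_k$. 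Setting $k=1$ and recalling $S_1 = 1$ then yields exactly $f_{\vec P}(\vec V) = p_1\nu_1 + \dots + p_n\nu_n$.

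There is no serious obstacle here, as the argument is a routine induction; the only point requiring care is the degenerate case in which $S_k = \sum_{j=k}^n p_j = 0$ for some $k$, which makes $p'_k$ ill-defined. This arises precisely when all of $p_k, \dots, p_n$ vanish, and I would dispose of it either by restricting attention to distributions of full support (the generic case) or by first deleting the zero-weight entries from $\vec V$ and $\vec P$, an operation that changes neither side of the claimed identity.
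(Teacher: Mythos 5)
Your proof is correct and follows essentially the same route as the paper's: a one-element-at-a-time induction that peels off the head of the nested mixture and applies the hypothesis to the renormalized tail, with your factor $S_k=\sum_{j\geq k}p_j$ just making explicit what the paper writes as the rescaled distribution $\frac{\vec P'}{1-p_k}$. Your closing remark on the degenerate case $S_k=0$ flags a gap that the paper's proof silently ignores, and your fix (deleting zero-weight entries, which changes neither side) is adequate.
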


\begin{proof}

We  prove this by induction. First suppose that $\vec V_2 = (\nu_1, \nu_2)$ and $ \vec P = (p_1, p_2)$.  Then $f_{\vec P}(\vec V) = \mix{p_1}{\nu_1}{\nu_2} = p_1 \ \nu_1 + (1-p_1) \ \nu_2= p_1 \ \nu_1 + p_2\ \nu_2 $. 

Now suppose that for  vectors   $\vec V' = (\nu_{k+1},  \dots, \nu_n)$ and $ \frac{\vec  P'}{1-p_k}= \frac1{1-p_k}(p_{k+1}, \dots, p_n)$ it holds that 
$f_{\vec P'}(\vec V') =  \sum_{i=k+1}^{n} \frac{p_i}{1-p_k} \ \nu_i$.
Consider the extension to  vectors  $\vec V = (\nu_k, \nu_{k+1}, \dots, \nu_n)$ and $ \vec P = (p_k, p_{k+1},  \dots, p_n)$. 
Then we have 
\begin{align*}
\sum_{i=k}^{n} p_i \ \nu_i 
  &= p_k \ \nu_k + \sum_{i=k+1}^{n} p_i \ \nu_i\\
  &= p_k \ \nu_k + (1-p_k) \sum_{i=k+1}^{n}  \frac{p_i}{1-p_k} \ \nu_i\\ 
  &= p_k \ \nu_k + (1-p_k) f_{\frac{\vec P'}{1-p_k} }(\vec V') \\
  &= \mix {p_k} {\nu_k} {f_{\frac{\vec P'}{1-p_k} }(\vec V') } \\
  &= \mix {p_k} {\nu_k} 
  {\mix 
      {\frac{p_{k+1}}{1-p_k}} 
      {\nu_{k+1}} 
      {\mix 
        {\frac{p_{k+2}}{(1-p_k)(1-p_{k+1})}}  
        {\nu_{k+2}} 
        {\mix 
            {\frac{p_{k+2}}{(1-p_k)(1-p_{k+1})(1-p_{k+2})}} 
            {\nu_3} 
            {\dots }   
        }  
      }  
    }  \\
  &=f_{\vec P}(\vec V).
\end{align*}

\end{proof}

Finally, we may define convexity-preservation of maps between two quasi-convex structures.

\begin{definition}[Convexity-preserving maps]
Let $(\Omega, f)$ and $(\Sigma, f')$ be two quasi-convex structures. We say that a map $g: \Omega \to \Sigma$ is \emph{convexity-preserving} if
$$g\circ \mix p \nu \omega = \mixg p {g(\nu)}{g(\omega)}.$$
\end{definition}

\subsubsection{Properties of convex mixtures}

Here we show that convex mixtures over finite sets behave well as expected. In other words, in the following proofs we  expand the grisly nested mixtures for the last time. There is nothing particularly surprising or insightful in this subsection, so unless you are interested in this rediscovery of the wheel, we will not take it personally if you skip it.

\begin{lemma}[Combining mixtures]
\label{lemma:combining_mixtures}
Let $(\Omega, f)$ be a convex structure, let $\nu, \omega, \tau \in \Omega$, and let $r,p,q \in [0,1]$. Then 
$$ \mix r {\mix p \nu \omega } {\mix q \nu \tau} = \mix {r \, p + (1-r) \, q } \nu  {\mix \alpha   \omega \tau} ,  $$
with $\alpha = \frac{r (1-p)}{1- (rp+ (1-r) q)}$  (whenever $rp+ (1-r) q \neq 1$, otherwise we can take  $\alpha=0$ or anything else, as it does not matter).

\end{lemma}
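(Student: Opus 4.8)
The plan is to prove this by repeated application of parametric associativity (axiom 3), supported by parametric commutativity (axiom 2) and idempotence (axiom 4). The essential difficulty is structural: the outer mixture on the left-hand side, $\mix r {\mix p \nu \omega} {\mix q \nu \tau}$, has \emph{both} of its arguments nested, whereas parametric associativity only reassociates a mixture whose \emph{first} argument is a mixture. So I would peel the nesting off one layer at a time, and only at the end collapse the two copies of $\nu$ that this produces.

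First I would apply parametric associativity to the outer mixture, treating $\mix p \nu \omega$ as the inner term and $\mix q \nu \tau$ as the third element. This yields $\mix {rp} \nu {\mix \beta {(\mix q \nu \tau)} \omega}$ with $\beta = \frac{1-r}{1-rp}$. I would then apply associativity a second time to the sub-mixture $\mix \beta {(\mix q \nu \tau)} \omega$, obtaining $\mix {rp} \nu {\mix {\beta q} \nu {\mix \gamma \omega \tau}}$ with $\gamma = \frac{1-\beta}{1-\beta q}$. At this point the expression has the shape ``$\nu$ mixed with ($\nu$ mixed with $\mix \gamma \omega \tau$)'', so the only remaining work is to merge the two nested occurrences of $\nu$.

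For that merge I would isolate a short sub-lemma: for any $Y\in\Omega$ and $a,b\in[0,1]$, $\mix a \nu {\mix b \nu Y} = \mix {a+b-ab} \nu Y$. This is proved by using commutativity to flip both mixtures so that the nested term sits in front, applying associativity once, and then invoking idempotence $\mix p \nu \nu = \nu$ to annihilate the resulting $\nu$-with-$\nu$ term; the coefficient emerges as $1-(1-a)(1-b)=a+b-ab$. Applying this with $a=rp$, $b=\beta q$ and $Y=\mix \gamma \omega \tau$ immediately gives $\mix s \nu {\mix \gamma \omega \tau}$, and the proof is then finished by bookkeeping: substituting $\beta=\frac{1-r}{1-rp}$ one verifies $s = rp + \beta q(1-rp) = rp+(1-r)q$ and $\gamma = \frac{r(1-p)}{1-s} = \alpha$, exactly as claimed.

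I expect the main obstacle to be not any single algebraic step but the tracking of side conditions. Each use of parametric associativity requires its product parameter to differ from $1$, and the associativity step inside the collapsing sub-lemma requires $(1-a)(1-b)\neq 1$. I would check that each forbidden configuration (namely $rp=1$ in the first step, $\beta q=1$ in the second) forces $rp+(1-r)q=1$, i.e.\ precisely the excluded case $s=1$; the one remaining degeneracy, $a=b=0$ in the sub-lemma, makes both sides collapse directly to $Y$ via extremicity and so is dispatched by hand. Once these cases are cleanly separated, the chain of three associativity applications plus idempotence gives the identity with the stated coefficients.
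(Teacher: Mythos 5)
Your proof is correct and follows essentially the same route as the paper's: two applications of parametric associativity to peel the nesting off the left-hand side, followed by commutativity, one more associativity step and idempotence to merge the two copies of $\nu$, with the same coefficient bookkeeping. The only differences are presentational --- you package the merge as the sub-lemma $\mix{a}{\nu}{\mix{b}{\nu}{Y}} = \mix{a+b-ab}{\nu}{Y}$ and, unlike the paper, you explicitly track the degenerate cases where the side condition $pq\neq 1$ of parametric associativity fails, which is a welcome extra.
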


\begin{proof}
We use associativity and commutativity of convex mixtures,
\begin{align*}
\mix r {\mix p \nu \omega } {\mix q \nu \tau} 
&= \mix {rp} \nu { \mix {\frac{1-r}{1-rp}}  {\mix q \nu \tau}  \omega   } \\
&= \mix {rp} \nu { \mix {\frac{q(1-r)}{1-rp}}  \nu {\mix   {\frac {1 - \frac{(1-r)}{1-rp} }{1 - \frac{q(1-r)}{1-rp}} }  \omega \tau}    }  \\
&= \mix {rp} \nu { \mix {\frac{q(1-r)}{1-rp}}  \nu {\mix   {\frac {r(1-p) }{  1- rp -q+qr } } \omega \tau}    } \\
&= \mix {1- rp} { \mix {1 - \frac{q(1-r)}{1-rp}}   {\mix   {\frac {r(1-p) }{  1- rp -q+qr } } \omega \tau} \nu   } \nu \\
&= \mix {
     (1- rp) \left( 1 - \frac{q(1-r)}{1-rp} \right) }
      {  \mix   {\frac {r(1-p) }{  1- rp -q+qr } }
           \omega    \tau
      }  
      {\mix {\dots} \nu \nu   }  \\
&= \mix {1 - (r p + (1-r) q)} { \mix {\frac{r (1-p)}{1- (rp+ (1-r) q)} } \omega \tau  } \nu \\
&=\mix {r p + (1-r) q} \nu { \mix {\frac{r (1-p)}{1- (rp+ (1-r) q)} } \omega \tau  }.
\end{align*}
\end{proof}

\begin{remark}
For vector spaces,  Lemma \ref{lemma:combining_mixtures} simply expresses that
\begin{align*}
& \quad r (p \ \nu + (1-p) \ \omega) + (1-r) (q \ \nu + (1-q) \ \tau) 
= (r \ p + (1-r) \ q)\ \nu + r \ (1-p) \ \omega + (1-r) \ (1-q) \ \tau.
\end{align*}

\end{remark}

\begin{lemma}[Mixture is convexity-preserving] \label{prop:idempotent_convex_structures}
Let $(\Omega, f)$ be a convex structure, let $\nu, \omega, \tau \in \Omega$, and let $r,p \in [0,1]$.
Then
$$ \mix r {\mix p \nu \omega } {\mix p \nu  \tau} = \mix p \nu  {\mix r  \omega \tau } .  $$
In other words, the function
\begin{align*}
\mix p \nu \cdot : \ \Omega &\to \Omega \\
\omega &\mapsto \mix p \nu \omega,
\end{align*}
parameterized by any $\nu \in \Omega$ and $p \in [0,1]$,
is convexity-preserving.
\end{lemma}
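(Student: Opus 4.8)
The plan is to recognize this as a direct specialization of the combining-mixtures identity, Lemma~\ref{lemma:combining_mixtures}, which already carries out all the genuine work of unnesting and recombining the convex mixtures. That lemma states
$$ \mix r {\mix p \nu \omega } {\mix q \nu \tau} = \mix {r \, p + (1-r) \, q } \nu  {\mix \alpha   \omega \tau} ,  $$
with $\alpha = \frac{r (1-p)}{1- (rp+ (1-r) q)}$ whenever $rp + (1-r)q \neq 1$. So the first step is simply to instantiate this with $q = p$ and read off both coefficients.

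With $q = p$ the outer coefficient of $\nu$ becomes $r\,p + (1-r)\,p = p$, so the right-hand side already has the desired head $\mix p \nu \cdot$. For the inner coefficient, assuming $p \neq 1$ we compute
$$ \alpha = \frac{r(1-p)}{1 - p} = r , $$
which turns the right-hand side into exactly $\mix p \nu {\mix r \omega \tau}$, as claimed. This establishes the identity on the whole open range $p \in [0,1)$ with no further calculation.

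The only point requiring separate care is the degenerate endpoint $p = 1$, where $rp + (1-r)p = 1$ and the expression for $\alpha$ becomes the indeterminate $0/0$; this is precisely the excluded case in Lemma~\ref{lemma:combining_mixtures}, so I cannot invoke it directly there. The plan is to dispose of $p=1$ by hand using the structure axioms: by extremicity (axiom~1), $\mix 1 \nu \omega = \mix 1 \nu \tau = \nu$, so the left-hand side collapses to $\mix r \nu \nu$, which equals $\nu$ by idempotence (axiom~4); meanwhile the right-hand side is $\mix 1 \nu {\mix r \omega \tau} = \nu$ again by extremicity. Hence both sides equal $\nu$ and the identity holds at the endpoint as well. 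I expect this endpoint bookkeeping to be the only ``obstacle,'' and it is a mild one; everything else is an immediate substitution into the already-proven Lemma~\ref{lemma:combining_mixtures}. The final sentence of the statement, that $\omega \mapsto \mix p \nu \omega$ is convexity-preserving, is then just a rephrasing of the displayed identity under the definition of convexity-preserving maps, requiring no additional argument.
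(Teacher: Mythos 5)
Your proof is correct and follows essentially the same route as the paper: both specialize Lemma~\ref{lemma:combining_mixtures} to $q=p$ and simplify the inner coefficient to $\frac{r(1-p)}{1-p}=r$. Your explicit handling of the $p=1$ endpoint via extremicity and idempotence is a small extra care the paper omits (it relies instead on the remark in Lemma~\ref{lemma:combining_mixtures} that $\alpha$ is irrelevant in the degenerate case), but it is the same argument in substance.
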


\begin{remark} \label{remark:combining_quasi_convex_structures}
If $(\Omega, f)$ is a quasi-convex structure without idempotence, then we can only ensure
$$ \mix q {\mix p \nu \omega } {\mix p \nu \tau} 
= \mix p  {\mix \beta \nu \nu} {\mix q  \omega \tau},$$
where $\beta \in [0,1]$ depends on $p$ and $q$.
\end{remark}

\begin{proof}

Follows from Lemma~\ref{lemma:combining_mixtures} when $p=q$, 
\begin{align*}
\mix r {\mix p \nu \omega } {\mix p \nu \tau} 
&=\mix {r p + (1-r) p} \nu { \mix {\frac{r (1-p)}{1- (rp+ (1-r) p)} } \omega \tau  } \\
&=\mix p \nu { \mix r \omega \tau  } .
\end{align*}

\end{proof}

The following lemma really tells us that the vector $\vec V$ in Definition~\ref{def:mixture_finite} can be re-ordered together with $\vec P$ at will.

\begin{lemma}[Permuting mixtures]
\label{lemma:order_mixture}
Let $(\Omega, f)$ be a convex structure and $\vec V  = (\nu_1, \nu_2, \dots, \nu_n )$ a finite vector of elements of $\Omega$. Finally, let $\vec P = (p_1, \dots, p_n)$ be a vector whose elements form a probability distribution over $\vec V$ (that is, all $p_i \geq 0$ and $\sum_i p_i =1$). Then
$$f_{\vec P}(\vec V)=f_{\pi(\vec P)}(\pi(\vec V)),$$
where $\pi$ denotes any permutation of elements in $\vec V$ and $\vec P$.
\end{lemma}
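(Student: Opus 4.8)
The plan is to reduce the claim to the simplest possible rearrangement and then rebuild a general $\pi$ by composition. Since every permutation of $\{1,\dots,n\}$ is a product of \emph{adjacent} transpositions $s_k$ (swapping positions $k$ and $k+1$), and the weight vector is permuted by the \emph{same} map on both sides, it suffices to prove that one adjacent transposition leaves the value unchanged, i.e. $f_{\vec P}(\vec V)=f_{s_k(\vec P)}(s_k(\vec V))$ for every probability vector $\vec P$, every $\vec V$, and every $k$. Applying this once for each factor in a decomposition $\pi=s_{k_m}\circ\cdots\circ s_{k_1}$ (the intermediate weight vectors being the partially permuted ones) then yields the statement for arbitrary $\pi$. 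I would prove the single-transposition identity by induction on $n$.

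The base operation is the swap $s_1$ of the two outermost elements. Writing $T:=\mix{p'_3}{\nu_3}{\cdots}$ for the tail of the nested expression of Definition~\ref{def:mixture_finite}, we have $f_{\vec P}(\vec V)=\mix{p_1}{\nu_1}{\mix{p'_2}{\nu_2}{T}}$. First I would apply parametric commutativity to move $\nu_1$ to the right, obtaining $\mix{1-p_1}{\mix{p'_2}{\nu_2}{T}}{\nu_1}$, and then parametric associativity (outer parameter $1-p_1$, inner parameter $p'_2$); using $p'_2=p_2/(1-p_1)$, so that $(1-p_1)p'_2=p_2$, this rewrites the mixture as $\mix{p_2}{\nu_2}{\mix{p_1/(1-p_2)}{\nu_1}{T}}$. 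The key check is that these coefficients are exactly those produced by Definition~\ref{def:mixture_finite} from $s_1(\vec P)=(p_2,p_1,p_3,\dots)$, whose first conditional weight is $p_2$ and second is $p_1/(1-p_2)$; moreover $T$ is literally unchanged, because the two outer normalisations combine to the symmetric factor $(1-p'_1)(1-p'_2)=1-p_1-p_2$, leaving $p'_3,\dots,p'_n$ — and hence $T$ — invariant under the swap. This establishes $f_{\vec P}(\vec V)=f_{s_1(\vec P)}(s_1(\vec V))$.

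For a deeper transposition $s_k$ with $k\ge2$ I would use the recursive shape of the definition: $f_{\vec P}(\vec V)=\mix{p_1}{\nu_1}{f_{\vec Q}(\vec W)}$, where $\vec W=(\nu_2,\dots,\nu_n)$ and $\vec Q=(p_2,\dots,p_n)/(1-p_1)$. A short computation shows $\vec Q$ is again a probability distribution and that the conditional coefficients it generates coincide with $p'_2,p'_3,\dots$, so $f_{\vec Q}(\vec W)$ really is the tail $\mix{p'_2}{\nu_2}{T}$. Now $s_k$ acts on $(\vec V,\vec P)$ exactly as $s_{k-1}$ acts on $(\vec W,\vec Q)$ while fixing $\nu_1$ and $p_1$; by the inductive hypothesis applied to the length-$(n-1)$ mixture $f_{\vec Q}(\vec W)$ this inner value is invariant, hence so is $f_{\vec P}(\vec V)$. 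Together with the $s_1$ case this closes the induction.

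The main obstacle is not conceptual but the bookkeeping of normalisation constants: at each step one must verify that the coefficients obtained by manipulating the nested mixtures with the axioms agree on the nose with those that Definition~\ref{def:mixture_finite} assigns to the permuted weight vector. A secondary point needing care is the degenerate cases excluded by the axioms — parametric associativity requires $pq\neq1$, which here amounts to $p_2\neq1$, and several conditional weights are ill-defined when some $p_i$ equals $0$ or $1$. I would dispatch these directly via extremicity ($\mix{1}{\nu}{\omega}=\nu$) and idempotence, since a weight $p_i\in\{0,1\}$ collapses the corresponding binary mixture to one of its arguments, treating them as separate easy sub-cases rather than folding them into the generic computation.
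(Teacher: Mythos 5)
Your proposal is correct and follows essentially the same route as the paper: both reduce a general permutation to adjacent transpositions and realize each swap via parametric commutativity followed by parametric associativity, then verify that the resulting conditional weights are exactly those Definition~\ref{def:mixture_finite} assigns to the permuted probability vector (your induction on depth is just an unrolled version of the paper's in-place manipulation at position $i$). Your explicit treatment of the degenerate cases ($p_i\in\{0,1\}$, where associativity's precondition $pq\neq 1$ fails) is a welcome addition that the paper's proof passes over silently.
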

\begin{proof}
We will show that we  can  always swap two neighbouring elements, that is 
$f_{\vec P}(\vec V)=f_{{\pi_i}(\vec P)}(\pi_{i}(\vec V)),$
for permutations $\pi_i$  that swap the $i$-th and the $i+1$-th element. By repeated such swaps, we can then realize any permutation $\pi$  and so $f_{\vec P}(\vec V)=f_{\pi(\vec P)}(\pi(\vec V))$ in general.
We have
\begin{align*}
f_{\vec P}(\vec V)&=\mix {p'_1} {\nu_1} {\mix {p'_2} {\nu_2} {\dots \mix{p'_i}{\nu_i} {\mix {p'_{i+1}} {\nu_{i+1}} {\dots    }   }  }} \\
\flag{\text{commutativity}}&=\mix {p'_1} {\nu_1} {\mix {p'_2} {\nu_2} {\dots \mix{1-p'_i} {\mix {p'_{i+1}} {\nu_{i+1}} {\dots    }}{\nu_i}   }  } \\
\flag{\text{associativity}}&=\mix {p'_1} {\nu_1} {\mix {p'_2} {\nu_2} {\dots \mix{(1-p'_i)p'_{i+1}}{\nu_{i+1}} {\mix {p'_i/(1-p'_{i+1}(1-p'_i))} {\nu_i} {\mix{p'_{i+2}}{\nu_{i+2}}{\dots    }}   }  }}, \\
\end{align*}
We need to show that the new coefficients correspond to those of $\pi_i(\vec P)$. Indeed we find 
\begin{align*}
(1-p'_i)p'_{i+1}
&= \frac{(1-p'_i)p_{i+1}}{\Pi_{j<i+1}(1-p'_j)}
=\frac{p_{i+1}}{(\Pi_{j<i}(1-p'_j)}
= [\pi_i(\vec P)]_i' 
\end{align*}
and
\begin{align*}
\frac{p'_i}{1-p'_{i+1}(1-p'_i)}
&= \frac{p_i}{\Pi_{j<i}(1-p'_j)(1-[\pi_i(\vec P)]_i')}
= [\pi_i(\vec P)]_{i+1}'.
\end{align*}

\end{proof}

Since order does not matter, we may sometimes  denote $f_P(V) = f_{\vec P } (\vec V)$ for a subset $V \subset \Omega$ and a probability distribution $P$ over elements in $V$.

\begin{lemma} \label{lemma:mixture_of_distributions}
Let $(\Omega, f)$ be a convex structure and $\vec V  = (\nu_1, \nu_2, \dots, \nu_n )$ a finite vector of elements of $\Omega$. Let $\vec P= (p_1, \dots, p_n)$ and $\vec Q = (q_1, \dots, q_n)$ be two probability distributions over $\vec V$, and let $r \in [0,1]$. Then
$$\mix r {f_{\vec P}(\vec V)}{ f_{\vec Q}(\vec V)}  = f_{\vec R}(\vec V),$$
where $\vec R = (r_1, r_2, \dots, r_n)$ is a probability distribution given by  $r_i = r \ p_i + (1-r) \ q_i$.
\end{lemma}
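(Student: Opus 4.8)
The plan is to prove the identity by induction on $n$, the length of the vector $\vec V$, peeling off the first element $\nu_1$ at each stage and reducing the inner mixture to the inductive hypothesis applied to the tail $(\nu_2,\dots,\nu_n)$.

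First I would record the recursive shape hidden in Definition~\ref{def:mixture_finite}. Writing $\vec V' = (\nu_2,\dots,\nu_n)$ and taking $\vec P'$ to be the conditional distribution $P'_i = p_{i+1}/(1-p_1)$, a direct check of the coefficients $p'_k$ (exactly as in the verification following that definition) shows that
$$f_{\vec P}(\vec V) = \mix{p_1}{\nu_1}{f_{\vec P'}(\vec V')},$$
and analogously $f_{\vec Q}(\vec V) = \mix{q_1}{\nu_1}{f_{\vec Q'}(\vec V')}$ with $Q'_i = q_{i+1}/(1-q_1)$. The base case $n=1$ is immediate from idempotence, and $n=2$ follows from Lemma~\ref{lemma:combining_mixtures} together with idempotence.

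For the inductive step I would substitute these two expressions into $\mix r {f_{\vec P}(\vec V)}{f_{\vec Q}(\vec V)}$ and apply Lemma~\ref{lemma:combining_mixtures} with $\nu=\nu_1$, $\omega = f_{\vec P'}(\vec V')$ and $\tau = f_{\vec Q'}(\vec V')$, obtaining
$$\mix r {f_{\vec P}(\vec V)}{f_{\vec Q}(\vec V)} = \mix{r_1}{\nu_1}{\mix{\alpha}{f_{\vec P'}(\vec V')}{f_{\vec Q'}(\vec V')}},$$
where $r_1 = r\,p_1 + (1-r)\,q_1$ and $\alpha = \frac{r(1-p_1)}{1-r_1}$. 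The inner term is now a mixture of two finite mixtures over the \emph{same} shorter vector $\vec V'$, so the induction hypothesis yields $\mix{\alpha}{f_{\vec P'}(\vec V')}{f_{\vec Q'}(\vec V')} = f_{\vec R'}(\vec V')$ with $R'_i = \alpha\,P'_i + (1-\alpha)\,Q'_i$.

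The crux is then a coefficient computation. Using $1-r_1 = r(1-p_1) + (1-r)(1-q_1)$ one finds $\frac{\alpha}{1-p_1} = \frac{r}{1-r_1}$ and $\frac{1-\alpha}{1-q_1} = \frac{1-r}{1-r_1}$, whence
$$R'_i = \frac{r\,p_{i+1} + (1-r)\,q_{i+1}}{1-r_1} = \frac{r_{i+1}}{1-r_1}.$$
This is exactly the conditional distribution of $\vec R$ on the tail, so by the recursive form of the definition $\mix{r_1}{\nu_1}{f_{\vec R'}(\vec V')} = f_{\vec R}(\vec V)$, closing the induction. The main obstacle is precisely this bookkeeping of renormalized weights; I would also note in passing the degenerate case $r_1 = 1$, where $\alpha$ is irrelevant (as remarked after Lemma~\ref{lemma:combining_mixtures}) and the claim follows directly from extremicity. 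Finally, that $\vec R$ is a genuine probability distribution is clear, since $r_i \geq 0$ and $\sum_i r_i = r + (1-r) = 1$.
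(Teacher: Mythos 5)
Your proof is correct and follows essentially the same route as the paper's: induction on the length of $\vec V$, peeling off the first element via the recursive form of Definition~\ref{def:mixture_finite}, applying Lemma~\ref{lemma:combining_mixtures} to merge the two outer mixtures, and then verifying that the renormalized tail weights recombine into the conditional distribution of $\vec R$. Your coefficient bookkeeping (in particular the identities $\frac{\alpha}{1-p_1} = \frac{r}{1-r_1}$ and $\frac{1-\alpha}{1-q_1} = \frac{1-r}{1-r_1}$) is in fact spelled out more explicitly than in the paper, which states the final $\vec R'$ without showing the intermediate algebra.
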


\begin{proof} 
We prove this by induction. For two-element vectors, the claim follows directly from  Lemma~\ref{lemma:combining_mixtures}.  

Now assume that the claim holds for 
\begin{align*}
\vec V' = (\nu_{k+1}, \dots, \nu_n), 
\qquad
\frac{\vec P'}{1-p_k}= \frac1{1-p_k}(p_{k+1}, \dots, p_n) , 
\qquad
\frac{\vec Q'}{1-q_k}= \frac1{1-q_k} (q_{k+1}, \dots, q_n).
\end{align*}
Then we have, for the extended   vectors $\vec V = (\nu_k,\nu_{k+1}, \dots, \nu_n)$, $\vec P= (p_k, p_{k+1}, \dots, p_n) $ and $\vec Q= (q_k, q_{k+1}, \dots, q_n) $,
\begin{align*}
\mix r {f_{\vec P}(\vec V)}{ f_{\vec Q}(\vec V)}
&= \mix r 
   {\mix {p_k} {\nu_k} {f_{\frac{\vec P'}{1-p_k}} (\vec V')}}
   {\mix {q_k} {\nu_k} {f_{\frac{\vec Q'}{1-q_k}} (\vec V')}} \\
\flag{\text{Lemma~\ref{lemma:combining_mixtures}}} \quad
&= \mix{r \, p_k + (1-r) \, q_k}
   {\nu_k }
   {\mix
     {\frac{r\ (1-p_k)}{1- (r\ p_k + (1-r) \ q_k)}}
     {f_{\frac{\vec P'}{1-p_k}} (\vec V')}
     {f_{\frac{\vec Q'}{1-q_k}} (\vec V')}
    }\\
\flag{r_k:= r \, p_k + (1-r) \, q_k}
&= \mix{r_k}
   {\nu_k }
   {\mix
     {\frac{r \ (1-p_k)}{1-r_k}}
     {f_{\frac{\vec P'}{1-p_k}} (\vec V')}
     {f_{\frac{\vec Q'}{1-q_k}} (\vec V')}
    }\\
\flag{\text{induction hypothesis}} \quad
&= \mix{r_k}
   {\nu_k }
   {f_{\vec R'}(\vec V')}, 
\qquad \flag{\vec R' = \frac{r \ (1-p_k)}{1-r_k} \frac{\vec P'}{1-p_k} + \left(1-\frac{r \ (1-p_k)}{1-r_k}\right) \frac{\vec Q'}{1-q_k}} \\
&= \mix{r_k}
   {\nu_k }
   {f_{\vec R'}(V)}, 
\qquad \flag{\vec R' = \frac{1}{1 - r_k} (r \  \vec P' + (1-r) \ \vec Q' )} \\
&= f_{\vec R}(\vec V),   
\qquad \flag{\vec R = r \  \vec P + (1-r) \ \vec Q } .
\end{align*}

\end{proof}

\begin{lemma}[Nested mixtures as probability distributions]
\label{lemma:convex_combination_reverse}
Let $(\Omega, f)$ be a convex structure and $\vec V  = (\nu_1, \nu_2, \dots, \nu_n )$ a finite vector of elements of $\Omega$. Then, for any  $p_1',p_2',...,p_{n-1}'\in[0,1]$, there exists a  probability distribution vector $\vec P=(p_1,p_2,\dots,p_n)$ such that
$$\mix {p'_1}{\nu_1}{\mix{p'_2}{\nu_2}{ \dots  \mix{p'_{n-1}}{\nu_{n-1}}{\nu_n} \dots }}=f_{\vec P}(\vec V).$$
The terms of the probability vector are given by $p_k=p'_k\ (\Pi_{i<k}(1-p_i'))$ for $1<k<n$, $p_1= p_1'$, 
all $p_i\geq 0$ and $\sum_i p_i=1$.
\end{lemma}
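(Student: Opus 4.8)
The plan is to prove the identity by induction on $n$, the length of $\vec V$, exploiting the recursive structure already built into Definition~\ref{def:mixture_finite}. The crucial observation is that $f_{\vec P}$ can be peeled off one layer at a time: writing $\vec P' = (p_2,\dots,p_n)$ and $\vec V' = (\nu_2,\dots,\nu_n)$, the definition gives
\[
f_{\vec P}(\vec V) = \mix{p_1}{\nu_1}{f_{\frac{\vec P'}{1-p_1}}(\vec V')},
\]
valid when $p_1<1$ (and collapsing to $\nu_1$ by extremicity when $p_1=1$, in agreement with the formula). Since the left-hand side of the claim is itself a nested mixture whose outer coefficient is $p_1'$ and whose inner part is a nested mixture over $\vec V'$, this recursion is exactly what makes the induction go through.

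For the base case $n=2$ I would set $p_1 = p_1'$ and $p_2 = 1-p_1'$; both sides then reduce directly to $\mix{p_1'}{\nu_1}{\nu_2}$, and $(p_1,p_2)$ is manifestly a probability vector. For the inductive step, let $M := \mix{p_2'}{\nu_2}{\dots\mix{p_{n-1}'}{\nu_{n-1}}{\nu_n}\dots}$ be the inner nested mixture over $\vec V'$ with the shorter coefficient list $(p_2',\dots,p_{n-1}')$. The induction hypothesis supplies a probability vector $\vec Q = (q_2,\dots,q_n)$ with $q_2 = p_2'$ and $q_k = p_k'\prod_{2\le i<k}(1-p_i')$ such that $M = f_{\vec Q}(\vec V')$. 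I then define $p_1 := p_1'$ and $p_k := (1-p_1')\,q_k$ for $k\ge 2$ and verify three things: that this matches the claimed closed form, via a one-line index shift that absorbs the extra factor $1-p_1'$ into the product (so $p_k = p_k'\prod_{i<k}(1-p_i')$ for $1<k<n$, and $p_n = \prod_{i<n}(1-p_i')$, the value forced by normalization); that $\vec P$ is a probability distribution, since nonnegativity follows from $1-p_1'\ge 0$ and $q_k\ge 0$, while $\sum_k p_k = p_1' + (1-p_1')\sum_{k\ge 2}q_k = p_1' + (1-p_1') = 1$; and finally that $f_{\vec P}(\vec V) = \mix{p_1'}{\nu_1}{f_{\vec Q}(\vec V')} = M_n$, which is precisely the peeling identity above with $\tfrac{\vec P'}{1-p_1} = \vec Q$.

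The step demanding the most care is establishing the peeling identity from Definition~\ref{def:mixture_finite}: one must check that the renormalized coefficients of $f_{\vec P'/(1-p_1)}$ reproduce the primed coefficients $p_2',p_3',\dots$ of the original nested expression, which is the same recursive bookkeeping used in the vector-space computation following that definition. A convenient feature of working multiplicatively (rather than dividing by $\prod_{i<k}(1-p_i')$ as the definition does) is that the degenerate cases where some $p_i' = 1$ raise no division-by-zero trouble: they simply force the later $p_k$ to vanish, exactly as extremicity truncates the nested mixture. Thus no separate case analysis for saturated coefficients is needed, and the induction closes cleanly.
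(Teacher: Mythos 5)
Your proof is correct and follows the same overall strategy as the paper's: induction on $n$ with the identical base case and the identical closed form for the coefficients. The one substantive difference is how the inductive step is closed. The paper writes $\nu_1 = f_{\vec Q}(\vec V)$ with $\vec Q = (1,0,\dots,0)$ and the inner mixture as $f_{\vec Q'}(\vec V)$ with $\vec Q' = (0,p_2,\dots,p_n)$ (extension by zero to the full vector $\vec V$), and then invokes Lemma~\ref{lemma:mixture_of_distributions} to combine them into $f_{\vec R}(\vec V)$ with $\vec R = p_1'\,\vec Q + (1-p_1')\,\vec Q'$; you instead peel one layer off Definition~\ref{def:mixture_finite} directly, writing $f_{\vec P}(\vec V) = \mix{p_1}{\nu_1}{f_{\vec P'/(1-p_1)}(\vec V')}$ and matching $\vec P'/(1-p_1)$ with the inductive distribution $\vec Q$. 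Your route is more self-contained (it needs no auxiliary lemma, only the recursion built into the definition), at the price of having to verify the peeling identity's coefficient bookkeeping yourself --- which you correctly identify as the delicate point and which does check out, including the degenerate case $p_i'=1$ handled by extremicity; the paper's route offloads exactly that bookkeeping onto the already-proved Lemma~\ref{lemma:mixture_of_distributions}. Either way the argument is sound.
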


\begin{proof}
We prove the lemma by induction.
For $\vec V_2 = (\nu_1, \nu_2)$  we trivially find that $\mix {p_1'} {\nu_1} {\nu_2}=f_{\vec P}(\vec V)$ where $p_1=p_1'$ and $p_2=1-p_1'$ as required.
Now assume that the claim holds for $\vec V'=(\nu_2, \dots \nu_n)$, and coefficients $p_2',  \dots, p_{n-1}'$, that is  
$$\mix{p'_2}{\nu_2}{\dots \mix{p'_{n-1}}{\nu_{n-1}}{\nu_n}\dots } = f_{\vec P^*}(\vec V'),$$
with $\vec P^* = (p_2, \dots, p_n)$, where  $p_k = p'_k\ (\Pi_{i<k}(1-p_i')) $ for $2<k<n$ and $p_2= p_2'$.
We will show that it also holds for the extensions $V = (\nu_1, \nu_2, \dots, \nu_n)$ and  $\mix {p'_1}{\nu_1}{\mix{p'_2}{\nu_2}{\dots \mix{p'_{n-1}}{\nu_{n-1}}{\nu_n}\dots }}$. To see this, note that
\begin{align*}
\mix {p'_1}{\nu_1}{\mix{p'_2}{\nu_2}{\dots \mix{p'_{n-1}}{\nu_{n-1}}{\nu_n}\dots }}
&=\mix{p'_1}{\nu_1}{f_{\vec P*}(\vec V')}\\
\flag{\vec Q=(1,0,\dots,0), \quad \vec Q'=(0,p_2,\dots,p_n)} \quad
&=\mix {p'_1}{f_{\vec Q}(\vec V)}{f_{\vec Q'}(\vec V)}\\
\flag{\text{Lemma~\ref{lemma:mixture_of_distributions}}}
&=f_{\vec R}(\vec V),
\end{align*} where $\vec R = p_1' \vec Q + (1-p_1') \vec Q'  = (p_1', (1-p_1') p_2', \dots , (1-p_1') p_n)$. Note that by construction  $\sum_i r_1 =1$  and  $r_k = p_k'/ (\Pi_{i<k}(1-p_i'))$. 
\end{proof}

\begin{lemma}[Repetitions in convex mixtures]
\label{lemma:mixture_repeated}
Let $(\Omega, f)$ be a convex structure and $\vec V  = (\nu_1, \nu_2, \dots, \nu_{n-2},\nu,\nu )$ a finite vector of elements of $\Omega$. Finally, let $\vec P = (p_1, \dots, p_n)$ be a probability distribution  vector. Then
\begin{align*}
f_{\vec P}(\vec V)=f_{\vec{\tilde P}}(\vec{\tilde V}),
\qquad
\vec {\tilde P}&= (p_1,p_2,\dots,p_{n-2},\ p_{n-1}+p_n), \\
\vec {\tilde V}&=(\nu_1,\nu_2,\dots,\nu_{n-2},\ \nu).
\end{align*}

\end{lemma}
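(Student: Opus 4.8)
The plan is to unfold the definition of the nested mixture (Definition~\ref{def:mixture_finite}) and then use idempotence of the convex structure (property~4 of Definition~\ref{def:convexity}) to collapse the two repeated tail entries. Writing out $f_{\vec P}(\vec V)$ according to the definition, its innermost term is $\mix{p'_{n-1}}{\nu_{n-1}}{\nu_n}=\mix{p'_{n-1}}{\nu}{\nu}$, since the last two components of $\vec V$ are both $\nu$. By idempotence this equals $\nu$, so the entire nested expression reduces to
$$\mix{p'_1}{\nu_1}{\mix{p'_2}{\nu_2}{\cdots \mix{p'_{n-2}}{\nu_{n-2}}{\nu}}},$$
which is precisely the nested mixture over the shortened vector $\vec{\tilde V}=(\nu_1,\dots,\nu_{n-2},\nu)$ carried out with the coefficients $p'_1,\dots,p'_{n-2}$.

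The remaining work is pure bookkeeping: I must verify that $p'_1,\dots,p'_{n-2}$ are exactly the coefficients that Definition~\ref{def:mixture_finite} assigns to the probability vector $\vec{\tilde P}=(p_1,\dots,p_{n-2},p_{n-1}+p_n)$. This follows at once from the recursive formula $p'_k=p_k/\prod_{i<k}(1-p'_i)$: for each index $k\leq n-2$ the value $p'_k$ depends only on $p_1,\dots,p_k$, none of which are modified when the last two probabilities are merged. A short induction on $k$ then yields $\tilde p'_k=p'_k$ for all $k\leq n-2$, where $\tilde p'_k$ denotes the coefficients associated with $\vec{\tilde P}$. One also checks that $\vec{\tilde P}$ is a genuine probability vector, since $\sum_{i=1}^{n-2}p_i+(p_{n-1}+p_n)=\sum_{i=1}^{n}p_i=1$.

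I expect the only subtlety to be this coefficient matching rather than the algebraic collapse itself. The potential worry is that merging the tail probabilities might feed back into the earlier coefficients through the products $\prod_{i<k}(1-p'_i)$; the observation that dissolves it is simply that these products never involve indices $\geq n-1$, so the prefix coefficients are left untouched. If one prefers to sidestep the explicit recursion, the same conclusion can be reached by first invoking Lemma~\ref{lemma:order_mixture} to relocate the two copies of $\nu$ into any convenient adjacent position before collapsing them, but the direct unfolding above already makes the coefficient dependence transparent and is the route I would write up.
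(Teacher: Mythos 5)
Your proposal is correct and follows essentially the same route as the paper's proof: unfold the nested mixture, collapse the innermost term $\mix{p'_{n-1}}{\nu}{\nu}=\nu$ by idempotence, and match the remaining coefficients to those of $\vec{\tilde P}$, with the last entry $p_{n-1}+p_n$ accounted for by normalization. The paper packages the final coefficient-matching step as an appeal to Lemma~\ref{lemma:convex_combination_reverse}, but the content is the same as your explicit induction on the unchanged prefix coefficients.
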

\begin{proof}
We have
\begin{align*}
f_{\vec P}(\vec V)
&=\mix{ p'_1}{\nu_1}{\mix{ p'_2}{\nu_2}{\dots,\mix{ p'_{n-2}}{\nu_{n-2}}{\mix{ p'_{n-1}}{\nu}{\nu}}}}\\
\flag{\text{idempotence}}
&=\mix{\tilde p'_1}{\nu_1}{\mix{\tilde p'_2}{\nu_2}{\dots,\mix{\tilde p'_{n-2}}{\nu_{n-2}}{\nu}}}\\
\flag{\text{Lemma~\ref{lemma:convex_combination_reverse}}}
&=f_{\vec{\tilde P}}(\vec{\tilde V}),
\end{align*}
for $\tilde p'_i$ as in Definition~\ref{def:mixture_finite}. That the last element of $\vec{\tilde P}$ is indeed $p_{n-1}+p_n$ can be seen directly from the construction of $\vec{\tilde P}$ through a normalized probability distribution.

\end{proof}

\begin{lemma}
\label{lemma:convex_distributivity}
Let $(\Omega, f)$ be a convex structure,  $\vec V  = (\nu_1, \nu_2, \dots, \nu_{n-1},\mix{r}{\nu}{\omega} )$ a finite vector of elements of $\Omega$ and $\vec P = (p_1, \dots, p_n)$ be a probability distribution vector. Then
$$f_{\vec P}(\vec V)=f_{\vec{\tilde P}}(\vec{\tilde V})$$
with $\tilde{\vec V}=(\nu_1,\nu_2,\dots,\nu_{n-1},\nu,\omega)$ and $\vec{\tilde P}=(p_1,p_2,\dots,p_{n-1},rp_n,(1-r)p_n)$.
\end{lemma}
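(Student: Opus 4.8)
The plan is to recognize the claimed identity as an instance of the already-established correspondence between nested binary mixtures and probability-distribution mixtures (Lemma~\ref{lemma:convex_combination_reverse}), applied to the enlarged vector $\vec{\tilde V}$.

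First I would unfold $f_{\vec P}(\vec V)$ according to Definition~\ref{def:mixture_finite}. Since the final entry of $\vec V$ is $\nu_n = \mix{r}{\nu}{\omega}$, the innermost nested term $\mix{p'_{n-1}}{\nu_{n-1}}{\nu_n}$ becomes $\mix{p'_{n-1}}{\nu_{n-1}}{\mix{r}{\nu}{\omega}}$, so that
\[
f_{\vec P}(\vec V) = \mix{p'_1}{\nu_1}{\mix{p'_2}{\nu_2}{\dots \mix{p'_{n-1}}{\nu_{n-1}}{\mix{r}{\nu}{\omega}}\dots}},
\]
where the $p'_k$ are the nested coefficients determined by $\vec P$. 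The key observation is that the right-hand side is now exactly a nested binary mixture over the $(n+1)$-element vector $\vec{\tilde V}=(\nu_1,\dots,\nu_{n-1},\nu,\omega)$, with nested-coefficient sequence $(p'_1,\dots,p'_{n-1},r)$ of length $n$.

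Next I would apply Lemma~\ref{lemma:convex_combination_reverse} to this enlarged nested expression to obtain $f_{\vec P}(\vec V)=f_{\vec Q}(\vec{\tilde V})$ for the probability vector $\vec Q=(q_1,\dots,q_{n+1})$, with $q_k$ given by the product formula of that lemma. It then remains to evaluate $\vec Q$ and check it equals $\vec{\tilde P}$. For $k\le n-1$ the nested coefficient is $p'_k$, so $q_k=p'_k\prod_{i<k}(1-p'_i)=p_k$, reproducing the original weights. For the two new entries, the nested coefficient on $\nu$ is $r$ and the terminal slot carries $\omega$, giving $q_n=r\prod_{i<n}(1-p'_i)$ and $q_{n+1}=(1-r)\prod_{i<n}(1-p'_i)$. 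The one auxiliary identity I would invoke here is $p_n=\prod_{i<n}(1-p'_i)$, which follows by telescoping $\sum_{k<n}p_k = 1-\prod_{i<n}(1-p'_i)$ (using that $\vec P$ is a probability distribution and $p_k=p'_k\prod_{i<k}(1-p'_i)$). With this, $q_n=r\,p_n$ and $q_{n+1}=(1-r)\,p_n$, so $\vec Q=\vec{\tilde P}$ and the claim follows.

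The computation is essentially bookkeeping, so the main obstacle is purely in the indexing: making sure the nested-coefficient sequence of the enlarged vector aligns correctly with Lemma~\ref{lemma:convex_combination_reverse} (which expects $n$ nested coefficients for an $(n+1)$-element vector), and confirming the telescoping identity $p_n=\prod_{i<n}(1-p'_i)$. An alternative route, avoiding Lemma~\ref{lemma:convex_combination_reverse}, would be to write $\mix{r}{\nu}{\omega}=f_{(r,1-r)}(\nu,\omega)$ and use the ordering and combination lemmas (\ref{lemma:order_mixture} and~\ref{lemma:mixture_of_distributions}) to merge the two-point distribution into the $n$-point one, but the direct appeal to Lemma~\ref{lemma:convex_combination_reverse} is cleaner.
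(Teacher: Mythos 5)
Your proposal is correct and follows essentially the same route as the paper: unfold $f_{\vec P}(\vec V)$ so that the trailing $\mix{r}{\nu}{\omega}$ turns the expression into a nested binary mixture over the enlarged vector $\vec{\tilde V}$, then invoke Lemma~\ref{lemma:convex_combination_reverse} and verify that the resulting probability vector is $\vec{\tilde P}$ via the relation $p'_k=p_k/\prod_{i<k}(1-p'_i)$. Your explicit justification of the telescoping identity $p_n=\prod_{i<n}(1-p'_i)$ is a small point the paper leaves implicit, but the argument is the same.
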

\begin{proof}
We have
\begin{align*}
f_{\vec P}(\vec V)&=\mix {p'_1} {\nu_1} {\mix {p'_2} {\nu_2} {\dots \mix{p'_{n-1}}{\nu_{n-1}} {\mix {r} {\nu} {\omega }   }  }} \\
\flag{\text{Lemma~\ref{lemma:convex_combination_reverse}}}&= f_{\vec{\tilde P}}(\tilde V),
\end{align*}
where the only thing left to check is that indeed the last two entries of $\vec{\tilde P}$ are $\tilde p_{n}=rp_n$ and $\tilde p_{n+1}=(1-r)p_n$ respectively. This can easily seen since by the above construction according to Definition~\ref{def:mixture_finite} $p_k'=p_k/(\Pi_{i<k}(1-p_i'))$, and so by Lemma~\ref{lemma:convex_combination_reverse}
$$\tilde p_n=r(\Pi_{i<n}(1-p_i'))=rp_n,$$
which guarantees by construction that $\tilde p_{n+1}=(1-r)p_n$ due to normalization.
\end{proof}

\subsubsection{Convex hull}

Taking the convex hull is essentially closing a subset under the operation of convex mixtures.

\begin{definition}[Convex hull]
\label{def:convex_hull}
Let $(\Omega, f)$ be a quasi-convex structure and $V$ a subset of $\Omega$.  The \emph{convex hull} $V^\P$ of $V$ is the smallest quasi-convex subset of $\Omega$ that contains $V$. 
\end{definition}

\begin{remark} \label{remark:convex_hull_properties}
  Let $(\Omega, f)$ be a quasi-convex structure, and $V, W \subseteq \Omega$. The operation of taking the convex hull has the following properties with respect to the partial order given by inclusion: 
  \begin{enumerate}
  \item  inflating, $V \subseteq V^\P$, 
  \item   order-preserving, $V \subseteq W \implies V^\P  \subseteq  W^\P$,
  \item $V$ is convex $\Leftrightarrow \ V^\P = V$,
  \item and in particular,  idempotent, $(V^\P)^\P = V^\P$,
  \item even if $V$ is finite, $V^\P$ may be uncountable,
  \item for any $\nu, \omega \in V$ and any $p \in [0,1]$, $\mix p \nu \omega \in V^\P$.
  \end{enumerate}
\end{remark}

\begin{lemma} \label{lemma:finite_convey_hull}
Let $(\Omega, f)$ be a convex structure and $V = \{\nu_1, \nu_2 , \dots, \nu_n\} \subset \Omega$ a finite subset. Then the convex hull of $V$ has the form 
$$V^\P = \bigcup_{P \in \P_V } \{ f_P(V) \}, $$
where $ \P_V$ is the set of all valid probability distributions over $V$ (that is, all $n$-element vectors $P =(p_1, p_2, \dots, p_n)$ satisfying all $p_i \geq 0$ and $\sum_i p_i =1$).
\end{lemma}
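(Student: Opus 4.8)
The plan is to set $M := \bigcup_{P \in \P_V}\{f_P(V)\}$ and prove the two inclusions $V^\P \subseteq M$ and $M \subseteq V^\P$ separately, exploiting that $V^\P$ is by definition the \emph{smallest} quasi-convex subset containing $V$ (Definition~\ref{def:convex_hull}). For the first inclusion I would show that $M$ is itself a quasi-convex subset containing $V$; minimality of $V^\P$ then gives $V^\P \subseteq M$ for free. For the reverse inclusion I would argue that every iterated mixture $f_P(V)$ is forced to lie in any quasi-convex set that contains $V$, and in particular in $V^\P$.

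To see that $M$ contains $V$, note that each $\nu_i$ equals $f_P(V)$ for the degenerate distribution $P$ that places all weight on $\nu_i$: after reordering via Lemma~\ref{lemma:order_mixture} so that $\nu_i$ is listed first with leading coefficient $p'_1 = 1$, extremicity (axiom~1 of Definition~\ref{def:convexity}) collapses the whole nested expression to $\nu_i$. To see that $M$ is closed under $f$, I would invoke Lemma~\ref{lemma:mixture_of_distributions}: for any $P, Q \in \P_V$ and $r \in [0,1]$ it states $\mix r {f_P(V)}{f_Q(V)} = f_R(V)$ with $R = r P + (1-r) Q$, and $R$ is again a probability distribution, so the mixture stays in $M$. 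Since the axioms of a (quasi-)convex structure are identities that hold throughout $\Omega$, closure under $f$ guarantees that $(M, f)$ inherits them; hence $M$ is a quasi-convex subset (in fact convex, as $\Omega$ is idempotent) containing $V$, and $V^\P \subseteq M$.

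For $M \subseteq V^\P$ I would proceed by induction on the support size $n$. Fix any quasi-convex $C$ with $V \subseteq C$; the base case $n=1$ is immediate since $f_P(V) = \nu_1 \in C$. In the inductive step I would write $f_P(V) = \mix{p'_1}{\nu_1}{f_{P^*}(V')}$ as in Definition~\ref{def:mixture_finite}, where $V' = (\nu_2, \dots, \nu_n)$ and $P^*$ is the renormalized tail distribution; the inner term lies in $C$ by the induction hypothesis, $\nu_1 \in C$, and closure of $C$ under $f$ then places $f_P(V)$ in $C$. Taking $C = V^\P$ yields $M \subseteq V^\P$, and combined with the first inclusion gives the claimed equality.

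The calculational heart of the argument, closure of $M$ under mixing, is already discharged by Lemma~\ref{lemma:mixture_of_distributions}, so no new nested-mixture manipulation is needed here. I expect the only genuinely delicate point to be the bookkeeping around \emph{degenerate} probability vectors: whenever some $p'_i = 1$, the coefficients $p'_k$ for $k > i$ in Definition~\ref{def:mixture_finite} involve division by $\prod_{j<k}(1-p'_j) = 0$ and are formally undefined. I would handle this by reading $f_P(V)$ through extremicity, which fixes the value to be $\nu_i$ regardless of the undefined tail, and then check that this convention is consistent with the permutation and renormalization lemmas used above.
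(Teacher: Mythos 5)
Your proof is correct and follows essentially the same route as the paper's: one inclusion via minimality of the convex hull after showing the right-hand side is closed under binary mixing (Lemma~\ref{lemma:mixture_of_distributions}), the other by inductively unwinding the nested mixture and using closure of $V^\P$ under $f$, which is exactly the paper's ``and so on'' iteration written as a formal induction. Your explicit verification that the right-hand side contains $V$ via degenerate distributions, and your care about the division-by-zero in the coefficients of Definition~\ref{def:mixture_finite}, are details the paper leaves implicit but do not alter the argument.
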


\begin{proof}
First we prove direction $\supseteq$, that is we show that if  $V^\P$ is a quasi-convex set that contains $V$, it must contain the right-hand side. For any valid probability distribution $P \in \P_V$, $f_P(V) \in V^\P$. To see this, observe firstly that all $q_i \in [0,1]$. Now  $x_{n-1} = \mix {q_{n-1}} {\nu_{n-1}} {\nu_n} \in V^\P$,
because it is the convex mixture of two elements of $V$. But then ${x_{n-2}} = \mix {q_{n-2}} {\nu_{n-1}} {x_n} $ 
is also in $V^\P$, because it is the convex mixture of two elements of $V^\P$. The same is true for $ \mix {q_{n-2}} {\nu_{n-2}} {x_{n-2}}  $ and so on until we reach $ x_1 = \mix {q_{1}} {\nu_1} {x_2} = f_P (V). $

To prove direction $\subseteq$, we need to show that the right-hand side is convex.
This is immediately given  by Lemma~\ref{lemma:mixture_of_distributions}. 
Indeed, for all $r \in [0,1]$ and any two elements $f_P(V), f_Q(V)$,  there exists a valid probability distribution $R = (r_1, r_2, \dots, r_n) \in \P_V$ such that $\mix r {f_P(V)}{ f_Q(V)}  = f_R(V)$. The coefficients of the new distribution are $r_i = r \ p_i + (r-1) \ q_i$.
\end{proof}

The following proposition will be used ahead to prove desirable properties of the convex hull.

\begin{proposition}[Finite cover for convex hull]
\label{prop:convex_hull_finite_subsets_cover}
Let $(\Omega, f)$ be a quasi-convex structure an $V$ a subset.  
We can take the convex hull of $V$ over a cover composed only of finite sets,   that is
$$V^\P=\bigcup_{\widetilde V\subseteq V, \ \widetilde V \text{finite}} \widetilde V^\P \quad = \bigcup_{\widetilde V\subseteq V, \ \widetilde V \text{finite}} \bigcup_{P \in \P_{\widetilde V}} \{f_P(\widetilde V)\}.$$
\end{proposition}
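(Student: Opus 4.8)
The plan is to prove the two displayed equalities separately, treating the first (the genuine content) as a double inclusion and deriving the second directly from a previous lemma. Write $U := \bigcup_{\widetilde V\subseteq V,\ \widetilde V\text{ finite}} \widetilde V^\P$ for the middle expression. The rightmost equality, $\widetilde V^\P=\bigcup_{P\in\P_{\widetilde V}}\{f_P(\widetilde V)\}$ for each finite $\widetilde V$, is exactly Lemma~\ref{lemma:finite_convey_hull}, so once $V^\P=U$ is established the rest follows by substituting that characterization termwise into the union. Hence the whole task reduces to showing $V^\P=U$.

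For the inclusion $U\subseteq V^\P$, I would use monotonicity of the convex hull (Remark~\ref{remark:convex_hull_properties}, property~2): every finite $\widetilde V\subseteq V$ satisfies $\widetilde V^\P\subseteq V^\P$, and taking the union over all finite subsets preserves this, giving $U\subseteq V^\P$ immediately.

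For the reverse inclusion $V^\P\subseteq U$, I would invoke the defining minimality of the convex hull (Definition~\ref{def:convex_hull}): since $V^\P$ is the \emph{smallest} quasi-convex set containing $V$, it suffices to check that $U$ is a quasi-convex set containing $V$. That $U$ contains $V$ is clear from the inflating property: for each $\nu\in V$ the singleton $\{\nu\}$ is a finite subset with $\nu\in\{\nu\}^\P\subseteq U$. The substantive point is that $U$ is quasi-convex, i.e.\ closed under the binary mixture $f_p$. Here I would take arbitrary $x,y\in U$ and $p\in[0,1]$, pick finite subsets $\widetilde V_1,\widetilde V_2\subseteq V$ with $x\in\widetilde V_1^\P$ and $y\in\widetilde V_2^\P$, and form $\widetilde V:=\widetilde V_1\cup\widetilde V_2$, which is again a finite subset of $V$. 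By monotonicity $x,y\in\widetilde V^\P$, and since $\widetilde V^\P$ is itself quasi-convex we get $\mix p x y\in\widetilde V^\P\subseteq U$.

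The main obstacle — really the only idea needed — is this last step: a single convex mixture operation involves just two points, so it can always be absorbed into the convex hull of a \emph{single} finite subset (the union of the two finite witnesses), even though those two points may originate from different finite subsets of $V$. This is exactly what makes the cover by finite subsets closed under mixing, and hence quasi-convex. No limiting or countability argument is required because quasi-convexity is defined purely by closure under the binary operation $f_p$; all higher mixtures reduce to iterated binary ones, each of which stays finite.
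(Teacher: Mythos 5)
Your proposal is correct and follows essentially the same route as the paper's proof: the key step in both is showing that the union over finite subsets is closed under binary mixtures by absorbing the two finite witnesses $\widetilde V_1,\widetilde V_2$ into their (still finite) union, and the rightmost equality is delegated to Lemma~\ref{lemma:finite_convey_hull}. The only cosmetic difference is that you conclude $V^\P\subseteq U$ by the minimality clause in Definition~\ref{def:convex_hull}, whereas the paper applies $\cdot^\P$ to the inclusion $V\subseteq U$ and uses idempotence on the convex set $U$ — these are the same argument in different clothing.
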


\begin{proof}

As a first step, we show that the set on the right-hand side is convex.
A convex combination of two elements on the right-hand side can be written as
$\omega= \mix p \nu {\nu'}$, 
with  $\nu\in\widetilde V_1^\P$ and $ \nu' \in\widetilde V_2^\P$, where $\widetilde V_1, \widetilde V_2 \subseteq V$ are finite subsets.
It follows that $\omega\in(\widetilde V_1\cup \widetilde V_2)^\P$. But $\widetilde V_1\cup \widetilde V_2$ is also finite, so $\omega\in \bigcup_{\widetilde V\subseteq V, \widetilde V \text{ finite}} \widetilde V^\P$.

Now we may prove direction $\subseteq$. We have
\begin{align*}
V &=\bigcup_{\widetilde V\subseteq V, \widetilde V \text{ finite}}\widetilde V \\
\flag{\cdot^\P \text{  inflating}}\qquad 
  &\subseteq \bigcup_{\widetilde V\subseteq V, \ \widetilde V \text{finite}} \widetilde V^\P \\
\flag{\cdot^\P \text{  order-preserving}} 
\implies V^\P 
  &\subseteq \left( \bigcup_{\widetilde V\subseteq V, \ \widetilde V \text{finite}} \widetilde V^\P  \right)^\P  \\
\flag{\text{right-hand side convex}} \qquad 
  &=  \bigcup_{\widetilde V\subseteq V, \ \widetilde V \text{finite}} \widetilde V^\P .
\end{align*}

Direction $\supseteq$ is straight-forward as any element of the right-hand side can be written as $\omega=\mix p \nu {\nu'}$, 
with   $\nu, \nu' \in\widetilde V \subseteq V$, therefore $\omega \in V^\P$. 
The final inequality follows directly from Lemma~\ref{lemma:finite_convey_hull}.
\end{proof}

\subsubsection{Connecting convex structures}

More generally, we can check whether a set is convex by mapping it to a known convex structure. Intuitively, the mapping is like a weaker version of an embedding that preserves convex mixtures. 

\begin{proposition}[Connecting convex structures]
\label{prop:connect_convex_structures}
Let $(\Sigma, f)$ be a  quasi-convex structure, and let $\Omega$ be a set. If there are two maps
$ \e: \Omega \to  \Sigma$ and $\h: \Sigma \to \Omega$ 
satisfying the two conditions
\begin{align*}
 \h \circ \e \, (\mu) = \mu, \qquad 
\h \circ \mix p {\e \circ \h ( \nu)} \omega  = \h \circ \mix p \nu \omega, 
\end{align*}
then $(\Omega,  f')$ is a quasi-convex structure, where we defined
\begin{align*}
f': [0,1] \times \Omega \times \Omega & \to \Omega \\
(p, \nu, \omega) &\mapsto  \h \circ  \mix p {\e(\nu) } {\e(\omega) } =: \mixg p \nu \omega.
\end{align*}
Furthermore, $\h$ is convexity-preserving, $\h \circ \mix p \nu \omega = \mixg p {\h(\nu) } {\h(\omega)}$.
If $(\Sigma, f)$ is a convex structure, then so is $(\Omega, \widetilde f)$.
\end{proposition}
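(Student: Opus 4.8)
The plan is to verify directly that $f'$ satisfies the axioms of a quasi-convex structure (Definition~\ref{def:convexity}), reducing each statement about $f'$ to the corresponding statement about $f$ in $\Sigma$ through the defining relation $\mixg{p}{\nu}{\omega} = \h\circ\mix{p}{\e(\nu)}{\e(\omega)}$. The single technical tool I would isolate first is a pair of \emph{absorption} identities: for all $a,b\in\Sigma$ and $p\in[0,1]$,
$$\h\circ\mix{p}{\e\circ\h(a)}{b}=\h\circ\mix{p}{a}{b}\qquad\text{and}\qquad \h\circ\mix{p}{a}{\e\circ\h(b)}=\h\circ\mix{p}{a}{b}.$$
The first is exactly hypothesis (C2). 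The second follows by conjugating (C2) with parametric commutativity of $f$: I would write $\mix{p}{a}{\e\circ\h(b)}=\mix{1-p}{\e\circ\h(b)}{a}$, apply (C2) to delete the $\e\circ\h$, and then commute back. Intuitively both identities say that inside any $f$-mixture that is ultimately fed to $\h$, the operator $\e\circ\h$ acts trivially.

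From these two identities I would first establish that $\h$ is convexity-preserving. Unfolding the definition gives $\mixg{p}{\h(\nu)}{\h(\omega)}=\h\circ\mix{p}{\e\circ\h(\nu)}{\e\circ\h(\omega)}$, and applying each absorption identity once removes both copies of $\e\circ\h$, yielding $\h\circ\mix{p}{\nu}{\omega}$; this is precisely the stated identity $\h\circ\mix{p}{\nu}{\omega}=\mixg{p}{\h(\nu)}{\h(\omega)}$.

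The three pointwise axioms are then immediate. For extremicity, $\mixg{1}{\nu}{\omega}=\h\circ\mix{1}{\e(\nu)}{\e(\omega)}=\h(\e(\nu))=\nu$, using extremicity in $\Sigma$ together with the insertion property (C1), $\h\circ\e=\id$. For parametric commutativity, $\mixg{p}{\nu}{\omega}=\h\circ\mix{p}{\e(\nu)}{\e(\omega)}=\h\circ\mix{1-p}{\e(\omega)}{\e(\nu)}=\mixg{1-p}{\omega}{\nu}$. When $(\Sigma,f)$ is moreover idempotent, so is $f'$, since $\mixg{p}{\nu}{\nu}=\h\circ\mix{p}{\e(\nu)}{\e(\nu)}=\h(\e(\nu))=\nu$; this idempotence is exactly the extra axiom that upgrades the quasi-convex conclusion to the convex one in the final sentence of the proposition.

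The one genuinely laborious step, and the main obstacle, is parametric associativity, because it involves nested $f'$-mixtures whose inner arguments carry spurious $\e\circ\h$ factors. My plan is to reduce both sides to $\h$ applied to a single triple $f$-mixture of $\e(\nu),\e(\omega),\e(\tau)$ and then invoke parametric associativity of $f$ in $\Sigma$. For the left side $\mixg{p}{\mixg{q}{\nu}{\omega}}{\tau}$ I would unfold the inner mixture so that the outer one carries $\e\circ\h\big(\mix{q}{\e(\nu)}{\e(\omega)}\big)$ in its first slot, then apply the first absorption identity to delete the $\e\circ\h$, obtaining $\h\circ\mix{p}{\mix{q}{\e(\nu)}{\e(\omega)}}{\e(\tau)}$. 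Symmetrically, on the right side $\mixg{pq}{\nu}{\mixg{\frac{1-p}{1-pq}}{\tau}{\omega}}$ the inner mixture occupies the \emph{second} slot, so I would use the second absorption identity to reach $\h\circ\mix{pq}{\e(\nu)}{\mix{\frac{1-p}{1-pq}}{\e(\tau)}{\e(\omega)}}$. Parametric associativity of $f$ (valid since $pq\neq1$) equates these two triple mixtures, closing the argument. The coefficient bookkeeping is inherited verbatim from $\Sigma$, so no new computation is needed; the only care required is tracking which absorption identity (first- or second-slot) applies at each nesting, which is where an error would most easily arise.
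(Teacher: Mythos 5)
Your proposal is correct and follows essentially the same route as the paper's proof: each axiom of $f'$ is reduced to the corresponding axiom of $f$ by stripping the spurious $\e\circ\h$ factors via hypothesis (C2), with the second-slot version of the absorption identity obtained exactly as you describe, by conjugating (C2) with parametric commutativity (the paper does this inline in both the associativity and the convexity-preservation steps rather than isolating it as a separate lemma, but the argument is identical). The only item you omit is the one-line remark that $f'$ is well defined because $\Sigma$ is closed under $f$, which is immediate.
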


\begin{proof}
Firstly,  $ f'$ is well defined, because  $\Sigma$ is convex, so  $\mix p {\e(\nu) } {\e(\omega) } \in \Sigma$.
The remaining properties follow directly from definition:

\begin{enumerate}

\item 
extremicity,
\begin{align*}
\mixg 1 \nu \omega  &=\h \circ  \mix 1 {\e(\nu) } {\e(\omega) } = \h \circ \e(\nu) = \nu;
\end{align*}

\item commutativity,
\begin{align*}
\mixg p \nu \omega  = 
 \h \circ \mix p {\e(\nu) } {\e(\omega) }  
 =  \h \circ (\mix {1-p}  {\e(\omega) } {\e(\nu) } = \mixg {1-p} \omega \nu  ;
\end{align*}

\item associativity,
\begin{align*}
\mixg  p {\mixg q \nu \omega  } \tau
&=  \h \circ 
   \mix p { \e \circ \h \circ \mix q {\e(\nu)} {\e(\omega)}} 
   {\e(\tau)}   \\
&=  \h \circ 
   \mix p {  \mix q {\e(\nu)} {\e(\omega)}} 
   {\e(\tau)}  \\
&= \h \circ
  \mix{p \, q} {\e(\nu)} {\mix {\frac{1-p}{1- p\,q}} {\e(\tau)} {\e(\omega)}  } \\
&= \h \circ
  \mix{1-pq} {\mix {\frac{1-p}{1- p\,q}} {\e(\tau)} {\e(\omega)}  } {\e(\nu)} \\
&= \h \circ
  \mix{1-pq} {\e\circ \mixg {\frac{1-p}{1- p\,q}} {\tau} {\omega}  } {\e(\nu)} \\
&= \mixg{1-pq} {\mixg {\frac{1-p}{1- p\,q}} \tau \omega  }  \nu \\
&= \mixg{p \, q} \nu {\mixg {\frac{1-p}{1- p\,q}} \tau \omega  }  ,
\end{align*}
 for $p\, q\neq 1$.
 
 \item idempotency,
\begin{align*}
\mixg p \nu \nu  &= \h \circ  \mix p {\e(\nu) } {\e(\nu) }  = \h \circ \e(\nu) = \nu;
\end{align*}

\end{enumerate}

Showing that $\h$ is convexity-preserving is also direct
\begin{align*}
\mixg p {\h(\nu) } {\h(\omega)}
&= \h \circ \mix p   {\e \circ \h (\nu) } {\e\circ \h (\omega) } \\
&= \h \circ \mix p  \nu  {\e\circ \h (\omega) } \\
&= \h \circ \mix {1-p}   {\e\circ \h (\omega) } \nu  \\
&= \h \circ \mix {1-p}   \omega \nu  \\
&= \h \circ \mix p \nu \omega.
\end{align*}

\end{proof}

\subsubsection{Convexity in specification spaces}

\begin{proposition}[Quasi-convex specification spaces]
\label{prop:convex_specifications}

Let $(\Omega, f)$ be a convex structure. Then $(S^\Omega , \widetilde f)$ is  quasi-convex, with the convex mixture of any two specifications $V, W \in S^\Omega$ defined as
\begin{align*}
  \mixt p V W
  := \bigcup_{\nu \in V} \bigcup_{ \omega \in W }
   \{ \mix p \nu \omega\}.
\end{align*}
\end{proposition}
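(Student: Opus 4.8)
The plan is to verify the three defining axioms of a quasi-convex structure (extremicity, parametric commutativity, and parametric associativity) directly for $\widetilde f$, exploiting that $\widetilde f$ is built element-wise from $f$ and that set unions may be freely reordered. Before touching the axioms I would record that $\widetilde f$ is well-defined as a map $[0,1]\times S^\Omega\times S^\Omega\to S^\Omega$: since specifications are non-empty, for $V,W\in S^\Omega$ there exist $\nu\in V,\omega\in W$, so $\mixt p V W$ contains $\mix p \nu \omega\in\Omega$ and is therefore a non-empty subset of $\Omega$.

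For extremicity I would invoke the extremicity of $f$: because $\mix 1 \nu \omega=\nu$ for all $\nu,\omega$ and $W\neq\emptyset$,
\[
\mixt 1 V W=\bigcup_{\nu\in V}\bigcup_{\omega\in W}\{\mix 1 \nu \omega\}=\bigcup_{\nu\in V}\{\nu\}=V.
\]
Commutativity is equally direct: applying $\mix p \nu \omega=\mix{1-p}\omega\nu$ inside the union and swapping the two unions yields $\mixt p V W=\mixt{1-p}W V$.

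The substantive step is parametric associativity. For $pq\neq 1$ I would expand the left-hand side, using that every element of $\mixt q V W$ has the form $\mix q \nu \omega$ with $\nu\in V,\omega\in W$, to get
\[
\mixt p {\mixt q V W} Z=\bigcup_{\nu\in V}\bigcup_{\omega\in W}\bigcup_{\tau\in Z}\{\mix p {\mix q \nu \omega}\tau\}.
\]
Applying the parametric associativity of $f$ to each term rewrites $\mix p {\mix q \nu \omega}\tau$ as $\mix{pq}\nu{\mix{\frac{1-p}{1-pq}}\tau\omega}$, and recognising the inner mixture as a generic element of $\mixt{\frac{1-p}{1-pq}}Z W$ identifies the triple union with $\mixt{pq}V{\mixt{\frac{1-p}{1-pq}}Z W}$. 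The only genuine difficulty here is bookkeeping: correctly translating membership in the intermediate specifications into the index sets of the triple union, and confirming that reordering the unions over $V$, $W$, $Z$ reproduces both sides.

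Finally I would remark that idempotence does \emph{not} lift. Indeed $\mixt p V V=\bigcup_{\nu\in V}\bigcup_{\omega\in V}\{\mix p \nu \omega\}$ contains cross terms $\mix p \nu \omega$ with $\nu\neq\omega$, which need not lie in $V$, so $\mixt p V V\neq V$ in general. This is precisely why the induced structure on $S^\Omega$ is only quasi-convex rather than convex, and no further work is required, since the quasi-convex axioms omit idempotence by definition.
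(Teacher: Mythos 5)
Your proof is correct and follows essentially the same route as the paper's: both verify well-definedness and then check extremicity, commutativity, and parametric associativity by expanding $\widetilde f$ element-wise over the unions and applying the corresponding axiom of $f$ termwise. Your closing observation that idempotence fails (so the structure is only quasi-convex) matches a remark the paper states separately right after the proposition, where it also notes $\mixt p V V \subseteq V^\P$.
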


\begin{proof} 

Firstly, it is clear that $\widetilde f$ is  well-defined, because $\Omega$ is convex.
Now, to show the properties:

\begin{enumerate}

\item 
extremity,
\begin{align*}
\mixt 1 V W  &=\bigcup_{\nu \in V} \bigcup_{\omega \in W}  \{\mix 1 \nu \omega \}
=\bigcup_{\nu \in V} \bigcup_{\omega \in W}  \{\nu\}
= V;
\end{align*}

\item commutativity,
\begin{align*}
\mixt p V W   = &=\bigcup_{\nu \in V} \bigcup_{\omega \in \Omega}  \{\mix p \nu \omega \}
= \bigcup_{\nu \in V} \bigcup_{\omega \in \Omega}  \{\mix {1-p} \omega \nu \}   = \mixt {1-p} W V ;
\end{align*}

\item associativity,
\begin{align*}
\mixt  p {\mixt q V W   } Z
&=  \mixt  p {\bigcup_{\nu \in V} \bigcup_{\omega \in \Omega}  \{\mix q \nu \omega \}} Z \\
 &= \bigcup_{\nu \in V} \bigcup_{\omega \in \Omega}  \bigcup_{\tau \in Z}  \{ \mix  p {\mix q \nu \omega  } \tau   \}  \\
 &=  \bigcup_{\nu \in V} \bigcup_{\omega \in \Omega}  \bigcup_{\tau \in Z} 
\left\{ \mix{p \, q} \nu {\mix {\frac{1-p}{1- p\,q}} \tau \omega  } \right\} \\
 &= \mixt{p \, q} V {\mixt {\frac{1-p}{1- p\,q}} Z W   }  ,
\end{align*}
 for $p\, q\neq 1$.
\end{enumerate}

\end{proof}

\begin{remark}

Specification spaces of convex state spaces are not idempotent. However,  
$\mixt p  V V  \subseteq V^\P. $
\end{remark}

\begin{proof} 
We have
\begin{align*}
\mixt p  V V   = \bigcup_{\nu \in V} \bigcup_{\omega \in Z}  \{\underbrace{\mix p \nu \omega}_{\flag {\in V^\P}} \} \subseteq   \bigcup_{\nu \in W} \bigcup_{\omega \in Z} V^\P = V^\P.
\end{align*}
\end{proof}

\begin{lemma}
\label{lemma:convex_mixture_order_preserving}
Let $(\Omega, f)$ be a convex structure,  and $(S^\Omega, \widetilde f)$ the corresponding convex specification space. 
Then $\widetilde f$ is order-preserving:
$V \subseteq W \implies \mixt p V Z \subseteq \mixt p W Z  .$

\end{lemma}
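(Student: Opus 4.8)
The plan is to unfold the definition of the convex mixture of specifications given in Proposition~\ref{prop:convex_specifications} and observe that the hypothesis $V \subseteq W$ only affects the index set of an outer union. First I would write
\begin{align*}
\mixt p V Z
&= \bigcup_{\nu \in V} \bigcup_{\omega \in Z} \{\mix p \nu \omega\}, &
\mixt p W Z
&= \bigcup_{\nu \in W} \bigcup_{\omega \in Z} \{\mix p \nu \omega\}.
\end{align*}
Since $V \subseteq W$, every index $\nu$ ranged over in the left-hand union is also ranged over in the right-hand union, while the inner union over $\omega \in Z$ and the singleton terms $\{\mix p \nu \omega\}$ are literally identical in the two expressions.

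The core step is then just the monotonicity of set union with respect to its index set: enlarging the collection of sets being unioned can only enlarge the union. Concretely,
\begin{align*}
\mixt p V Z
= \bigcup_{\nu \in V} \bigcup_{\omega \in Z} \{\mix p \nu \omega\}
\subseteq \bigcup_{\nu \in W} \bigcup_{\omega \in Z} \{\mix p \nu \omega\}
= \mixt p W Z,
\end{align*}
where the inclusion holds because each term on the left, indexed by some $\nu \in V \subseteq W$ and $\omega \in Z$, literally occurs as a term on the right. This establishes the claim.

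The main obstacle here is essentially nonexistent: the result is a formal consequence of how the mixture $\widetilde f$ is built as a union over the elements of its arguments, and no structural property of the convex operation $f$ is invoked beyond the well-definedness of $\mix p \nu \omega \in \Omega$, which is guaranteed by $(\Omega, f)$ being a convex structure. The only point worth flagging explicitly is that $Z$ plays a passive role (the inner union is unchanged), so that the inclusion reduces cleanly to the set-theoretic monotonicity of union over the index $\nu$.
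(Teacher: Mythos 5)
Your proof is correct and follows exactly the same route as the paper's: unfold the definition of $\mixt p V Z$ as a union indexed by $\nu \in V$ and $\omega \in Z$, then use monotonicity of the union in its index set when $V \subseteq W$. Nothing is missing.
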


\begin{proof}
Follows directly from definition, as
\begin{align*}
\mixt p V Z &= \bigcup_{\nu \in V} \bigcup_{ \omega \in Z}  \{\mix p \nu \omega \} \subseteq \bigcup_{\nu \in W} \bigcup_{ \omega \in Z}  \{\mix p \nu \omega \} = \mixt p W Z.
\end{align*}
\end{proof}

\begin{lemma}
Let $(S^\Omega, \widetilde f)$  be a convex specification space and $V, W \in S^\Omega$ two specifications. Then
$$\mixt p V  W \subseteq (V\cup W)^\P.$$
\end{lemma}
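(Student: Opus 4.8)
The plan is to prove the inclusion by a direct element-chasing argument. The statement is essentially the observation that mixing an element of $V$ with an element of $W$ produces a convex combination of two elements of $V \cup W$, which by definition of the convex hull must lie inside $(V \cup W)^\P$. So no heavy machinery is needed; the work is just in unpacking definitions and citing the right closure property.

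First I would take an arbitrary element $\sigma \in \mixt p V W$. By the definition of the convex mixture of specifications (Proposition~\ref{prop:convex_specifications}), there exist $\nu \in V$ and $\omega \in W$ with $\sigma = \mix p \nu \omega$. Since trivially $V \subseteq V \cup W$ and $W \subseteq V \cup W$, both ingredients $\nu$ and $\omega$ are elements of $V \cup W$.

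Next I would invoke the closure property of the convex hull recorded in Remark~\ref{remark:convex_hull_properties} (property 6): for any two elements of a set and any $p \in [0,1]$, their convex mixture lies in the convex hull of that set. Applying this to the set $V \cup W$ and to $\nu, \omega \in V \cup W$ yields $\mix p \nu \omega \in (V \cup W)^\P$, i.e.\ $\sigma \in (V \cup W)^\P$. As $\sigma$ was arbitrary, this gives $\mixt p V W \subseteq (V\cup W)^\P$, as claimed.

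I do not expect a genuine obstacle here; the argument is a one-line consequence of Definition~\ref{def:convex_hull} and Remark~\ref{remark:convex_hull_properties}. The only point worth a moment's care is confirming that property 6 of the convex hull is stated for the (quasi-)convex structure on $S^\Omega$ used in $\widetilde f$, rather than only on $\Omega$; this is fine because $(S^\Omega, \widetilde f)$ is itself a quasi-convex structure (Proposition~\ref{prop:convex_specifications}), so the convex-hull properties apply verbatim with $\mix{}{}{}$ replaced by $\widetilde f$ where needed, and in this particular proof we use the closure property exactly in the form stated.
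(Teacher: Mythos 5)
Your proof is correct and amounts to the same observation as the paper's: the paper factors the argument as $\mixt p V W \subseteq \mixt p {V\cup W}{V\cup W} \subseteq (V\cup W)^\P$ using monotonicity of $\widetilde f$ and the remark $\mixt p Z Z \subseteq Z^\P$, which is exactly your element-chase packaged at the level of specifications. One small simplification to your final caveat: since specifications are subsets of $\Omega$ and $\widetilde f$ acts element-wise via $f$, the closure property you invoke is needed only for the convex structure on $\Omega$ itself, so there is nothing to check about $S^\Omega$.
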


\begin{proof}
 We use the fact that taking the convex hull is order-preserving, 
 \begin{align*}
  \mixt p V  W \subseteq  \mixt p {V \cup W}  {V\cup W} \subseteq  (V\cup W)^\P.
 \end{align*}
\end{proof}

\begin{proposition}[Convex embeddings] 
\label{prop:convex_embeddings}
Let $(\Sigma, f)$ be a convex structure and  $(S^\Sigma, \widetilde f)$ the corresponding quasi-convex specification space.
Let $S^\Omega$ be a specification space embedded in $S^\Sigma$, with intensive embedding $(\e, \h)$ and lumping $\Lump$, and let us define the family of functions $\{\widetilde f_p\}_{p \in [0,1]}$
\begin{align*}
\widetilde f'_p:  S^\Omega \times S^\Omega & \to S^\Omega \\
( V, W) &\mapsto  \h \circ  \mixt p {\e(V) } {\e(W) } =: \mixtg p V W.
\end{align*}
Then the following are equivalent:
\begin{enumerate}
\item $\Lump \circ \mixt p  V W  \supseteq \mixt p {\Lump V} {\Lump W}$,
\item $(S^\Omega, \widetilde f')$ is a quasi-convex structure and $\h$ is convexity-preserving, $\h \circ \mixt p V W = \mixtg p {\h(V)} {\h(W)}$.
\end{enumerate}
\end{proposition}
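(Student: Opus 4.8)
The plan is to reduce the entire statement to Proposition~\ref{prop:connect_convex_structures} applied to the quasi-convex structure $(S^\Sigma, \widetilde f)$ (which is quasi-convex by Proposition~\ref{prop:convex_specifications}) together with the pair of maps $\e: S^\Omega \to S^\Sigma$ and $\h: S^\Sigma \to S^\Omega$ from the intensive embedding. That proposition asserts that if $\h \circ \e = \id$ and the absorption condition $\h \circ \mixt p {\e \circ \h(X)} Y = \h \circ \mixt p X Y$ holds for all $X, Y \in S^\Sigma$, then $(S^\Omega, \widetilde f')$ is quasi-convex and $\h$ is convexity-preserving, which is exactly statement~$2$. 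The first hypothesis $\h \circ \e = \id$ is automatic since $(\e, \h)$ is a Galois insertion. Writing $\Lump = \e \circ \h$, the second hypothesis becomes
$$\h \circ \mixt p {\Lump(X)} Y = \h \circ \mixt p X Y \qquad (\star)$$
so the whole problem collapses to proving that condition~$1$ is equivalent to $(\star)$.

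Before that, I would record the elementary consequences of the Galois insertion that drive everything: $\Lump$ is idempotent and inflating (it is the kernel operator, i.e.\ a lumping), and $\h \circ \Lump = \h$ (since $\h \circ \e \circ \h = \id \circ \h$), together with order-preservation of the specification mixture (Lemma~\ref{lemma:convex_mixture_order_preserving}) and of the homomorphism $\h$. For $1 \Rightarrow (\star)$: the inclusion $\supseteq$ is free because $\Lump$ is inflating and both the mixture and $\h$ are order-preserving; for $\subseteq$, apply condition~$1$ with $V = X$, $W = Y$ to get $\Lump \circ \mixt p X Y \supseteq \mixt p {\Lump X}{\Lump Y} \supseteq \mixt p {\Lump X} Y$, then apply $\h$ to both ends and use $\h \circ \Lump = \h$ to obtain $\h \circ \mixt p X Y \supseteq \h \circ \mixt p {\Lump X} Y$. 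For the converse $(\star) \Rightarrow 1$: first upgrade $(\star)$ to its symmetric form $\h \circ \mixt p X {\Lump Y} = \h \circ \mixt p X Y$ via parametric commutativity of $\widetilde f$, chain the two to reach $\h \circ \mixt p {\Lump X}{\Lump Y} = \h \circ \mixt p X Y$, apply $\e$ to turn this into $\Lump \circ \mixt p {\Lump X}{\Lump Y} = \Lump \circ \mixt p X Y$, and finally use that $\Lump$ is inflating on the left-hand side to recover condition~$1$.

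With $(\star) \Leftrightarrow 1$ in hand, one implication of the proposition is immediate from Proposition~\ref{prop:connect_convex_structures} ($1 \Rightarrow (\star) \Rightarrow 2$). For $2 \Rightarrow 1$ I would not re-derive $(\star)$ from scratch but read it off the convexity-preservation of $\h$ asserted in statement~$2$: $\h \circ \mixt p {\Lump X} Y = \mixtg p {\h(\Lump X)}{\h Y} = \mixtg p {\h X}{\h Y} = \h \circ \mixt p X Y$, again using $\h \circ \Lump = \h$, which gives $(\star)$ and hence condition~$1$. The main obstacle I anticipate is pure bookkeeping: keeping the inclusion directions straight across the $1 \Leftrightarrow (\star)$ step, and resisting the temptation to invoke idempotence of the mixture, since specification spaces are only \emph{quasi}-convex ($\mixt p V V$ need not equal $V$). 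Every appeal must therefore route through Proposition~\ref{prop:connect_convex_structures} in its quasi-convex form and the order and inflating properties of $\Lump$, never through idempotence of $\widetilde f$.
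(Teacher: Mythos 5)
Your proposal is correct and follows essentially the same route as the paper: the forward direction is the same sandwich argument establishing $\h \circ \mixt p {\Lump(V)}{W} = \h \circ \mixt p V W$ so that Proposition~\ref{prop:connect_convex_structures} applies, using exactly the Galois-insertion identities, order-preservation, and the inflating property of $\Lump$. The only (cosmetic) difference is that you route $2 \Rightarrow 1$ through your intermediate condition $(\star)$ and its symmetrization, whereas the paper unfolds the definition of $\mixtg p \cdot \cdot$ directly to get $\h \circ \mixt p V W = \h \circ \mixt p {\Lump(V)}{\Lump(W)}$ in one step before applying $\e$; both rest on the same ingredients and your version is equally valid.
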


\begin{proof}
To show $1 \implies 2$, we will simply show that we are in the conditions of Proposition~\ref{prop:connect_convex_structures}, that is, that $ \h \circ \mixt p {\Lump (V)} W  = \h \circ \mixt p V W $.
We will do this by  sandwiching the left-hand side between two copies of the right-hand side. First, we apply $\h$ on both sides of the premise,
\begin{align*}
&\Lump \circ \mixt p  V W 
\supseteq \mixt p {\Lump (V)} {\Lump (W)}  
\supseteq \mixt p  {\Lump (V)} W  
\quad \flag{ f \text{ order-preserving}}\\
\implies 
&\h \circ \underbrace{\e \circ \h}_{\flag{\text{id}}} \circ \mixt p  V W \supseteq \h \circ \mixt p {\Lump( V)} W 
\supseteq  \h \circ \mixt p V W 
\quad \flag{\h, f \text{ order-preserving}}  \\
\iff 
&\h  \circ \mixt p  V W = \h \circ \mixt p {\Lump( V)} W .
\end{align*}

To show $2 \implies 1$, we  apply $\e$ on both sides, 
\begin{align*}
\h\circ \mixt p V W
 &= \mixtg p {\h(V)} {\h(W)}
 = \h \circ \mixt p {\e \circ \h (V) } {\e \circ \h (W) } \\
\implies 
\e \circ \h\circ \mixt p V W 
 &= \e \circ \h \circ \mixt p {\e \circ \h (V) } {\e \circ \h (W) } \\
\iff
\Lump \circ \mixt p V W 
 &= \Lump\circ  \mixt p {\Lump(V) } {\Lump(W) }  
 \supseteq  \mixt p {\Lump(V) } {\Lump(W) }.
\end{align*}

\end{proof}

\subsubsection{Probabilistic equivalence and convex quotient space}

The following lemma shows that taking the convex hull induces a cumulative equivalence relation. 
This means that there is a Galois insertion between a convex specification space and a quotient space based on probabilistic equivalence.

\begin{lemma} \label{lemma:convex_quasi}
Let $(\Omega, f)$ be a convex structure and  $(S^\Omega, \widetilde f)$  the corresponding quasi-convex specification space.
The function that takes specifications to their convex hulls,
\begin{align*}
\cdot^\P: \ S^\Omega &\to S^\Omega\\
V&\mapsto V^\P,
\end{align*}
is an idempotent, inflating quasi-endomorphism on $S^\Omega$ inducing a cumulative equivalence relation $\sim_{\P}$.
\end{lemma}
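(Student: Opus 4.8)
The plan is to verify in turn that $\cdot^\P$ is well-defined as a map $S^\Omega \to S^\Omega$, inflating, idempotent, and a quasi-endomorphism, and finally that the relation $\sim_\P$ it induces is cumulative; all but the last point follow directly from the properties of the convex hull collected in Remark~\ref{remark:convex_hull_properties}. First I would note that for $V \in S^\Omega$ we have $V \neq \emptyset$ and $V \subseteq V^\P$ (inflation, property~1 there), so $V^\P \neq \emptyset$ and hence $\cdot^\P$ indeed maps $S^\Omega$ into itself while being inflating. Idempotence, $(V^\P)^\P = V^\P$, is property~4 of the same remark (equivalently, $V^\P$ is convex, so it equals its own hull by property~3).

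For the quasi-endomorphism property I would use order-preservation (property~2). Since $V \subseteq V \cup W$ and $W \subseteq V \cup W$, applying $\cdot^\P$ gives $V^\P \subseteq (V\cup W)^\P$ and $W^\P \subseteq (V\cup W)^\P$, whence $V^\P \cup W^\P \subseteq (V \cup W)^\P$. In the join-semilattice $(S^\Omega, \cup)$ this is exactly the defining inequality $(V \vee W)^\P \supseteq V^\P \vee W^\P$. Note that equality fails in general, since the union of two convex hulls need not be convex, so $\cdot^\P$ is genuinely a quasi-endomorphism rather than an endomorphism, which is why the statement is phrased as it is.

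The one step requiring a small argument is cumulativity of $\sim_\P$, where $V \sim_\P W$ means $V^\P = W^\P$. Since $\sim_\P$ is induced by a quasi-endomorphism it is automatically an equivalence relation (reflexivity, symmetry and transitivity inherited from $=$, as in Lemma~\ref{lemma:properties_induced_relation}). To show it is cumulative I must prove $V^\P = W^\P \implies V^\P = (V \cup W)^\P$. The inclusion $V^\P \subseteq (V\cup W)^\P$ is the order-preservation step above. For the reverse inclusion I would observe that $V \subseteq V^\P$ and $W \subseteq W^\P = V^\P$, so $V \cup W \subseteq V^\P$; since $V^\P$ is a quasi-convex set and $(V\cup W)^\P$ is by definition the \emph{smallest} quasi-convex set containing $V \cup W$ (Definition~\ref{def:convex_hull}), it follows that $(V\cup W)^\P \subseteq V^\P$. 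Combining the two inclusions yields $(V\cup W)^\P = V^\P$, i.e.\ $V \sim_\P V \cup W$, which is cumulativity. The main (and only mild) obstacle is recognising that cumulativity reduces to the minimality/universal property of the convex hull rather than requiring any manipulation of the nested convex mixtures.
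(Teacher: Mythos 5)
Your proof is correct and follows essentially the same route as the paper's: inflation and idempotence are read off from Remark~\ref{remark:convex_hull_properties}, the quasi-endomorphism inequality comes from order-preservation, and cumulativity is the only step needing an argument. The one (harmless) difference is in that last step, where the paper first establishes the identity $(V\cup W)^\P=(V^\P\cup W^\P)^\P$ and then substitutes $W^\P=V^\P$ and uses idempotence, whereas you appeal directly to the minimality of the convex hull applied to the quasi-convex set $V^\P\supseteq V\cup W$; both arguments are valid and of comparable length.
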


\begin{proof}
Idempotence and inflation follow from definition of convex hull (see Remark \ref{remark:convex_hull_properties}). 
To see that it is a quasi-endomorphism, we use the fact that it is order-preserving, 
\begin{align*}
V^\P \cup W^\P \subseteq (V \cup W)^\P  \cup (V \cup W)^\P   = (V \cup W)^\P .
\end{align*}
It also follows that $(V\cup W)^\P=(V^\P\cup W^\P)^\P$, as
\begin{align*}
\flag{\cdot^\P \text{ inflating and order-preserving}} \quad
(V\cup W)^\P & \subseteq(V^\P\cup W^\P)^\P   \\
\flag{\cdot^\P \text{ idempotent}} \quad 
(V\cup W)^\P &=((V\cup W)^\P)^\P\\
\flag{\cdot^\P \text{ quasi-endomorphism}} \qquad \qquad
&\supseteq(V^\P\cup W^\P)^\P.
\end{align*}
It remains to show that
$\sim_{\P}$ is cumulative, that is $V^\P = W^\P \implies V^\P = (V \cup W)^\P$.
Using the above, we  find
\begin{align*}
(V\cup W)^\P & =(V^\P\cup W^\P)^\P \\
\flag{V^\P=W^\P} \qquad \quad
 &=(V^\P)^\P=V^\P.
\end{align*}

\end{proof}

\begin{corollary}
There is a Galois insertion $(g_\P, h_\P)$ of the quotient space $S^\Omega/\P$ into $S^\Omega$, with \begin{align*}
g_\P: \quad S^\Omega/\P &\to S^\Omega \\
[V]_\P &\mapsto V^\P
\end{align*}
and
\begin{align*}
h_\P: \quad S^\Omega &\to S^\Omega/\P \\
V &\mapsto [V]_\P,
\end{align*}
and both $g_\P$ and $h_\P$ are quasi-homomorphisms. Furthermore,  $S^\Omega/\P$ is a join-semilattice, with the join induced by the Galois insertion.
\end{corollary}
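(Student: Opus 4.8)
The plan is to recognize this Corollary as an immediate specialization of Proposition~\ref{thm:quasi_reduced_induced_Galois} (Galois connection induced by a quasi-lumping) to the particular quasi-lumping given by the convex-hull map $\cdot^\P$. First I would verify the three hypotheses of that proposition. The specification space $S^\Omega$, ordered by inclusion with join given by union, is a complete join-semilattice, since an arbitrary union of non-empty subsets of $\Omega$ is again a non-empty subset. Lemma~\ref{lemma:convex_quasi} already establishes that $\cdot^\P$ is an idempotent, inflating quasi-endomorphism---that is, a quasi-lumping---and that the induced equivalence relation $\sim_\P$ is cumulative. With these in hand, Proposition~\ref{thm:quasi_reduced_induced_Galois} applies verbatim with $\Lump = \cdot^\P$.

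Next I would read off the conclusions of the proposition and match them to the stated forms of $g_\P$ and $h_\P$. The map $h_\P$ is exactly the map $h$ from the proposition, $V \mapsto [V]_\P$. For $g_\P$, the proposition delivers $g([V]_\P) = \bigvee_{W \in [V]_\P} W$, so the only thing to check is that this join equals $V^\P$. This follows from property~3 of Lemma~\ref{lemma:properties_induced_relation}, which gives $\bigvee_{W \in [V]_\P} W = \Lump(V)$ for any quasi-endomorphism $\Lump$; specializing to $\Lump = \cdot^\P$ yields $g_\P([V]_\P) = V^\P$, as claimed. The proposition also states directly that $g$ is an order-embedding quasi-homomorphism and that $h$ is an order-preserving quasi-homomorphism, which supplies the quasi-homomorphism claims for $g_\P$ and $h_\P$.

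Finally, the join-semilattice structure on $S^\Omega/\P$ is again part of the output of Proposition~\ref{thm:quasi_reduced_induced_Galois}: the quotient carries the induced partial order ($[V]_\P \preceq [W]_\P$ whenever $g_\P([V]_\P) \subseteq g_\P([W]_\P)$) together with the join $[V]_\P \vee [W]_\P := h_\P(g_\P([V]_\P) \cup g_\P([W]_\P))$, which is precisely the join induced by the Galois insertion. I expect no real obstacle here: essentially all the content has been front-loaded into Lemma~\ref{lemma:convex_quasi} and Proposition~\ref{thm:quasi_reduced_induced_Galois}. The one point that requires a genuine line of argument rather than a bare citation is the identification $g_\P([V]_\P) = V^\P$, and even that reduces to a single appeal to Lemma~\ref{lemma:properties_induced_relation}.
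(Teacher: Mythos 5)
Your proposal is correct and follows exactly the paper's route: the published proof is a one-line citation of Lemma~\ref{lemma:convex_quasi} together with Proposition~\ref{thm:quasi_reduced_induced_Galois}, and your verification of the hypotheses plus the identification $g_\P([V]_\P)=\bigvee_{W\in[V]_\P}W=V^\P$ via Lemma~\ref{lemma:properties_induced_relation} just makes explicit what the paper leaves implicit.
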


\begin{proof}
Follows from Lemma \ref{lemma:convex_quasi} and Theorem \ref{thm:quasi_reduced_induced_Galois}.
\end{proof}\

The following remark is a direct application of Lemma~\ref{lemma:quasi_endomorphism_ss} that gives us some more properties of the convex hull.

\begin{remark}
Let $(\Omega, f)$ be a convex structure and let
$V, W \in S^\Omega$. Then
\begin{enumerate}
\item  $(V^\P \cup W^\P)^\P = (V \cup W)^\P$,
\item if $V \cap W \neq \emptyset$, then  $V^\P \cap W^\P \supseteq (V \cap W)^\P$,
\item if $V \cap W \neq \emptyset$, then  $(V^\P \cap W^\P)^\P = V^\P \cap W^\P$.
\end{enumerate}
\end{remark}

\subsubsection{Consistency of the probabilistic equivalence}

Whenever convex mixtures can be interpreted as probabilistic mixtures, it might be convenient to work in the quotient space: the equivalence class $[V]_\P$ may be represented by both  $V$ and $V^\P$. In some instances, it might be useful to take $V^\P$ as the representative of the equivalence class, as it displays the probabilistic interpretation of specifications most explicitly. However, $V_\P$ may be an uncountable set, and so often it will be easier to work with another countable representative, such as the specification that contains only the extremal points of the set $V_\P$. Given that we work with the quotient space, in particular moving between elements of $[V]_\P$ should always be allowed. In the following, we show that this is indeed possible. However, some of the following propositions show that there are some conditions that the resource theory needs to satisfy to respect probabilistic equivalence.

\begin{proposition}[Consistency of convex hull with convexity-preserving homomorphisms]
\label{prop:convex_hull_preserving_homomorphism} Let $(\Omega, f)$ and $(\Sigma, f')$ be two convex structures and $(S^\Omega, \widetilde f )$, $(S^\Sigma, \widetilde f' )$  the respective quasi-convex specification spaces. 
Let $g$ be a convexity-preserving homomorphism from $S^\Omega$ to  $S^\Sigma$. Then $g(V^\P)=(g(V))^\P$ for any specification $V\in S^\Omega$.
\end{proposition}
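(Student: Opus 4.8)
# Proof Proposal

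The plan is to prove the two set inclusions $g(V^\P) \subseteq (g(V))^\P$ and $g(V^\P) \supseteq (g(V))^\P$ separately, exploiting the finite-cover characterization of the convex hull (Proposition~\ref{prop:convex_hull_finite_subsets_cover}) to reduce everything to finite convex mixtures, and then pushing those mixtures through $g$ using the two hypotheses: that $g$ is a join-semilattice homomorphism (so it distributes over unions, $g(\bigcup_i W_i) = \bigcup_i g(W_i)$, by Remark~\ref{lemma:homomorphisms}) and that $g$ is convexity-preserving (so $g \circ \mixt p V W = \mixtg{p}{g(V)}{g(W)}$).

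First I would establish the inclusion $g(V^\P) \subseteq (g(V))^\P$. The key observation is that a single finite convex mixture is preserved: for $\nu_1,\dots,\nu_n \in V$ and a probability vector $\vec P$, convexity-preservation of $g$ iterated over the nested binary mixtures in Definition~\ref{def:mixture_finite} gives $g(\{f_{\vec P}(\vec V)\}) = \{f_{\vec P}(g(\vec V))\}$, where each $g(\nu_i) \in g(V)$; hence this element lies in $(g(V))^\P$ by Lemma~\ref{lemma:finite_convey_hull}. Using Proposition~\ref{prop:convex_hull_finite_subsets_cover} to write $V^\P = \bigcup_{\widetilde V \subseteq V \text{ finite}} \bigcup_{P \in \P_{\widetilde V}} \{f_P(\widetilde V)\}$ and then applying the homomorphism property of $g$ to pull $g$ through the union, every element of $g(V^\P)$ is of this form and therefore lands in $(g(V))^\P$.

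For the reverse inclusion $g(V^\P) \supseteq (g(V))^\P$, I would take an arbitrary element of $(g(V))^\P$, which by the finite-cover result is a finite mixture $f_P(\sigma_1,\dots,\sigma_n)$ of points $\sigma_i \in g(V)$. Since $g(V) = \bigcup_{\nu \in V}\, \widetilde g(\nu)$ is element-wise, each $\sigma_i = g(\nu_i)$ for some $\nu_i \in V$. Then the same convexity-preservation computation run in reverse shows $f_P(\sigma_1,\dots,\sigma_n) = f_P(g(\nu_1),\dots,g(\nu_n))$ is the image under $g$ of $f_P(\nu_1,\dots,\nu_n)$, which lies in $V^\P$; hence the element lies in $g(V^\P)$.

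The main obstacle is the careful bookkeeping in iterating convexity-preservation over the \emph{nested} binary mixtures of Definition~\ref{def:mixture_finite}: convexity-preservation is stated only for the binary mixing map, so I must argue by induction on $n$ that $g$ commutes with the full $n$-fold mixture $f_{\vec P}$. This is routine but is where the interpreted statement $g \circ \mixt p V W = \mixtg{p}{g(V)}{g(W)}$ must be applied repeatedly and combined with the homomorphism property to handle the case where $V$ is not a singleton (so that the innermost mixtures already involve unions of images). Once that inductive lemma — essentially that $g$ sends $f_{\vec P}(\vec V)$ to $f_{\vec P}(g \circ \vec V)$ — is in hand, both inclusions follow immediately from the finite-cover decomposition and the distributivity of $g$ over unions.
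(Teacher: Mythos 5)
Your proof is correct, and the forward inclusion $g(V^\P)\subseteq (g(V))^\P$ follows the paper's own argument essentially verbatim: finite cover of the hull (Proposition~\ref{prop:convex_hull_finite_subsets_cover}), distribute $g$ over the union, and iterate convexity-preservation through the nested binary mixtures. The reverse inclusion is where you genuinely diverge. The paper never lifts elements of $(g(V))^\P$ back through $g$; instead it shows directly that $g(V^\P)$ is a convex set --- a binary mixture of two elements $g\circ f_{P}(\widetilde V_1)$ and $g\circ f_{P'}(\widetilde V_2)$ is rewritten via Lemma~\ref{lemma:mixture_of_distributions} as $g\circ f_{R}$ applied to the common finite subset $\widetilde V_1\cup\widetilde V_2$, hence stays in $g(V^\P)$ --- and then concludes $g(V^\P)=(g(V^\P))^\P\supseteq(g(V))^\P$ from idempotence and order-preservation of the hull. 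Your route instead decomposes an arbitrary element of $(g(V))^\P$ as $f_{\vec P}(\sigma_1,\dots,\sigma_n)$ with $\sigma_i\in g(V)$ and pulls each $\sigma_i$ back to a preimage in $V$. That works, but one phrase needs tightening: since $g$ is only element-wise, $g(\{\nu_i\})$ is in general a \emph{set}, so you may only assert $\sigma_i\in g(\{\nu_i\})$ for some $\nu_i\in V$, not $\sigma_i=g(\nu_i)$. The conclusion survives because your inductive lemma gives $g(\{f_{\vec P}(\nu_1,\dots,\nu_n)\})=\bigcup_{\sigma'_1\in g(\{\nu_1\})}\cdots\bigcup_{\sigma'_n\in g(\{\nu_n\})}\{f_{\vec P}(\sigma'_1,\dots,\sigma'_n)\}$, which contains $f_{\vec P}(\sigma_1,\dots,\sigma_n)$, and $\{f_{\vec P}(\nu_1,\dots,\nu_n)\}\subseteq V^\P$ together with order-preservation of $g$ places this element in $g(V^\P)$. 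As for what each approach buys: the paper's convexity-of-the-image argument avoids all preimage bookkeeping and leans on general properties of the hull operator, while yours is more symmetric between the two inclusions and reuses a single inductive lemma (commutation of $g$ with $n$-fold mixtures) for both directions.
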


\begin{proof}
First we will show $\subseteq$, using Proposition~\ref{prop:convex_hull_finite_subsets_cover} (that is, by reducing to the case of finite covers of a set $V$; this is a trick to avoid uncountable problems, and there may exist a more direct proof for those better versed in infinities and measures than us). We find
\begin{align*}
g(V^\P)
 &=g\left(
   \bigcup_{\widetilde V\subseteq V,\widetilde V\text{ finite}} \bigcup_{P \in \P_{\widetilde V}} \{f_P(\widetilde V) \} \right)
   \quad \flag{\text{Proposition \ref{prop:convex_hull_finite_subsets_cover}  }}\\
\flag{g \text{ homomorphism}}
 &=\bigcup_{\widetilde V\subseteq V,\widetilde V\text{ finite}} \bigcup_{P \in \P_{\widetilde V}} g\circ f_P(\widetilde V)\\
 \flag{\widetilde V=\{\nu_1,\dots,\nu_n\}}
 &=\bigcup_{\widetilde V\subseteq V,\widetilde V\text{ finite}} \bigcup_{P \in \P_{\widetilde V}} g(\{ \mix{p_1}{\nu_1}{\dots,\mix{p_{n-1}}{\nu_{n-1}}{\nu_n}}\})\\
 \flag{g\text{ convexity-preserving}}
 &=\bigcup_{\widetilde V\subseteq V,\widetilde V\text{ finite}} \bigcup_{P \in \P_{\widetilde V}} \mixt{p_1}{g(\{\nu_1\})}{\dots,\mixt{p_{n-1}}{g(\{\nu_{n-1}\})}{g(\{\nu_n\})}}\\
 &=\bigcup_{\widetilde V\subseteq V,\widetilde V\text{ finite}} \bigcup_{P \in \P_{\widetilde V}}\bigcup_{\nu'_1\in g(\nu_1)}\dots \bigcup_{\nu'_n\in g(\nu_n)} \{\mix{p_1}{\nu'_1}{\dots,\mix{p_{n-1}}{\nu'_{n-1}}{\nu'_n}}\}\\
&\subseteq \bigcup_{\widetilde V\subseteq g(V),\widetilde V\text{ finite}} \widetilde V^\P\\
&= (g(V))^\P
\end{align*}
To see  direction $\supseteq$, note that $g(V^\P)$ is convex: take any two  elements
$$
x_P, x_{P'} \in g(V^\P) =
   \bigcup_{\widetilde V\subseteq V,\widetilde V\text{ finite}} \bigcup_{P \in \P_{\widetilde V}}  g \circ f_P (\widetilde V)$$
and consider their convex combination, $\mix{r}{x_P}{x_{P'}}$, which for some finite $\tilde V_1,\tilde V_2\subseteq V$ satisfies
\begin{align*}
 \mix r  {x_P}  {x_{P'}}
 &\in \mixt r {g \circ f_P (\tilde V_1)} { g\circ f_{P'} (\tilde V_2)}\\
\flag{g\text{ convexity-preserving}}
&= g \circ  \mix r { f_P (\tilde V_1)} {  f_{P'} (\tilde V_2)} \\
\flag{\text{Lemma~\ref{lemma:mixture_of_distributions}}}&= g \circ f_{\vec R}(\vec {\widetilde {V'}}),
\end{align*}
where $\tilde V_1=\{\nu_1,\dots,\nu_n\},\tilde V_2=\{\omega_1,\dots,\omega_m\},
\vec {\tilde {V'}}:=(\nu_1,\dots \nu_n,\omega_1,\dots \omega_m)$,
$r_i = r \tilde p_i + (1-r) \tilde q_i$, and $\tilde P, \tilde Q \in \P_{\widetilde{V'}}$ are extensions of the original probability distributions on $\vec{\widetilde {V'}}$ (such that $\tilde P$ restricted to   $\tilde V_1$ equals $P$, and is zero for elements $\omega_i$ of $\tilde V_2$, and vice-versa for $\tilde Q$).
This element is again in $g(V^\P)$ because $f_{\vec R}(\vec{\widetilde {V'}})\subseteq V^\P$.
Then, $g(V^\P)=(g(V^\P))^\P\supseteq (g(V))^\P$.

\end{proof}

\begin{corollary}[Consistency of convex hull with approximations]
If the endomorphisms making up an approximation structure $A_\E$ are convexity-preserving, then for any $V\in S^\Omega,\epsilon\in\E$, 
$$(V^\epsilon)^\P=(V^\P)^\epsilon.$$
\end{corollary}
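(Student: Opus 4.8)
The plan is to recognize this corollary as a direct specialization of Proposition~\ref{prop:convex_hull_preserving_homomorphism}, applied to the situation where the convexity-preserving homomorphism in question is an approximation endomorphism and the target specification space coincides with the source. So the whole proof should amount to checking that the hypotheses of that proposition are met and then reading off its conclusion in the $\eps$-ball notation.

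First I would recall that, by Definition~\ref{def:approximations}, each element $\ball^\eps$ of the approximation structure $A_\E(\Omega)$ is an inflating endomorphism of $S^\Omega$, and in particular a join-semilattice homomorphism from $S^\Omega$ to itself. The standing hypothesis of this section is that $(\Omega, f)$ is a convex structure, so $(S^\Omega, \widetilde f)$ is the associated quasi-convex specification space, and the assumption of the corollary adds that $\ball^\eps$ is convexity-preserving. Hence $\ball^\eps$ meets exactly the two requirements imposed on the map $g$ in Proposition~\ref{prop:convex_hull_preserving_homomorphism}, with the roles of both $(\Omega, f)$ and $(\Sigma, f')$ played by the single convex structure $(\Omega, f)$ and correspondingly $S^\Sigma = S^\Omega$.

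Applying that proposition with $g = \ball^\eps$ then gives $\ball^\eps(V^\P) = (\ball^\eps(V))^\P$ for every $V \in S^\Omega$ and every $\eps \in \E$. Rewriting this in the $\eps$-ball notation $W^\eps := \ball^\eps(W)$ introduced in Definition~\ref{def:approximations}, the identity becomes $(V^\P)^\eps = (V^\eps)^\P$, which is precisely the claimed equality.

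Since the argument reduces to invoking an already-established proposition, I do not expect a substantial obstacle. The only point that genuinely requires care is verifying that both hypotheses of Proposition~\ref{prop:convex_hull_preserving_homomorphism}—being a homomorphism and being convexity-preserving—hold for $\ball^\eps$: the former comes for free from the definition of an approximation structure (whose members are by definition endomorphisms, hence semilattice homomorphisms), while the latter is exactly the explicit assumption of the corollary. Everything else is bookkeeping between the two notations $\ball^\eps(\cdot)$ and $(\cdot)^\eps$.
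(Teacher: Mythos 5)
Your proposal is correct and matches the paper's intended argument: the corollary is stated immediately after Proposition~\ref{prop:convex_hull_preserving_homomorphism} with no separate proof precisely because it is the specialization $g=\ball^\eps$, $\Sigma=\Omega$, exactly as you describe. Your explicit check that $\ball^\eps$ is a specification homomorphism (by Definition~\ref{def:approximations}) and convexity-preserving (by hypothesis) is all that is needed.
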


\begin{corollary}[Consistency of convex resource theories]
\label{corollary:consistency_convex_resource_theories}
Let $(S^\Omega, \cT)$ be a convex resource theory, and let $V, W \in S^\Omega$. Then $V \to W$ implies $V^\P \to W^\P$. 
\end{corollary}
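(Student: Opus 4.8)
The plan is to reduce everything to the single transformation that already witnesses $V \to W$, and then push it through the convex hull using the consistency result for convexity-preserving homomorphisms (Proposition~\ref{prop:convex_hull_preserving_homomorphism}). Concretely, since $V \to W$, by Definition~\ref{def_construction} there exists a transformation $f \in \cT$ with $f(V) \subseteq W$. I would claim that this very same $f$ also achieves $V^\P \to W^\P$, so the entire corollary rests on showing $f(V^\P) \subseteq W^\P$.

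The key observation is that $f$ is simultaneously a homomorphism and convexity-preserving. Being an element of the monoid $\cT$, it is an endomorphism of the specification space $S^\Omega$, hence a join-semilattice homomorphism in the sense of Remark~\ref{lemma:homomorphisms}. Moreover, because $(S^\Omega, \cT)$ is a \emph{convex} resource theory, Definition~\ref{def:convex_rt_main} guarantees that $f(p\, V' + (1-p)\, W') = p\, f(V') + (1-p)\, f(W')$ for all $V', W'$ and all $p \in [0,1]$, which is exactly the convexity-preserving property. Thus $f$ satisfies the hypotheses of Proposition~\ref{prop:convex_hull_preserving_homomorphism} (taking $\Sigma = \Omega$ and $g = f$), and I may conclude $f(V^\P) = (f(V))^\P$.

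It then remains only to combine this identity with order-preservation of the convex hull. Since taking the convex hull is order-preserving (Remark~\ref{remark:convex_hull_properties}, property~2), the inclusion $f(V) \subseteq W$ yields $(f(V))^\P \subseteq W^\P$. Stringing these together gives
\begin{align*}
f(V^\P) = (f(V))^\P \subseteq W^\P,
\end{align*}
so $f$ witnesses $V^\P \to W^\P$, as required.

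There is essentially no genuine obstacle here, since the heavy lifting is done by Proposition~\ref{prop:convex_hull_preserving_homomorphism}, which I am free to assume; the only point requiring a moment's care is verifying that the transformation $f$ meets \emph{both} conditions of that proposition at once (homomorphism \emph{and} convexity-preserving), and this is immediate from the definitions of a resource theory and of a convex resource theory respectively. If I were writing this without the proposition available, the hard part would instead be proving $f(V^\P) = (f(V))^\P$ directly, which requires the finite-cover reduction of Proposition~\ref{prop:convex_hull_finite_subsets_cover} to sidestep uncountability issues when distributing $f$ over arbitrary convex combinations.
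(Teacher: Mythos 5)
Your proof is correct and is essentially identical to the paper's own argument: the paper also takes the transformation $g\in\cT$ witnessing $g(V)\subseteq W$ and writes $g(V^\P)=(g(V))^\P\subseteq W^\P$, invoking Proposition~\ref{prop:convex_hull_preserving_homomorphism} in exactly the way you do. Your additional remarks, verifying that $g$ is both a homomorphism and convexity-preserving, and noting the role of the finite-cover reduction inside that proposition, are accurate but not a departure from the paper's route.
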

\begin{proof}
Let $g \in \cT$ be the transformation that achieves $g(V) \subseteq W$. Then $g(V^\P) = g(V)^\P \subseteq W^\P$.
\end{proof}

\begin{corollary}[Consistency of convex hull with embeddings]
\label{corollary:consistency_convex_embeddings}
Let $S^\Omega$ be a convex specification space and let there be an intensive embedding $(\e,\h)$ with  lumping $\Lump=\e\circ\h$. If this lumping is quasi-convexity preserving, that is 
$$\Lump(p\ V+(1-p)\ W)\supseteq p\ \Lump(V)+(1-p)\ \Lump(W),$$
then the embedding preserves the action of taking the convex hull:
$\Lump(V^\P)=\Lump((\Lump(V))^\P)$, or equivalently
$\h(V^\P)=(\h(V))^\P$.
\end{corollary}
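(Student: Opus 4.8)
The plan is to reduce the claim to the consistency of the convex hull with convexity-preserving homomorphisms, Proposition~\ref{prop:convex_hull_preserving_homomorphism}, applied to the adjoint $\h$. The first step is to notice that the hypothesis $\Lump(p\ V + (1-p)\ W) \supseteq p\ \Lump(V) + (1-p)\ \Lump(W)$ is exactly statement~1 of Proposition~\ref{prop:convex_embeddings}, with $S^\Omega$ playing here the role of the larger, convex specification space of that proposition. Invoking the equivalence proved there, I would obtain statement~2: $\h$ is convexity-preserving and the reduced specification space is quasi-convex. Since $\h$ is moreover a semilattice homomorphism (being the adjoint of the Galois insertion that defines the intensive embedding) and, by Theorem~\ref{thm:embedding_convex}, the reduced state space is itself convex, the reduced specification space is a genuine quasi-convex specification space, and all hypotheses of Proposition~\ref{prop:convex_hull_preserving_homomorphism} are in place with $g=\h$.

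With this set-up, applying Proposition~\ref{prop:convex_hull_preserving_homomorphism} to $g=\h$ immediately gives
\[
\h(V^\P) = (\h(V))^\P
\]
for every $V\in S^\Omega$, which is the second (equivalent) form of the statement.

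It then remains to derive the $\Lump$-form from this. Using the Galois-insertion identities $\Lump=\e\circ\h$ and $\h\circ\e=\id$, I would apply $\e$ to the identity above to get $\Lump(V^\P)=\e\big((\h V)^\P\big)$. Since $\h(\Lump V)=\h\circ\e\circ\h(V)=\h(V)$, the same homomorphism consistency applied to $\Lump V$ yields $\h\big((\Lump V)^\P\big)=(\h(\Lump V))^\P=(\h V)^\P$, whence $\Lump\big((\Lump V)^\P\big)=\e\big((\h V)^\P\big)=\Lump(V^\P)$. This confirms that the two forms agree.

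I expect the only real obstacle to be the bookkeeping of hypotheses for Proposition~\ref{prop:convex_hull_preserving_homomorphism}: one must be certain that the reduced specification space really is a quasi-convex specification space over a convex state space and that $\h$ is a convexity-preserving homomorphism between the two. This is precisely where Proposition~\ref{prop:convex_embeddings} (for convexity-preservation of $\h$) and Theorem~\ref{thm:embedding_convex} (for convexity of the reduced state space) do the work; once these are invoked, the remaining manipulations are routine identities of the Galois insertion.
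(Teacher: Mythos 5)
Your proof is correct and follows essentially the same route as the paper's: invoke Proposition~\ref{prop:convex_embeddings} to get that $\h$ is convexity-preserving, then apply Proposition~\ref{prop:convex_hull_preserving_homomorphism} with $g=\h$. The paper's proof is in fact terser—it omits both the explicit check (via Theorem~\ref{thm:embedding_convex}) that the reduced state space is convex and the derivation of the $\Lump$-form from $\h(V^\P)=(\h(V))^\P$—so your added bookkeeping only makes the argument more complete.
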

\begin{proof}
From Proposition~\ref{prop:convex_embeddings}, we know that $\h$ is convexity-preserving.  By Proposition~\ref{prop:convex_hull_preserving_homomorphism} this immediately implies $\h(V^\P)=(\h(V))^\P$.
\end{proof}

\begin{proposition}[Consistency of convex hull with convex combination] \label{prop:consistency_hull_combination}
Let $(\Omega, f)$ be a convex structure and $(S^\Omega, \widetilde f)$ the corresponding convex specification space. Then
$$\mixt r V W^\P = \mixt r {V^\P} {W^\P}, $$
 for any $V,W\in S^\Omega$ and $r\in[0,1]$.
\end{proposition}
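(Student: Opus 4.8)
The plan is to prove the two inclusions of $(\mixt r V W)^\P = \mixt r {V^\P}{W^\P}$ separately, reducing both to a single state-level \emph{interchange identity} together with the minimality of the convex hull. Throughout I use that $\mixt r V W = \bigcup_{\nu \in V}\bigcup_{\omega \in W}\{\mix r \nu \omega\}$ (Proposition~\ref{prop:convex_specifications}) and that $V^\P$ is the smallest (quasi-)convex subset of $\Omega$ containing $V$ (Definition~\ref{def:convex_hull}). The workhorse is the claim that for all $a,b,c,d \in \Omega$ and $r,s \in [0,1]$,
\[
\mix s {\mix r a c}{\mix r b d} = \mix r {\mix s a b}{\mix s c d}.
\]
I would establish this by rewriting each side as a finite mixture $f_{\vec P}(\vec V)$ over the four-element vector $(a,c,b,d)$: repeatedly splitting the outer binary mixture into the two inner ones via the distributivity Lemma~\ref{lemma:convex_distributivity}, and then checking that both sides carry the \emph{same} weight vector $(sr,\, s(1-r),\, (1-s)r,\, (1-s)(1-r))$, up to the permutation invariance of Lemma~\ref{lemma:order_mixture} (with Lemma~\ref{lemma:mixture_of_distributions} handling the recombination). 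This is the one place where the general, non-vector-space convex-structure axioms genuinely bite, and it is the main bookkeeping obstacle; everything downstream is soft.

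For the inclusion $(\mixt r V W)^\P \subseteq \mixt r {V^\P}{W^\P}$, I would show that the right-hand side is a convex subset of $\Omega$ containing $\mixt r V W$, so that minimality of the hull forces the inclusion. Containment follows from monotonicity of $\widetilde f$ in both arguments (Lemma~\ref{lemma:convex_mixture_order_preserving} applied to each slot, using parametric commutativity), since $V\subseteq V^\P$ and $W\subseteq W^\P$. Convexity is exactly the interchange law: a mixture $\mix s {\mix r {\nu_1'}{\omega_1'}}{\mix r {\nu_2'}{\omega_2'}}$ of two elements of $\mixt r {V^\P}{W^\P}$ rewrites as $\mix r {\mix s {\nu_1'}{\nu_2'}}{\mix s {\omega_1'}{\omega_2'}}$, whose two arguments lie in the convex sets $V^\P$ and $W^\P$ respectively, hence the mixture is again in $\mixt r {V^\P}{W^\P}$.

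For the reverse inclusion $\mixt r {V^\P}{W^\P} \subseteq (\mixt r V W)^\P$, I would peel off one hull at a time using Proposition~\ref{prop:convex_hull_preserving_homomorphism}. For fixed $\nu$, the element-wise map $g^{(\nu)}(X) := \mixt r {\{\nu\}} X$ is a join-semilattice homomorphism on $S^\Omega$, and it is convexity-preserving precisely because the state map $\mix r \nu \cdot$ is (Lemma~\ref{prop:idempotent_convex_structures}); hence $g^{(\nu)}(W^\P) = (g^{(\nu)}(W))^\P \subseteq (\mixt r V W)^\P$, the last step by monotonicity of $\cdot^\P$ since $g^{(\nu)}(W) = \mixt r {\{\nu\}} W \subseteq \mixt r V W$. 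Taking the union over $\nu \in V$ gives $\mixt r V {W^\P} \subseteq (\mixt r V W)^\P$, and the symmetric argument (fixing the second slot, invoking parametric commutativity) gives $\mixt r {V^\P} Y \subseteq (\mixt r V Y)^\P$ for any $Y$. Chaining these with $Y = W^\P$, and using idempotence and monotonicity of the convex hull, yields $\mixt r {V^\P}{W^\P} \subseteq (\mixt r V {W^\P})^\P \subseteq ((\mixt r V W)^\P)^\P = (\mixt r V W)^\P$, which completes the proof.
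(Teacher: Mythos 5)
Your proof is correct, and for the harder inclusion it takes a genuinely different route from the paper's. The forward direction $(\mixt r V W)^\P \subseteq \mixt r {V^\P}{W^\P}$ coincides with the paper's argument: both of you establish convexity of the right-hand side via the interchange identity $\mix s{\mix r a c}{\mix r b d} = \mix r{\mix s a b}{\mix s c d}$ (the paper derives it on the fly with Lemma~\ref{lemma:convex_distributivity}; you derive it by matching the weight vector $(sr,\, s(1-r),\, (1-s)r,\, (1-s)(1-r))$ up to the permutation invariance of Lemma~\ref{lemma:order_mixture}) and then invoke minimality of the hull together with monotonicity of $\widetilde f_r$ in each slot. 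The divergence is in the reverse inclusion $\mixt r {V^\P}{W^\P}\subseteq(\mixt r V W)^\P$: the paper first proves the full identity for \emph{finite} $V,W$ by an explicit computation with marginal distributions and then lifts it to arbitrary specifications via the finite-cover decomposition of Proposition~\ref{prop:convex_hull_finite_subsets_cover}, whereas you peel off one hull at a time by observing that the partial-application maps $X\mapsto \mixt r {\{\nu\}}{X}$ are convexity-preserving specification homomorphisms (via Lemma~\ref{prop:idempotent_convex_structures}) and hence commute with $\cdot^\P$ by Proposition~\ref{prop:convex_hull_preserving_homomorphism}. This factors the two-slot problem into two one-slot problems and outsources all the infinite-set bookkeeping to a result already proved (which itself uses the finite-cover trick internally, so nothing is gained in logical depth, only in locality of the argument). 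What the paper's version buys is an explicit description of the elements of both sides as mixtures $f_{\vec P}$ with product-form weights; your version is shorter and makes the structural reason for the identity more transparent. No gaps.
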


\begin{proof} 
For the case of finite specifications $V$ and $W$ of size $n$ and $m$ respectively, we find 

\begin{align*}
\mixt r V W ^\P
&= \left(
\bigcup_{\nu \in V} \bigcup_{\omega \in W}
\{ \mix r \nu \omega \}
\right)^\P \\ 
&= 
\{\mix r {\nu_1 } {\omega_1 },\mix r {\nu_1 } {\omega_2 }, 
\dots,
\mix r {\nu_n } {\omega_m } \}^\P \\ 
\flag{*}
&= \bigcup_{P\in \P_{V\times W}} f_{\vec P} 
\big(
( \mix r {\nu_1 } {\omega_1 },\mix r {\nu_1 } {\omega_2 }, 
\dots,
\mix r {\nu_n } {\omega_m })
\big) \\
\flag{**}
&=\bigcup_{P\in \P_{V\times W}}
f_{\vec{\tilde P}} 
\left(
{\vec Z}
\right) \\
\flag{***}
&=\bigcup_{P\in \P_{V\times W}}
f_{\pi(\vec{\tilde P})} 
\left(
\pi({\vec Z})
\right) \\
\flag{****}
&=\bigcup_{P\in \P_{V\times W}}
f_{\vec Q} 
\left( \vec {X} \right)\\
&=\bigcup_{P_V\in \P_{V},P_W\in \P_W}
\mix r {f_{P_V}(V)}{f_{P_W}(W)}\\
&=\mixt r
{\bigcup_{P_V\in\P_V} f_{P_V}(V)}
{\bigcup_{P_W\in\P_W} f_{P_W}(W)}\\
&=\mixt r{V^\P}{W^\P},\\
\end{align*}
where in $\flag{*}$ we technically need  Lemma~\ref{lemma:mixture_repeated} as the elements $f_r(\nu,\omega)$ need not be different for different $\nu$ and $\omega$,
and take 
$
\vec P= (p_{1,1}, p_{1,2}, \dots, p_{n,m})
$.
For \flag{**} we  combine the distributions, defining 
\begin{align*}
\vec{ Z}
&=(\nu_1, &&\omega_1, &&\nu_1, &&\omega_2, &&\dots, &&\nu_n, &&\omega_m &),\\
\vec{\tilde P}&=(r\  p_{1,1}, &&(1-r) \ p_{1,1}, &&r\ p_{1,2}, &&(1-r) \ p_{1,2}, &&\dots, &&r\ p_{n,m}, &&(1-r) \ p_{n,m} &).
\end{align*}
For \flag{***} we take a permutation that orders all the identical elements together,
\begin{align*}
\pi(\vec{ Z})
&=(\underbrace{\nu_1,\dots,\nu_1}_{\flag{m\text{ terms}}},
\underbrace{\nu_2,\dots,\nu_2}_{\flag{m\text{ terms}}},
\dots,
\underbrace{\nu_n,\dots,\nu_n}_{\flag{m\text{ terms}}},
\underbrace{\omega_1,\dots,\omega_1}_{\flag{n\text{ terms}}},
\underbrace{\omega_2,\dots,\omega_2}_{\flag{n\text{ terms}}},
\dots,
\underbrace{\omega_m,\dots,\omega_m}_{\flag{n\text{ terms}}}),
 \\
\pi(\vec{\tilde P})
&=(\underbrace{r \ p_{1,1},\dots, r \ p_{1,m}}_{\flag{m\text{ terms}}},
\dots,
\underbrace{r \ p_{n,1},\dots, r \ p_{n,m}}_{\flag{m\text{ terms}}},
\underbrace{(1-r) \ p_{1,1},\dots, (1-r) \ p_{n,1}}_{\flag{n\text{ terms}}},
\dots,
\underbrace{(1-r) \ p_{1,m},\dots, (1-r) \ p_{n,m}}_{\flag{n\text{ terms}}}).
\end{align*}
For \flag{****} we use Lemma~\ref{lemma:mixture_repeated} to sum over the probabilities of the repeated elements, such that
\begin{align*}
\vec X
&=(\nu_1,\nu_2,\dots,\nu_n,\omega_1,\omega_2,\dots,\omega_m) ,\\
\vec Q &= \left( \underbrace{r \sum_{j}^m p_{1, j}}_{\flag{r \ p_1^V}}, \underbrace{r \sum_{j}^m p_{2, j}}_{\flag{r \ p_2^V}}, \dots,  
\underbrace{r \sum_{j}^m p_{n, j}}_{\flag{r \ p_n^V}}, \underbrace{(1-r) \sum_{i}^n p_{i, 1}}_{\flag{(1-r) \ p_1^W}}, 
\underbrace{(1-r) \sum_{i}^n p_{i, 2}}_{\flag{(1-r) \ p_2^W}} , \dots, 
\underbrace{(1-r) \sum_{i}^n p_{i, m}}_{\flag{(1-r) \ p_m^W}} \right) ,\\
\vec{P_V} &= (p_1^V, p_2^V, \dots, p_n^V), 
\qquad
\vec{P_W} = (p_1^W, p_2^W, \dots, p_m^W).
\end{align*}
For the infinite case, we first show that $(\mixt p{V} {W})^\P\subseteq \mixt p {V^\P}{W^\P}$. To see this, note that the expression on the right-hand side describes a convex set, because
$$\mixt p{V^\P}{W^\P}=\bigcup_{\nu\in V^\P, \omega\in W^\P}\{\mix p \nu \omega\},$$
and so for any $\gamma,\tau\in \mixt p{V^\P}{W^\P}$, and any $r\in[0,1]$, there exist $\nu,\nu'\in V^\P$ and $\omega,\omega'\in W^\P$ such that
\begin{align*}
\mix r \gamma \tau
&=\mix r{\mix p \nu \omega}{\mix p {\nu'} {\omega'}}\\
\flag{\text{Lemma~\ref{lemma:convex_distributivity}}}
&=f_{\vec R}(\vec X)\\
\flag{\text{Lemma~\ref{lemma:convex_distributivity}}}
&=\mix p{\mix q \nu {\nu'}}{\mix q\omega{\omega'}}\\
&\in \mixt p {V^\P}{W^\P},
\end{align*}
where
\begin{align*}
\vec X&=(\nu,\nu',\omega,\omega'), \qquad
\vec R=(pq,p(1-q),(1-p)q,(1-p)(1-q)).
\end{align*} 
Next, we can use convexity of $\mixt p {V^\P}{W^\P}$ to conclude that
$(\mixt p{V} {W})^\P\subseteq \mixt p {V^\P}{W^\P}$,
as taking the convex hull is order-preserving:
$$(\mixt p V W)^\P\subseteq (\mixt p{V^\P}{W^\P})^\P=\mixt p{V^\P}{W^\P}.$$
To show the other direction, we use Proposition \ref{prop:convex_hull_finite_subsets_cover} and find
\begin{align*}
\mixt p{V^\P}{W^\P}
&=\bigcup_{V'\subseteq V, V'\text{ finite}}\bigcup_{W'\subseteq W, W'\text{ finite}}
\mixt p{V'^\P}{W'^\P}\\
\flag{\text{finite proof applies}}
&=\bigcup_{V'\subseteq V, V'\text{ finite}}\bigcup_{W'\subseteq W, W'\text{ finite}}
(\mixt p{V'}{W'})^\P\\
\flag{\mixt p{V'}{W'}\text{ finite subset of } \mixt p{V}{W}}
&\subseteq\bigcup_{X\subseteq \mixt p{V}{W}, X\text{ finite}}
X^\P\\
&=(\mixt p V W)^\P
\end{align*}
\end{proof}

\begin{remark}
\label{remark:ambiguous_convexity}

The above consistency checks might lead us to believe that we may always work in the probabilistic quotient space. One must proceed with caution though, because  taking the convex hull of a specification might make us lose information about the larger picture, and in particular about correlations.  For example, imagine that $S^\Omega$ is the specification space of all two-qubit density matrices, and $S^{\Omega_A}$ the reduced space of the first qubit.  Now imagine that we had the local specification $V_A =$ `first qubit is in a pure state', corresponding to the surface of a Bloch sphere. The global version of this specification is $\e_A(V_A) =$ `first qubit is in a pure state, and so in particular it is uncorrelated with the other'. Now, if we took the convex hull of the local specification we would obtain all mixtures of pure states, that is the whole Bloch sphere, $V_A^\P= \Omega_A$. Taking the global version of this specification yields $\e_A(V_A^\P) = \Omega$, because now we allow for correlated states that purify (or more generally extend) mixed states on the first qubit. In taking the convex hull we lose the information that this qubit was uncorrelated: the new global specification $\Omega$ allows for classical correlations as well as entanglement. While we might have also obtained classical correlations by taking the convex hull of the initial global specification $\e_A(V_A)$, and so we could understand them as lack of knowledge already inherent in $\e_A(V_A)$,
we are now also including physical correlations, i.e.\ entanglement, into the specification.  
This is due to the dual character of convexity in quantum theory: the density matrix formalism may reflect both a probabilistic, subjective lack of knowledge as well as physical correlations, i.e.\ entanglement, with external systems.

\end{remark}

\subsubsection{Convex resource theories}

\begin{definition}[Convex resource theories]
We say that a resource theory $(S^\Omega,\cT)$ is \emph{doubly convex} if $(\Omega, f)$ is a convex structure and the transformations in $\cT$ preserve convexity, that is 
\begin{align*}
\mixt p {g(V)} {g(W)}  =  g (\mixt pVW) ,
\end{align*}
for all $V, W \in S^\Omega$ and $g \in \cT$.
\end{definition}

\begin{lemma}
Let $(S^\Omega, \cT)$ be a convex resource theory, let $V, W, Z \in S^\Omega$, and let $0\leq p \leq 1$. Then $(V\cup W) \to Z$ implies $\mixt p V W \to Z^\P$. 
\end{lemma}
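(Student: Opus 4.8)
The plan is to unwind the definition of the pre-order $\to$ and exploit the fact that a convex resource theory has convexity-preserving transformations. I would start from the hypothesis $(V\cup W)\to Z$, which by Definition~\ref{def_construction} means there exists a transformation $g\in\cT$ with $g(V\cup W)\subseteq Z$. Since every transformation is a join-semilattice homomorphism (Remark~\ref{lemma:homomorphisms}), we have $g(V\cup W)=g(V)\cup g(W)$, so both $g(V)\subseteq Z$ and $g(W)\subseteq Z$.

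Next I would show that $g$ maps the convex mixture $\mixt p V W$ into $Z^\P$. Because $(S^\Omega,\cT)$ is convex, $g$ preserves convex combinations: $g(\mixt p V W)=\mixt p {g(V)}{g(W)}$. Now I need to bound $\mixt p {g(V)}{g(W)}$ by $Z^\P$. The key observation is that since $g(V)\subseteq Z$ and $g(W)\subseteq Z$, any element $\mix p \nu \omega$ with $\nu\in g(V),\omega\in g(W)$ is a convex mixture of two elements of $Z$, and therefore lies in the convex hull $Z^\P$ (Remark~\ref{remark:convex_hull_properties}, property 6). Unwinding the definition of $\mixt p \cdot \cdot$ from Proposition~\ref{prop:convex_specifications}, this gives
\begin{align*}
g(\mixt p V W)
&=\mixt p {g(V)}{g(W)}
=\bigcup_{\nu\in g(V)}\bigcup_{\omega\in g(W)}\{\mix p \nu \omega\}
\subseteq Z^\P.
\end{align*}

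Finally, $g(\mixt p V W)\subseteq Z^\P$ is exactly the statement that $\mixt p V W \stackrel{g}{\longrightarrow} Z^\P$, hence $\mixt p V W \to Z^\P$, as required. The argument is short and I do not anticipate a serious obstacle; the only point requiring mild care is making sure the membership $\mix p \nu \omega\in Z^\P$ is justified directly from the convex-hull containment property rather than invoking anything about $Z$ being convex (it need not be). An alternative, slightly cleaner route would be to use order-monotonicity of $\mixt p \cdot \cdot$ (Lemma~\ref{lemma:convex_mixture_order_preserving}) to write $\mixt p {g(V)}{g(W)}\subseteq \mixt p Z Z\subseteq Z^\P$, where the last inclusion is the remark following Proposition~\ref{prop:convex_specifications}; I would likely present this version to avoid manipulating elements explicitly.
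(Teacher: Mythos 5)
Your proof is correct, but it takes a different route from the paper's. The paper's own proof is a two-line composition of previously established facts: it invokes the containment $\mixt p V W \subseteq (V\cup W)^\P$ (the lemma immediately preceding this one) together with Corollary~\ref{corollary:consistency_convex_resource_theories} ($X \to Y \implies X^\P \to Y^\P$), so that $(V\cup W)\to Z$ gives $(V\cup W)^\P \to Z^\P$ and the claim follows because a subset of a specification can always reach whatever the superset can reach. You instead unfold the definitions: extract the witness $g$, split $g(V\cup W)=g(V)\cup g(W)$ to get $g(V),g(W)\subseteq Z$, use convexity-preservation of $g$ to identify $g(\mixt p V W)=\mixt p{g(V)}{g(W)}$, and then observe that every element of this mixture is a convex combination of two elements of $Z$ and hence lands in $Z^\P$. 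Both arguments are valid and use the same underlying ingredients (the transformation is a homomorphism and preserves convex mixtures); yours is more self-contained and makes the mechanism explicit, while the paper's is more modular, leaning on Corollary~\ref{corollary:consistency_convex_resource_theories}, whose proof via Proposition~\ref{prop:convex_hull_preserving_homomorphism} hides the same element-level bookkeeping. Your closing alternative, $\mixt p{g(V)}{g(W)}\subseteq \mixt p Z Z\subseteq Z^\P$ via order-monotonicity, is also fine; just note that Lemma~\ref{lemma:convex_mixture_order_preserving} states monotonicity in the first argument only, so you need parametric commutativity of $\mixt p \cdot \cdot$ to monotonize both slots.
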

  
\begin{proof}
It follows from $\mixt p V W \subseteq (V\cup W)^\P$ together with Corollary~\ref{corollary:consistency_convex_resource_theories}.
\end{proof}

\begin{definition}[Doubly convex resource theory]
We say that a resource theory $(S^\Omega,\cT)$ is \emph{doubly convex} if $(S^\Omega,\cT)$ is convex and $(\cT,c)$ is a convex structure, where $c$ stands for a family of functions $\{c_p\}_{p \in [0,1]}$ ,
\begin{align*}
c_p: \cT\times\cT&\to\cT\\
(g,g')\to \mixf p g {g'},
\end{align*}
which in addition to the properties of a convex structure, satisfy
$$\mixf p  g {g'}(V)\sim_\P \mixt p {g(V)}{g'(V)}$$
for all $V\in S^\Omega$, 
and respect the monoid structure of transformations as
\begin{align*}
\mixf p {f \circ g}{f \circ g'}
&= f \circ \mixf p g {g'}\\
\mixf p {g \circ f}{g' \circ f}
&= \mixf p g {g'} \circ f.
\end{align*}
\end{definition}

\begin{lemma}\label{lemma:double_convex}
Let $(S^\Omega, \cT)$ be a doubly convex resource theory, let $V, W, Z \in S^\Omega$, and let $0\leq p \leq 1$. Then $V \to W$ and $V \to Z$ implies $V \to (\mixt p W  Z)^\P$. 
\end{lemma}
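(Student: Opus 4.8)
The plan is to produce, from the two given transformations, a single mixed transformation that does the job, and then control its output using probabilistic equivalence. Since $V \to W$ and $V \to Z$, pick $f, g \in \cT$ with $f(V) \subseteq W$ and $g(V) \subseteq Z$. The key move is to exploit double convexity: because $(\cT, c)$ is a convex structure, the mixture $\mixf p f g$ is itself an element of $\cT$, so it is a legitimate candidate transformation for the desired construction. I would take $h := \mixf p f g$ and aim to show $h(V) \subseteq (\mixt p W Z)^\P$.

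The defining property of a doubly convex resource theory (Definition of doubly convex above) gives
$$
\mixf p f g (V) \sim_\P \mixt p {f(V)} {g(V)},
$$
and recall that $V \sim_\P V'$ means precisely $V^\P = (V')^\P$, so $\big(\mixf p f g (V)\big)^\P = \big(\mixt p {f(V)} {g(V)}\big)^\P$. Now I would bound the right-hand side: since $f(V) \subseteq W$ and $g(V) \subseteq Z$, order-preservation of the convex mixture (Lemma~\ref{lemma:convex_mixture_order_preserving}, applied in both arguments) yields $\mixt p {f(V)} {g(V)} \subseteq \mixt p W Z$, and then order-preservation of the convex hull (Remark~\ref{remark:convex_hull_properties}) gives
$$
\big(\mixt p {f(V)} {g(V)}\big)^\P \subseteq \big(\mixt p W Z\big)^\P.
$$

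Combining these, $\big(\mixf p f g (V)\big)^\P \subseteq (\mixt p W Z)^\P$, and since taking the convex hull is inflating (Remark~\ref{remark:convex_hull_properties}), $\mixf p f g (V) \subseteq \big(\mixf p f g (V)\big)^\P \subseteq (\mixt p W Z)^\P$. As $h = \mixf p f g \in \cT$, this exhibits a transformation reaching $(\mixt p W Z)^\P$ from $V$, i.e.\ $V \to (\mixt p W Z)^\P$, as required. This argument is essentially routine bookkeeping; the only point demanding care is the interpretation of $\sim_\P$ as equality of convex hulls, which is what lets the probabilistic equivalence in the double-convexity axiom be upgraded to the set inclusion we actually need. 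I do not anticipate a genuine obstacle, only the need to invoke the order-preservation and inflation properties in the correct sequence.
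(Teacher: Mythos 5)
Your proof is correct and follows essentially the same route as the paper's: take $f,g$ achieving the two transformations, form the mixed transformation $\mixf p f g\in\cT$, and chain $\mixf p f g(V)\subseteq[\mixf p f g(V)]^\P=(\mixt p{f(V)}{g(V)})^\P\subseteq(\mixt p W Z)^\P$ using inflation of the convex hull, the probabilistic-equivalence axiom (read as equality of convex hulls), and order preservation. Your reading of $\sim_\P$ as equality of convex hulls is exactly the one the paper relies on.
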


\begin{proof}
Let $f$ be the transformation that achieves $f(V)\subseteq W$ and $g$ the transformation that achieves $g(V)\subseteq Z$. Then
$$\mixf p f g(V) 
\subseteq [\mixf p f g(V)]^\P =    (\mixt p{f(V)} { g(V)})^\P
\subseteq (\mixt p W  Z)^\P.$$
\end{proof}

\subsection{Proofs of claims from the main text}

\begin{proposition}[Quantum theory is doubly convex]
\label{prop:quantum_doubly_convex}
The resource theory of quantum mechanics is doubly convex.
\end{proposition}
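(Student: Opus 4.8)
The plan is to verify the clauses of the doubly-convex definition one at a time, exploiting that the density operators over $\hilbert$ form a convex subset of a real vector space and that TPCPMs are linear. First I would establish the \emph{convex} part. The state space $\Omega$ of density operators is closed under $(\rho,\sigma)\mapsto p\rho+(1-p)\sigma$, so by the lemma on convex vector spaces $(\Omega,f)$ is an idempotent convex structure, and by Proposition~\ref{prop:convex_specifications} the specification space $(S^\Omega,\widetilde f)$ is quasi-convex. Since every $\mathcal E\in\cT$ is linear and acts element-wise, $\mathcal E(\mixt{p}{V}{W})=\bigcup_{\nu\in V,\,\omega\in W}\{\mathcal E(p\nu+(1-p)\omega)\}=\bigcup_{\nu,\omega}\{p\mathcal E(\nu)+(1-p)\mathcal E(\omega)\}=\mixt{p}{\mathcal E(V)}{\mathcal E(W)}$, so each transformation preserves convexity and $(S^\Omega,\cT)$ is convex.

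Next I would put a convex structure on $\cT$ itself. The TPCPMs form a convex subset of the real vector space of Hermiticity-preserving superoperators: trace preservation is an affine constraint and complete positivity defines a convex cone, so $p\mathcal E+(1-p)\mathcal F$, acting as $\rho\mapsto p\mathcal E(\rho)+(1-p)\mathcal F(\rho)$, is again a TPCPM and hence lies in $\cT$. The lemma on convex vector spaces then gives that $(\cT,c)$, with $\mixf{p}{\mathcal E}{\mathcal F}:=p\mathcal E+(1-p)\mathcal F$, is a convex structure. The monoid-compatibility identities follow immediately: $\mixf{p}{\mathcal G\circ\mathcal E}{\mathcal G\circ\mathcal F}=\mathcal G\circ\mixf{p}{\mathcal E}{\mathcal F}$ because $\mathcal G$ is linear, and $\mixf{p}{\mathcal E\circ\mathcal G}{\mathcal F\circ\mathcal G}=\mixf{p}{\mathcal E}{\mathcal F}\circ\mathcal G$ directly from the pointwise definition.

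The remaining and most delicate clause is the naturality condition $\mixf{p}{\mathcal E}{\mathcal F}(V)\sim_\P\mixt{p}{\mathcal E(V)}{\mathcal F(V)}$. One inclusion is free: since $\mixf{p}{\mathcal E}{\mathcal F}(V)=\bigcup_{\rho\in V}\{p\mathcal E(\rho)+(1-p)\mathcal F(\rho)\}$ is the ``diagonal'' of $\mixt{p}{\mathcal E(V)}{\mathcal F(V)}=\bigcup_{\rho,\sigma\in V}\{p\mathcal E(\rho)+(1-p)\mathcal F(\sigma)\}$, we get $\mixf{p}{\mathcal E}{\mathcal F}(V)\subseteq\mixt{p}{\mathcal E(V)}{\mathcal F(V)}$ and hence the same inclusion of convex hulls. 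To treat the equivalence I would pass to convex hulls and chain the consistency results already available: by Proposition~\ref{prop:consistency_hull_combination} together with Proposition~\ref{prop:convex_hull_preserving_homomorphism} (each $\mathcal E$ is a convexity-preserving homomorphism) the product side satisfies $(\mixt{p}{\mathcal E(V)}{\mathcal F(V)})^\P=\mixt{p}{\mathcal E(V^\P)}{\mathcal F(V^\P)}$, while linearity collapses convex combinations of diagonal elements back onto the diagonal, giving $(\mixf{p}{\mathcal E}{\mathcal F}(V))^\P=\bigcup_{\mu\in V^\P}\{p\mathcal E(\mu)+(1-p)\mathcal F(\mu)\}$.

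I expect this comparison of the diagonal hull with the full product hull to be the main obstacle, and I would control it by reducing to finite specifications via Proposition~\ref{prop:convex_hull_finite_subsets_cover} before combining distributions. This is precisely the form of probabilistic equivalence invoked in Lemma~\ref{lemma:double_convex}, so in writing the proof I would be careful to phrase the compatibility at exactly the level the framework uses (the inclusion above plus the convex-hull identities), since it is this form that ultimately feeds into the convexity of free resources in Theorem~\ref{thm:free_convex}.
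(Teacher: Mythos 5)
Your proposal follows the paper's proof almost step for step through the first three clauses: convexity of $\Omega$ as a convex subset of a real vector space, convexity-preservation of TPCPMs by linearity, the pointwise mixture $\mixf p {\widetilde{\E}}{\widetilde{\E}'}$ as a convex structure on $\cT$ (a convex combination of TPCPMs is again a TPCPM), and the monoid-compatibility identities by linearity. Where you diverge is the naturality clause $\mixf p {\widetilde{\E}}{\widetilde{\E}'}(V)\sim_\P\mixt p{\widetilde{\E}(V)}{\widetilde{\E}'(V)}$: the paper simply writes $\mixf p {\widetilde{\E}}{\widetilde{\E}'}(V)=\bigcup_{\rho\in V}\{p\,\E(\rho)+(1-p)\,\E'(\rho)\}=\mixt p {\widetilde{\E}(V)}{\widetilde{\E}'(V)}$ and declares the equivalence trivial, whereas you correctly observe that the first set is only the diagonal of the second, so that only the inclusion $\subseteq$ comes for free.

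However, your plan for the reverse direction---pass to convex hulls, reduce to finite specifications via Proposition~\ref{prop:convex_hull_finite_subsets_cover}, and recombine distributions---cannot succeed, because the reverse containment of hulls is genuinely false for non-singleton $V$. Concretely, take $V=\{\pure{0},\pure{1}\}$ on a qubit, $\E=\mathrm{id}$, $\E'=H\cdot H^\dagger$ with $H$ the Hadamard, and $p=\half$. The off-diagonal element $\half\pure{0}+\half\pure{-}$ belongs to $\mixt p{\widetilde{\E}(V)}{\widetilde{\E}'(V)}$ but not to the convex hull of the diagonal, since every element of that hull has the form $\half\sigma+\half H\sigma H^\dagger$ with $\sigma\in V^\P$, and matching the $X$ and $Z$ Bloch components forces contradictory values of the mixing parameter. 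So the obstacle you flag is real and your sketch does not (and cannot) remove it; it is also precisely the step the paper's own proof asserts without justification. Note that the only direction used downstream (Lemma~\ref{lemma:double_convex}, hence Theorem~\ref{thm:free_convex}) is $\left[\mixf p {\widetilde{\E}}{\widetilde{\E}'}(V)\right]^\P\subseteq\left[\mixt p{\widetilde{\E}(V)}{\widetilde{\E}'(V)}\right]^\P$, which your diagonal inclusion already delivers, and on singletons the two sides agree exactly; restricting the claim to that one-sided form is the clean way to make your argument (and the proposition's use in the paper) go through.
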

\begin{proof}
We first show that the specification space $S^\Omega$ for a state space $\Omega$ of density matrices on a finite-dimensional Hilbert space is convex. To see this, we take the usual convex combination
$$\mix p \rho \sigma = p \  \rho + (1-p) \  \sigma,$$
and note that, because density matrices form a subspace of a real vector space, $(\Omega,f)$ is a convex structure as required, with $(S^\Omega,\widetilde f)$ the corresponding quasi-convex structure on specification space. Next, we show that the allowed transformations in quantum mechanics are convexity preserving:
$$ \E (p  \  \rho + (1-p) \  \sigma) = p \  \E (\rho) + (1-p) \  \E (\sigma)$$
by linearity for any CPTP map $\E$. Therefore the  extensions on the specification space,  $\widetilde \E(V) = \bigcup_{\rho \in V} \{ \E (\rho)\}$,  satisfy the required property
$$ \widetilde \E (\mixt p V W) = \mixt p {\widetilde \E (V)} {\widetilde \E (W)}.$$
Finally, we need to show that the transformations are themselves convex, and that their convex combinations satisfy
$$\mixf p {\widetilde \E} {\widetilde \E'} (V)\sim_\P \mixt p {\widetilde \E(V)}{\widetilde \E'(V)}$$
for all $V\in S^\Omega$, 
and respect the monoidal structure.
Since CPTP maps take quantum states to quantum states, we can simply define the convex combination $c$ of transformations through
$$\mixf p {\widetilde \E} {\widetilde \E'} (\{\rho\})
= \{ p \ \E (\rho) + (1-p) \ \E' (\rho) \}.$$
This definition clearly gives a convex structure since $\Omega$ is a subset of a vector space (a convex structure), and by linearity  
\begin{align*}
\mixf p {\widetilde \E \circ \widetilde \E'}{\widetilde \E \circ \widetilde \E''}
&= \widetilde \E \circ \mixf p {\widetilde \E'} {\widetilde \E''}\\
\mixf p {\widetilde \E' \circ \widetilde \E}{\widetilde \E'' \circ \widetilde \E}
&= \mixf p {\widetilde \E} {\widetilde \E''} \circ \widetilde \E
\end{align*}
as required. Finally we note that
\begin{align*}
\mixf p {\widetilde \E} {\widetilde \E'} (V) 
&= \bigcup_{\rho \in V} 
\{ p \ \E (\rho) + (1-p) \ \E'(\rho) \} \\
&= \mixt p {\widetilde \E (V)} {\widetilde \E' (V)}
\end{align*}
and so the required convex hull equivalence between the two sides trivially holds.
\end{proof}

 \thmFreeConvex*

\begin{proof}
Since by assumption 
$\Omega \to V$ and 
$\Omega \to W$, Lemma~\ref{lemma:double_convex} tells us that 
$\Omega \to (\mixt p {V} {W})^\P$.
For the special case of free states, this becomes 
$\Omega \to  (\mixt p {\{ \nu\} } {\{\omega\}})^\P = \{\mix p \nu \omega\}^\P = \{\mix p \nu \omega\}$.
\end{proof}

\thmConvexReduced*

\begin{proof}

We denote the Galois insertion of $S^T$ in $S^\Omega$ by $(\e,\h)$. Note that the condition on $\Lump$ for general convex structures $(\Omega,f)$ reads
$$\Lump (\mixt p V  W)\  \supseteq
\mixt p {\Lump (V)}{\Lump (W)}.$$
We follow Proposition~\ref{prop:convex_embeddings} in defining the quasi-convex mixture in the reduced specification space,
\begin{align*}
\widetilde f'_p:\quad {S^T}\times {S^T}&\to S^T\\
(V_T,W_T)
&\mapsto \h(\mixt p{\e(V_T)}{\e(W_T)}).
\end{align*}
Note that according to Proposition~\ref{prop:convex_embeddings}  $\h$ is convexity-preserving.
We will show that $\mixtg  p {\{\omega_T\}}{\{\nu_T\}}=\{\mu_T\}$ for some $\mu_T\in T$ and that  there is a family of functions $\{f'_p\}_{p \in [0,1]}$ such that  $(T, f')$ is a convex structure and  
$$\mixtg p {V_T} { W_T} = \bigcup_{\nu_T \in V_T} \bigcup_{\omega_T \in W_T} \{\mixg p {\nu_T}{ \omega_T}\} .$$
By assumption  we have
$$\e\circ \h(\mixt p {\{\omega\}} {\{\nu\}}) \supseteq \mixt p {\e\circ \h(\{\omega\})}{\e\circ \h(\{\nu\})}.$$
Acting on both sides with $\h$ and using the facts that $\h$ is order-preserving and that $\h\circ\e$ is the identity gives
$$\h(\mixt p{\{\omega\}}{\{\nu\}}) \supseteq \h(\mixt p{\e\circ \h(\{\omega\})}{\e\circ \h(\{\nu\})}).$$
By construction of the embedding, singletons are reduced into singletons: $\omega_T\in {T}\iff \{\omega_T\}=\h(\{\omega\})$
for some $\omega\in \Omega$, and in turn, for all $\omega\in \Omega$, $\h(\{\omega\}) = \{\omega_T\}$ for some $\omega_T\in T$.
Therefore, 
the right-hand side  equals $\widetilde f'_p(\{\omega_T\},\{\nu_T\})$. 
On the left-hand side, convexity of $\Omega$ implies
$\mixt p {\{\omega\}}{\{\nu\}} = \{\mu\}$
for some $\mu\in\Omega$,  so the left-hand side is
$\h(\{\mu\})=\{\mu_T\}$. Now, since this is a singleton,
$\supseteq$ becomes $=$ and so
$$\h(\mixt p{\{\omega\}}{\{\nu\}})
= \widetilde f'_p(\{\omega_T\},\{\nu_T\}).$$
With this, we have shown that
$\widetilde f'_p\ (\{\omega_T\},\{\nu_T\})=\{\mu_T\} =: \{\mixg{p}{\omega_T}{\nu_T}\}$
for some $\mu_T\in T$.
It is left to verify that $(T,f'_p)$ is indeed a convex structure, namely that it is idempotent. This is guaranteed because
$$\widetilde f'_p(\{\omega_T\},\{\omega_T\})
= \h(\mixt p{\{\omega\}}{\{\omega\}})
= \h(\{\omega\}) = \{\omega_T\}$$
for some $\omega\in\Omega$ such that $\h(\{\omega\})=\{\omega_T\}$.
Hence $(T, f')$ is a convex structure and $(S^T, \widetilde f')$ its quasi-convex specification space.

\end{proof}

\subsection{Additional results}

The following theorem shows that, under suitable conditions, not only is the reduced state space of a convex state space also convex, but also we can find restricted resource theories $(S^T,\widetilde F)$ that are also doubly convex. We thank the ITP-ETHZ Oktoberfest for inspiration for this theorem.
The condition imposed on the embedding guarantees that we can define the convex combination of reduced transformations uniquely.  

\begin{theorem}[Double convex embeddings]
\label{thm:embedding_convex2}

Let $(S^\Omega, \cT)$ be a doubly convex resource theory, let $\Lump$ be a lumping on $S^\Omega$ inducing a reduced specification space $S^T$ with an intensive embedding $\e$ that preserves a convex set of endomorphisms $G\subseteq\cT$. 
Then if $\Lump$ satisfies  
\begin{align*}
\Lump (\mixt p V  W)\  \supseteq
\mixt p {\Lump (V)}{\Lump (W)},
\end{align*}
and the convex combinations of functions satisfies 
\begin{align*}
&\h \circ f \circ \e = \h \circ f' \circ \e
\implies 
\h\circ \mixf p f g \circ \e = \h \circ \mixf p {f'} g \circ \e,
\end{align*}
then the restricted resource theory $(S^T, \widetilde G)$ is also doubly convex.
\end{theorem}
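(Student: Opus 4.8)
The plan is to route everything through the \emph{reduction map} $R\colon\cT\to\End{S^T}$ defined by $R(f):=\h\circ f\circ\e$, whose image is exactly $\widetilde G=R(G)$, and then to transport the convex structure of $\cT$ along $R$. Before touching the transformations I would record the facts that come for free under the stated $\Lump$-hypothesis: by Theorem~\ref{thm:embedding_convex} the reduced state space $T$ (hence $(S^T,\widetilde f')$) is convex, and by Proposition~\ref{prop:convex_embeddings} the map $\h$ is convexity-preserving, $\h\circ\mixt p V W=\mixtg p{\h(V)}{\h(W)}$. I would also note that, since $G$ is $\e$-preserved, $\h\circ f=\h\circ f\circ\Lump$ for every $f\in G$, and that taking $G$ to be a (convex) submonoid makes $\widetilde G$ a monoid, using $\widetilde f\circ\widetilde g=\h\circ f\circ\Lump\circ g\circ\e=\h\circ(f\circ g)\circ\e=R(f\circ g)$; so $(S^T,\widetilde G)$ is at least a resource theory.

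First I would settle the single-convexity half, that each $\widetilde f=R(f)$ preserves convexity. Starting from $\widetilde f(\mixtg p{V_T}{W_T})=\h\circ f\circ\e\circ\h\,(\mixt p{\e(V_T)}{\e(W_T)})$ and using $\e\circ\h=\Lump$ together with preservation $\h\circ f\circ\Lump=\h\circ f$, this collapses to $\h\circ f\,(\mixt p{\e(V_T)}{\e(W_T)})$; convexity of $f$ in $\cT$ and convexity-preservation of $\h$ then yield $\mixtg p{\widetilde f(V_T)}{\widetilde f(W_T)}$. Every step is an exact equality, so $\widetilde f$ is genuinely convexity-preserving, which together with the previous paragraph gives that $(S^T,\widetilde G)$ is convex.

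Next I would define the convex combination on $\widetilde G$ by $\mixft p{\widetilde f}{\widetilde g}:=R(\mixf p f g)$ and check that $(\widetilde G,\widetilde c)$ is a convex structure. Closure holds because $G$ is convex, so $\mixf p f g\in G$; the four axioms (extremicity, parametric commutativity, associativity, idempotence) descend directly from those of $(\cT,c)$ through the defining identity $R(\mixf p f g)=\mixft p{R(f)}{R(g)}$. The genuine obstacle, and the reason the second hypothesis is imposed, is \textbf{well-definedness}: representatives $f,f'$ with $R(f)=R(f')$ must give $R(\mixf p f g)=R(\mixf p{f'}g)$, which is exactly the stated implication $\h\circ f\circ\e=\h\circ f'\circ\e\Rightarrow\h\circ\mixf p f g\circ\e=\h\circ\mixf p{f'}g\circ\e$; independence in the second slot then follows by applying this with parameter $1-p$ and parametric commutativity $\mixf p f g=\mixf{1-p}g f$ of $c$.

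Finally I would verify the two defining properties of a doubly convex theory. For the action identity I would compute $\mixft p{\widetilde f}{\widetilde g}(V_T)=\h(\mixf p f g(\e(V_T)))$; double convexity of $\cT$ gives $\mixf p f g(\e(V_T))\sim_\P\mixt p{f(\e(V_T))}{g(\e(V_T))}$, and since the $\Lump$-hypothesis makes the lumping quasi-convexity preserving, Corollary~\ref{corollary:consistency_convex_embeddings} shows $\h$ respects $\sim_\P$, so applying $\h$ and its convexity-preservation delivers $\mixft p{\widetilde f}{\widetilde g}(V_T)\sim_\P\mixtg p{\widetilde f(V_T)}{\widetilde g(V_T)}$. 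For the two monoid-compatibility identities I would rewrite $\widetilde f\circ\widetilde g=R(f\circ g)$ and $\widetilde g\circ\widetilde f=R(g\circ f)$ via preservation, push the convex combination through with the well-defined $\widetilde c$, and invoke the monoid-respecting rules $\mixf p{f\circ g}{f\circ g'}=f\circ\mixf p g{g'}$ and $\mixf p{g\circ f}{g'\circ f}=\mixf p g{g'}\circ f$; the only subtlety is that closing the second identity needs preservation of $\mixf p g{g'}$ itself, which holds precisely because $G$ is convex and $\e$-preserved. I expect well-definedness of $\widetilde c$ to be the crux, with everything else being bookkeeping that transfers structure along $R$.
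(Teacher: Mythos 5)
Your proposal is correct and follows essentially the same route as the paper's proof: show each reduced map $\h\circ f\circ\e$ preserves convexity using $\Lump$-preservation of $G$ and convexity-preservation of $\h$, define $\mixft p{\widetilde f}{\widetilde g}=\h\circ\mixf p f g\circ\e$, use the second hypothesis for well-definedness, descend the convex-structure axioms from $(\cT,c)$, and handle the probabilistic-equivalence and monoid-compatibility identities via Corollary~\ref{corollary:consistency_convex_embeddings} and the preservation property. Your explicit treatment of well-definedness in the second argument via parametric commutativity is a small point of extra care the paper leaves implicit, but the substance is identical.
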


\begin{proof}
It follows from Theorem \ref{thm:embedding_convex} that $G$ is convex. It is left to show that
\begin{enumerate}
\item the operations $\widetilde g=\h \circ g \circ \e \in \widetilde F$ for $g\in G$ preserve convexity,
\item the set of functions $\widetilde g\in\widetilde G$ is convex.
\end{enumerate}

We will now prove the above points.

\begin{enumerate}
\item Since
$\widetilde g (V_T) = \h \circ g\circ\e(V_T)$,
we have
\begin{align*}
\widetilde g(\widetilde f'_p (V_T,W_T))
&= \h\circ g\circ \e\circ\h(\mixt p {\e(V_T)} {\e(W_T)})\\
\flag{\Lump \text{ preserves } G}&=\h\circ g\circ \mixt p {\e(V_T)} {\e(W_T)}
\\
\flag{g \text{ convexity preserving}}
&=\h\circ \mixt p {g\circ\e(V_T)}{g\circ \e(W_T)}\\
\flag{\h \text{ convexity preserving, Proposition~\ref{prop:convex_embeddings}}}
&=\widetilde f'_p (\h\circ g\circ \e(V_T), \h\circ g\circ\e(W_T))\\
&=\widetilde f'_p (\widetilde g(V_T),\widetilde g(W_T)).
\end{align*}

\item 
Let $f, g \in G$.
Since $G$ is convex, $\mixf p f {g} \in G$. 

Now, since we require that
\begin{align*}
&\h \circ f \circ \e = \h \circ f' \circ \e
\implies 
\h\circ \mixf p f g \circ \e = \h \circ \mixf p {f'} g \circ \e,
\end{align*}
we can define the convex combination of elements in $\widetilde G$ through
\begin{align*}
\widetilde c_p: \widetilde G \times \widetilde G
&\to \widetilde G\\
(\widetilde f,\widetilde g)
&\mapsto \h \circ \mixf p f g \circ \e
\end{align*}
for any $f,g\in G$ such that
$\h\circ f \circ \e = \widetilde f$ and
$\h \circ g \circ \e = \widetilde g$.
Now, we have to show that $(\widetilde G, \widetilde{c})$ is idempotent, extremal, commutative and associative, that it satisfies the probabilistic equivalence as
$\mixft p {\widetilde f} {\widetilde g} (V_T)\sim_\P \mixtg p {\widetilde f(V_T)}{\widetilde g(V_T)}$, and that  $\widetilde{c_p}$ respects the monoidal structure of  $G$.

 Idempotence, commutativity and extremicity follow directly from the same properties in $c$,
$$\mixft p {\widetilde f}{\widetilde f}
= \h \circ \mixf p f f \circ \e 
= \h \circ f \circ \e = \widetilde f,$$ 
$$\mixft p {\widetilde f}{\widetilde g}
= \h \circ \mixf p f g \circ \e
= \h \circ \mixf {1-p} g f \circ \e
= \mixft {1-p} {\widetilde g}{\widetilde f},$$
$$\mixft 1 {\widetilde f}{\widetilde g}
= \h \circ \mixf 1 f g \circ \e
= \h \circ f \circ \e
= \widetilde f.$$

For associativity, we have
\begin{align*}
\mixft p {\mixft q {\widetilde f} {\widetilde g}}{\widetilde t}
&= \mixft p {\h \circ \mixf q f g \circ \e} {\widetilde t} \\
&= \h \circ \mixf p {\mixf q f g}{t} \circ \e\\
&= \h \circ \mixf {1-pq} {f}
{\mixf {\frac{1-p}{1-pq}} {t} {g}} \circ \e \\
&= \mixft {1-pq} {\h \circ f \circ \e} 
{\h \circ \mixf {\frac{1-p}{1-pq}} {t} {g} \circ \e} \\
&= \mixft {1-pq} {\widetilde f} 
{\mixft {\frac{1-p}{1-pq}} {\widetilde t} {\widetilde g}}.
\end{align*}

 To see that the probabilistic equivalence condition is still satisfied, we calculate
\begin{align*}
\left[\mixft p {\widetilde f} {\widetilde g} (V_T)\right]^\P
&= \left[\h \circ \mixf p f g \circ \e (V_T)\right]^\P\\
\flag{\text{Corollary~\ref{corollary:consistency_convex_embeddings}}}
&= \h \left( \left[ \mixf p f g \circ \e (V_T) \right]^\P \right)\\
&= \h \left( \left[ \mixt p {f \circ \e (V_T) } {g \circ \e (V_T)} \right]^\P \right) \\
&= \left[ \h \left( \mixt p {f \circ \e (V_T) } {g \circ \e (V_T)} \right) \right]^\P  \\
\flag{\text{Proposition~\ref{prop:convex_embeddings}}}
&= \left[ \mixtg p {\h \circ f \circ \e (V_T) } {\h \circ g \circ \e (V_T)} \right]^\P\\
&= \left[ \mixtg p {\widetilde f (V_T)} {\widetilde g (V_T)}\right]^\P.
\end{align*}

Finally, for the monoidal structure, we take the equivalent property from  $c$ and compute
\begin{align*}
\mixft p {\widetilde f \circ \widetilde g}{\widetilde f \circ \widetilde g'}
&= \h \circ \mixf p {f \circ g} {f \circ g'} \circ \e \ \flag{\text{embedding preserves } f}\\
&= \h \circ f \circ \mixf p {g} {g'} \circ \e \\
\flag{\text{embedding preserves } f}
&= \widetilde f \circ \mixft p {\widetilde g} {\widetilde g'},
\end{align*}
and similarly,
\begin{align*}
\mixft p {\widetilde g \circ \widetilde f}{\widetilde g' \circ \widetilde f}
&= \h \circ \mixf p {g \circ f} {g' \circ f} \circ \e \ \flag{\text{embedding preserves } g, g'}\\
&= \h \circ \mixf p {g} {g'} \circ f \circ \e \\
\flag{\text{embedding preserves } \mixf p {g} {g'}}
&= \mixft p {\widetilde g} {\widetilde g'} \circ \widetilde f,
\end{align*}
where the preservation steps were allowed because
$\h \circ f \circ g \circ \e 
= \h \circ f \circ \e \circ \h \circ g \circ \e
= \widetilde f \circ \widetilde g$.

\end{enumerate}
\end{proof}

The above is in particular true for for embeddings that preserve subsystems of specifications.

\begin{lemma}\label{lemma:subsystem_doubly_convex}
Let $(S^\Omega, \cM)$ be a doubly convex resource theory. Then every subsystem $A \in \Sys(\cM)$ is convex.
\end{lemma}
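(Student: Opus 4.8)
The plan is to reduce the statement to a single closure property. Since $(\cM, c)$ is already a convex structure and $A \subseteq \cM$, the subset $A$ will be a convex subset as soon as it is closed under the convex combination $c_p$ of transformations: the defining axioms of a convex structure (extremicity, commutativity, associativity, idempotence) are universally quantified identities in the mixture operation, so they restrict automatically to any subset on which $c_p$ is well-defined and valued. Thus the entire task is to show that for all $g, g' \in A$ and all $p \in [0,1]$, the transformation $\mixf p g {g'}$ again lies in $A$.

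To prove closure I would exploit that $A \in \Sys(\cM)$ is a \emph{complete} subsystem, so $A = \bic A = \com{\com A}$. Hence it suffices to verify that $\mixf p g {g'}$ commutes with every $h \in \com A$. Fix such an $h$. Since $g, g' \in A$ and $h \in \com A$, the definition of the commutant yields the map identities $h \circ g = g \circ h$ and $h \circ g' = g' \circ h$ on $S^\Omega$.

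Now I would invoke the two monoidal-compatibility axioms of the convex combination of transformations from the definition of a doubly convex resource theory, namely $\mixf p {h \circ g}{h \circ g'} = h \circ \mixf p g {g'}$ and $\mixf p {g \circ h}{g' \circ h} = \mixf p g {g'} \circ h$. Chaining these with the commutation relations above gives, as maps on $S^\Omega$,
\begin{align*}
h \circ \mixf p g {g'}
&= \mixf p {h \circ g}{h \circ g'}
= \mixf p {g \circ h}{g' \circ h}
= \mixf p g {g'} \circ h .
\end{align*}
Therefore $\mixf p g {g'}$ commutes with $h$, and as $h \in \com A$ was arbitrary we conclude $\mixf p g {g'} \in \bic A = A$, establishing closure and hence convexity of $A$.

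The argument is essentially bookkeeping in the axioms, so I expect no serious obstacle. The only point demanding care is that every equality above is an equality of maps, holding on all specifications $V \in S^\Omega$; in particular the middle step substitutes equal maps into the operation $c_p$, which is legitimate precisely because equality of transformations is defined pointwise on $S^\Omega$. I would also make explicit that completeness of $A$ is exactly the hypothesis that converts ``commutes with everything in $\com A$'' into genuine membership in $A$, so the double-convexity axioms and the bicommutant characterization together do all the work.
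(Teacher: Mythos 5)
Your proof is correct and follows exactly the paper's own argument: take $h\in\com A$, use the monoidal compatibility axioms of the convex combination together with the commutation relations $h\circ g = g\circ h$, $h\circ g' = g'\circ h$ to show $\mixf p g{g'}$ commutes with $h$, and conclude membership via $\bic A = A$. The additional remarks about equality of maps being pointwise and the axioms restricting to subsets are sound but not needed beyond what the paper records.
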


\begin{proof}
Let $g_A, g'_A \in A$. We need to show that $\mixf p {g_A} {g'_A} \in A$. Let $ f_{\com A} \in \com A$. Then, we have
\begin{align*}
f_{\com A} \circ \mixf p {g_A} {g'_A} (V)
&= \mixf p{ f_{\com A} \circ g_A} {f_{\com A} \circ g'_A } (V)\\
&= \mixf p {g_A \circ  f_{\com A}} {g'_A \circ f_{\com A}} (V) \\
&= (\mixf p {g_A} {g'_A}  ) \circ   f_{\com A} (V).
\end{align*}
So we have $\mixf p {g_A} {g'_A}  \in \bic A = A$.
\end{proof}

\begin{corollary}  \label{corollary:preserving_convex_subsystem}
Let $(S^\Omega,\cT)$ be a doubly convex resource theory. For an intensive embedding $\e_A$ inducing a specification space $S^{\Omega_A}$ that preserves a subsystem $A\in\Sys(\cT)$, 
and such that
$$\Lump (\mixt p V  W)\  \supseteq
\mixt p {\Lump (V)}{\Lump (W)}$$
and for all $f,f',g \in A$
\begin{align*}
&\h_A \circ f \circ \e_A = \h_A \circ f' \circ \e_A
\implies 
\h_A\circ \mixf p f g \circ \e_A = \h_A \circ \mixf p {f'} g \circ \e_A,
\end{align*}
the corresponding restricted resource theory $(S^{\Omega_A},\widetilde A)$ is also doubly convex.
\end{corollary}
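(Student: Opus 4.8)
The plan is to recognize this corollary as a direct specialization of Theorem~\ref{thm:embedding_convex2}, where the generic convex set of endomorphisms $G$ is taken to be the subsystem $A$ itself, the generic reduced space $S^T$ is the local space $S^{\Omega_A}$, and the generic lumping $\Lump$ is $\Lump_A = \e_A \circ \h_A$. Under this identification, the statement to be proved is exactly the conclusion of Theorem~\ref{thm:embedding_convex2}, so the entire task reduces to checking that the hypotheses of that theorem are satisfied in the present setting.

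First I would enumerate the hypotheses of Theorem~\ref{thm:embedding_convex2} and match them one by one against the data of the corollary. The theorem requires that the ambient theory $(S^\Omega, \cT)$ be doubly convex, which is assumed; that there be an intensive embedding preserving a convex set $G \subseteq \cT$ of endomorphisms, which here is the embedding $\e_A$ preserving $A$; that the lumping obey $\Lump(\mixt p V W) \supseteq \mixt p {\Lump(V)}{\Lump(W)}$, which is hypothesized verbatim; and that the convex combinations of reduced functions be well defined in the sense $\h_A \circ f \circ \e_A = \h_A \circ f' \circ \e_A \implies \h_A \circ \mixf p f g \circ \e_A = \h_A \circ \mixf p {f'} g \circ \e_A$, which is again assumed. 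The only condition not stated literally in the corollary is the convexity of the preserved set $G = A$.

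To supply this last ingredient I would invoke Lemma~\ref{lemma:subsystem_doubly_convex}, which asserts that in any doubly convex resource theory every subsystem $A \in \Sys(\cT)$ is convex. Since $(S^\Omega, \cT)$ is doubly convex and $A \in \Sys(\cT)$ by assumption, the lemma applies and $A$ is convex. Together with the hypothesis that $\e_A$ preserves $A$, this shows that $G = A$ is a convex set of endomorphisms preserved by the embedding, completing the verification of every premise of Theorem~\ref{thm:embedding_convex2}. Applying that theorem then yields directly that the restricted resource theory $(S^{\Omega_A}, \widetilde A)$ is doubly convex, which is the desired conclusion.

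I do not expect any genuine obstacle here, since all the mathematical content is already carried by Theorem~\ref{thm:embedding_convex2} and Lemma~\ref{lemma:subsystem_doubly_convex}. The one point requiring care is bookkeeping: confirming that the generic objects $(\Lump, S^T, G, \e, \h)$ of the theorem correspond precisely to $(\Lump_A, S^{\Omega_A}, A, \e_A, \h_A)$, and that ``preserving a subsystem $A$'' is exactly the instance of ``preserving a set of endomorphisms'' demanded by the theorem, so that the hypotheses transfer without gaps.
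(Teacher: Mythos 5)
Your proposal is correct and matches the paper's intended argument exactly: the paper states this corollary immediately after Theorem~\ref{thm:embedding_convex2} and Lemma~\ref{lemma:subsystem_doubly_convex} precisely as the specialization $G = A$, with the lemma supplying the convexity of the subsystem $A$. The bookkeeping identification $(\Lump, S^T, G, \e, \h) \mapsto (\Lump_A, S^{\Omega_A}, A, \e_A, \h_A)$ is the whole content, and you have carried it out correctly.
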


When we trace out a subsystem in quantum theory  quantum mechanics, local state spaces and transformations are convex. 

\begin{proposition}[Partial trace induces doubly convex local theories]
Let $(S^{\Omega_{AB}}, \cT)$ be quantum theory on a bipartite Hilbert space $\hilbert_A \otimes \hilbert_B$, and $A \in \Sys( \cT)$ the submonoid  of transformations acting on the first space.
The embedding determined by the partial trace over a subsystem, 
\begin{align*}
\e_A (V_A) = \{\sigma\in\Omega_{AB}: \sigma_A \in V_A\},
\qquad \qquad
\h_A (V) &= \{\sigma_A: \sigma \in V \},
\end{align*}
with $\sigma_A=\tr_B(\sigma)$, 
induces a doubly convex local resource theory $(S^{\Omega_A}, \widetilde A)$.

\end{proposition}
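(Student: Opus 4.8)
The plan is to obtain this proposition as a direct application of Corollary~\ref{corollary:preserving_convex_subsystem}, whose hypotheses package precisely the conditions under which a subsystem-preserving intensive embedding carries double convexity from a global theory down to the induced local theory. First I would record the two structural inputs. By Proposition~\ref{prop:quantum_doubly_convex}, the global theory $(S^{\Omega_{AB}}, \cT)$ is doubly convex. By Proposition~\ref{prop:partial_trace}, the set $A$ of TPCPMs of the form $\E_A \otimes \I_B$ is a complete subsystem, $\bic A = A \in \Sys(\cT)$, and the intensive embedding it induces is exactly the partial-trace pair $(\e_A, \h_A)$ written in the statement, with lumping $\Lump_A = \e_A \circ \h_A$ sending a specification to the set of all global states sharing one of its $A$-marginals. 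It then remains to verify the three hypotheses of the corollary.

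For the preservation hypothesis, the key observation is that a local map $f = \E_A \otimes \I_B$ acts on the $A$-marginal alone. Since $\tr_B \circ (\E_A \otimes \I_B) = \E_A \circ \tr_B$, one has $\h_A \circ f(V) = \{\E_A(\sigma_A) : \sigma \in V\}$, which depends on $V$ only through its set of $A$-marginals $\h_A(V)$. As $\Lump_A(V)$ is by construction the set of all global states whose $A$-marginal lies in $\h_A(V)$, it carries the same $A$-marginals as $V$, so $\h_A \circ f = \h_A \circ f \circ \Lump_A$, which is the sufficient form of the preservation condition for intensive embeddings. Hence $\e_A$ preserves $A$.

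The remaining two hypotheses follow from linearity of the partial trace. For the lumping condition $\Lump_A(\mixt p V W) \supseteq \mixt p {\Lump_A(V)} {\Lump_A(W)}$, I would note that every element $p\sigma + (1-p)\tau$ of the right-hand side, with $\sigma_A \in \h_A(V)$ and $\tau_A \in \h_A(W)$, has $A$-marginal $p\sigma_A + (1-p)\tau_A$, which is the marginal of the element $p\nu + (1-p)\omega \in \mixt p V W$ realizing those two marginals; hence it lies in $\Lump_A(\mixt p V W)$. For well-definedness of convex combinations of reduced maps, evaluating $\h_A \circ \mixf p f g \circ \e_A$ on a singleton $\{\rho_A\}$ gives $\{p\,\E_A(\rho_A) + (1-p)\,\cG_A(\rho_A)\}$ by the same marginal computation (writing $g = \cG_A \otimes \I_B$). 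Thus if $f, f'$ satisfy $\h_A \circ f \circ \e_A = \h_A \circ f' \circ \e_A$, i.e.\ $\E_A(\rho_A) = \E'_A(\rho_A)$ for every $\rho_A$, then mixing either with $g$ yields the same reduced map, as required.

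With all three hypotheses in place, Corollary~\ref{corollary:preserving_convex_subsystem} (equivalently Theorem~\ref{thm:embedding_convex2} applied with $G = A$, whose convexity is supplied for free by Lemma~\ref{lemma:subsystem_doubly_convex}) yields that $(S^{\Omega_A}, \widetilde A)$ is doubly convex. I expect the only genuinely delicate point to be the preservation condition: one must argue cleanly that a local channel is ``blind'' to the complementary system, so that it factors through the $A$-marginal and therefore commutes with the lumping in the required one-sided sense. The two convexity conditions, by contrast, are essentially bookkeeping with the linearity of $\tr_B$ and the finite-mixture machinery already established for convex structures, with the convexity of the reduced state space itself being the content of Theorem~\ref{thm:embedding_convex}.
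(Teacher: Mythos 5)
Your proposal is correct and follows essentially the same route as the paper: both reduce the claim to verifying the hypotheses of Corollary~\ref{corollary:preserving_convex_subsystem} (on top of Proposition~\ref{prop:quantum_doubly_convex} and Proposition~\ref{prop:partial_trace}), and both check the lumping condition and the well-definedness of mixed reduced maps via linearity of the partial trace. If anything, you are slightly more complete than the paper, which leaves the preservation condition $\h_A \circ f = \h_A \circ f \circ \Lump_A$ implicit, whereas you verify it explicitly by observing that a local channel factors through the $A$-marginal.
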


\begin{proof}
We use the same definitions as in Proposition~\ref{prop:quantum_doubly_convex} for the convex combinations of CPTP maps $\E$ and their counterparts $\widetilde \E$ on the specification space. 
We will show that the conditions of Theorem~\ref{corollary:preserving_convex_subsystem} are satisfied. 
Firstly, note that 
$$\Lump_A (V)= \bigcup_{\rho \in V} \{\sigma \in \Omega_{AB}: \tr_B(\sigma) = \tr_B(\rho) \},$$
and linearity of the partial trace ensures that the first condition is met,
\begin{align*}
\Lump_A (\mixt p V W)\
&= \bigcup_{\rho \in V} \bigcup_{\omega \in W} \Lump_A (\{p \ \rho + (1-p) \omega \})
\\
&= \bigcup_{\rho \in V} \bigcup_{\omega \in W}  \{\sigma \in \Omega_{AB}: \tr_B(\sigma) = \tr_B(p \ \rho + (1-p) \omega ) \}
\\
&= \bigcup_{\rho \in V} \bigcup_{\omega \in W}  \{\sigma \in \Omega_{AB}: \tr_B(\sigma) = p \ \tr_B(\rho) + (1-p) \ \tr_B(  \omega ) \}
\\
& =
\bigcup_{\rho \in V}\bigcup_{\omega \in W}  \{p \ \sigma + (1-p)\ \sigma' \in \Omega_{AB}:  p \ \tr_B(\sigma) + (1-p) \ \tr_B(  \sigma' ) = p \ \tr_B(\rho) + (1-p) \ \tr_B(  \omega ) \}
\\
&\supseteq
p \left( \bigcup_{\rho \in V}  
\{\sigma  \in \Omega_{AB}:  p \ \tr_B(\sigma)  =  \tr_B(\rho) \}
\right) 
+ (1-p) \left( \bigcup_{\omega \in W} \{\sigma  \in \Omega_{AB}:  p \ \tr_B(\sigma)  =  \tr_B(\omega) \}  \right) 
\\
&=
\mixt p  {\Lump_A (V)} {\Lump_A (W)}.
\end{align*}
For the second condition, assume that  $\h_A \circ \widetilde \E \circ \e_A = \h_A \circ \widetilde \E'' \circ \e_A$. We have
\begin{align*}
\h_A\circ \mixf p {\widetilde \E} {\widetilde \E'} \circ \e_A (V_A)
&= \bigcup_{\rho_A\in V_A}
\h_A\circ \mixf p {\widetilde \E} {\widetilde \E'} \circ \e_A (\{\rho_A\}) \\
&= \bigcup_{\rho_A\in V_A} \bigcup_{\sigma \in \Omega: \sigma_A=\rho_A}
\h_A\circ \mixf p {\widetilde \E} {\widetilde \E'} (\{\sigma\}) \\
&= \bigcup_{\rho_A\in V_A} \bigcup_{\sigma \in \Omega: \sigma_A=\rho_A}
\h_A (\{ p \ \E (\sigma) + (1-p) \ \E'(\sigma) \}) \\
\flag{\text{partial trace linear}}
&= \bigcup_{\rho_A\in V_A} \bigcup_{\sigma \in \Omega: \sigma_A=\rho_A}
\{ p \ \tr_B (\E (\sigma)) + (1-p) \ \tr_B (\E' (\sigma)) \} \\
&=  \mixtg p  {\h_A\circ \widetilde \E \circ \e_A(V_A)} {\h_A\circ \widetilde \E'\circ \e_A(V_A)} \\
\flag{\text{assumption}}
&=  \mixtg p  {\h_A\circ \widetilde \E'' \circ \e_A(V_A)} {\h_A\circ \widetilde \E' \circ \e_A(V_A)} \\
\flag{\text{revert previous steps}}
&= \h_A \circ \mixf p {\widetilde \E''} {\widetilde \E'} \circ \e_A (V_A).
\end{align*}

\end{proof}


\newpage

\renewcommand{\lofname}{List of theorems and propositions}

\listoftheorems[ignoreall,
show={theorem,proposition}]

\newpage 
\renewcommand{\lofname}{List of definitions}

\listoftheorems[ignoreall,
show={definition}]

\newpage


\end{document}